\DeclareRobustCommand*\cal{\relax\mathcal}
\newtheorem{theorem}{Theorem}
\newtheorem{corollary}{Corollary}
\newtheorem{lemma}{Lemma}
\newtheorem{definition}{Definition}
\newtheorem{observation}{Observation}
\newtheorem{example}{Example}
\newtheorem{remark}{Remark}
\newcommand{\hG}{\widehat{G}}
\newcommand{\hz}{\hat{z}}
\newcommand{\hx}{\hat{x}}
\newcommand{\hy}{\hat{y}}
\newcommand{\cO}{\mathcal{O}}
\newcommand{\FF}{\mathcal{F}}
\newcommand{\VV}{\mathcal{V}}
\newcommand{\dD}{\mathcal{D}}
\newtheorem{informaltheorem}{Informal Theorem}
\newcommand{\rev}{\textsc{Rev}}
\newcommand{\opt}{\textsc{OPT}}
\def\mout{\ensuremath{\MM'}}
\def\Bs{\ensuremath{\textbf{s}}}
\def\Br{\ensuremath{\textbf{r}}}
\def\Bz{\ensuremath{{z}}}
\def\Balpha{{\alpha}}
\newcommand{\diff}{\textsc{Diff}}
\newcommand{\dist}{\text{dist}}
\newcommand{\wass}{d_w}
\newcommand{\revsecond}{\textsc{Rev-second}}
\newenvironment{prevproof}[2]{\noindent {\em {Proof of {#1}~\ref{#2}:}}}{$\Box$\vskip \belowdisplayskip}
\newcommand{\R}{\ensuremath{\mathbb{R}}} 
\newcommand{\poly}{{\rm poly}}
\newcommand{\yangnote}[1]{{\color{blue}{#1}}}
\newcommand{\mingfeinote}[1]{{\color{magenta}{#1}}}
\newcommand{\notshow}[1]{{}}
\newcommand{\hw}{\hat{\omega}}
\DeclareMathOperator*{\E}{\mathbb{E}}
\def \TT  {{\cal T}}
\def \rj{r^{(j)}}
\def \sk{s^{(k)}}
\def \PP  {{\cal P}}
\def \AA  {{\cal A}}
\def \OO  {{\cal O}}
\def \MM  {\mathcal{M}}
\def \WW  {\mathcal{W}}
\def\supp{\text{supp}}
\def \DD  {{\mathcal{D}}}
\DeclareMathOperator{\argmax}{argmax}
\DeclareMathOperator{\argmin}{argmin}
\definecolor{MyGray}{rgb}{0.8,0.8,0.8}
\begin{document}

\title{An Efficient $\varepsilon$-BIC to BIC Transformation and Its Application to Black-Box Reduction in Revenue Maximization}

\author{ Yang Cai\footnote{Supported by a Sloan Foundation Research Fellowship and the NSF Award CCF-1942583 (CAREER) .} \\Yale University, USA\\yang.cai@yale.edu
 \and Argyris Oikonomou\\Yale University, USA\\ argyris.oikonomou@yale.edu
  \and Grigoris Velegkas\\Yale University, USA\\ grigoris.velegkas@yale.edu
  \and Mingfei Zhao\\ Yale University, USA\\ mingfei.zhao@yale.edu
}


\maketitle

\begin{abstract}

We consider the black-box reduction from multi-dimensional revenue maximization to virtual welfare maximization. Cai et al.~\cite{CaiDW12a, CaiDW12b,CaiDW13a,CaiDW13b} show a polynomial-time approximation-preserving reduction, however, the mechanism produced by their reduction is only approximately Bayesian incentive compatible ($\varepsilon$-BIC). We provide two new \textbf{polynomial time} transformations that convert any $\varepsilon$-BIC mechanism to an exactly BIC mechanism with only a negligible revenue loss.
\begin{itemize}
    \item Our first transformation applies to any mechanism design setting with \emph{downward-closed outcome space} and only requires \emph{sample access} to the agents' type distributions.
    \item Our second transformation applies to the \emph{fully general outcome space},  removing the downward-closed assumption, but requires \emph{full access} to the agents' type distributions.
\end{itemize}
Both transformations only require query access to the original $\varepsilon$-BIC mechanism. Other $\varepsilon$-BIC to BIC transformations for revenue exist in the literature~\cite{DaskalakisW12, RubinsteinW15, CaiZ17} but all require exponential time to run in both of the settings we consider. As an application of our transformations,  we improve the reduction by Cai et al.~\cite{CaiDW12a, CaiDW12b,CaiDW13a,CaiDW13b} to generate an exactly BIC mechanism.

\end{abstract}

\thispagestyle{empty}
\thispagestyle{empty}
\addtocounter{page}{-1}
\newpage

\maketitle
\section{Introduction}\label{sec:intro}

Mechanism design is the study of optimization algorithms with the additional constraint of incentive compatibility. A central theme of algorithmic mechanism design is thus to understand how much this extra constraint hinders our ability to optimize a certain objective efficiently. In the best scenario, one may hope to establish an equivalence between a mechanism design problem and an algorithm design problem, manifested via a \emph{black-box reduction} that converts any algorithm to an incentive compatible mechanism. In this paper, we study the black-box reduction of a central problem in mechanism design: \emph{multi-dimensional revenue maximization}. 




The problem description is simple: an auctioneer is selling a collection of items to  one or more strategic bidders. We follow the standard Bayesian assumption, that is, each bidder's type is drawn independently from a distribution known to all other bidders and the auctioneer. The auctioneer's goal is to design a \emph{Bayesian incentive compatible} (BIC) mechanism that maximizes the \emph{expected revenue}. 

In the special case of single-item auction, Myerson provides an elegant characterization of the optimal mechanism. Indeed, Myerson's solution can be viewed as a black-box reduction from revenue maximization to the algorithmic problem of (virtual) welfare maximization~\cite{Myerson81}.  
{However, whether the black-box reduction can be extended to multi-dimensional settings remained open after Myerson's result. Recently, a line of work by Cai et al.~\cite{CaiDW12a, CaiDW12b,CaiDW13a,CaiDW13b} showed that} there is \emph{a polynomial-time approximation-preserving black-box reduction from multi-dimensional revenue maximization to the algorithmic question of (virtual) welfare optimization}. 
However, this result still has the following two caveats: (i) the revenue of the mechanism is only guaranteed to be within an additive $\varepsilon$ of the optimum; and (ii) the mechanism is only \emph{approximately} Bayesian incentive compatible. Thus, an immediate open problem following their result is whether these two compromises are inevitable. In this paper, we show that approximately Bayesian incentive compatibility is unnecessary through our first main result: 

\medskip \noindent \hspace{0.7cm}\begin{minipage}{0.92\textwidth}
\begin{enumerate}
\item[{\bf Result I:}] There is a polynomial-time approximation-preserving black-box reduction from multi-dimensional revenue maximization to the algorithmic question of (virtual) welfare optimization that generates an \textbf{exactly Bayesian incentive compatible} mechanism.
\end{enumerate}
\end{minipage}

\bigskip \textbf{Result I} is enabled by \textbf{a new polynomial time $\varepsilon$-BIC to BIC transformation {for revenue}}, which is our second main result:

\bigskip \noindent \hspace{0.7cm}\begin{minipage}{0.92\textwidth}
\begin{enumerate}
\item[{\bf Result II:}] There is a polynomial-time $\varepsilon$-BIC to BIC transformation that converts any approximately Bayesian incentive compatible mechanism to an exactly Bayesian incentive compatible mechanism with a negligible revenue loss {for any downward-closed environment} with only \emph{sample access} to the agents' type distributions. \end{enumerate}
\end{minipage}

\medskip The transformation is applicable to any downward-closed mechanism design setting.\footnote{Roughly speaking, the setting is downward-closed if the agents have the choice to not participate in the mechanism. See Section~\ref{sec:prelim} for the formal definition.} We believe the transformation is of independent interest and would have numerous applications in mechanism design. Indeed, our  black-box reduction follows straightforwardly from applying the transformation to the mechanism of Cai et al.~\cite{CaiDW12a, CaiDW12b,CaiDW13a,CaiDW13b}. Furthermore, if we are given \emph{full access} to the type distributions of the buyers, we can extend our transformation to the fully general outcome space, removing the downward-closedness assumption. Here is our third main result:

\medskip \noindent \hspace{0.7cm}\begin{minipage}{0.92\textwidth}
\begin{enumerate}
\item[{\bf Result III:}] There exists a polynomial-time $\varepsilon$-BIC to BIC transformation with oracle access to any approximately Bayesian incentive compatible mechanism and full access to the agents' distributions, outputs an exactly Bayesian incentive compatible mechanism for any \emph{general outcome space} with a negligible revenue loss. \end{enumerate}
\end{minipage}

\vspace{.1in}
 Note that other $\varepsilon$-BIC to BIC transformations {for revenue} have been proposed in the literature~\cite{DaskalakisW12, RubinsteinW15, CaiZ17}, however, all of the existing transformations require solving a $\# P$-hard problem repeatedly~\cite{hartline2015bayesian} and therefore cannot be made computationally efficient.



\subsection{Our Results and Techniques}\label{sec:results and techniques}
We first fix some notations to facilitate our discussion of the results. We consider a general mechanism design environment where there is a set of feasible outcomes denoted by $\cO$.
There are $n$ agents, and each agent $i$ has a type $t_i$ drawn from distribution $\DD_i$ independently. We use $\TT_i$ to denote the support of $\DD_i$, and for every $t_i\in \TT_i$,  $v_i(t_i,\cdot)$ is a valuation function that maps every outcome to a real number in [0,1]. A mechanism $\MM$ consists of an allocation rule $x(\cdot): \bigtimes_{i\in[n]} \TT_i\mapsto \Delta(\cO)$ and a payment rule $p(\cdot): \bigtimes_{i\in[n]} \TT_i\mapsto \mathbb{R}^n$. We slightly abuse notation to define $v_i(t_i,x(b))\equiv \E_{o\sim x(b)}[v_i(t_i,o)]$.  If we have query access to $\MM$, then on any query bid profile $b=(b_1,\ldots, b_n)$, we receive an outcome $o\sim x(b)$ and payments $p_1(b),\ldots, p_n(b)$.

The outcome space $\mathcal{O}$ is called \emph{downward-closed} if each $o\in \mathcal{O}$ can be written as a vector $o=(o_1,...,o_n)$ where $o_i$ is the outcome for agent $i$. And for every $o=(o_1,...,o_n)\in \mathcal{O}$, any $o'=(o_1',...,o_n')$ with $o_i'=o_i$ or $o_i'=\perp$ for every $i$ is also in $\cO$. Here $\perp$ is a null outcome  available to each agent $i$, which represents the option of not participating in the mechanism. 

Equiped with the notations, we are ready to discuss our $\varepsilon$-BIC to BIC transformations.

\begin{informaltheorem}[$\varepsilon$-BIC to BIC transformation on downward-closed outcome space]\label{informalthm:eps-BIC to BIC}
Given sample access to a collection of distributions $\left( \DD_i \right)_{i\in[n]}$ {on a downward-closed outcome space}, and query access to an $\varepsilon$-BIC and individually rational (IR) mechanism $\MM=(x,p)$ with respect to $\bigtimes_{i\in[n]} \DD_i$. We can construct another mechanism $\MM'$ that is exactly BIC and IR with respect to $\bigtimes_{i\in[n]} \DD_i$, and its revenue is at most {$O(n\sqrt{\varepsilon})$} worse than the revenue of $\MM$. Moreover, for any bid profile $b=(b_1,\ldots, b_n)$, $\MM'$ computes an outcome $o\in \cO$ and payments $p_1(b),\ldots, p_n(b)$ {in expected running} time {$\poly\left(\sum_{i\in[n]} |\TT_i|,1/\varepsilon\right)$} and makes {in expectation} at most {$\poly\left(\sum_{i\in[n]} |\TT_i|,1/\varepsilon\right)$}~queries~to~$\MM$.
\end{informaltheorem}

\begin{informaltheorem}[$\varepsilon$-BIC to BIC transformation on general outcome space]\label{informalthm:eps-BIC to BIC general outcome}
Given full access to the collection of distributions $\left( \DD_i \right)_{i\in[n]}$ on a general outcome space such that $\left|\supp(\DD_i)\right| \leq m$ for every $i$, and query access to an $\varepsilon$-BIC and individually rational (IR) mechanism $\MM$ with respect to $\bigtimes_{i\in[n]} \DD_i$.
We can construct a mechanism $\MM'$ that is exactly BIC and IR with respect to $\bigtimes_{i\in[n]} \DD_i$.
Moreover its revenue is within an additive {$O(n m \varepsilon)$} of the revenue of $\MM$.
Furthermore, the running time of the constructed mechanism is {$\poly\left(n,m,1/\varepsilon\right)$} and the mechanism makes at most {$\poly\left(n,m,1/\varepsilon\right)$}~queries~to~$\MM$.
\end{informaltheorem}

Previous transformations can produce an $\MM'$ with similar guarantees in the downward-closed setting but require $\poly(\prod_{i\in[n]} |\TT_i|)$ time to run~\cite{DaskalakisW12, RubinsteinW15, CaiZ17}. In the special case, where there exists symmetry in the agents' type distributions, the transformation can be improved to run in time $\poly(\sum_{i\in[n]} |\TT_i|)$, as the interim allocation probabilities and payments of the mechanism $\MM$ can be computed efficiently  via a polynomial-size LP.\footnote{If the interim allocation probabilities and payments of $\MM$ are given, the edge weights in the Replica-Surrogate matching can be computed efficiently. See the next paragraph for more details.} Our result achieves the $\poly(\sum_{i\in[n]} |\TT_i|)$ running~time~without~the~symmetry~assumption. 

\subsubsection{Our Result for the Downward-Closed Outcome Space}
We first discuss our result for the downward-closed outcome space. To illustrate our new ideas, we first briefly review the constructions in the literature. In the heart of all the previous constructions lies the problem called \emph{replica-surrogate matching}. 

\paragraph{Replica-Surrogate Matching}  For each agent $i$, form a bipartite graph $G_i$. The left hand side nodes are called replicas, which are types sampled {i.i.d.} from $\DD_i$.  In particular, the true type $t_i$ of agent $i$ is one of the replicas. On the right hand side, the nodes are called surrogates, which are also types sampled from $\DD_i$. The edge between a replica {with type} $t^{(j)}$ and a surrogate {with type} $t^{(k)}$ is assigned weight 
$w_{jk}\equiv \E_{t_{-i}\sim \DD_{-i}}[v_i(t^{(j)},x(t^{(k)},t_{-i})) -p_i(t^{(k)},t_{-i})]$~\footnote{The true weight $w_{jk}\equiv \E_{t_{-i}\sim \DD_{-i}}\left[v_i\left(t^{(j)},x\left(t^{(k)},t_{-i}\right)\right) -(1-\eta)\cdot p_i\left(t^{(k)},t_{-i}\right)\right]$ is computed using a discounted price, but we can ignore the difference for now.}, 
 which is the interim utility of agent $i$ when her true type is $t^{(j)}$ but reports $t^{(k)}$ to $\MM$. Compute the maximum weight matching on $G_i$. The true type $t_i$ selects a surrogate using the matching to compete~in~$\MM$. {Agent $i$ competes in $\MM$ using the type of the surrogate she is matched to in the maximum weight matching.}

The intuition is that since $\MM$ is not BIC, the true type $t_i$ may prefer the outcome and payment from reporting some different type. The matching is set up to allow the true type $t_i$ to pick a more favorable type to compete in $\MM$ for it. But why wouldn't the agent misreport in the matching? After all, the edge weights depend on the agent's report. As it turns out, to guarantee incentive compatibility, one needs to find a matching with a \emph{maximal-in-range} algorithm. Namely, the matched surrogate is selected to maximize the agent's induced utility less some cost that only depends on the outcome. It is not hard to verify that the maximum weight matching is indeed maximal-in-range, and therefore the agent has no incentive to~lie. 

But why does the maximum weight matching take exponential time to find? The problem is that we are not given the edge weights. For each edge, we can only sample from a distribution whose mean is the weight of the edge: Sample $t_{-i}$ from $\DD_{-i}$ and compute 
$v_i(t^{(j)},x(t^{(k)},t_{-i})) -p_i(t^{(k)},t_{-i})$.
Even if we assume that we know the distributions $(\DD_i)_{i\in[n]}$, it still takes $\poly( \prod_{j\neq i} |\TT_j|)$ time to compute the weight of a single edge exactly, which is already an exponential on $\sum_{i\in[n]} |\TT_i|$. But why can't we first estimate the edge weights with samples and find the maximum matching using the estimated weights? The issue is that no matter how many samples we take, the empirical mean will be off by some estimation error. The maximal-in-range property is so fragile that even a tiny bit of estimation error can cause the algorithm to violate the property, making the whole mechanism not incentive compatible. See Example~\ref{example:sample edges} below for a more detailed explanation.


\vspace{-.15in}
\paragraph{Black-box Reduction for Welfare Maximization}To overcome the difficulty, we turn to another important problem in mechanism design, black-box reduction for welfare maximization, for inspiration. A line of beautiful results~\cite{HartlineL10, BeiH11, hartline2015bayesian,  dughmi2017bernoulli} initiated by Hartline and Lucier shows that the mechanism design problem of welfare maximization in the Bayesian setting can be black-box reduced to the algorithmic problem of welfare maximization. The replica-surrogate matching is again the central piece in the reduction. Indeed, the idea of replica-surrogate matching was first proposed by Hartline et al.~\cite{HartlineKM11, hartline2015bayesian}, and later introduced 
 by Daskalakis and Weinberg~\cite{DaskalakisW12} to the study of $\varepsilon$-BIC to BIC transformation {for revenue}. The main difference of the two scenarios is the way the edge weights are defined. For welfare maximization, the edge weight between a replica $t^{(j)}$ and a surrogate $t^{(k)}$ is 
 $v_{jk}\equiv \E_{t_{-i}\sim \DD_{-i}}[v_i(t^{(j)},x(t^{(k)},t_{-i}))]$, namely,  the \emph{interim value} for agent $i$ when her true type is $t^{(j)}$ but reports $t^{(k)}$ to $\MM$. 
  We will refer to the one with interim utilities as edge weights the \emph{U-replica-surrogate matching} and the one with interim values as edge weights the \emph{V-replica-surrogate matching}. The main reason that we would like to distinguish the two settings is as follows: in a V-replica-surrogate matching all edge weights are nonnegative, while in a U-replica-surrogate matching the edge weights may be negative. The importance of the presence of negative edges will become clear soon.
  Obviously, it also takes exponential time to compute the exact maximum weight V-replica-surrogate matching due to the same reason discussed above.





A result by Dughmi et al.~\cite{dughmi2017bernoulli} shows how to circumvent this barrier for welfare maximization. Their solution has the following two main components: (i) a polynomial time maximal-in-range algorithm to solve the \emph{maximum entropy regularized perfect matching} problem; (ii) the \emph{fast exponential Bernoulli race}, a new Bernoulli factory~\footnote{A Bernoulli factory is an algorithm that with sample access to a $p$-coin to simulate a $f(p)$-coin. In Section~\ref{appx:bernoulli} we give a brief introduction to Bernoulli factories. We also refer the readers to \cite{keane1994bernoulli,nacu2005fast} and the references therein for more details.}, that allows them to execute the algorithm in (i) exactly with only sample access to distributions whose means are the edge weights. They use the algorithm to find a maximum entropy regularized V-replica-surrogate matching, and argue that this matching has approximately maximum weight, which allows them to conclude that their new mechanism loses at most a negligible fraction of the welfare.

\vspace{-.15in}
\paragraph{Why is the mechanism by Dughmi et al.~\cite{dughmi2017bernoulli} unsuitable?}
 The reason turns out to be subtle. As the U-replica-surrogate matching contains negative edges and the algorithm by Dughmi et al.~\cite{dughmi2017bernoulli} always returns a perfect matching, some agent types may receive negative utilities from the matching. To guarantee individually rationality, the mechanism must compensate these types. However, due the incentive compatibility constraint, the mechanism must also compensate other agent types. One might think that the overall compensation can be shown to be negligible. Unfortunately, in the following example, we show that the overall compensation can in fact dramatically damage the revenue and may even drive the revenue to $0$.


\begin{example}
\label{example:perfect_revenue}
Consider the following instance with a single agent and outcome space $\cO=\{\perp,o\}$. The agent has two possible types $H$ and $L$ with probability $1-\sigma$ and $\sigma$ respectively, where $\sigma\in (0,1)$ is sufficiently small. The agent's valuation is: $v(H,o)=1$, $v(H,\perp)=v(L,\perp)=v(L,o)=0$. The given mechanism $\MM$ chooses outcome $o$ and charges $1$ if the agent reports $H$, and chooses outcome $\perp$ and {gives the agent $\varepsilon$ if the agent reports $L$. Clearly, $\MM$ is $\varepsilon$-BIC and IR. $\rev(\MM)=1-\sigma-\sigma\varepsilon$}. {Note that in the U-Replica-Surrogate matching, the edge between replica with type L and surrogate with type H has negative weight $-1$.

Denote $\ell$ the number of surrogates sampled from the above distribution. 
Let $\MM'$ be the constructed mechanism that always selects a perfect replica-surrogate matching.
Denote $p(\cdot)$ the payment function of $\MM'$. 
Then $p(L)\leq 0$ since $\MM'$ is IR. Consider the following two scenarios when the agent has true type $H$. In the first scenario she reports truthfully her type $H$ and in the second scenario she reports $L$. With probability $\beta=(1-\sigma)^{\ell}$, none of the surrogates has type $L$. In both scenarios the buyer must be matched to a surrogate with type $H$ in the perfect replica-surrogate matching and has value 1 for the result outcome $o$. Thus the difference of the agent's expected value between the two scenarios is at most $1-\beta$. Since $\MM'$ is BIC, we must have $p(H)\leq p(L)+1-\beta\leq 1-\beta$. Therefore $\rev(\MM')\leq (1-\sigma)\cdot (1-\beta)$. For any fixed $\varepsilon$ and $\ell$, when $\sigma\to 0$, the $\beta\to 1$ and revenue loss goes to 1. Note that our mechanism guarantees revenue loss at most $c\sqrt{\varepsilon}$ for any type distribution, where $c$ is an absolute constant.

}


\end{example}

 
 {
 
 Let us take a closer look at what happens in the  example above. With high probability, the agent with low type $L$ is matched to a surrogate with high type $H$ in the perfect matching and has negative utility. Thus to satisfy individual rationality, the mechanism has to compensate her. To guarantee incentive compatibility, the mechanism will also have to compensate the agent when her true type is $H$. However, the total compensation is so large that it essentially drives the revenue of the constructed mechanism to $0$.  Our main challenge is to enforce both individually rationality and incentive compatibility in the presence of negative edges without sacrificing much of the revenue.
 
 }

 
 

As {Example~\ref{example:perfect_revenue} implies, to preserve revenue, it is crucial to avoid matching a replica with a negative weight edge with high probability in the U-replica-surrogate matching. Again the exact weight can not be computed efficiently. }One may try to remove the negative edges using samples. However, removing edges based on the empirical means from samples could easily violate the maximal-in-range property. {See Example~\ref{example:sample edges}.} 



\begin{example}\label{example:sample edges}

Let $N$ be the number of samples that the algorithm uses to calculate the empirical expectation. Choose $\sigma>0$ such that $\frac{\sigma}{1-2\sigma}<\frac{1}{N}$. Consider the following example with 1 node on each side. There are two instances. For the first instance, the random variable $\FF^{(1)}$ attached to this edge is $1$ w.p. $2\sigma$, and $-\frac{\sigma}{1-2\sigma}$ with probability $1-2\sigma$. The edge weight $\omega^{(1)}=\sigma$. For the second instance, $\FF^{(2)}$ is $\sigma$ w.p. 1 and $\omega^{(2)}=\sigma$.

For the above example, both instances have the same edge weight and any maximal-in-range allocation will always output the same matching. However in the first instance, with probability $(1-2\sigma)^N<1$ the empirical expectation is negative and the two nodes are not matched. While in the second instance the algorithm will always match the two nodes. Thus the output matching is not maximal-in-range. It is well-known that if the allocation is maximal-in-range, there must exist a payment rule such that the agent is incentive-compatible. Thus the algorithm will violate the incentive-compatibility when applied to the replica-surrogate matching.

\end{example}

\vspace{-.15in}
\paragraph{Our Solution for the Downward-Closed Outcome Space.} 
Our transformation on downward-closed environments is directly inspired  by~\cite{dughmi2017bernoulli} but differs in several major ways. Our plan is to design an algorithm, for general graphs with arbitrary weights, that satisfies the following two properties:
\begin{enumerate}
    \item The algorithm produces a distribution of matchings whose expected
    weight is close to the maximum weight matching.
    \item For any LHS node, the expected weight of the edge matched to it is not too negative.
\end{enumerate}

Note that for graphs with positive weights, the algorithm by Dughmi et al.~\cite{dughmi2017bernoulli} satisfies both properties. When the edges are negative, the first property is not satisfied by their algorithm in general. Even if we only consider U-replica-surrogate matching, their algorithm still violates the second property. Interestingly, we provide a reduction from the case of arbitrary edge weights to the case with only positive edge weights. Indeed, our reduction can be succinctly summarized by the following formula, if an edge has weight $w_{jk}$, set the new weight by applying the $\delta$-softplus function to $w_{jk}$\footnote{The function $\log\left( \exp\left(x\right)+1\right)$ is known as the soft plus function.}:
$\zeta_\delta(w_{jk})=\delta\cdot \log\left( \exp\left(w_{jk}/\delta\right)+1\right)$,
 where $\delta>0$ is a parameter of our algorithm. Note that for any value of $w_{jk}$, $\zeta_\delta(w_{jk})$ is always nonnegative! Moreover, the maximum entropy regularized matching on weights $(\zeta_\delta(w_{jk}))_{j, k}$ can be shown to be close to the maximum weight matching on $(w_{jk})_{j, k}$, and the second property also holds due to nice features of the algorithm and the softplus function. So it seems that we only need to run the algorithm from~\cite{dughmi2017bernoulli} on the new weights $(\zeta_\delta(w_{jk}))_{j, k}$.  An astute reader may have already realized that being able to run the algorithm on $(w_{jk})_{j, k}$ does not imply that one can run the algorithm on $(\zeta_\delta(w_{jk}))_{j, k}$, as we can only sample from distributions whose means are $(w_{jk})_{j, k}$ but not $(\zeta_\delta(w_{jk}))_{j, k}$. One idea is to construct a Bernoulli factory to simulate a $\zeta_\delta(w_{jk})$-coin using a $w_{jk}$-coin. To the best of our knowledge, no such construction exists. We take a different approach and make use of a crucial property of the algorithm from~\cite{dughmi2017bernoulli}. Namely, if we run their algorithm with the same parameter $\delta$, the algorithm only needs to sample from the softmax function over the weights. More specifically, with weights $(w_{jk})_{j,k}$, it suffices to have the ability to sample an edge $(j,k)$ with  probability exactly $\exp({w_{jk}/\delta})\over \sum_{k'} \exp({ w_{jk'}/\delta})$. Despite the fact that we cannot directly sample from distributions with means $(\zeta_\delta(w_{jk}))_{j, k}$, we can indeed sample edge $(j,k)$ with exactly the right probability,~as 
$${\exp\left(\zeta_\delta(w_{jk})/\delta\right)\over \sum_{k'} \exp\left( \zeta_\delta(w_{jk'})/\delta\right)}={\exp\left(w_{jk}/\delta\right)+1\over \sum_{k'} \left(\exp({ w_{jk'}/\delta})+1\right)},$$
 which can be sampled efficiently using the fast exponential Bernoulli race given only sample access to distributions with means equal to the original edge weights $(w_{jk})_{j,k}$.

Our second contribution is to show that an approximately maximum U-replica-surrogate matching suffices to guarantee only a small loss in revenue. Previous results~\cite{DaskalakisW12, RubinsteinW15, CaiZ17} only prove the statement for the exactly maximum matching. We provide a more delicate analysis that allows us to extend the statement to approximately maximum matchings. Finally, as the agent may receive negative utility from certain surrogates, we sometimes need to subsidize the agent to ensure individual rationality. Due to the second property of our algorithm, we can argue that the total subsidy is small~compared~to~the~revenue. 
We emphasize that the U-replica-surrogate matching found in our mechanism may not be perfect. Thus the assumption $\mathcal{O}$ being downward-closed is necessary, in order to allow the agent whose true type is unmatched in the matching to receive a $\perp$ outcome. See Mechanism~\ref{alg:mout} in Section~\ref{sec:mechanism} for more details.

\subsubsection{Our Result for the General Outcome Space}
Our result for the general outcome space is based on the \emph{regularized replica-surrogate fractional assignment mechanism} by Hartline et al.~\cite{hartline2015bayesian}. They use this mechanism to provide a black-box reduction for welfare maximization, but this mechanism only applies to discrete type space and requires full access to all agents' type distributions. We refer the readers to Section~\ref{sec:appx_rrsf} for details of their mechanism. 

The main barrier for applying their approach to transform an $\varepsilon$-BIC mechanism to a BIC mechanism is that their mechanism does not provide any guarantees on the revenue. More specifically, the prices of their mechanism are determined by a set of optimal dual variables of their problem. Although any set of optimal dual variables can guarantee the mechanism to be incentive compatible and individually rational, which is sufficient for welfare maximization, some choices could result in substantial revenue loss, making them unsuitable to preserve revenue. In fact, Example~\ref{example:negative. payments} illustrates that for some optimal dual variables, the revenue loss due to negative prices can be very high.  Our main contribution 
is to show an efficient algorithm to find a set of optimal dual variables such that the induced prices only cause negligible loss in the revenue. 

\subsection{Application to Multi-dimensional Revenue Maximization}
We next apply the $\varepsilon$-BIC to BIC transformation to obtain our black-box reduction for revenue maximization. We first introduce the  problem formally.

\smallskip\noindent  \framebox{
\begin{minipage}[c]{0.98\textwidth}
\textbf{Multi-Dimensional Revenue Maximization (MRM):} Given as input $n$ type distributions $\mathcal{D}_1,\ldots,\mathcal{D}_n$ and a set of feasible outcomes $\cO$, 
 output a BIC and IR mechanism $\MM$ who chooses outcomes from $\mathcal{O}$ with probability $1$ and whose expected revenue is optimal relative to any other, possibly randomized, BIC, and IR mechanism with respect to $\mathcal{D} = \bigtimes_{i\in[n]} \mathcal{D}_i$.
\end{minipage}}

To state our black-box reduction, we introduce the virtual welfare {optimization} problem.

\smallskip\noindent  \framebox{
\begin{minipage}[c]{0.98\textwidth}
\textbf{Virtual Welfare Optimization (VWO)~{\cite{CaiDW13b}}:} Given as input $n$ functions $C_i(\cdot): \TT_i\mapsto \mathbb{R}$ and a set of feasible outcomes $\mathcal{O}$, output an outcome $o \in \argmax_{x\in \cO} \sum_i\sum_{t_i\in \TT_i}C_i(t_i)\cdot v_i(t_i,x)$. 
$C_i(\cdot)$ is considered as the weight function that depends on the agent's real type. We refer to the sum $\sum_{t_i\in \TT_i}C_i(t_i)\cdot v_i(t_i,x)$ as agent $i$'s virtual value for outcome $x$.
\end{minipage}}

\begin{informaltheorem}
When the outcome space is downward-closed, given sample access to distribution $\bigtimes_{i=1}^n \DD_i$ and oracle access to an $\alpha$-approximation Algorithm~$G$ for VWO,
we can construct an exactly BIC and IR mechanism $\MM = (x,p)$ with respect to $\bigtimes_{i\in[n]} \DD_i$, that has expected revenue
$\alpha\cdot OPT - O\left(n \sqrt{\varepsilon} \right)$, where $OPT$ is the optimal revenue over all BIC and IR mechanisms with respect to $\bigtimes_{i\in[n]} \DD_i$. The running time is $\poly\left(\sum_{i\in[n]}{|\TT_i|},\frac{1}{\varepsilon},b,rt_G\left(\poly\left( \sum_{i\in[n]}{|\TT_i|},\frac{1}{\varepsilon},b \right)\right)\right)$, where $rt_G(\cdot)$ is the running time of $G$, and $b$ is an upper bound on the bit complexity of $v_i(t_i,o)$ for any agent $i$, any type $t_i$, and any outcome $o$.


\end{informaltheorem}

Note that a similar result holds for a general outcome space, given that we have \emph{full access} to distribution $\bigtimes_{i=1}^n \DD_i$. The modified running time and revenue loss can be found in Section~\ref{sec:fractional}.

\subsection{Further Related Work}
Multi-dimensional revenue maximization has recently received lots of attention from computer scientists. Significant progress has been on the computational front~\cite{ChawlaHK07,ChawlaHMS10,Alaei11,CaiD11b,AlaeiFHHM12,CaiDW12a,CaiDW12b,CaiH13,CaiDW13b,AlaeiFHH13,BhalgatGM13,DaskalakisDW15,kothari2019approximation}. On the structural front, a family of simple mechanisms, i.e., variants of sequential posted price and two-part tariff mechanisms, have been shown to achieve constant factor approximations of the optimal revenue in quite general settings~\cite{BabaioffILW14,Yao15, RubinsteinW15, CaiDW16,ChawlaM16, CaiZ17}.  
$\varepsilon$-BIC to BIC transformation {for revenue} has been an instrumental tool in obtaining both the computational and structural results~\cite{DaskalakisW12, RubinsteinW15, CaiZ17, kothari2019approximation}. 

There has also been significant interest in understanding the sample complexity for learning an almost revenue-optimal auction in multi-item settings. Last year, Gonczarowski and Weinberg~\cite{GonczarowskiW18} show that one can learn an almost revenue-optimal $\varepsilon$-BIC mechanism using $\poly(n,m, 1/\varepsilon)$ samples under the item-independence assumption, where $n$ is the number of bidders and $m$ is the number of items. Brustle et al.~\cite{BrustleCD19} generalize the result to settings where the item values are drawn from correlated but structured distributions that can be modeled by either Markov random fields or Bayesian Networks. The mechanism they produce is still $\varepsilon$-BIC. Our transformation can certainly convert these mechanisms from~\cite{GonczarowskiW18, BrustleCD19} into exactly BIC mechanisms, and the transformation requires $\poly\left(\sum_{i\in[n]} |\TT_i|,1/\varepsilon\right)$ many samples. Unfortunately, each $|\TT_i|$ is already exponential in $m$ in their settings. The dependence on $|\TT_i|$ is unavoidable for us, as our goal is to provide a transformation that is applicable to a general mechanism design setting. Nonetheless, the techniques we develop in this paper may be combined with special structure of the distribution to provide more sample-efficient $\varepsilon$-BIC to BIC transformations.

Recently, Gergatsouli et al. \cite{GergatsouliLT19}
proved that for the case where we have a single buyer with an additive valuation over $m$ independent items and the set of outcomes is downward-closed, an exponential (in $m$) query complexity is necessary for any black-box reduction for welfare maximization. It will certainly be interesting to see whether a similar lower bound on the sample complexity exists. 

{
\subsection{Organization of the Paper}
In Section~\ref{sec:prelim}, we provide the notations we use throughout the paper.
In Section~\ref{sec:reduction}, we mention the tools from the literature that are important for our constructions. In Section~\ref{sec:arbitrary_weight_matching}, we provide our new algorithm that solves the entropy regularized matching problem for graphs with arbitrary edge weights {assuming the outcome space is downward-closed}. In particular, we show the arbitrary edge weight case can be reduced to the  nonnegative edge weight case.
In Section~\ref{sec:mechanism}, we describe our $\varepsilon$-BIC to BIC transformation {when the outcome space is downward-closed}. In Section~\ref{sec:apps}, we show how to  use our $\varepsilon$-BIC to BIC transformation to improve the black-box reduction for multi-dimensional revenue maximization. 
In Section~\ref{sec:fractional} we state our result for the general outcome space.
Note that in Sections~\ref{sec:arbitrary_weight_matching} through \ref{sec:apps} we focus on downward-closed outcome spaces, whereas in Section~\ref{sec:fractional} we shift our attention to general outcome spaces.

}

\section{Preliminaries}\label{sec:prelim}

We specify a general mechanism design setting by the tuple $(n, \mathcal{V},  \mathcal{D}, v, \mathcal{O})$. There are $n$ agents participating in the mechanism. Denote $\mathcal{O}$ the set of all possible outcomes. We consider two types of outcome space.
\begin{enumerate}
    \item \textbf{Downward-Closed Outcome Space:} Each $o\in \mathcal{O}$ can be written as a vector $o=(o_1,...,o_n)$ where $o_i$ is the outcome for agent $i$. We also assume that a null outcome $\perp$ is available to each agent $i$. One can think of $\perp$ as the option of not participating in the mechanism. Throughout the paper, when we say the outcome space $\mathcal{O}$ is \emph{downward-closed}, then for every $o=(o_1,...,o_n)\in \mathcal{O}$, any $o'=(o_1',...,o_n')$ with $o_i'=o_i$ or $o_i'=\perp$ for every $i$ is also in $\cO$. An example of the downward-closed outcome space is the combinatorial auction, where the outcome set contains all possible ways to allocate items to agents, and the null outcome represents allocating nothing to the agent. One setting that does not have a downward-closed outcome space is building a public project.
        \item \textbf{General Outcome Space:} $\OO$ is an arbitrary set. This is the most general outcome space, and it can capture settings such as building a public project.
\end{enumerate} 

Each agent $i$ has a type $t_i$ from type space $\mathcal{V}_i$, which is drawn independently from some distribution $\mathcal{D}_i$. We use $\mathcal{T}_i\subseteq \mathcal{V}_i \text{ or } \supp(\DD_i)$ to denote the support of $\mathcal{D}_i$. We use $\DD$ to denote $\bigtimes_{i\in[n]}\DD_i$. In the paper we consider discrete type spaces, we assume that every $|\mathcal{T}_i|\leq T$ for some finite $T$. {Note that our results for the downward-closed outcome space can easily be extended to the continuous case using similar techniques as in~\cite{dughmi2017bernoulli}, while our results for general outcome space requires the type space to be discrete.} For every $t_i\in \VV_i$,  $v_i(t_i,\cdot)$ is a valuation function that maps every outcome to a real number in $[0,1]$. In a downward-closed outcome space, for all agent $i$ and type $t_i$, $v_i(t_i, o)=0$ if $o_i=\perp$. Every agent is risk-neutral and has quasi-linear utility.


For any mechanism $\MM$, denote $\rev(\MM,\DD)=\E_{t\sim \DD}\left[\sum_{i\in[n]}p_i(t)\right]$ the expected revenue of $\MM$. We use $\rev(\MM)$ for short when the agents' distributions and valuation functions are clear. We use the standard definitions of BIC, $\varepsilon$-BIC, IR, and $\varepsilon$-IR:

\vspace{.1in}
Bayesian Incentive Compatible (BIC): 
$$\E_{t_{-i}\sim \DD_{-i}}\left[v_i\left(t_i,x\left(t_i,t_{-i}\right)\right) -p_i\left(t_i,t_{-i}\right)\right]\geq \E_{t_{-i}\sim \DD_{-i}}\left[v_i\left(t_i,x\left(t'_i,t_{-i}\right)\right) -p_i\left(t'_i,t_{-i}\right)\right],~~~~\forall i\in[n], t_i,t'_i\in\TT_i.$$

Individual Rational (IR): $$\E_{t_{-i}\sim \DD_{-i}}\left[v_i\left(t_i,x\left(t_i,t_{-i}\right)\right) -p_i\left(t_i,t_{-i}\right)\right]\geq 0,~~~~\forall i\in[n], t_i\in\TT_i.$$

$\varepsilon$-BIC: 
$$\E_{t_{-i}\sim \DD_{-i}}\left[v_i\left(t_i,x\left(t_i,t_{-i}\right)\right) -p_i\left(t_i,t_{-i}\right)\right]\geq \E_{t_{-i}\sim \DD_{-i}}\left[v_i\left(t_i,x\left(t'_i,t_{-i}\right)\right) -p_i\left(t'_i,t_{-i}\right)\right]-\varepsilon,~~~~\forall i\in[n], t_i,t'_i\in\TT_i.$$

$\varepsilon$-IR:  $$\E_{t_{-i}\sim \DD_{-i}}\left[v_i\left(t_i,x\left(t_i,t_{-i}\right)\right) -p_i\left(t_i,t_{-i}\right)\right]\geq -\varepsilon,~~~~\forall i\in[n], t_i\in\TT_i.$$

\paragraph{Coupling between Type Distributions:}In order to measure the difference between the two distributions, we will introduce the following definition. Fix every agent $i$. A \emph{coupling} $c_i(\cdot,\cdot)$ of distribution $\mathcal{D}'_i$ and $\mathcal{D}_i$ is a joint distribution on the probability space $\mathcal{T}'_i\times \mathcal{T}_i$ such that the marginal of $c_i$ coincide with $\mathcal{D}'_i$ and $\mathcal{D}_i$. In the paper we slightly abuse the notation, denoting $c_i(b)$ a random variable that is distributed according to the conditional distribution of type $t_i$ over $\mathcal{T}_i$ when $t_i'=b$. According to the definition of the coupling, when $t_i'\sim \mathcal{D}_i'$, $c_i(t_i')\sim \mathcal{D}_i$.

We say $v_i$ is \emph{non-increasing} w.r.t. the coupling $c_i$ if for all $t_i\in \TT_i'$, outcome  $o\in\cO$, and every realized type $c_i(t_i)$, $v_i(t_i,o)\geq v_i(c_i(t_i),o)$. Intuitively, the coupling always maps a ``higher'' type to a ``lower'' type. Such coupling is common, for example in a combinatorial auction, rounding agent $i$'s value for each bundle of items down to the closest multiples of $\delta$ can be viewed as such a coupling. 

\paragraph{Wasserstein Distance:} For any $t_i,t_i'\in \mathcal{V}_i$, let $\dist_i(t_i,t_i')=\max_{o\in \cO}|v_i(t_i,o)-v_i(t_i',o)|$. The $\ell_{\infty}$-\emph{Wasserstein Distance} between distribution $\mathcal{D}_i$ and $\mathcal{D}_i'$ w.r.t. $\dist_i$ is defined as the smallest expected distance among all couplings. Formally,

$$\wass(\mathcal{D}_i,\mathcal{D}_i')=\min_{c_i(\cdot,\cdot)}\int \dist_i(t_i,t_i')dc_i(t_i,t_i')$$

Finally, we use $\log (\cdot)$ to denote the natural logarithm and $\Delta^{\ell}$ to denote the set of all distributions~over~$\ell$~elements.

\begin{definition}[Gibbs Distribution]\label{def:Gibbs dist}
	For any integer $\ell$, define the Gibbs distribution $z\in \Delta^\ell$ over $\ell$ states with temperature $\beta$ as $z_i=\frac{\exp(E_i/\beta)}{\sum_{i'\in[\ell]}\exp(E_{i'}/\beta)}$ for all $i\in[\ell]$, where $E_i$ is the energy of element $i$.
\end{definition}

\begin{definition}[Maximal-in-Range Algorithms]
An algorithm is maximal-in-range, if for every $j\in[d\ell]$, there exists a cost function $c(\cdot)$, which may depend on $\omega_{-j}$, such that the allocation $z_j\in \argmax_{z'\in \FF} \sum_{j,k}z_{jk}'\cdot\omega_{jk}-c(z')$ for any $\omega_j$, where $\FF$ is a set of all feasible allocations. 
\end{definition}

\subsection{A Brief Introduction to Bernoulli Factories}\label{appx:bernoulli}

Suppose we are given a coin with bias $\mu$, can we construct another coin with bias $f(\mu)$ using the original coin? If the answer is yes, then how many flips do we need from the original coin to simulate the new coin? A framework that tackles this problem is called Bernoulli Factories.
We refer the reader to \cite{latuszynski2010bernoulli} for a survey on this topic.

\begin{definition}[Keane and O'Brien \cite{keane1994bernoulli}]

Given some function $f: (0,1) \mapsto (0,1)$ and black-box access to independent samples of a Bernoulli random variable with bias $p$, the Bernoulli factory problem is to generate a sample from a Bernoulli distribution with bias $f(p)$.
\end{definition}


A useful generalization of the previous model is the following\footnote{The model is called \emph{Expectations from Samples} in~\cite{dughmi2017bernoulli}.}: given sample access to distributions $\dD_1,\dD_2,\ldots,\dD_m$ with expectations {$\mu_1,\mu_2,\ldots,\mu_m\in (0,1)$},
and a function $f:(0,1)^m \rightarrow \Delta(X)$,
where $X$ is a set of feasible outcomes and $\Delta(X)$ is a set of probability distributions over these outcomes, how can we want generate a sample from $f(\mu_1,\ldots,\mu_m)$?


\notshow{
Below we state an important result from \cite{DughmiHKN17}, which is used in our reduction. 
The basic version of the algorithm (Algorithm~2 \cite{DughmiHKN17}) requires $\lambda n e^{\lambda(1-\mu_{\max})}$ samples on expectation to produce the result.
With some Bernoulli factory tricks (Algorithm~3 \cite{DughmiHKN17}), we can reduce the expected number of samples to $\lambda^4 n^2 \log(\lambda n)$.

\begin{theorem}[Theorem~3.2 ,Theorem~3.4 \cite{DughmiHKN17}]
\label{thm:exp_bern_race}
 Given a parameter $\lambda>0$ and sample access to $n$ Bernoulli distributions with bias $p_1,p_2,\ldots,p_n$,
 we can sample from the \emph{Exponential Weights} distribution, i.e. the distribution $
\dD$ where $\Pr_{I \sim \dD}[I = i] = \frac{e^{\lambda p_i}}{\sum_{i}e^{\lambda p_j}}$, using  $\lambda m e^{\lambda(1-p_{\max})}$ samples on expectation. For $\lambda > 4$ this bound is improved to $\lambda^4 n^2 \log(\lambda n)$ samples on expectation.
\end{theorem}
}

Below we state an important result from \cite{dughmi2017bernoulli}, which we use in this paper. It proposes an algorithm called \emph{Fast Exponential Bernoulli Race} with $X=[m]$. For every $\lambda>0$, it produce a sample from the Gibbs distribution with temperature $\frac{1}{\lambda}$ and energy $\mu_i$ for each outcome $i$, given only sample access to distributions $\dD_1,\dD_2,\ldots,\dD_m$. 

\begin{theorem}\cite{dughmi2017bernoulli}
\label{thm:exp_bern_race}
Given any parameter $\lambda>0$ and sample access to distributions $\FF_1,\FF_2,\ldots,\FF_m$ with expectations $\mu_1,\mu_2,...\mu_m\in (0,1)$, there exists an algorithm that can sample from a Gibbs distribution in $\Delta^m$, where $$z_i={\exp(\lambda \mu_i)\over \sum_{j\in [m]}\exp (\lambda \mu_{j})},$$  using $O(\lambda^4 m^2 \log(\lambda m))$ samples in expectation.
\end{theorem}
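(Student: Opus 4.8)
The plan is to realize the sampler as a \emph{Bernoulli race} driven by a gadget that converts each $\mu_i$-coin into a coin whose bias is proportional to $\exp(\lambda\mu_i)$. Since $\exp(\lambda\mu_i)=e^{\lambda}\cdot\exp(-\lambda(1-\mu_i))$, the target distribution satisfies $z_i\propto r_i:=\exp(-\lambda(1-\mu_i))\in(e^{-\lambda},1)$, so it suffices to (i) build, for each $i$, a coin of bias exactly $r_i$ from sample access to $\FF_i$, and then (ii) run the race: repeatedly draw $i$ uniformly from $[m]$, flip its $r_i$-coin, and output $i$ on the first success. Conditioned on halting, the output index is distributed exactly as $z$ (the race depends on the biases only through their ratios, provided all biases lie in $[0,1]$), and the number of rounds is geometric with success probability $\frac1m\sum_i r_i$.

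For step (i) I would use Poisson thinning (the ``exponential Bernoulli factory''). First Bernoulli-ize $\FF_i$: a draw $X\in[0,1]$ together with an independent $U\sim\mathrm{Unif}[0,1]$ yields $\mathds{1}[U\le X]\sim\mathrm{Bernoulli}(\mu_i)$. Then draw $N\sim\mathrm{Poisson}(\lambda)$ using only internal randomness, flip the $\mu_i$-coin $N$ times, and declare success iff all $N$ flips are heads; since $\Pr[\text{success}\mid N]=\mu_i^{N}$, we get $\E_N[\mu_i^{N}]=\exp(\lambda(\mu_i-1))=r_i$ exactly, using $N$ samples from $\FF_i$, i.e.\ $\lambda$ in expectation. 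Feeding this into the race and using $\sum_i r_i\ge m e^{-\lambda}$ already proves a version of the statement with $O(\lambda e^{\lambda})$ expected samples: the output distribution is exactly right, but the dependence on $\lambda$ is exponential. (This matches the naive bound $\lambda m e^{\lambda(1-\mu_{\max})}$ obtained by keeping only the largest term $r_{\arg\max}$ in $\sum_i r_i$.)

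The main obstacle, and the reason the improved bound $O(\lambda^4 m^2\log(\lambda m))$ for $\lambda>4$ takes real work, is that the race is slow precisely when all $\mu_i$ are small, since then every $r_i$ is close to $e^{-\lambda}$. The conceptual fix is to run the race with \emph{rescaled} biases $r_i':=\exp(\lambda\mu_i-a)$ for a common shift $a\ge\lambda\max_i\mu_i$: the output distribution is unchanged (the race is invariant under a common multiplicative rescaling of biases that keeps them in $[0,1]$), but choosing $a$ close to $\lambda\max_i\mu_i$ makes the per-round success probability $\Omega(1/m)$. One therefore prepends a pilot phase that estimates $\max_i\mu_i$ to additive accuracy $O(1/\lambda)$ --- which by Hoeffding plus a union bound costs $O(\lambda^2 m\log(\lambda m))$ samples --- sets $a$ from this estimate, realises the shifted coins by composing Poisson thinning with elementary rejection/recursion steps, and then runs the race. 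The delicate point is to arrange the construction so that an \emph{inaccurate} pilot estimate only inflates the running time and never biases the output (e.g.\ by using a shift that is a guaranteed, not merely high-probability, upper bound, or by adaptively increasing $a$), and then to account for every source of randomness --- the pilot samples, the geometric number of race rounds, the Poisson flip counts, and the rejection steps --- to arrive at the stated expectation. I would treat this bookkeeping as the crux; for the present paper it suffices to quote it from~\cite{dughmi2017bernoulli}.
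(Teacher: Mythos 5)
This statement is imported by the paper as a black box: the paper gives no proof of Theorem~\ref{thm:exp_bern_race} and simply cites \cite{dughmi2017bernoulli}, so your closing move of quoting that reference is consistent with how the paper itself treats the result. What you do prove on your own is correct but weaker than the statement. The Poisson-thinning gadget (draw $N\sim\mathrm{Poisson}(\lambda)$, Bernoulli-ize $\FF_i$, succeed iff all $N$ flips are heads) does produce an exact $\exp(-\lambda(1-\mu_i))$-coin, and the uniform-index Bernoulli race over these coins does output exactly the Gibbs distribution $z$; but the expected sample count of that scheme is $\Theta\bigl(\lambda m e^{\lambda(1-\mu_{\max})}\bigr)$, which is exponential in $\lambda$. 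The entire quantitative content of the theorem is the $O(\lambda^4 m^2\log(\lambda m))$ bound, and that is precisely the part your argument does not establish.

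The sketch you offer for the speedup has a concrete gap beyond bookkeeping. To run the race with shifted biases $r_i'=\exp(\lambda\mu_i-a)$ for some $a<\lambda$, the natural thinning realization needs per-trial heads probability $(\lambda/a)\mu_i$, i.e.\ a Bernoulli factory that multiplies an \emph{unknown} bias by a constant larger than one. This is not an ``elementary rejection/recursion step'': such factories exist only when the rescaled bias is bounded away from $1$, their expected cost degrades as it approaches $1$, and whether $(\lambda/a)\mu_i$ is close to $1$ is exactly what you do not know. Moreover, your pilot phase yields only a high-probability upper bound on $\lambda\max_i\mu_i$; a guaranteed bound forces $a\ge\lambda$ (undoing the speedup), and the alternative of ``adaptively increasing $a$'' needs a careful argument that restarting on data-dependent events does not bias the output --- none of which is supplied, and it is not how the fast exponential Bernoulli race of \cite{dughmi2017bernoulli} actually achieves the stated bound. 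So either present the full construction from that paper or, as the paper does, cite the theorem outright; the intermediate position of a partial construction plus a deferred crux does not constitute a proof of the stated sample complexity.
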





\notshow{
\begin{corollary}
\label{lem:sample-exp-weight} 
\mingfeinote{Fix $j$. }For any fixed $\Balpha \in [0,\mathcal{R}']^\ell$ and weights $\omega_{jk} \in [-\mathcal{R}, \mathcal{R}]$, a sample from the distribution 

$$z_i={\exp(\lambda \mu_i)\over \sum_{j\in [m]}\exp (\lambda \mu_{j})},$$

\mingfeinote{$\hat{z}_j$ in Lemma~\ref{obs:exp weight every replica}} can be drawn with $\left(\frac{2\mathcal{R} + \mathcal{R}'}{\delta}\right)^4 \ell^2 \log\left(\ell\cdot\frac{2\mathcal{R}+\mathcal{R}'}{\delta} \right)$ samples in expectation.
\end{corollary}

\begin{proof}
We initially consider the exponential weights distribution with weights:

\begin{eqnarray*}
\Omega_{jk} =
\frac{\omega_{jk}-\alpha_k + \mathcal{R}' + \mathcal{R}}{\mathcal{R}' + 2\mathcal{R}}, &j\in[\ell] \mingfeinote{k\in [\ell]?}
\end{eqnarray*}

We note that since $-\mathcal{R} \leq \omega_{jk} \leq \mathcal{R}$ and $\bm{\alpha}_k \leq \mathcal{R}'$,
then:

\begin{eqnarray*}
    0 \leq \frac{\omega_{jk}-\alpha_k + \mathcal{R}' + \mathcal{R}}{\mathcal{R}' + 2\mathcal{R}} \leq 1
\end{eqnarray*}

We can use Theorem~\ref{thm:exp_bern_race} with $\lambda=\frac{\mathcal{R}'+2\mathcal{R}}{\delta}$ and get exponential weights distribution with weights:

\begin{eqnarray*}
\Omega'_{jk} =
\frac{\omega_{jk}-\alpha_k + \mathcal{R}' + \mathcal{R}}{\delta}, &j\in[\ell] \mingfeinote{k\in [\ell]?}
\end{eqnarray*}

using $\left(\frac{2\mathcal{R} + \mathcal{R}'}{\delta}\right)^4 \ell^2 \log\left(\ell \frac{2\mathcal{R}+\mathcal{R}'}{\delta} \right)$ samples on expectation.\footnote{if $0 <\lambda \leq 4$ we need $O(m)$ }

\mingfeinote{The lemma is proved by noticing that $\Omega_{jk}=\frac{\mathcal{R}'+2\mathcal{R}}{\delta}\cdot \Omega_{jk}'$ for $k$, and thus the above two distributions are the same.}
\end{proof}
}

The \textit{fast exponential Bernoulli race} \cite{dughmi2017bernoulli} is a randomized algorithm that allows us to sample from the Gibbs distribution. We use the following result in the rest of our paper.

\begin{lemma}\label{lem:sample-exp-weight}[Fast Exponential Bernoulli Race]
For any integer $m$, any $\delta>0$, and any $(\alpha_k)_{k\in [m]} \in [0,h]^m$, given sample access to distributions $\FF_1,\ldots, \FF_m$ with expectations $w_1,\ldots, w_m \in [-1,1]$,  a sample from the following Gibbs distribution in $\Delta^m$:

 $$z_k={\exp((w_k-\alpha_k)/\delta)\over \sum_{j\in [m]}\exp ((w_j-\alpha_j)/\delta)},$$
 can be drawn with $\left(\frac{4 + h}{\delta}\right)^4 m^2 \log\left(\frac{(4+h)m}{\delta} \right)$ samples from $(\FF_k)_{k\in[m]}$ in expectation.
\end{lemma}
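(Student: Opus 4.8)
\textbf{Proof plan for Lemma~\ref{lem:sample-exp-weight}.}

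The plan is to reduce this statement directly to Theorem~\ref{thm:exp_bern_race} (the Fast Exponential Bernoulli Race of Dughmi et al.~\cite{dughmi2017bernoulli}), which produces a sample from a Gibbs distribution with energies $\mu_1,\dots,\mu_m \in (0,1)$ and temperature $1/\lambda$ using $O(\lambda^4 m^2 \log(\lambda m))$ samples in expectation. The difficulty is that our given distributions $\FF_k$ have means $w_k \in [-1,1]$ rather than in $(0,1)$, and we also want to incorporate the shifts $\alpha_k \in [0,h]$; so the first step is a normalization. Define the rescaled energies
\[
\mu_k \;=\; \frac{w_k - \alpha_k + h + 1}{h + 4},
\]
and observe that since $w_k \in [-1,1]$ and $\alpha_k \in [0,h]$ we have $w_k - \alpha_k \in [-1-h, 1]$, hence $w_k - \alpha_k + h + 1 \in [0, h+2]$ and therefore $\mu_k \in [0, (h+2)/(h+4)] \subseteq (0,1)$ (with the convention that one can nudge into the open interval by an arbitrarily small perturbation, or simply note the endpoint cases are handled by the factory as in~\cite{dughmi2017bernoulli}).

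Next I would argue that we can obtain sample access to distributions with means $\mu_k$ from sample access to the $\FF_k$. This is immediate: given one draw $X \sim \FF_k$ (so $\E[X] = w_k$, and we may assume $X \in [-1,1]$ after truncation, since only the mean matters for the factory and the paper's $v_i(\cdot,\cdot)$ are bounded), output the affine transformation $(X - \alpha_k + h + 1)/(h+4)$, whose expectation is exactly $\mu_k$ and which takes values in $(0,1)$. More carefully, to feed the Bernoulli race we actually need a $\mu_k$-coin; this is produced by a single draw of a fair-coin-free reduction from the $[0,1]$-valued sample (flip a $U[0,1]$ and compare), a standard construction, costing one sample from $\FF_k$ per coin flip requested. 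Then apply Theorem~\ref{thm:exp_bern_race} with parameter $\lambda = (h+4)/\delta$: it returns an index $k$ with probability
\[
\frac{\exp(\lambda \mu_k)}{\sum_{j} \exp(\lambda \mu_j)}
\;=\;
\frac{\exp\!\big((w_k - \alpha_k + h + 1)/\delta\big)}{\sum_j \exp\!\big((w_j - \alpha_j + h + 1)/\delta\big)}
\;=\;
\frac{\exp\!\big((w_k - \alpha_k)/\delta\big)}{\sum_j \exp\!\big((w_j - \alpha_j)/\delta\big)},
\]
where the last equality holds because the common additive constant $(h+1)/\delta$ in every exponent cancels between numerator and denominator. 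This is exactly the target Gibbs distribution $z$.

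Finally I would bound the sample complexity: plugging $\lambda = (h+4)/\delta$ into the $O(\lambda^4 m^2 \log(\lambda m))$ bound of Theorem~\ref{thm:exp_bern_race} gives $O\!\big(((h+4)/\delta)^4 \, m^2 \log((h+4)m/\delta)\big)$ samples in expectation, matching the claimed $\left(\tfrac{4+h}{\delta}\right)^4 m^2 \log\!\left(\tfrac{(4+h)m}{\delta}\right)$ (up to the absorbed constant, and noting the small-$\lambda$ regime $\lambda \le 4$ only improves the bound to $O(m)$). Each sample drawn in the race corresponds to $O(1)$ samples from the $\FF_k$'s under the affine reduction above, so the total number of samples from $(\FF_k)_{k \in [m]}$ is of the same order. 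The one place that needs a little care — the main (minor) obstacle — is the boundary issue that $\mu_k$ can equal $0$ or the upper endpoint, since Theorem~\ref{thm:exp_bern_race} is stated for means strictly in $(0,1)$; this is handled exactly as in~\cite{dughmi2017bernoulli}, either by a negligible perturbation of the shift constants (replace $h+1$ by $h+1+\gamma$ for tiny $\gamma>0$, which again cancels in the ratio) or by observing the factory construction degrades gracefully at the endpoints. Everything else is routine rescaling and cancellation.
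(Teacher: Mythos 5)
Your proposal is correct and follows essentially the same route as the paper: affinely rescale the energies into $(0,1)$, invoke Theorem~\ref{thm:exp_bern_race} with $\lambda=(h+4)/\delta$, and observe that the common additive shift cancels in the Gibbs ratio, giving the stated sample complexity. The only difference is cosmetic: the paper shifts by $h+2$ rather than $h+1$, so the rescaled energies $\frac{w_k-\alpha_k+h+2}{h+4}\in\left[\frac{1}{h+4},\frac{h+3}{h+4}\right]$ lie strictly inside $(0,1)$ and the boundary perturbation you append at the end is unnecessary.
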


\begin{prevproof}{Lemma}{lem:sample-exp-weight}
First, notice that $(z_k)_{k\in[m]}$ can also be represented as the Gibbs distribution with temperature {$\frac{\delta}{h+4}$} and energy $\Omega_{k} =
\frac{\omega_{k}-\alpha_k + h + 2}{h + 4}, k\in [m].$
Note that since $-1 \leq w_{k} \leq 1$ and ${\alpha}_k \leq h$,
then $0 <\frac{w_k-\alpha_k + h + 2}{h + 4} <1$. 
Thus, by Theorem~\ref{thm:exp_bern_race} with $\lambda=\frac{h+4}{\delta}$,  we can generate a sample according to $(z_k)_{k\in[m]}$ with $\left(\frac{4 + h}{\delta}\right)^4 m^2\log\left( \frac{(4 + h)m}{\delta} \right)$ samples in expectation.
%
%
\end{prevproof}





\section{Tools from the Literature}\label{sec:reduction}

\subsection{Replica-Surrogate Matching}

Now we provide a detailed description of the replica-surrogate matching mechanism used in~\cite{DaskalakisW12, RubinsteinW15, CaiZ17}. For each agent $i$, the mechanism generates a number of replicas and surrogates from $\DD_i$, and maps the agent's type $t_i$ to one of the surrogates via a maximum weight replica-surrogate matching, and charges the agent the corresponding VCG payment. Then let the matched surrogate participate in the mechanism for the agent. Formally, suppose we are given query access to a mechanism $\MM=(x,p)$, we construct a new mechanism $\MM'$ using the following two-phase procedure:

\paragraph{Phase 1: Surrogate Selection} For each agent $i$,
\begin{enumerate}
    \item Given her reported type $t_i\in \mathcal{D}_i$, create $\ell-1$ replicas sampled i.i.d. from $\mathcal{D}_i$ and $\ell$ surrogates sampled i.i.d. from $\mathcal{D}_i$. The value of $\ell$ is specified in Lemma~\ref{cor:revenue of basic M_0}.
    \item Construct a weighted bipartite graph between replicas (and agent $i$'s true type $t_i$) and surrogates. The weight between the $j$-th replica $\rj$ and the $k$-th surrogate $\sk$ is the interim value of agent $i$ when her true type is $\rj$ but reported $\sk$ to $\MM$ less the interim payment for reporting $\sk$ multiplied by $(1-\eta)$:
    \begin{equation}\label{equ:def W}
    W_i(\rj,\sk)= \E_{t_{-i}\sim \DD_{-i}}\left[v_i(\rj,x(\sk, t_{-i}))\right]-(1-\eta)\cdot \E_{t_{-i}\sim \DD_{-i}}\left[p_i(\sk,t_{-i})\right].
    \end{equation}
    \item Treat $W_i(\rj,\sk)$ as the value of replica $\rj$ for being matched to surrogate $\sk$. Run the VCG mechanism among the replicas, that is, compute the maximum weight matching w.r.t. edge weight $W_i(\cdot,\cdot)$ and the corresponding VCG payments.  If a replica (or type $t_i$) is unmatched in the maximum matching, match it to a random unmatched surrogate.
\end{enumerate}

\paragraph{Phase 2: Surrogate Competition}
Let $s_i$ be the surrogate matched with the agent $i$'s true type $t_i$. Run mechanism $\MM$ under input $s=(s_1,\ldots, s_n)$. Let $o=(o_1,\ldots, o_n)$ be a the outcome generated by $x(s)$. If agent $i$ is matched in the maximum matching, her outcome is $o_i$ and her expected payment is 
$(1-\eta)\cdot p_i(s)$ plus the VCG payment for winning surrogate $s_i$ in the first phase; Otherwise the agent gets the null outcome $\perp$ and pays $0$.

In Figure~\ref{fig:replica-surrogate} we illustrate the replica-surrogate scheme.

\begin{figure}[h!]
  \centering
  \includegraphics[width=0.5\textwidth]{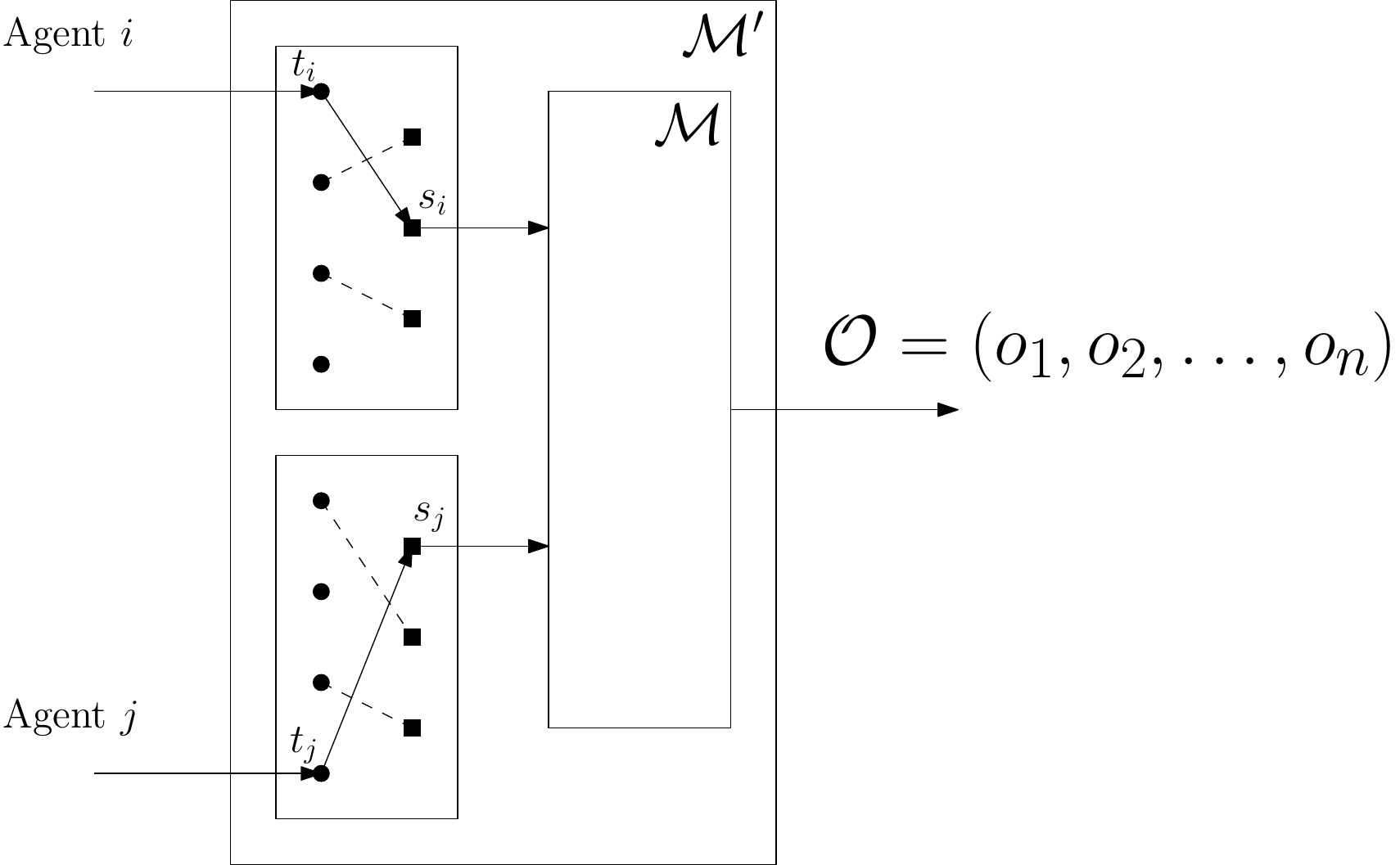}
  \caption{With $\bullet$ we denote the replicas and with $\blacksquare$ the surrogates}
  \label{fig:replica-surrogate}
\end{figure}

\begin{lemma}\cite{HartlineKM11,BeiH11,DaskalakisW12,RubinsteinW15, CaiZ17}
\label{lemma:IDEAL_BIC_IR}
$\MM'$ is BIC and IR.
\end{lemma}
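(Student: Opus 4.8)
The plan is to show that in $\MM'$ each agent's strategic choice reduces entirely to a single VCG instance (Phase~1), and that running $\MM$ on the selected surrogates in Phase~2 does not change this because the selected surrogates are distributed correctly. Concretely, I would first establish the \emph{surrogate-measure-preservation} property: for every agent $j$, if $j$ reports her true type $t_j\sim\DD_j$, then the surrogate $s_j$ matched to her in Phase~1 is again distributed according to $\DD_j$, and the surrogates of distinct agents are independent. The argument is a symmetry/exchangeability computation. Agent $j$'s true type together with her $\ell-1$ sampled replicas form an i.i.d.\ sample of size $\ell$ from $\DD_j$, the $\ell$ surrogates form an independent i.i.d.\ sample of size $\ell$ from $\DD_j$, and the maximum-weight matching together with the uniformly random completion of the unmatched left nodes (under a fixed tie-breaking rule depending only on the graph, not on the labels) is a bijection between left nodes and surrogates that is invariant under relabelling the left nodes. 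Averaging over the $\ell!$ relabellings and using exchangeability of the left sample, the surrogate matched to $j$'s true-type node is equally likely to be any of the $\ell$ surrogates; since those are i.i.d.\ $\DD_j$, we get $s_j\sim\DD_j$. Independence across agents is immediate, and---crucially---$s_{-i}$ depends only on the other agents' reports and internal randomness, hence neither on agent $i$'s report nor on agent $i$'s own replicas/surrogates.

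For BIC, fix an agent $i$, a true type $t_i$, and an alternative report $b_i$; I would condition on agent $i$'s realized replicas and surrogates, so that all weights $W_i(\cdot,\cdot)$ in her graph become fixed numbers. By the measure-preservation property $s_{-i}\sim\DD_{-i}$ independently of everything on agent $i$'s side, so the net interim value to $i$ of ``being matched to surrogate $s^{(k)}$ and then competing in $\MM$ under type $s^{(k)}$'' is exactly $W_i(t_i,s^{(k)})=\E_{t_{-i}\sim\DD_{-i}}[v_i(t_i,x(s^{(k)},t_{-i}))]-(1-\eta)\cdot\E_{t_{-i}\sim\DD_{-i}}[p_i(s^{(k)},t_{-i})]$, while being unmatched yields value and payment $0$. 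Therefore Phase~1 is precisely the Clarke-pivot VCG mechanism for the bipartite assignment problem in which the true-type node has valuation $W_i(t_i,\cdot)$ over surrogates (and $0$ for opting out), the other $\ell-1$ nodes have fixed valuations $W_i(r^{(j)},\cdot)$, and reporting $b_i$ amounts to the true-type node declaring $W_i(b_i,\cdot)$. Agent $i$'s interim utility in $\MM'$ (conditioned on her samples) equals the quasi-linear utility of that node in this VCG instance, which is maximized by declaring the true valuation $W_i(t_i,\cdot)$, i.e.\ by reporting $b_i=t_i$. Taking expectation over agent $i$'s samples gives the BIC inequality.

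For IR, under truthful reporting the same identification shows agent $i$'s interim utility equals the VCG utility of a node with valuation $W_i(t_i,\cdot)$. In a bipartite assignment an unmatched left node is equivalent to an absent one (its surrogate is merely freed for the remaining nodes), so the opt-out allocation carries value $0$ and inflicts no externality; hence the VCG utility of every node is at least (optimal welfare with the node) $-$ (optimal welfare of the others without it) $\ge 0$, and in particular the VCG price charged to an unmatched node is $0$, consistent with the ``pays $0$'' and ``outcome $\perp$'' rules of Phase~2. Averaging over agent $i$'s samples yields IR. Note that this entire argument uses nothing about the incentive properties of $\MM$---only that $\MM$ is a well-defined mechanism and that the null option $\perp$ is available---so it is insensitive to $\MM$ being merely $\varepsilon$-BIC and $\varepsilon$-IR.

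The main obstacle is the surrogate-measure-preservation step: one must be careful that the matching rule (including the tie-breaking and the random completion of unmatched left nodes) is genuinely symmetric in the left labels and that exchangeability is invoked correctly even when types repeat. Once that is in place, BIC and IR follow from the textbook facts that the Clarke-pivot VCG mechanism is dominant-strategy incentive compatible and individually rational in the presence of a zero-value, zero-externality opt-out, together with the observation that the Phase~1 weights $W_i(t_i,\cdot)$ are exactly agent $i$'s true downstream values net of what she pays inside $\MM$.
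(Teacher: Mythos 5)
Your proposal is correct and takes essentially the same route as the paper's proof: first show, by exchangeability of the $\ell$ i.i.d.\ left nodes under a label-symmetric perfect matching (the paper phrases this as ``draw $\ell$ replicas, match, then pick one uniformly at random to be the agent''), that a truthful agent's matched surrogate is distributed as $\mathcal{D}_i$ and hence $s_{-i}\sim \mathcal{D}_{-i}$; then observe that, conditioned on the samples, Phase~1 is exactly a VCG assignment in which the declared weights $W_i(b_i,\cdot)$ coincide with the agent's true continuation values precisely when $b_i=t_i$, so BIC and IR follow from standard VCG properties with the zero-value unmatched/$\perp$ option. The only difference is that you make the IR/opt-out accounting explicit, which the paper leaves implicit.
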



\begin{proof}
We prove this in two parts,
similarly to \cite{CaiZ17}.
First we argue that the distribution of the surrogate $s_i$ that represents the agent, when the agent reports truthfully,
is $\dD_i$.
Since we have a perfect matching,
an equivalent way of thinking about the process is to draw $\ell$ replicas, produce the perfect matching (the VCG matching plus the uniform matching between the unmatched replicas and surrogates) and then pick one replica uniformly at random to be the agent. These two processes produce the same joint distribution between replicas, surrogates and the agents $i$. So we can just argue about the second process of sampling. Since the agent is chosen uniformly at random between the replicas in the second process, the surrogate $s_i$ that represents the agent, will also be chosen uniformly at random between all the surrogates. Thus, the distribution of $s_i$ is $\dD_i$.

We need to argue that for every agent $i$ reporting truthfully is a best response, if every other agent is truthful. In the VCG mechanism, agent $i$ faces a competition with the replicas to win a surrogate.  If agent $i$ has type $t_i$, then her value for winning a surrogate with type $s_i$ in the VCG mechanism is exactly the edge weight $$W_i(t_i,s_i)= \E_{t_{-i}\sim \DD_{-i}}\left[v_i(t_i,x(s_i, t_{-i}))\right]-(1-\eta)\cdot \E_{t_{-i}\sim \DD_{-i}}\left[p_i(s_i,t_{-i})\right].$$ Clearly, if agent $i$ reports truthfully, the weights on all incident edges between her and all the surrogates will be exactly her value for winning those surrogates. Since agent $i$ is in a VCG mechanism to compete for a surrogate, reporting the true edge weights is a dominant strategy for her, therefore reporting truthfully is also a best response for her assuming the other agents are truthful. It is critical that the other agents are reporting truthfully, otherwise agent $i$'s value for winning a surrogate with type $s_i$ may be different from the weight on the corresponding edge.

\end{proof}

Moreover, when $\ell$ is sufficiently large, the revenue of $\MM'$ is close to the revenue of $\MM$.
\notshow{

\begin{corollary} \label{cor:revenue of basic M_0}
If $\MM$ is an $\varepsilon$-BIC and IR mechanism w.r.t. $\DD$, then for any $\eta>0$, $\rev(\MM',\DD)\geq(1-\eta)\rev(M)-n\mathcal{R}\cdot\sqrt{\frac{T}{\ell}}-n\varepsilon/\eta$.
\end{corollary}
}

\begin{lemma} \cite{DaskalakisW12, RubinsteinW15, CaiZ17}\label{cor:revenue of basic M_0}
If $\MM$ is an $\varepsilon$-BIC and IR mechanism w.r.t. $\DD$, then for any $\eta\in(0,1)$ and any $\ell>\frac{T}{\varepsilon^2}$, $\rev(\MM',\DD)\geq(1-\eta)\rev(\MM,\DD)-\Theta(n\varepsilon)/\eta$.\end{lemma}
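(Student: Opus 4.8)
The plan is to bound the revenue loss of $\MM'$ relative to $\MM$ by tracking two sources of loss: (i) the multiplicative discount factor $(1-\eta)$ applied to the payments in the surrogate competition phase, and (ii) the gap between the interim allocation that a truthful agent receives under $\MM'$ versus the interim allocation she would receive under $\MM$, which in turn controls how much the VCG payments in the matching phase can erode. First I would fix an agent $i$ and recall from the proof of Lemma~\ref{lemma:IDEAL_BIC_IR} that when all agents report truthfully, the surrogate $s_i$ representing agent $i$ is distributed exactly as $\DD_i$; hence the expected payment agent $i$ makes to $\MM$ inside $\MM'$ (before the $(1-\eta)$ discount and before subtracting VCG payments) is $\E_{t_i\sim\DD_i}\E_{t_{-i}\sim\DD_{-i}}[p_i(t_i,t_{-i})]$ scaled appropriately, modulo the event that $t_i$'s replica is unmatched in the VCG matching and gets a random surrogate.

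Next I would quantify the probability that a replica (or the true type) is left unmatched by the maximum-weight matching and thrown into the uniform leftover matching. Since there are $\ell-1$ replicas plus the true type on one side and $\ell$ surrogates on the other, and replicas/surrogates are i.i.d.\ from $\DD_i$ with support size at most $T$, a balls-into-bins / concentration argument shows each type in $\TT_i$ is over- or under-represented by at most roughly $\sqrt{\ell}$ in expectation, so the expected fraction of replicas matched "incorrectly" is $O(\sqrt{T/\ell})$; with $\ell > T/\varepsilon^2$ this is $O(\varepsilon)$. On this bad event we lose at most the full value of that agent's contribution, which is $O(1)$ since valuations lie in $[0,1]$, contributing an $O(n\varepsilon)$ term overall.

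Then I would handle the VCG payments in the matching phase. The key observation is that the VCG payment agent $i$ pays to win her surrogate is at most the difference between the optimal matching value and the optimal matching value when agent $i$'s type is removed; summing the agents' VCG payments telescopes against the matching weights, and the matching weights themselves are $W_i(\rj,\sk) = (\text{interim value}) - (1-\eta)(\text{interim payment})$. Using that $\MM$ is $\varepsilon$-BIC, a truthful replica's own edge is within $\varepsilon$ of her best edge, so the total "welfare-like" surplus extracted by the matching beyond what truthful play gives is $O(\varepsilon/\eta)$ per agent — the $1/\eta$ appears because the $(1-\eta)$ discount on payments is what makes the true type's edge (approximately) the maximizer of $W_i(t_i,\cdot)$, and converting an $\varepsilon$-BIC slack into a bound on the surrogate's interim payment loses a $1/\eta$ factor (this is the standard trick: if reporting $t_i$ is only $\varepsilon$-optimal but payments are discounted by $\eta$, then an agent cannot gain more than $\varepsilon/\eta$ in discounted payment by the swap). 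Collecting: $\rev(\MM',\DD) \geq (1-\eta)\rev(\MM,\DD) - O(n\varepsilon) - O(n\varepsilon/\eta) = (1-\eta)\rev(\MM,\DD) - \Theta(n\varepsilon)/\eta$.

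I expect the main obstacle to be step three — cleanly relating the sum of VCG payments in the replica-surrogate matching to the interim payments of $\MM$ while only paying an $O(\varepsilon/\eta)$ penalty. One must argue that the maximum-weight matching does not let replicas collectively "profit" by more than $O(\varepsilon)$ per agent over the diagonal (truthful) matching, which requires combining the $\varepsilon$-BIC inequality for $\MM$ with the fact that, under the discounted weights $W_i$, the diagonal matching is within $O(\varepsilon)$ of optimal in total weight; this last fact is itself a consequence of $\varepsilon$-BIC applied termwise plus an averaging argument. Since this lemma is quoted verbatim from~\cite{DaskalakisW12, RubinsteinW15, CaiZ17}, I would cite their analysis for the detailed accounting and present only the decomposition above.
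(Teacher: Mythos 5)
Your high-level decomposition (the $(1-\eta)$ discount, a loss from surrogates that go uncollected, and an $\varepsilon/\eta$ term coming from $\varepsilon$-BIC interacting with the discount) has the right shape and matches the bound, but two of your steps would not survive being made precise, and they are exactly where the work lies. First, your step two bounds the fraction of replicas "matched incorrectly" by the \emph{maximum-weight} matching via a balls-into-bins argument. The concentration bound (the lemma adapted from~\cite{HartlineKM11}, Lemma~\ref{lem:equal type matching} in this paper) only controls the size of the maximal \emph{same-type} matching; it says nothing about which surrogates the maximum-weight matching leaves out. Indeed the max-weight matching can leave out many surrogates (it never uses negative-weight edges), and the relevant quantity is not how many surrogates are unmatched but their total payment $\sum_{s\notin \text{matching}} p_i(s)$, which cannot be bounded by counting alone — it is inherently entangled with your step three. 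Second, the mechanism you propose for the $\varepsilon/\eta$ term is not correct as stated: $\varepsilon$-BIC gives $W_i(t,t)\geq W_i(t,t')-\varepsilon+\eta\bigl(p_i(t)-p_i(t')\bigr)$, so "termwise $\varepsilon$-BIC plus averaging" only shows the diagonal matching is within $\ell\varepsilon$ \emph{plus} $\eta$ times payment differences of optimal, and those payment differences are precisely what you are trying to bound. Moreover the phase-1 VCG payments need no telescoping treatment at all: they are nonnegative transfers to the seller and are simply dropped (the revenue of $\MM'$ is lower-bounded by the phase-2 payments alone), so framing the loss as VCG payments "eroding" revenue is a misdirection.

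The paper's own proof avoids both issues by proving this lemma as the $\Delta=0$, $d=1$, $\DD=\DD'$ special case of Lemma~\ref{lem:payment K to V}: (i) the same-type (coupled) matching covers all but $\sqrt{T\ell}$ surrogates in expectation, each with payment at most $1$; (ii) the symmetric difference between the chosen maximum-weight matching and an \emph{augmented} same-type matching is decomposed into alternating paths, and the $\varepsilon$-BIC inequality is summed along each path so that the $\eta\cdot p_i$ terms telescope to the path endpoints while the $\varepsilon$'s accumulate; combined with weight-optimality of the chosen matching this yields $\sum_{s\in \text{max matching}}p_i(s)\geq \sum_{s\in\text{same-type matching}}p_i(s)-\ell\varepsilon/\eta$ (Lemma~\ref{lem:payment K to L} and Corollary~\ref{cor:bound revenue by L}); (iii) the $(1-\eta)$ factor and $\ell>T/\varepsilon^2$ then give $\rev(\MM',\DD)\geq(1-\eta)\rev(\MM,\DD)-\Theta(n\varepsilon)/\eta$. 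So the missing idea in your proposal is this coupled comparison between the two matchings — the point at which the payment terms cancel along paths — which cannot be replaced by the separate counting and "diagonal is near-optimal" claims you make in steps two and three.
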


{Lemma~\ref{cor:revenue of basic M_0} also} follows from a special case of Lemma~\ref{lem:payment K to V} when $\Delta=0$ and $d=1$. The main takeaway of this {Lemma} is that the above mechanism $\MM'$ indeed satisfies the requirement of an $\varepsilon$-BIC to BIC transformation. However, as we discussed in Section~\ref{sec:results and techniques}, the mechanism runs in exponential time.

\subsection{Online Entropy Regularized Matching}\label{subsec:online matching}



Now we describe the \emph{online entropy regularized matching algorithm} developed by Dughmi et al.~\cite{dughmi2017bernoulli}. The original application is to find approximately maximum replica-surrogate matching in welfare maximization, but
the algorithm is general and can be applied to any \emph{$d$-to-$1$ bipartite matching with positive edge~weights}.

\vspace{-.15in}
\paragraph{$d$-to-$1$ Matching} For every integer $\ell$, $d$, consider the complete bipartite graph between $d\ell$ left hand side nodes (called LHS-nodes) and $\ell$ right hand side nodes (called RHS-nodes). Let $\omega_{jk}$ be the edge weight between LHS-node $j$ and RHS-node $k$ for $j\in [d\ell], k\in [\ell]$. For ease of notation, let $\omega_j=(\omega_{jk})_{k\in[\ell]}$, $\omega=(\omega_j)_{j\in [d\ell]}$, and $\omega_{-j}=(\omega_{j'})_{j'\not=j}$. A matching is called a $d$-to-$1$ matching if every LHS-node is matched to at most one RHS-node, and every RHS-node is matched to at most $d$ LHS-nodes. A $d$-to-$1$ matching is called \emph{perfect} if every LHS-node is matched to one RHS-node, and every RHS-node is matched to~exactly~$d$~LHS-nodes.

In this section, we focus on the case where all edge weights $\omega$ are nonnegative, and we refer to this case as the \emph{nonnegative weight $d$-to-$1$  matching}. In Section~\ref{sec:arbitrary_weight_matching}, we generalize the results to arbitrary weights.


\notshow{

In the program we use the constraint $\sum_{k}z_{jk}=1$.:
\vspace{-.1in}
\begin{equation}\label{LP:max matching}
\begin{split}
\max \displaystyle&\sum_{j,k} z_{jk}\cdot\omega_{jk} \\
\text{subject to} \displaystyle &\sum_j z_{jk}\leq d,~~~~~~\forall k\in [\ell]\\
                                        & \sum_k z_{jk}=1,~~~~~~\forall j\in [d\ell]\\
                                        &  z_{jk}\in [0,1], ~~~~~~\forall j\in [d\ell], \forall k\in [\ell].\end{split}
\end{equation}

\vspace{-.05in}

}
The optimal $d$-to-$1$ matching is simply a maximum weight bipartite matching problem. The challenge is that the weights are not given. For every edge $(j,k)$, we only have sample access to {a distribution $\FF_{jk}$} whose expectation is $\omega_{jk}$. To the best of our knowledge, none of the algorithms for finding a maximum weight bipartite matching can be implemented exactly with such sample access to the edge weights. Moreover, as we require the replica-surrogate matching mechanism to be incentive compatible, the algorithm should be maximal-in-range. Therefore, finding the maximum weight matching using the empirical means is also not an option, as it violates the maximal-in-range property (see the discussion in Section~\ref{sec:results and techniques}). 

Dughmi et al.~\cite{dughmi2017bernoulli} provide a polynomial time maximal-in-range algorithm (Algorithm~\ref{alg:online matching}) to compute an approximately maximum weight perfect $d$-to-$1$ matching. 
The key idea is to find a ``\emph{soft maximum weight matching}'' instead of the maximum weight matching by adding an entropy function as a regularizer to the total weight. We summarize the guarantees of Algorithm~\ref{alg:online matching} in Theorem~\ref{thm:dughmi result}. We refer the readers to~\cite{dughmi2017bernoulli} for intuition behind Algorithm~\ref{alg:online matching}. However, to understand this paper, readers can simply treat Theorem~\ref{thm:dughmi result}  as a black box that guarantees that Algorithm~\ref{alg:online matching} is maximal-in-range, and finds approximately maximum expected weight $d$-to-1 matching, with only sample access to the distributions.

\begin{definition}
Given parameter $\delta>0$, the (offline) entropy regularized matching program (P) is:
\vspace{-.05in}
\begin{equation}\label{MP:regularized matching}
\begin{split}
\max &~~\textstyle\sum_{j,k} z_{jk}\cdot\omega_{jk}-\delta\sum_{j,k}z_{jk}\log(z_{jk}) \\
\text{subject to} &~~\textstyle\sum_j z_{jk}\leq d,~~~~~~\forall k\in [\ell]\\
                                         &~~\textstyle\sum_k z_{jk}=1,~~~~~~\forall j\in [d\ell] \\
                                          &~~z_{jk}\in [0,1], ~~~~~~\forall j\in [d\ell], \forall k\in [\ell].\end{split}
\end{equation}
\end{definition}

\vspace{-.05in}
Lagrangify the constraints $\sum_j z_{jk}\leq d, \forall  k\in[\ell] $. The Lagrangian dual of $(P)$ is:
$$\textstyle L(\Bz,\Balpha)=\sum_{j,k}z_{jk}\omega_{jk}-\delta\sum_{j,k}z_{jk}\log(z_{jk})-\sum_k\alpha_k(d-\sum_j z_{jk}).$$

The following lemma follows from the first-order condition: for any dual variables $\Balpha$, the optimal solution for the Lagrangian is given by a collection of Gibbs distribution $z^*=(z^*_j)_{j\in[d\ell]}$. 

\begin{lemma}\label{obs:exp weight}\cite{dughmi2017bernoulli}
For every dual variables $\Balpha\in[0,h]^\ell$, the optimal solution $z^*$ maximizing the Lagrangian $L(z,\Balpha)$ subject to constraints  $\sum_k z^*_{jk}=1,\forall j\in [d\ell]$ is:
$z_{jk}^*=\frac{\exp\left(\frac{\omega_{jk}-\alpha_k}{\delta}\right)}{\sum_{k'\in[\ell]}\exp\left(\frac{\omega_{jk'}-\alpha_{k'}}{\delta}\right)},~\forall j\in[d\ell],k\in[\ell]$. 

\noindent If for every edge $(j,k)$, we are given sample access to a distribution $\FF_{jk}$ whose mean is $\omega_{jk}\in [0,1]$, we can use the fast exponential Bernoulli race~\cite{dughmi2017bernoulli} to  sample from the Gibbs distribution $z^*_j$ for all $j\in [d \ell]$. In particular, each sample from distribution $z^*_j=(z^*_{j1},\ldots, z^*_{j\ell})$ only requires in expectation $poly(h,\ell,1/\delta)$ many samples from $(\FF_{jk})_{k}$ (Lemma~\ref{lem:sample-exp-weight}).

\end{lemma}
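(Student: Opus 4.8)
The plan is to establish the two assertions of Lemma~\ref{obs:exp weight} separately: the closed form for $z^*$ via a first-order (KKT) argument, and its efficient sampling via a direct appeal to Lemma~\ref{lem:sample-exp-weight}.

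First, for the closed form, I would note that once we impose the equality constraints $\sum_k z_{jk}=1$ for each $j\in[d\ell]$, the Lagrangian $L(\Bz,\Balpha)$ separates additively across the blocks indexed by $j$: up to an additive term that does not depend on $\Bz$, maximizing $L$ amounts to maximizing, for each $j$ independently, the function $\sum_k z_{jk}\cdot(\omega_{jk}-\alpha_k)-\delta\sum_k z_{jk}\log z_{jk}$ over the simplex $\{z_j\ge 0 : \sum_k z_{jk}=1\}$. Since $x\mapsto -x\log x$ is strictly concave, this is a strictly concave program over a convex set, so the KKT stationarity condition is both necessary and sufficient and pins down the unique maximizer. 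Introducing a single multiplier $\mu_j$ for $\sum_k z_{jk}=1$ and setting the partial derivative in $z_{jk}$ to zero gives $\omega_{jk}-\alpha_k-\delta(\log z_{jk}+1)-\mu_j=0$, i.e. $z_{jk}\propto\exp\big((\omega_{jk}-\alpha_k)/\delta\big)$; in particular every coordinate is strictly positive, so the nonnegativity constraints are inactive and contribute nothing. Normalizing via $\sum_k z_{jk}=1$ then yields exactly the stated Gibbs expression $z^*_{jk}=\exp((\omega_{jk}-\alpha_k)/\delta)/\sum_{k'}\exp((\omega_{jk'}-\alpha_{k'})/\delta)$.

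Second, for the sampling claim, observe that for each fixed $j$ the distribution $z^*_j=(z^*_{j1},\dots,z^*_{j\ell})$ is precisely a Gibbs distribution over $\ell$ states with temperature $\delta$ and energies $\omega_{jk}$ shifted by $-\alpha_k$. Since $\omega_{jk}\in[0,1]\subseteq[-1,1]$ and $\alpha_k\in[0,h]$, the hypotheses of Lemma~\ref{lem:sample-exp-weight} are met with $m=\ell$, $w_k=\omega_{jk}$, and the given $\alpha_k,\delta$; hence the fast exponential Bernoulli race produces one sample from $z^*_j$ using $\big((4+h)/\delta\big)^4\,\ell^2\log\big((4+h)\ell/\delta\big)=\poly(h,\ell,1/\delta)$ samples from $(\FF_{jk})_k$ in expectation. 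Running this independently for each $j\in[d\ell]$ gives the whole collection $z^*=(z^*_j)_j$ with the claimed per-block sample bound.

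I do not expect a substantive obstacle: the first part is a textbook maximum-entropy computation and the second is an immediate instantiation of Lemma~\ref{lem:sample-exp-weight}. The only points needing a moment of care are verifying that the equality constraints couple coordinates only within a block (so the per-$j$ decomposition of the Lagrangian is legitimate) and that the box constraints $z_{jk}\in[0,1]$ are slack at the optimum because the entropy regularizer forces an interior solution; both are routine.
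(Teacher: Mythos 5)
Your proof is correct and follows the same route the paper intends: the paper cites this result from Dughmi et al.\ and notes only that it ``follows from the first-order condition,'' which is exactly your per-block KKT/maximum-entropy computation, and the sampling claim is, as you say, a direct instantiation of Lemma~\ref{lem:sample-exp-weight} with $m=\ell$, $w_k=\omega_{jk}$ and the given $\alpha_k,\delta$. One small remark: your per-block objective $\sum_k z_{jk}(\omega_{jk}-\alpha_k)-\delta\sum_k z_{jk}\log z_{jk}$ corresponds to the standard Lagrangian $-\sum_k\alpha_k\bigl(\sum_j z_{jk}-d\bigr)$ (the display in the paper has the penalty's sign flipped), and this is the convention consistent with the lemma's stated Gibbs form and with the energies $\omega_{jk}-\gamma\alpha_k^{(j)}$ used in Algorithm~\ref{alg:online matching}, so your silent choice of sign is the right one.
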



\noindent If the optimal dual variables $\Balpha^*$ are known, by complementary slackness, the corresponding $z^*$ in Lemma~\ref{obs:exp weight} is the optimal solution of $(P)$.  The gap between the expected weight of $z^*$ and the maximum weight is at most the value of the maximum entropy $\delta\cdot d\ell\log\ell$, so we can simply use the matching sampled according to the distribution $z^*$. However, as the optimal dual is unknown, the wrong dual variables $\Balpha$ may cause a loss of $\sum_k\alpha_k(d-\sum_j z_{jk})$, which may be too large when $z$ is not computed based on the optimal dual variables. To resolve this difficulty, Dughmi et al.~\cite{dughmi2017bernoulli} introduce the second key idea -- \emph{Online Entropy Regularized Matching algorithm} (Algorithm~\ref{alg:online matching}). The online algorithm gradually learns a set of dual variables close to the optimum $\Balpha^*$. When the algorithm terminates, it is guaranteed to find a close to optimal solution to  program $(P)$.
From Lemma~\ref{obs:exp weight every replica}, the algorithm is also maximal-in-range for any choice of the parameters $\delta,\eta', \gamma$.

\begin{lemma}\label{obs:exp weight every replica}\cite{dughmi2017bernoulli}
For every $j$, $\alpha^{(j)}$ and parameter $\gamma$, the Gibbs distribution $\hz_j$ (specified in step 4) is maximal-in-range, 
as $$\hz_j\in \argmax_{z'\in \Delta^{|K|}} \sum_{k\in K}z'_{jk}\omega_{jk}-\delta\sum_{k\in K}z'_{jk}\log(z'_{jk})-\sum_{k\in K}\gamma\alpha_k^{(j)}\cdot z'_{jk}~\footnote{Notice that $\alpha^{(j)}$ only depends on the weights incident to the LHS-nodes $1$ to $j-1$.}.$$
 
\end{lemma}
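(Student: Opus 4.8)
The plan is to recognize the function being maximized in the displayed claim as a strictly concave function on the simplex $\Delta^{|K|}$, to identify its unique maximizer through the first-order conditions, and to verify that this maximizer is precisely the Gibbs distribution $\hz_j$ that Algorithm~\ref{alg:online matching} computes in step 4.

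First I would fix $j$, the current dual estimate $\alpha^{(j)}$, the index set $K$ of available RHS-nodes, and the parameter $\gamma$, and abbreviate $g(z')=\sum_{k\in K}z'_{k}\big(\omega_{jk}-\gamma\alpha_k^{(j)}\big)-\delta\sum_{k\in K}z'_{k}\log z'_{k}$ for $z'\in\Delta^{|K|}$ (writing $z'_k$ for $z'_{jk}$). The first sum is linear, hence concave, and $-\delta\sum_k z'_k\log z'_k$ is strictly concave on the relative interior of the simplex (its Hessian is $-\delta\,\mathrm{diag}(1/z'_k)$, which is negative definite), so $g$ is strictly concave and has a unique maximizer over $\Delta^{|K|}$. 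Moreover $\tfrac{d}{dz}(-\delta z\log z)=-\delta(\log z+1)\to+\infty$ as $z\to 0^+$, so the maximizer lies in the relative interior; consequently the nonnegativity constraints $z'_k\ge 0$ are inactive, and it suffices to impose the single equality constraint $\sum_{k\in K}z'_k=1$ via a Lagrange multiplier $\mu$.

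Next I would write the stationarity condition of $g(z')-\mu\big(\sum_{k}z'_k-1\big)$: for each $k\in K$, $\big(\omega_{jk}-\gamma\alpha_k^{(j)}\big)-\delta(\log z'_k+1)-\mu=0$, hence $z'_k\propto \exp\!\big((\omega_{jk}-\gamma\alpha_k^{(j)})/\delta\big)$; normalizing gives $z'_k=\exp\!\big((\omega_{jk}-\gamma\alpha_k^{(j)})/\delta\big)\big/\sum_{k'\in K}\exp\!\big((\omega_{jk'}-\gamma\alpha_{k'}^{(j)})/\delta\big)$, which is exactly the expression defining $\hz_j$ in step 4 (the analogue of Lemma~\ref{obs:exp weight} with $\gamma\alpha^{(j)}$ in place of $\alpha$). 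Thus $\hz_j=\argmax_{z'\in\Delta^{|K|}}g(z')$, which is the displayed claim. Finally, to match the definition of a maximal-in-range algorithm, I would write $g(z')=\sum_{k}z'_k\omega_{jk}-c(z')$ with $c(z')=\delta\sum_{k}z'_k\log z'_k+\gamma\sum_{k}\alpha_k^{(j)}z'_k$, and observe that $\alpha^{(j)}$ (and $K$) are produced by the online procedure from the edge weights incident to LHS-nodes $1,\dots,j-1$ only, as noted in the footnote to the lemma, so $c$ depends on $z'$ and on $\omega_{-j}$ alone — exactly what the definition requires.

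I do not expect a genuine obstacle here: this is a textbook strictly-concave maximization over the probability simplex. The only points that need care are (i) justifying interiority of the optimum so that the KKT system collapses to Lagrangian stationarity for the equality constraint, and (ii) confirming that the dual estimate $\alpha^{(j)}$ used in step 4 is genuinely a function of $\omega_{-j}$ only, which is where the online ordering of the algorithm is used.
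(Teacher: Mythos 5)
Your proof is correct: the strict concavity of the entropy-regularized objective, the interiority argument, and the Lagrangian stationarity yielding the softmax form, together with the observation that the cost $c(z')=\delta\sum_k z'_k\log z'_k+\gamma\sum_k\alpha_k^{(j)}z'_k$ (and the set $K$) depends only on the weights incident to LHS-nodes $1,\dots,j-1$, is exactly the argument behind this statement. The paper itself gives no proof here (it cites Dughmi et al.), and your derivation is the same first-order-condition reasoning underlying Lemma~\ref{obs:exp weight} and Observation~\ref{obs:exp weight with dummy-every replica}, so there is nothing to add.
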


\begin{algorithm}[h]
\begin{algorithmic}[1]
\REQUIRE Sample access to the distribution $\FF_{jk}$ whose expectation is $\omega_{jk}$, for every $j\in[d\ell],k\in[\ell]$.
    

    \FOR{$j \in [d\ell]$}
        \STATE Let $d_k^{(j-1)}$ be the number of LHS-nodes matched to RHS-node $k$ in the current matching and $K=\{k:d_k^{(j-1)}<d\}$. 
        \STATE Set $\Balpha^{(j)}$ according to the Gibbs distribution with energy $d_k^{(j-1)}$ for RHS-node $k\in K$ and temperature $1/\eta'$, and $\alpha_k^{(j)}=0$ for all $k\not\in K$.
        \STATE Match LHS-node $j$ to a RHS-node $k\in K$ according to the Gibbs distribution $\hz_j$ over RHS-nodes in $K$, where the temperature is $\delta$ and the energy for matching to a RHS-node $k\in K$ is $(\omega_{jk}-\gamma\alpha_k^{(j)})$. A sample from $\hz_j$ can be generated via the fast exponential Bernoulli race 
        with {$ \poly(\gamma,\ell, 1/\delta)$} sample from $(\FF_{jk})_{k}$ in expectation (See Lemma~\ref{lem:sample-exp-weight} for details). 
    \ENDFOR
\end{algorithmic}
\caption{{\sf Online Entropy Regularized Matching with Non-negative Edge Weights (with parameters $\delta, \eta',\gamma$)}}
\label{alg:online matching}

\end{algorithm}
\vspace{-.15in}


\notshow{

\begin{lemma}\label{lem:informal-sample-exp-weight}\cite{dughmi2017bernoulli} 
For any fixed $\Balpha \in [0,\mathcal{R}]^\ell$ and weights $\omega_{jk} \in [0, \mathcal{R}]$, a sample from the distribution $z^*$ in Lemma~\ref{obs:exp weight} can be drawn with $poly(\mathcal{R},\ell,1/\delta)$ 
samples in expectation.

\end{lemma}
}



\begin{theorem}\cite{dughmi2017bernoulli}\label{thm:dughmi result}
When $\omega_{jk}\in [0,1]$~\footnote{The theorem applies to any bounded edge weights $\omega_{jk}\in [0,\mathcal{R}]$. 
 For simplicity we normalize the edge weights to lie between~$[0,1]$.} for all $j,k$, Algorithm~\ref{alg:online matching} satisfies the following properties:
\begin{enumerate}[itemsep=1pt]
    \item For any choice of the parameters, it always returns a \textbf{perfect} $d$-to-1 matching.
    \item For any choice of the parameters, the algorithm is \textbf{maximal-in-range}. The expected running time and sample complexity of Algorithm~\ref{alg:online matching} is $\poly(d,\ell,\gamma,1/\delta)$.      
    \item  For every $\delta,\eta'>0$, if $d\geq \ell\log\ell/\eta'^2$ and $\gamma\in \left[{\opt(P)\over d}, {O(1)\cdot\opt(P)\over d}\right]$, where $\opt(P)$ is the optimum of program $(P)$, the expected value (over the randomness of the Algorithm~\ref{alg:online matching}) of $\sum_{j\in[d\ell], k\in[\ell]} \hz_{jk}\omega_{jk}-\delta\sum_{j,k}\hz_{jk}\log(\hz_{jk})$ is at least $(1-O(\eta'))\cdot \opt(P)$. 
    
    Moreover, for every $\psi\in (0,1)$, if we set $\delta=\Theta(\frac{\psi}{\log\ell}), \eta'=\Theta(\psi)$, and $d$ and $\gamma$ satisfy the conditions above, then the expected total weight of the matching output by the algorithm is at most  $O(d\ell\psi)$ less than the maximum weight matching. 
\end{enumerate}
\end{theorem}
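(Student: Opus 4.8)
The plan is to establish the three items in turn, following the online primal–dual analysis of Dughmi et al.~\cite{dughmi2017bernoulli}; Items~1 and~2 are bookkeeping given the lemmas already in hand, and Item~3 carries all the content. For \textbf{Item 1}, a capacity count suffices: when Algorithm~\ref{alg:online matching} reaches LHS-node $j$, exactly $j-1$ LHS-nodes have been matched, so the total unused RHS-capacity is $d\ell-(j-1)\ge 1$, hence the set $K$ in step~2 is nonempty and LHS-node $j$ gets matched; since all $d\ell$ LHS-nodes are matched and the total capacity $d\ell$ is thereby exhausted, every RHS-node ends up matched to exactly $d$ LHS-nodes, i.e.\ the output is a \emph{perfect} $d$-to-$1$ matching. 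For \textbf{Item 2}, maximal-in-range is precisely Lemma~\ref{obs:exp weight every replica}: $\hz_j$ is the simplex-maximizer of the strictly concave $\sum_k z'_{jk}\omega_{jk}-\delta\sum_k z'_{jk}\log z'_{jk}-\gamma\sum_k\alpha_k^{(j)}z'_{jk}$, whose ``cost'' part depends on $\omega_{-j}$ only (through the current load vector, i.e.\ through the matching choices for LHS-nodes $1,\dots,j-1$) and never on $\omega_j$. For the complexity, each $\alpha^{(j)}$ is a probability vector so $\gamma\alpha_k^{(j)}\in[0,\gamma]$, and by Lemma~\ref{lem:sample-exp-weight} one draw from $\hz_j$ costs $\poly(\gamma,\ell,1/\delta)$ samples from $(\FF_{jk})_k$ in expectation; summing over the $d\ell$ LHS-nodes gives total running time and sample complexity $\poly(d,\ell,\gamma,1/\delta)$.

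For \textbf{Item 3}, I would first set up Lagrangian duality for $(P)$: for $\alpha\ge 0$ let $g(\alpha)=\max_{z:\,\sum_k z_{jk}=1}L(z,\alpha)$, a sum of log-partition functions by Lemma~\ref{obs:exp weight}; then $\opt(P)=\min_{\alpha\ge 0}g(\alpha)=g(\alpha^*)$, with $\|\alpha^*\|_1=\Theta(\opt(P)/d)$ since the capacity constraints bind at the optimum. The heart of the argument is that the penalty vectors $(\gamma\alpha^{(j)})_j$ the algorithm manufactures online are, on average, nearly as good as $\alpha^*$. Because $\hz_j$ maximizes the penalized local objective with penalty vector $\gamma\alpha^{(j)}$, telescoping its optimality inequality over $j$ against a feasible reference solution of $(P)$ yields, in expectation over the algorithm's coins,
\[
\E\big[\textstyle\sum_{j,k}\hz_{jk}\omega_{jk}-\delta\sum_{j,k}\hz_{jk}\log\hz_{jk}\big]\;\ge\;\opt(P)-\gamma\cdot\mathrm{Reg}-\mathrm{err}_K,
\]
where $\mathrm{Reg}$ is the regret of the exponential-weights update generating $\alpha^{(j)}$ (energy $=$ current load, temperature $1/\eta'$) against the best fixed RHS-load profile, and $\mathrm{err}_K$ accounts for RHS-nodes leaving $K$ early. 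The standard Hedge bound gives $\mathrm{Reg}=O(\tfrac{\log\ell}{\eta'}+\eta' d)$ per RHS-node, and the hypothesis $d\ge\ell\log\ell/\eta'^2$ forces $\tfrac{\log\ell}{\eta'}\le\eta' d$, so $\gamma\cdot\mathrm{Reg}=O(\eta')\cdot\gamma d=O(\eta')\opt(P)$ once $\gamma=\Theta(\opt(P)/d)$; the same slack bounds $\mathrm{err}_K$, since Item~1 forces every final load to equal $d$ and large $d$ makes early saturation of an RHS-node negligible. This gives $\E[\cdot]\ge(1-O(\eta'))\opt(P)$.

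The ``moreover'' clause is then a substitution. Entropy is nonnegative and at most $\delta\, d\ell\log\ell$, so $\E[\sum_{j,k}\hz_{jk}\omega_{jk}]\ge(1-O(\eta'))\opt(P)-\delta\, d\ell\log\ell$. By integrality of the bipartite $d$-to-$1$ matching polytope, $\opt(P)$ lies in $[\,\mathrm{MWM},\,\mathrm{MWM}+\delta\, d\ell\log\ell\,]$ where $\mathrm{MWM}$ is the maximum weight matching value and $\mathrm{MWM}\le d\ell$. Taking $\delta=\Theta(\psi/\log\ell)$ makes $\delta\, d\ell\log\ell=O(d\ell\psi)$ and $\opt(P)=O(d\ell)$, and $\eta'=\Theta(\psi)$ makes $O(\eta')\opt(P)=O(d\ell\psi)$; combining, $\E[\sum_{j,k}\hz_{jk}\omega_{jk}]\ge\mathrm{MWM}-O(d\ell\psi)$, as claimed.

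The main obstacle is the online-learning step in Item~3: proving that the penalty vectors $\gamma\alpha^{(j)}$ the algorithm generates on the fly compete with the unknown optimal dual $\alpha^*$, despite these vectors being statistically coupled with the algorithm's own random matching decisions and despite the feasible RHS-set $K$ shrinking as nodes saturate. Controlling the second-order Hedge term and the saturation error $\mathrm{err}_K$ simultaneously is exactly what pins down the parameter conditions $d\ge\ell\log\ell/\eta'^2$ and $\gamma\in[\opt(P)/d,\,O(1)\cdot\opt(P)/d]$.
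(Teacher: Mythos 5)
A preliminary point: the paper does not prove Theorem~\ref{thm:dughmi result} at all. It is imported verbatim from Dughmi et al.~\cite{dughmi2017bernoulli} and used as a black box (the text explicitly tells the reader to treat it that way), so there is no in-paper proof to compare your argument against; what you have written is an attempted reconstruction of the original analysis. Within that framing, your Items~1 and~2 are fine: the capacity count for perfectness is correct, maximal-in-range is exactly Lemma~\ref{obs:exp weight every replica} (the penalty $\gamma\alpha^{(j)}$ depends only on the history of LHS-nodes $1,\dots,j-1$, hence only on $\omega_{-j}$), and since $\alpha^{(j)}$ is a probability vector the per-round sampling cost $\poly(\gamma,\ell,1/\delta)$ from Lemma~\ref{lem:sample-exp-weight} sums to the stated complexity. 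The ``moreover'' clause of Item~3 is also handled correctly, by the same substitution the paper itself uses later in the proof of Theorem~\ref{thm:arbitrary-weight-algorithm}: nonnegativity of the entropy, the bound $\delta d\ell\log\ell$, and the comparison of $\opt(P)$ with the maximum-weight matching.

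The genuine gap is the first half of Item~3, which is where all the content of the cited theorem lives and where your sketch does not yet constitute a proof. Two specific steps are asserted rather than established. First, the claim $\|\alpha^*\|_1=\Theta(\opt(P)/d)$ does not follow from ``the capacity constraints bind'': binding constraints say nothing about the magnitude of the optimal multipliers, and in fact the whole reason the algorithm needs the externally supplied scaling $\gamma\in[\opt(P)/d,\,O(1)\cdot\opt(P)/d]$ is that the online duals $\alpha^{(j)}$ are forced to be probability vectors and must be rescaled to the right dual magnitude; you would need to prove a bound of this form (or restructure the argument so it is not needed). Second, the inequality $\E[\sum_{j,k}\hz_{jk}\omega_{jk}-\delta\sum_{j,k}\hz_{jk}\log\hz_{jk}]\geq \opt(P)-\gamma\cdot\mathrm{Reg}-\mathrm{err}_K$ is exactly the crux, and ``telescoping the local optimality inequality against a feasible reference solution'' does not by itself produce it: the dual sequence $\alpha^{(j)}$ is generated by a softmax over the \emph{algorithm's own random loads}, so the comparator in the regret argument, the shrinking feasible set $K$, and the primal decisions are all statistically coupled, and the ``standard Hedge bound'' you invoke is for a different protocol (a learner choosing distributions against an adversarial loss sequence). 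Making this step rigorous — including quantifying $\mathrm{err}_K$ and showing the condition $d\geq \ell\log\ell/\eta'^2$ suffices to absorb both error terms — is precisely the technical work of~\cite{dughmi2017bernoulli}, and as written your proposal identifies the obstacle but does not overcome it. So the route is the right one, but Item~3 is an outline with the hardest lemma still missing.
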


{The only part of Algorithm~\ref{alg:online matching} does not specified  is how to choose a $\gamma$ that is a constant factor approximation to $\opt(P)\over d$. Dughmi et al.~\cite{dughmi2017bernoulli} show a polynomial time randomized algorithm that produces a $\gamma$ that falls into $\left[{\opt(P)\over d}, {O(1)\cdot\opt(P)\over d}\right]$ with high probability, which suffices to find a close to optimum V-replica-surrogate matching. Please see Appendix~\ref{appx:offline optimal approximation} for details.} 

\section{$d$-to-$1$ Matching with Arbitrary Edge Weights}\label{sec:arbitrary_weight_matching}

To obtain an {approximately revenue-preserving} $\varepsilon$-BIC to BIC transformation, we need to find a near-optimal U-surrogate-replica matching, where edge weights may be negative. Motivated by this application, we provide a generalization of Theorem~\ref{thm:dughmi result} to general  $d$-to-$1$ matchings with arbitrary edge weights. 
We design a new algorithm (Algorithm~\ref{alg:new online matching}) with guarantees summarized in Theorem~\ref{thm:arbitrary-weight-algorithm}.


In Example~\ref{example:perfect_revenue} we point out the issue of directly applying Algorithm~\ref{alg:online matching} to the general $d$-to-$1$ matching problem. A tempting way to fix the issue may be to first remove all edges with negative weights then run Algorithm~\ref{alg:online matching}. With only sample access to $\FF_{jk}$, one way to achieve this is to remove edges with negative empirical means. In fact, with a sufficiently large number of samples, with high probability, all edges with strictly positive weights will remain and all edges with strictly negative weights will be removed. 
 However, with non-zero probability,  some edges will either be kept or removed incorrectly causing the algorithm to violate the maximal-in-range property. See Example~\ref{example:sample edges} for a concrete construction. 

An alternative way is to relax the constraint $\sum_k z_{jk}=1$ to $\sum_k z_{jk}\leq1$, so the algorithm no longer needs to find a perfect matching. However, Lemma~\ref{obs:exp weight} fails to hold as the optimal solution is no longer a Gibbs distribution and it is unclear how to sample efficiently from it with only sample access to $\FF_{jk}$.\footnote{The issue is that $\sum_k z^*_{jk}$ may be strictly less than $1$ and has a complex expression. It is not clear whether we can sample efficiently from $z^*_j$ with only sample access to $(\FF_{jk})_{jk}$. Moreover, even if we can sample from the distribution, the guarantees in Theorem~\ref{thm:dughmi result} may no longer hold.} A similar attempt is to add a slack variable $y$ to $(P)$, modifying the constraint $\sum_k z_{jk}=1$ to $\sum_k z_{jk}+y=1$. It is equivalent to adding one dummy RHS-node, with weight $0$ on every incident edge. Now for every dual variable, the optimal solution for the Lagrangian follows from a Gibbs distribution. However, the program differs from $(P)$, in particular the new dummy RHS-node has no capacity constraint, and as a result there is no dual variable that corresponds to this dummy node. It is not clear how to modify Algorithm~\ref{alg:online matching} to accommodate the new dummy node and to produce a close to maximum matching.


\subsection{Reduction from Arbitrary Weights to Non-Negative Weights}\label{subsec:reduction-arbi-weights}

In this section, we provide a reduction from the $d$-to-$1$ matching with arbitrary edge weight case to the non-negative  edge weight case. 

\begin{definition}
For arbitrary edge weights $(\omega_{jk})_{jk}$ and parameter $\delta>0$, define the $\delta$-softplus function:
$$\zeta_\delta(\omega_{jk})=\delta\cdot \log(\exp(\omega_{jk}/\delta)+1)$$
Consider the entropy regularized matching program $(P')$ w.r.t. weights $(\zeta_\delta(\omega_{jk}))_{jk}$:
\begin{equation}\label{MP:regularized zeta matching}
\begin{split}
\max &~~\textstyle G(z)=\sum_{j,k} z_{jk}\cdot\zeta_\delta(\omega_{jk})-\delta\cdot\textstyle\sum_{j,k}z_{jk}\log(z_{jk}) \\
\text{subject to} &~~\textstyle\sum_j z_{jk}\leq d,~~~~~~\forall k\in [\ell]\\
                                         &~~\textstyle\sum_k z_{jk}=1,~~~~~~\forall j\in [d\ell] \\
                                          &~~z_{jk}\in [0,1], ~~~~~~\forall j\in [d\ell], \forall k\in [\ell].\end{split}
\end{equation}

\end{definition}

Note that $\zeta_\delta(x)>0$ for any $x$, so the program $(P')$ is exactly a $d$-to-$1$ matching with positive edge weights. {We prove that the optimum of $(P')$ is close to the weight of the maximum weight $d$-to-$1$ matching (See Lemma~\ref{lem:P' and P''} and the proof of Theorem~\ref{thm:arbitrary-weight-algorithm}). 


Thus in the rest of this section we will consider approximating the optimum of $(P')$.}
Let $\hz$ be the solution produced by Algorithm~\ref{alg:online matching} on $(P')$. 
Program $(P')$ is the same as $(P)$ if we substitute the weight $\omega_{jk}$ for each LHS-node $r^{(j)}$ and RHS-node $s^{(k)}$ with $\zeta_\delta(\omega_{jk})$.
Recall that our main goal is to avoid being matched with negative edges too often. Now for every RHS-node, we construct a dummy $0$-RHS-node with weight 0 for all edges incident to it. Let the meta-RHS-node consists of the real RHS-node and the corresponding $0$-RHS-node. The weight between the LHS-node $j$ and the meta-RHS-node $k$ is defined as $\zeta_\delta(\omega_{jk})$. We will explain later why the weights are chosen in this way.

Think of the procedure that first executes Algorithm~\ref{alg:online matching} to find a matching between LHS-nodes and meta-RHS-nodes. As a second step, when a LHS-node $j$ is matched to some meta-RHS-node $k$, we further decide how to match it to the real RHS-node or the 0-RHS-node, according to the following ``softmax'' program between weight $\omega_{jk}$ and $0$ (see Figure~\ref{fig:surrogate differences} for an illustration):\footnote{Note that it's also the entropy regularized matching program between a single LHS-node $j$ and two RHS-nodes (real RHS-node $k$ and 0-RHS-node $k$).} 

\vspace{-.2in}
\begin{equation}
\begin{split}
\max &~~\textstyle x_{jk}\omega_{jk} -\delta\cdot x_{jk}\log(x_{jk})-\delta\cdot y_{jk}\log(y_{jk}) \\
\text{subject to} &~~\textstyle x_{jk} +y_{jk} = 1\\
 &~~x_{jk},y_{jk}\in [0,1].\end{split}
\end{equation}

Let $(x_{jk}^*,y_{jk}^*)$ be the optimal solution. One can easily verify that the optimum of the softmax program is equal to $\zeta_\delta(\omega_{jk})$. The two-step procedure finds a $d$-to-$1$ matching in the original graph (by removing all edges matched to $0$-RHS-nodes). Moreover, when the LHS-node $j$ is matched to the meta-RHS-node $k$, its expected weight $x_{jk}^*\omega_{jk}$ is at most $O(\delta)$ less than $\zeta_\delta(\omega_{jk})$. Thus by Theorem~\ref{thm:dughmi result}, the two-step procedure that executes Algorithm~\ref{alg:online matching} w.r.t. $(P')$ indeed finds an approximately-optimal $d$-to-$1$ matching w.r.t. weights $\zeta_\delta(\omega_{jk})$ and no LHS-node is matched to an edge with too negative weight.

The main issue with the above two-step procedure is that, to execute Algorithm~\ref{alg:online matching} w.r.t. $(P')$, we will have to sample from a distribution with mean $\zeta_\delta(\omega_{jk})$ with only sample access to {the distribution $\FF_{jk}$ whose mean is} $\omega_{jk}$. To the best of our knowledge, no algorithm exists to sample exactly from such a distribution.

\begin{figure}[h!]
  \centering
 \includegraphics[width=0.6\textwidth]{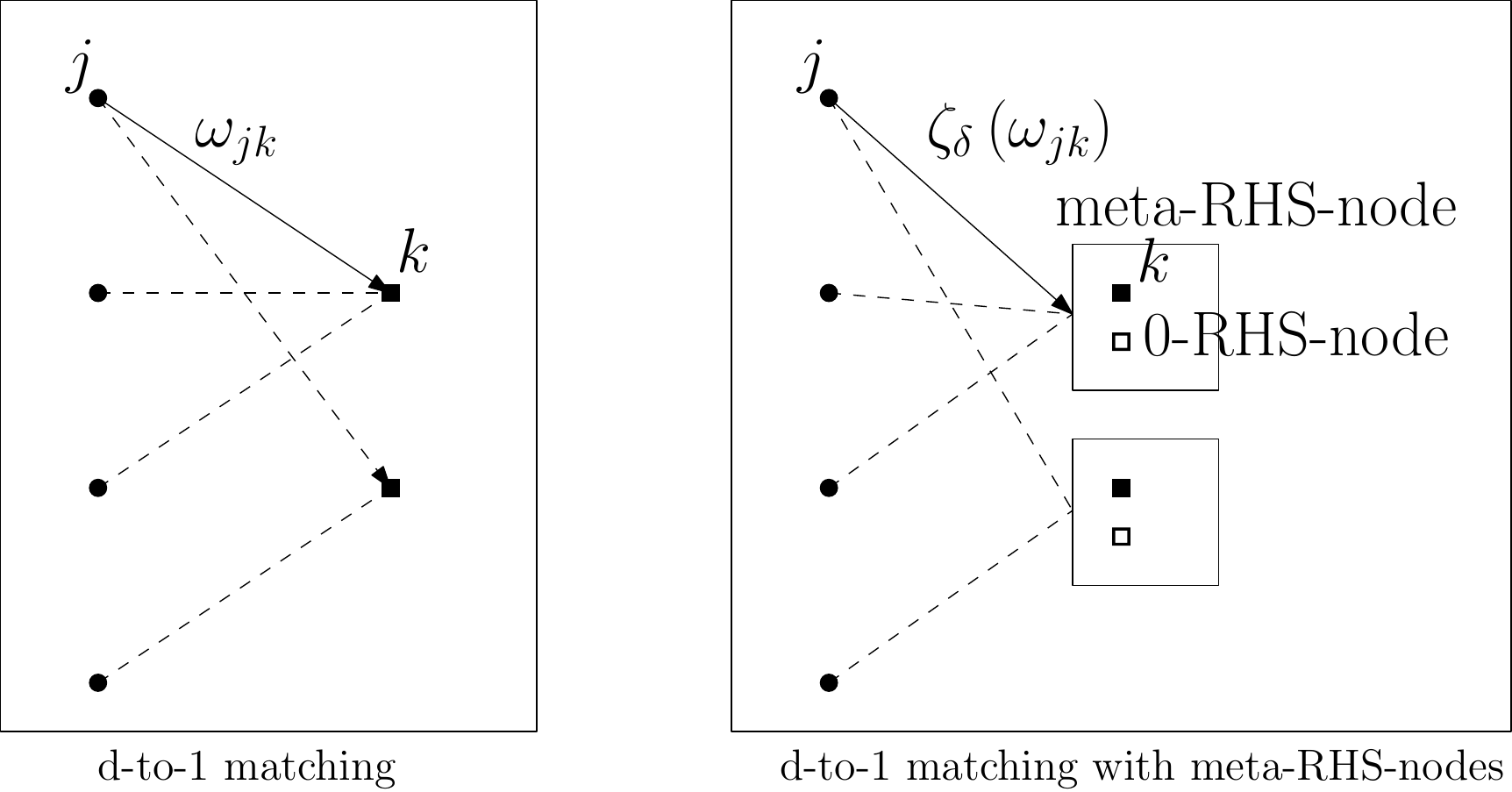}
  \caption{With $\bullet$ we denote the LHS-nodes, with $\blacksquare$ RHS-nodes, with $\square$ we denote the $0$-RHS-nodes and with a big rectangle that encloses a $\blacksquare$ and a $\square$ we denote the meta-RHS-nodes.}
  \label{fig:surrogate differences}
\end{figure}

\notshow{
To address these two issues, we introduce an auxiliary convex program $(P'')$.

\begin{definition}\label{def:augmented entropy regularized matching}
For any parameter $\delta>0$, we define the following auxiliary convex program $(P'')$:
\begin{align*}
\max~~~&\textstyle\sum_{j,k} x_{jk}\omega_{jk}-\delta\cdot \textstyle\sum_{j,k}(x_{jk}\log(x_{jk})+y_{jk}\log(y_{jk}))\\
\text{s.t.}~~~~&\textstyle\sum_j (x_{jk}+y_{jk})\leq d, ~~~~\forall k\in [\ell]\\
&\textstyle\sum_k (x_{jk}+y_{jk})=1,~~~~\forall j\in [d\ell]\\
&x_{jk},y_{jk}\in [0,1], ~~~~\forall j,k\\
\end{align*}
\end{definition}
\vspace{-.15in}
We prove in Lemma~\ref{lem:P' and P''} that the optimum of $(P'')$ is exactly the same as the optimum of $(P')$. The proof of Lemma~\ref{lem:P' and P''} is postponed to Appendix~\ref{sec:appx_proof_online_matching_2}. 
\begin{lemma}\label{lem:P' and P''}
For all $j\in[d\ell]$ and $k\in[\ell]$, if $\frac{x_{jk}}{y_{jk}} = \exp(\omega_{jk}/\delta), x_{jk}+y_{jk} = z_{jk}$, then
\begin{equation}\label{equ:lem:P' and P''}
z_{jk}\cdot\zeta_\delta(\omega_{jk})-\delta\cdot z_{jk}\log (z_{jk})=x_{jk}\cdot \omega_{jk}-\delta\cdot x_{jk}\log(x_{jk}) - \delta\cdot y_{jk}\log(y_{jk}).    
\end{equation}
This implies that the optimal objective values of $(P')$ and $(P'')$ are equal.
\end{lemma}

}

We present our algorithm (Algorithm~\ref{alg:new online matching}) that solves this issue. The key conceptual idea is to merge the two steps into one. It directly matches LHS-nodes to either real RHS-nodes or 0-RHS-nodes. The reason that this is possible is because the distribution from the combined procedure is again a Gibbs distribution, which allows us to use the fast exponential Bernoulli race to sample directly from it {(Observation~\ref{obs:dual_var})}.

\notshow{
We show that  $$G(\hz)=F(\hx,\hy)=\sum_{j,k} \hx_{jk}\omega_{jk}-\delta\sum_{j,k}(\hx_{jk}\log(\hx_{jk})+\hy_{jk}\log(\hy_{jk})),$$ where $(\hx_{jk},\hy_{jk})_{jk}$ is the solution produced by Algorithm~\ref{alg:new online matching}. In other words, running Algorithm~\ref{alg:online matching} on $(P')$ is equivalent to running Algorithm~\ref{alg:new online matching} on $(P'')$. 
}

\begin{algorithm}[h]
\begin{algorithmic}[1]
\REQUIRE Sample access to $\FF_{jk}$ whose mean is $\omega_{jk}$, for every $j,k$.
\STATE For each RHS-node $k$, add a $0$-RHS-node to the bipartite graph with edge weight $0$ to every LHS-node. We refer to the $k$-th original RHS-node the $k$-th normal-RHS-node.
    \FOR{$j \in [d\ell]$}
        \STATE Let $d_k^{(j-1)}$ be the number of LHS-nodes matched to either the $k$-th normal-RHS-node or the $k$-th $0$-RHS-node in the current matching and $K=\{k:d_k^{(j-1)}<d\}$. 
        \STATE 
        Set $\Balpha^{(j)}$ according to the Gibbs distribution over RHS-nodes in $K$, where the energy for any RHS-node $k\in K$ is $d_k^{(j-1)}$ and the temperature is $1/\eta'$. Set  $\alpha_k^{(j)}=0$ for all $k\not\in K$.
        \STATE Match LHS-node $j$ to a normal RHS-node (or a $0$-RHS-node)  $k\in K$ according to the Gibbs distribution  over the $2|K|$ RHS-nodes in $K$, where the temperature is $\delta$ and the energy for matching to a normal-RHS-node $k$ is  $(\omega_{jk}-\gamma\alpha_k^{(j)})$ and the energy for matching to a $0$-RHS-node $k\in K$ is $(-\gamma\alpha_k^{(j)})$. More specifically, match to the normal-RHS-node $k$ with probability $\hx_{jk}={\exp((\omega_{jk}-\gamma\alpha_k^{(j)})/\delta)\over \sum_{k'\in K}(\exp((\omega_{jk'}-\gamma\alpha_{k'}^{(j)})/\delta)+\exp((-\gamma\alpha_{k'}^{(j)})/\delta))}$ and match to the $0$-RHS-node $k$ with probability $\hy_{jk}={\exp((-\gamma\alpha_k^{(j)})/\delta)\over \sum_{k'\in K}(\exp((\omega_{jk'}-\gamma\alpha_{k'}^{(j)})/\delta)+\exp((-\gamma\alpha_{k'}^{(j)})/\delta))}$. 
        We can generate a sample from $(\hx_j,\hy_j)$  via the fast exponential Bernoulli race with {$ \poly(\gamma,\ell, 1/\delta)$} sample from $(\FF_{jk})_{k}$ in expectation (See Lemma~\ref{lem:sample-exp-weight}).

        
        \ENDFOR
\end{algorithmic}
\caption{{\sf Online Entropy Regularized Matching {with Arbitrary Edge Weights} (with parameters $\delta, \eta', \gamma$)}}
\label{alg:new online matching}
\end{algorithm}

{


We prove a coupling between executing Algorithm~\ref{alg:new online matching} over weights $(\omega_{jk})_{jk}$ and executing Algorithm~\ref{alg:online matching} over weights  $(\zeta_\delta(\omega_{jk}))_{jk}$, with the same parameters $\delta,\eta', \gamma$. Note that both executions are online procedures. Thus the distribution of matching the current LHS-node depends on the previous matching. We carefully prove that the dual variables $\Balpha^{(j)}$ and the remaining capacities $(d_k^{(j)})_k$ are the same for every round $j$. Our result is summarized in Theorem~\ref{thm:arbitrary-weight-algorithm}.
}


\notshow{
\begin{lemma}\label{lem:math}
For any $j,k$, if two numbers satisfy that $\frac{x_{jk}}{y_{jk}} = \exp(\omega_{jk}/\delta)$ and $x_{jk}+y_{jk} = z_{jk}$, then 
$z_{jk}\omega_{jk}' - \delta\cdot z_{jk}\log z_{jk} = x_{jk}\omega_{jk} - \delta(x_{jk}\log x_{jk} + y_{jk}\log y_{jk})$.
\end{lemma}
}

\notshow{
\mingfeinote{

Now we would like to show that the per-LHS-node weight of the online matching is close to that of the optimal offline matching by applying the result of~\cite{dughmi2017bernoulli}. Lemma~\ref{lem:bern_fact} shows this fact. We point out that choosing the scaling parameter $\gamma$ as a constant factor estimation of the optimal objective of program $(P')$ divided by $d$ is necessary to guarantee this property. Lemma~\ref{lem:find gamma} in Appendix~\ref{sec:appx_proof_online_matching} shows an algorithm to find such a $\gamma$ with high probability.

\begin{lemma}[\cite{DughmiHKN17}]
\label{lem:bern_fact}
For any $\psi>0$, running the Online Entropy Regularized Matching mechanism on program $(P')$ with parameters $\delta,\eta',\gamma$,
the per-LHS-node weight of the online matching is less than that of the offline optimal matching w.r.t. $(P')$ by at most an additive $\psi$,
when $\delta =\frac{\psi}{3\log(\ell)}$. The running time is $poly(\ell, 1/\psi)$.
\end{lemma}

}
}

\begin{theorem}\label{thm:arbitrary-weight-algorithm}
When $\omega_{jk}\in [-1,1]$~\footnote{Again the theorem applies to any bounded edge weights in $ [-\mathcal{R},\mathcal{R}]$. For simplicity we  normalize the edge weights to lie in $[-1,1]$.} for all $j,k$, Algorithm~\ref{alg:new online matching} satisfies the following properties:
\begin{enumerate}[itemsep=1pt]
    \item For any choice of the parameters, dropping all the edges incident to any $0$-RHS-nodes in the matching, the algorithm produces a feasible $d$-to-$1$ matching (not necessarily perfect).
    \item For any choice of the parameters, the algorithm is \textbf{maximal-in-range}. The expected running time and sample complexity is $\poly(d,\ell,1/\delta,\gamma)$.
    \item  For every $\delta,\eta'>0$, if $d\geq \ell\log\ell/\eta'^2$ and $\gamma\in \left[{\opt(P')\over d}, {O(1)\cdot\opt(P')\over d}\right]$, where $\opt(P')$ is the optimum of program $(P')$, then the expected value (over the randomness of the Algorithm~\ref{alg:new online matching}) of $\sum_{j\in[d\ell], k\in[\ell]} \hx_{jk}\omega_{jk}-\delta\sum_{j,k}\hx_{jk}\log(\hx_{jk})-\delta\sum_{j,k}\hy_{jk}\log(\hy_{jk})$ is at least $(1-O(\eta'))\cdot \opt(P')$. 
    
         Moreover, for every $\psi\in (0,1)$, if we set $\delta=\Theta(\frac{\psi}{\log\ell}), \eta'=\Theta(\psi)$, and $d$ and $\gamma$ satisfy the conditions above, then the expected value of $\sum_{j\in[d\ell], k\in[\ell]} \hx_{jk}\omega_{jk}$, the expected total weight of the matching output by the algorithm (dropping all the edges incident to any $0$-RHS-nodes in the matching), has weight at most  $O(d\ell\psi)$ less than the maximum weight matching. 
    \item{For every LHS-node $j$, the expected weight of the edge that matches $j$ is never too much smaller than $0$. Formally, for every $\delta$ and every $j\in [d\ell]$, $\sum_{k\in [\ell]}\hat{x}_{jk}\omega_{jk}\geq -\delta\cdot \log(2\ell)$.}

\end{enumerate}
\end{theorem}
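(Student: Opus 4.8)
The plan is to prove each of the four items by leveraging the coupling between Algorithm~\ref{alg:new online matching} run on weights $(\omega_{jk})_{jk}$ and Algorithm~\ref{alg:online matching} run on the softplus weights $(\zeta_\delta(\omega_{jk}))_{jk}$, together with Theorem~\ref{thm:dughmi result}. The heart of the argument is this coupling: at each round $j$, the dual variables $\Balpha^{(j)}$ and remaining capacities $(d_k^{(j-1)})_k$ depend only on which meta-RHS-node each previous LHS-node landed in, and the probability that LHS-node $j$ lands in meta-RHS-node $k$ under Algorithm~\ref{alg:new online matching} is $\hx_{jk}+\hy_{jk} = \frac{\exp((\omega_{jk}-\gamma\alpha_k^{(j)})/\delta)+\exp((-\gamma\alpha_k^{(j)})/\delta)}{\sum_{k'\in K}(\exp((\omega_{jk'}-\gamma\alpha_{k'}^{(j)})/\delta)+\exp((-\gamma\alpha_{k'}^{(j)})/\delta))} = \frac{\exp((\zeta_\delta(\omega_{jk})-\gamma\alpha_k^{(j)})/\delta)}{\sum_{k'\in K}\exp((\zeta_\delta(\omega_{jk'})-\gamma\alpha_{k'}^{(j)})/\delta)}$, which is exactly the probability that LHS-node $j$ is matched to RHS-node $k$ by Algorithm~\ref{alg:online matching} on the softplus weights. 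Hence by induction on $j$ the two processes can be coupled so that the meta-matchings coincide. I would state this as the key observation (the excerpt references \textbf{Observation~\ref{obs:dual_var}}) and prove it first.

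Granting the coupling, items 1 and 2 are quick. Item 1: in Algorithm~\ref{alg:new online matching} each LHS-node is matched to exactly one of the $2|K|$ nodes and the capacity bookkeeping enforces $d_k^{(j-1)}<d$, so after dropping edges to $0$-RHS-nodes each normal-RHS-node has at most $d$ neighbors and each LHS-node at most one — a feasible (possibly imperfect) $d$-to-$1$ matching. Item 2: maximal-in-range for each LHS-node $j$ follows since, conditioned on $\Balpha^{(j)}$, the realized distribution over the $2|K|$ nodes is the Gibbs/softmax distribution that maximizes $\sum_k x_{jk}\omega_{jk} - \delta\sum_k x_{jk}\log x_{jk} - \delta\sum_k y_{jk}\log y_{jk} - \sum_k \gamma\alpha_k^{(j)}(x_{jk}+y_{jk})$, and $\alpha_k^{(j)}$ depends only on earlier LHS-nodes; the cost term plays the role required by the maximal-in-range definition. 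The $\poly(d,\ell,1/\delta,\gamma)$ running time and sample complexity come directly from the coupling to Algorithm~\ref{alg:online matching} plus the fast exponential Bernoulli race bound in Lemma~\ref{lem:sample-exp-weight}.

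Item 3 is where I would use the coupling most substantively. Under the coupling, $\hx_{jk}+\hy_{jk}=\hz_{jk}$ where $\hz$ is Algorithm~\ref{alg:online matching}'s output on $(P')$, and moreover conditioned on landing in meta-node $k$ the split $(\hx_{jk},\hy_{jk})$ follows the softmax between $\omega_{jk}$ and $0$, whose optimum value equals $\zeta_\delta(\omega_{jk})$ — so by the algebraic identity (the content of the commented-out Lemma~\ref{lem:P' and P''}), $\sum_{j,k}\hx_{jk}\omega_{jk}-\delta\sum_{j,k}(\hx_{jk}\log\hx_{jk}+\hy_{jk}\log\hy_{jk}) = \sum_{j,k}\hz_{jk}\zeta_\delta(\omega_{jk})-\delta\sum_{j,k}\hz_{jk}\log\hz_{jk} = G(\hz)$. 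Applying Theorem~\ref{thm:dughmi result}(3) to $(P')$ gives $G(\hz)\geq (1-O(\eta'))\opt(P')$, proving the first claim of item 3. For the second claim I would compare $\opt(P')$ to the maximum weight $d$-to-$1$ matching under $(\omega_{jk})$: Lemma~\ref{lem:P' and P''}-type reasoning plus $0\le \zeta_\delta(\omega_{jk})-\max(\omega_{jk},0)\le \delta\log 2$ and the entropy bound $0\le -\delta\sum z_{jk}\log z_{jk}\le \delta d\ell\log\ell$ sandwich $\opt(P')$ within $O(d\ell\delta\log\ell)=O(d\ell\psi)$ of the max-weight matching; and the gap between $\sum_{j,k}\hx_{jk}\omega_{jk}$ and $G(\hz)$ is bounded by the entropy terms plus the $O(\delta)$-per-node softmax slack, again $O(d\ell\psi)$. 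Chaining these gives the $O(d\ell\psi)$ bound. I expect this chain of approximations to be the main obstacle — not any single inequality, but keeping all the error terms ($\eta'$-loss, entropy, softplus-vs-hinge, softmax split slack) simultaneously $O(d\ell\psi)$ under the stated choices $\delta=\Theta(\psi/\log\ell),\eta'=\Theta(\psi)$.

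Item 4 is a clean per-node estimate that does not need the coupling. Fix $j$ and condition on $\Balpha^{(j)}$; write $a_k = (\omega_{jk}-\gamma\alpha_k^{(j)})/\delta$ and $b_k=(-\gamma\alpha_k^{(j)})/\delta$, so $\hx_{jk}=e^{a_k}/Z$, $\hy_{jk}=e^{b_k}/Z$ with $Z=\sum_{k'\in K}(e^{a_{k'}}+e^{b_{k'}})$. Then $\sum_k \hx_{jk}\omega_{jk} = \delta\sum_k \hx_{jk}(a_k-b_k)$ (using $\omega_{jk}/\delta = a_k-b_k$). Now $\sum_k\hx_{jk}(a_k-b_k)\ge \sum_k\hx_{jk}a_k - \sum_k(\hx_{jk}+\hy_{jk})\,|b_k|\cdot(\dots)$ — more cleanly, I would note $\sum_k\hx_{jk}\omega_{jk} = \delta\big(\sum_k\hx_{jk}\log\hx_{jk} - \sum_k\hx_{jk}\log\hy_{jk}\big)$ is not quite it; instead observe $\omega_{jk} = \delta\log(\hx_{jk}/\hy_{jk})$, so $\sum_k\hx_{jk}\omega_{jk} = \delta\sum_k\hx_{jk}\log(\hx_{jk}/\hy_{jk})\ge \delta\sum_k(\hx_{jk}-\hy_{jk})$ by the inequality $x\log(x/y)\ge x-y$, and $\sum_k(\hx_{jk}-\hy_{jk})\ge -\sum_k\hy_{jk}\ge -1$ is too weak; the sharp route is $\sum_k\hx_{jk}\omega_{jk}\ge -\delta\cdot D(\hx_j\|\hy_j)$-style bounds — but the cleanest is: since each $\hy_{jk}\le 1/2$ would not hold in general, I would instead directly bound using $\hx_{jk}/\hy_{jk}=e^{\omega_{jk}/\delta}$ and convexity to get $\sum_k\hx_{jk}\omega_{jk}\ge -\delta\log(2\ell)$, matching the statement; concretely, the worst case is all weight on a single negative edge, and $|K|\le \ell$ gives the $\log(2\ell)$. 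I would present this as a short direct computation, flagging that the bound $-\delta\log(2\ell)$ is tight up to constants when one $\omega_{jk}$ is very negative and the rest are $0$.
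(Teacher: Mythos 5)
Your treatment of items 1--3 is essentially the paper's own proof: the same round-by-round coupling between Algorithm~\ref{alg:new online matching} on $(\omega_{jk})_{jk}$ and Algorithm~\ref{alg:online matching} on $(\zeta_\delta(\omega_{jk}))_{jk}$ (induction on the duals $\Balpha^{(j)}$ and remaining capacities, using $\hx_{jk}+\hy_{jk}=\hz_{jk}$), maximal-in-range via Observation~\ref{obs:exp weight with dummy-every replica}, the value identity of Lemma~\ref{lem:P' and P''} giving $F(\hx,\hy)=G(\hz)$, and Theorem~\ref{thm:dughmi result} for the $(1-O(\eta'))$ guarantee. The only real difference is how you relate $\opt(P')$ to the maximum-weight matching: the paper uses Lemma~\ref{lem:entropy error} (any integral matching extends to a feasible point of $(P'')$ with zero entropy), while you propose a softplus-versus-hinge sandwich plus entropy bounds; both routes work, and the error accounting you worry about reduces to exactly two terms, $O(\eta')\cdot\opt\le O(d\ell\psi)$ and the entropy $\le \delta d\ell\log(2\ell)=O(d\ell\psi)$.

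The genuine gap is item 4. Your final step (``directly bound using $\hx_{jk}/\hy_{jk}=e^{\omega_{jk}/\delta}$ and convexity to get $-\delta\log(2\ell)$'') is asserted, not derived, and the supporting intuition is wrong: the worst case is \emph{not} all mass on a single very negative edge, since the Gibbs distribution puts only an $\exp(\omega_{jk}/\delta)$-proportional (exponentially small) mass there, so the bound $-\delta\log(2\ell)$ is far from tight in that scenario. The paper's argument is a one-line use of the maximal-in-range property from Observation~\ref{obs:exp weight with dummy-every replica}: compare $(\hx_j,\hy_j)$ with the feasible alternative $(\mathbf{0},\hx_j+\hy_j)$, which has the same dual-penalty term $\sum_k\gamma\alpha_k^{(j)}(\hx_{jk}+\hy_{jk})$ and nonnegative entropy, yielding $\sum_k\hx_{jk}\omega_{jk}\ge -\delta\cdot(\text{entropy of }(\hx_j,\hy_j))\ge -\delta\log(2\ell)$. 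Ironically, the route you discarded as ``too weak'' already suffices: from $\omega_{jk}=\delta\log(\hx_{jk}/\hy_{jk})$ and $x\log(x/y)\ge x-y$ you get $\sum_k\hx_{jk}\omega_{jk}\ge\delta\sum_k(\hx_{jk}-\hy_{jk})\ge -\delta$, which is at least as strong as $-\delta\log(2\ell)$ for every $\ell\ge 2$ (you appear to have compared the two bounds with the direction reversed); only the trivial $\ell=1$ case would need a separate remark. So the claim is true and within reach, but as written your item 4 contains no complete derivation.
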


\begin{remark}
Property (4) is relevant only when the edge weights may be negative. As discussed in the introduction, this is a crucial property to preserve the revenue in the transformation. Directly applying Algorithm~\ref{alg:online matching} from~\cite{dughmi2017bernoulli} is insufficient to guarantee this property as shown in Example~\ref{example:perfect_revenue}. 
\end{remark}

To prove Theorem~\ref{thm:arbitrary-weight-algorithm}, we consider the following auxiliary problem.

\begin{definition}\label{def:augmented entropy regularized matching}
For any parameter $\delta>0$, we define the following auxiliary convex program $(P'')$:
\begin{align*}
\max~~~&\textstyle F(x,y)=\sum_{j,k} x_{jk}\omega_{jk}-\delta\cdot \textstyle\sum_{j,k}(x_{jk}\log(x_{jk})+y_{jk}\log(y_{jk}))\\
\text{s.t.}~~~~&\textstyle\sum_j (x_{jk}+y_{jk})\leq d, ~~~~\forall k\in [\ell]\\
&\textstyle\sum_k (x_{jk}+y_{jk})=1,~~~~\forall j\in [d\ell]\\
&x_{jk},y_{jk}\in [0,1], ~~~~\forall j,k
\end{align*}
Let $(\hx_{jk},\hy_{jk})_{jk}$ be the solution produced by Algorithm~\ref{alg:new online matching}. 

\end{definition}

\begin{observation}\label{obs:exp weight with dummy-every replica}
For every $j$, $\alpha^{(j)}$ and parameter $\gamma$, match $j$ according to the Gibbs distribution $(\hx_{j},\hy_{j})$ to the available $2|K|$ RHS-nodes in $K$, 
$$\hx_{jk}=\frac{\exp\left((\omega_{jk}-\gamma\alpha_k^{(j)})/\delta\right)}{\sum_{k\in K}\left (\exp((\omega_{jk}-\gamma\alpha_k^{(j)})/\delta)+\exp(-\gamma\alpha_k^{(j)}/\delta)\right)},
 \hy_{jk}=\frac{\exp\left((-\gamma\alpha_k^{(j)})/\delta\right)}{\sum_{k\in K}\left(\exp((\omega_{jk}-\gamma\alpha_k^{(j)})/\delta)+\exp(-\gamma\alpha_k^{(j)}/ \delta)\right)}$$
maximizes
$$\sum_{k\in K}x_{jk}\omega_{jk}-\delta\sum_{k\in K}x_{jk}\log(x_{jk})-\delta\sum_{k\in K}y_{jk}\log(y_{jk})-\sum_{k\in K}\gamma\alpha_k^{(j)}\cdot (x_{jk}+y_{jk}),$$
subject to the constraint $\sum_{k}(x_{jk}+y_{jk}) = 1$.
\end{observation}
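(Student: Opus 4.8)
The plan is to recognize that the stated optimization is precisely the Lagrangian-relaxed entropy-regularized matching restricted to a single LHS-node $j$, which is an unconstrained (up to one linear equality) strictly concave maximization, so the first-order conditions pin down the unique optimum. Concretely, I would fix $j$, treat the $2|K|$ variables $(x_{jk}, y_{jk})_{k\in K}$ as the unknowns, write the objective
$$\Phi(x_j,y_j)=\sum_{k\in K}x_{jk}(\omega_{jk}-\gamma\alpha_k^{(j)})-\delta\sum_{k\in K}\bigl(x_{jk}\log x_{jk}+y_{jk}\log y_{jk}\bigr)-\sum_{k\in K}\gamma\alpha_k^{(j)}y_{jk},$$
having moved the $-\sum_k\gamma\alpha_k^{(j)}(x_{jk}+y_{jk})$ term into the two sums so that each normal-RHS variable $x_{jk}$ carries effective linear coefficient $(\omega_{jk}-\gamma\alpha_k^{(j)})$ and each $0$-RHS variable $y_{jk}$ carries effective linear coefficient $(-\gamma\alpha_k^{(j)})$. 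This is exactly the form treated in Lemma~\ref{obs:exp weight} (or rather the single-row version of it): a sum of linear terms minus $\delta$ times the entropy, subject to a single simplex constraint $\sum_{k\in K}(x_{jk}+y_{jk})=1$.

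Next I would form the Lagrangian with a multiplier $\mu$ for the equality constraint $\sum_{k\in K}(x_{jk}+y_{jk})=1$ and take partial derivatives. Setting $\partial/\partial x_{jk}=0$ gives $(\omega_{jk}-\gamma\alpha_k^{(j)})-\delta(\log x_{jk}+1)-\mu=0$, hence $x_{jk}\propto \exp((\omega_{jk}-\gamma\alpha_k^{(j)})/\delta)$; setting $\partial/\partial y_{jk}=0$ gives $(-\gamma\alpha_k^{(j)})-\delta(\log y_{jk}+1)-\mu=0$, hence $y_{jk}\propto \exp((-\gamma\alpha_k^{(j)})/\delta)$, with the same proportionality constant $\exp(-(1+\mu/\delta))$ for both families. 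Normalizing so that $\sum_{k\in K}(x_{jk}+y_{jk})=1$ forces the common normalizer to be $\sum_{k\in K}\bigl(\exp((\omega_{jk}-\gamma\alpha_k^{(j)})/\delta)+\exp(-\gamma\alpha_k^{(j)}/\delta)\bigr)$, which yields exactly the claimed formulas for $\hx_{jk}$ and $\hy_{jk}$. Since the entropy term makes $\Phi$ strictly concave on the (relative interior of the) simplex and the feasible set is compact and convex with the optimum in the interior (all coordinates strictly positive), the KKT/stationary point is the unique global maximizer, so no boundary cases need checking.

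The only mild subtlety — and the step I expect to need the most care — is the bookkeeping that folding the penalty term $-\sum_k\gamma\alpha_k^{(j)}(x_{jk}+y_{jk})$ into the objective produces precisely the energies $(\omega_{jk}-\gamma\alpha_k^{(j)})$ for normal-RHS-nodes and $(-\gamma\alpha_k^{(j)})$ for $0$-RHS-nodes, matching the Gibbs distribution described in step 5 of Algorithm~\ref{alg:new online matching}; this is exactly the single-LHS-node analogue of Lemma~\ref{obs:exp weight every replica}, specialized to the augmented bipartite graph in which each RHS-node $k$ is replaced by a pair (normal, $0$). Everything else is the standard first-order computation for a Gibbs/softmax distribution, so the observation follows directly; in fact one can simply invoke Lemma~\ref{obs:exp weight} applied to the $2|K|$-node instance with weights $\{\omega_{jk}\}_{k\in K}\cup\{0\}_{k\in K}$ and dual variables $\{\gamma\alpha_k^{(j)}\}$ duplicated across each pair.
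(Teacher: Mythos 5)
Your proposal is correct and follows essentially the same route as the paper: the paper states this as an observation with no separate proof, relying on exactly the first-order/Lagrangian characterization of the entropy-regularized softmax (the same reasoning behind Lemma~\ref{obs:exp weight} of Dughmi et al.), which your stationarity computation and normalization spell out. Your interiority/strict-concavity remark and the reinterpretation as a $2|K|$-node instance with energies $(\omega_{jk}-\gamma\alpha_k^{(j)})$ and $(-\gamma\alpha_k^{(j)})$ are precisely the intended justification.
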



\begin{observation}
\label{obs:dual_var}
For every dual variables $\bm{\alpha}\in[0,h]^{\ell}$ the optimal solution $\bm{x}^*, \bm{y}^*$ maximizing the Lagrangian $L((\bm{x},\bm{y}), \alpha)$ of program $(P'')$ subject to the constraints $\sum_{k}(x_{jk}+y_{jk}) = 1, \forall j \in [d\ell]$ is

\begin{eqnarray*}
    x_{jk}^* = \frac{\exp(\frac{\omega_{jk}-\alpha_k}{\delta})}{\sum_{k'}\left(\exp(\frac{\omega_{jk'}-\alpha_{k'}}{\delta}) + \exp(\frac{-\alpha_{k'}}{\delta})\right)}, & \forall j \in [d\ell], \forall k \in [\ell] \\
    y_{jk}^* = \frac{\exp(\frac{-\alpha_k}{\delta})}{\sum_{k'}\left(\exp(\frac{\omega_{jk'}-\alpha_{k'}}{\delta}) + \exp(\frac{-\alpha_{k'}}{\delta})\right)}, & \forall j \in [d\ell],  \forall k \in [\ell]
\end{eqnarray*}

Hence
\begin{eqnarray*}
    \frac{x^*_{jk}}{y^*_{jk}} = \exp(\omega_{jk}/\delta), & \forall j,k
\end{eqnarray*}
\end{observation}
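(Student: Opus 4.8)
The plan is to follow the derivation of Lemma~\ref{obs:exp weight} for program $(P)$, now carrying the extra slack variables $y_{jk}$ through the computation. First I would write down the Lagrangian obtained by relaxing only the capacity constraints $\sum_j(x_{jk}+y_{jk})\le d$:
\[
L((\bm{x},\bm{y}),\bm{\alpha})=F(\bm{x},\bm{y})-\sum_{k}\alpha_k\Bigl(\sum_j(x_{jk}+y_{jk})-d\Bigr),
\]
where $F$ is the objective of $(P'')$. Over the feasible region cut out by the retained constraints $\sum_k(x_{jk}+y_{jk})=1$ (one per $j$) and $x_{jk},y_{jk}\in[0,1]$, maximizing $L(\cdot,\bm{\alpha})$ decouples into $d\ell$ independent problems, one per LHS-node $j$, since neither the objective nor the surviving constraint links different $j$'s.

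Next I would show that each per-$j$ problem has a unique maximizer lying in the relative interior of its simplex. Because $t\mapsto t\log t$ is strictly convex, $-\delta\sum_k(x_{jk}\log x_{jk}+y_{jk}\log y_{jk})$ is strictly concave, and $L(\cdot,\bm{\alpha})$ only adds affine terms, so $L(\cdot,\bm{\alpha})$ is strictly concave on the compact convex feasible set and its maximizer is unique. The maximizer must be interior: the derivative of $-\delta t\log t$ tends to $+\infty$ as $t\to 0^+$, so shifting an infinitesimal amount of mass onto any zero coordinate raises the objective at an unbounded rate while the compensating decrease elsewhere costs only a bounded rate; hence no boundary point is optimal and, in particular, the box constraints are all slack.

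Then I would apply the first-order (KKT) conditions for each $j$ with a single multiplier $\mu_j$ for $\sum_k(x_{jk}+y_{jk})=1$. Setting the partials of $L((\bm{x},\bm{y}),\bm{\alpha})-\sum_j\mu_j\bigl(\sum_k(x_{jk}+y_{jk})-1\bigr)$ with respect to $x_{jk}$ and $y_{jk}$ to zero gives
\[
\omega_{jk}-\delta(\log x_{jk}+1)-\alpha_k-\mu_j=0,\qquad -\delta(\log y_{jk}+1)-\alpha_k-\mu_j=0,
\]
so $x_{jk}=C_j\exp\!\bigl((\omega_{jk}-\alpha_k)/\delta\bigr)$ and $y_{jk}=C_j\exp\!\bigl(-\alpha_k/\delta\bigr)$ for the common constant $C_j=\exp(-\mu_j/\delta-1)>0$. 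Imposing $\sum_k(x_{jk}+y_{jk})=1$ pins down $C_j=\bigl(\sum_{k'}(\exp((\omega_{jk'}-\alpha_{k'})/\delta)+\exp(-\alpha_{k'}/\delta))\bigr)^{-1}$, which is exactly the claimed $(\bm{x}^*,\bm{y}^*)$; since $L(\cdot,\bm{\alpha})$ is concave and the surviving constraint affine, this stationary point is the unique global maximizer. Dividing the two expressions yields $x^*_{jk}/y^*_{jk}=\exp(\omega_{jk}/\delta)$, the term $\alpha_k$ cancelling because the dummy $0$-RHS-node edges carry weight $0$. The only step needing care --- the ``main obstacle'', such as it is --- is the interiority argument that licenses unconstrained stationarity in the presence of the box constraints; everything else mirrors the proof of Lemma~\ref{obs:exp weight}, merely doubled to handle the $y$-variables.
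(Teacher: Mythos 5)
Your proposal is correct and follows the same route the paper intends: the Observation is asserted as the first-order condition for the entropy-regularized Lagrangian (exactly as in Lemma~\ref{obs:exp weight}), and your per-$j$ decoupling, stationarity computation, and normalization reproduce the stated $(\bm{x}^*,\bm{y}^*)$ and the ratio $x^*_{jk}/y^*_{jk}=\exp(\omega_{jk}/\delta)$. The added strict-concavity and interiority arguments simply make the standard step rigorous and introduce no deviation.
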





We prove in Lemma~\ref{lem:P' and P''} that the optimum of $(P'')$ is exactly the same as the optimum of $(P')$. 
\begin{lemma}\label{lem:P' and P''}
For all $j\in[d\ell]$ and $k\in[\ell]$, if $\frac{x_{jk}}{y_{jk}} = \exp(\omega_{jk}/\delta), x_{jk}+y_{jk} = z_{jk}$, then
\begin{equation}\label{equ:lem:P' and P''}
z_{jk}\cdot\zeta_\delta(\omega_{jk})-\delta\cdot z_{jk}\log (z_{jk})=x_{jk}\cdot \omega_{jk}-\delta\cdot x_{jk}\log(x_{jk}) - \delta\cdot y_{jk}\log(y_{jk}).    
\end{equation}
This implies that the optimal objective values of $(P')$ and $(P'')$ are equal.
\end{lemma}

\begin{prevproof}{Lemma}{lem:P' and P''}
\notshow{For every $j,k$, we observe that $\frac{x_{jk}}{z_{jk}} = \frac{exp(\omega_{jk}/\delta)}{1+ exp(\omega_{jk}/\delta)}$, hence $z_{jk} = \frac{x_{jk}(1+exp(\omega_{jk}/\delta))}{exp(\omega_{jk}/\delta)}$.
We simplify the equality that we need to prove to Equation~\eqref{eq:o0}.
\begin{eqnarray}
& z_{jk}\cdot\zeta_\delta(\omega_{jk})-\delta\cdot z_{jk}\log (z_{jk})=x_{jk}\cdot \omega_{jk}-\delta\cdot x_{jk}\log(x_{jk}) - \delta\cdot y_{jk}\log(y_{jk}) \nonumber \\
\Longleftrightarrow &
\zeta_\delta(\omega_{jk}) - \delta \log(z_{jk}) = \frac{x_{jk}}{z_{jk}} \omega_{jk} -
\delta \frac{x_{jk}}{z_{jk}} \log(x_{jk}) - \delta \frac{y_{jk}}{z_{jk}} \log(y_{jk}) \label{eq:o0}
\end{eqnarray}

Since:
\begin{equation}\label{eq:o1}
\zeta_\delta(\omega_{jk}) =  \delta \log(\exp(\omega_{jk}/\delta) + 1) 
\end{equation}
\begin{eqnarray}\label{eq:o2}
	\delta \log(z_{jk}) &=  \delta \log\left( \frac{x_{jk} (1+exp(\omega_{jk}/\delta))}{exp(\omega_{jk}/\delta)} \right) = \delta \log(x_{jk}) - \delta \log(\exp(\omega_{jk}/\delta)) + \delta \log(1+ \exp(\omega_{jk}/\delta)) \nonumber \\
&= \delta \log(x_{jk}) - \omega_{jk} + \zeta_\delta(\omega_{jk})
\end{eqnarray}
So the LHS of Equation~\eqref{eq:o0} equals to $\omega_{jk}-\delta\log(x_{jk})$. How about the RHS of Equation~\eqref{eq:o0}?

\begin{align}
& \frac{x_{jk}}{z_{jk}} \omega_{jk} -
\delta \frac{x_{jk}}{z_{jk}} \log(x_{jk}) - \delta \frac{y_{jk}}{z_{jk}} \log(y_{jk}) \nonumber  \\
=  & 
\frac{\exp(\omega_{jk}/\delta)}{1+\exp(\omega_{jk}/\delta)}\omega_{jk} 
- \delta \frac{\exp(\omega_{jk}/\delta)}{1+ \exp(\omega_{jk}/\delta)}\log(x_{jk}) 
- \delta \frac{1}{1+ \exp(\omega_{jk}/\delta)}\log\left( \frac{x_{jk}}{\exp(\omega_{jk}/\delta)} \right) \nonumber \\
=  & 
\frac{\exp(\omega_{jk}/\delta)}{1+\exp(\omega_{jk}/\delta)}\omega_{jk} 
- \delta \frac{\exp(\omega_{jk}/\delta)}{1+ \exp(\omega_{jk}/\delta)}\log(x_{jk}) 
- \delta \frac{1}{1+ \exp(\omega_{jk}/\delta)}\log\left( x_{jk}\right)
+ \delta \frac{1}{1+ \exp(\omega_{jk}/\delta)}\log\left(\exp(\omega_{jk}/\delta) \right) \nonumber \\
=  & 
\frac{\exp(\omega_{jk}/\delta)}{1+\exp(\omega_{jk}/\delta)}\omega_{jk} 
- \delta \frac{\exp(\omega_{jk}/\delta)}{1+ \exp(\omega_{jk}/\delta)}\log(x_{jk}) 
- \delta \frac{1}{1+ \exp(\omega_{jk}/\delta)}\log\left( x_{jk}\right)
+ \frac{1}{1+ \exp(\omega_{jk}/\delta)}\omega_{jk}\nonumber \\
=  & 
\frac{1+ \exp(\omega_{jk}/\delta)}{1+\exp(\omega_{jk}/\delta)}\omega_{jk} 
- \delta \frac{1+ \exp(\omega_{jk}/\delta)}{1+ \exp(\omega_{jk}/\delta)}\log(x_{jk}) \nonumber\\
= & \omega_{jk} - \delta \log(x_{jk}) \label{eq:o3}
\end{align}

Hence, Equation~\eqref{eq:o0}) holds.}

For every $j,k$, recall $\zeta_\delta(\omega_{jk})=\delta\cdot \log(\exp(\omega_{jk}/\delta)+1)$. Observe that $z_{jk} = (1+\exp(\omega_{jk}/\delta))y_{jk}$. We have 
$$\text{LHS}=z_{jk}\cdot\zeta_\delta(\omega_{jk})-\delta\cdot z_{jk}\log (z_{jk})=z_{jk}\cdot\delta\cdot\log(\frac{\exp(\omega_{jk}/\delta)+1}{z_{jk}})=-(\exp(\omega_{jk}/\delta)+1)y_{jk}\cdot\delta\log(y_{jk})$$

\begin{align*}
\text{RHS}&=-\delta\cdot y_{jk}\log(y_{jk})+x_{jk}\cdot (\omega_{jk}-\delta\log(x_{jk}))\\
&=-\delta\cdot y_{jk}\log(y_{jk})+\exp(\omega_{jk}/\delta)y_{jk}\cdot (\omega_{jk}-\delta\log (\exp(\omega_{jk}/\delta))-\delta\log y_{jk})\\
&=-\delta\cdot y_{jk}\log(y_{jk})\cdot (1+\exp(\omega_{jk}/\delta))
\end{align*}

Hence, Equation~\eqref{equ:lem:P' and P''} holds. Since the optimal values $x_{jk}^*,y_{jk}^*$ satisfy the requirements by Observation~\ref{obs:dual_var},
we have that the optimum of  $(P')$ is at least as large as the optimum of $(P'')$. On the other hand, let $z^*$ be the optimal solution of  $(P')$,
we can choose $x^*_{jk}$ and $y^*_{jk}$ so that $x^*_{jk}+y^*_{jk}=z^*_{jk}$ and ${x^*_{jk}\over y^*_{jk}}=\exp(\omega_{jk}/\delta)$. Clearly, $(x^*_{jk},y^*_{jk})_{jk}$ is a feasible solution to $(P'')$, therefore the optimum of  $(P')$ is at most as large as the optimum of $(P'')$. Combining. the two claims, we prove that $(P')$ and $(P'')$ have the same optimal objective values.
\end{prevproof}

\notshow{

\begin{prevproof}{lemma}{lem:math}
Follow from proof of Claim~\ref{clm:P' and P''}.
\end{prevproof}

}

\begin{lemma}\label{lem:entropy error}
With parameter $\delta\geq 0$, let $(x^*,y^*)$ be the optimal solution of $(P'')$. The optimum of $(P'')$, $\sum_{j,k} x^*_{jk}\omega_{jk}-\delta\cdot \sum_{j,k}(x^*_{jk}\log(x^*_{jk})+y^*_{jk}\log(y^*_{jk}))$, is no smaller than the 
weight of the maximum weight matching.
\end{lemma}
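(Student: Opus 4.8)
The plan is to exhibit a feasible solution of $(P'')$ whose objective value is at least the weight of the maximum weight $d$-to-$1$ matching, which immediately gives the claimed lower bound on the optimum of $(P'')$. Let $M^*$ be a maximum weight $d$-to-$1$ matching, and let $\bar z$ be its indicator: $\bar z_{jk}=1$ if edge $(j,k)\in M^*$ and $0$ otherwise. This $\bar z$ satisfies the matching constraints $\sum_j \bar z_{jk}\le d$ and $\sum_k \bar z_{jk}\le 1$; by adding the $0$-RHS-nodes we can always pad any unmatched LHS-node $j$ to a $0$-RHS-node (which carries weight $0$ on all incident edges), so without loss of generality $\sum_k \bar z_{jk}=1$ for every $j$ — i.e., $M^*$ is a \emph{perfect} $d$-to-$1$ matching in the augmented graph with the same weight. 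Now define $(\bar x,\bar y)$ from $\bar z$ by the same reparametrization used in Lemma~\ref{lem:P' and P''}: for each $(j,k)$ set $\bar x_{jk}+\bar y_{jk}=\bar z_{jk}$ and $\bar x_{jk}/\bar y_{jk}=\exp(\omega_{jk}/\delta)$ when $\bar z_{jk}>0$, and $\bar x_{jk}=\bar y_{jk}=0$ when $\bar z_{jk}=0$. This $(\bar x,\bar y)$ is feasible for $(P'')$ since $\sum_j(\bar x_{jk}+\bar y_{jk})=\sum_j \bar z_{jk}\le d$ and $\sum_k(\bar x_{jk}+\bar y_{jk})=\sum_k \bar z_{jk}=1$.

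The core computation is then to lower bound $F(\bar x,\bar y)$. By Lemma~\ref{lem:P' and P''}, for each $(j,k)$ with $\bar z_{jk}>0$ we have
\[
\bar x_{jk}\omega_{jk}-\delta\bar x_{jk}\log\bar x_{jk}-\delta\bar y_{jk}\log\bar y_{jk}=\bar z_{jk}\zeta_\delta(\omega_{jk})-\delta\bar z_{jk}\log\bar z_{jk}=\zeta_\delta(\omega_{jk}),
\]
using $\bar z_{jk}=1$, and for $\bar z_{jk}=0$ all terms vanish (with the convention $0\log 0=0$). Hence $F(\bar x,\bar y)=\sum_{(j,k)\in M^*}\zeta_\delta(\omega_{jk})$. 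Finally, the $\delta$-softplus function satisfies $\zeta_\delta(x)=\delta\log(\exp(x/\delta)+1)\ge \delta\log(\exp(x/\delta))=x$ for every $x\in\mathbb{R}$, so $\sum_{(j,k)\in M^*}\zeta_\delta(\omega_{jk})\ge\sum_{(j,k)\in M^*}\omega_{jk}$, which is exactly the weight of the maximum weight matching. Therefore the optimum of $(P'')$, being at least $F(\bar x,\bar y)$, is at least this weight, completing the proof.

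I expect the only genuinely delicate point to be the bookkeeping around the padding with $0$-RHS-nodes and the $0\log 0$ convention — i.e., making sure the reparametrization from $\bar z$ to $(\bar x,\bar y)$ is well-defined and feasible for \emph{every} LHS-node even when $M^*$ leaves some LHS-nodes unmatched, and that no constraint of $(P'')$ is violated after padding. Everything else is a direct application of Lemma~\ref{lem:P' and P''} together with the elementary inequality $\zeta_\delta(x)\ge x$. One could alternatively route the argument through $(P')$ (whose optimum equals that of $(P'')$ by Lemma~\ref{lem:P' and P''}) by plugging $\bar z$ directly into $G(\cdot)$ and using $G(\bar z)=\sum_{(j,k)\in M^*}\zeta_\delta(\omega_{jk})\ge\sum_{(j,k)\in M^*}\omega_{jk}$; both routes are equivalent and equally short.
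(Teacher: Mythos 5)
Your proof is correct, and at the top level it follows the same plan as the paper: exhibit a feasible solution of $(P'')$ built from the maximum weight matching and lower bound its objective. The difference is in which feasible point you use. The paper simply takes the $0$--$1$ indicator $x'$ of the maximum weight matching, pads it with a $0$--$1$ vector $y'$ (sending unmatched LHS-nodes to $0$-RHS-nodes, which is possible since the leftover capacity $\sum_k(d-m_k)$ exactly equals the number of unmatched LHS-nodes), and observes that because the solution is integral the entropy term $-\sum_{j,k}\bigl(x'_{jk}\log x'_{jk}+y'_{jk}\log y'_{jk}\bigr)$ is zero, so the objective is exactly the matching weight --- a one-line computation that needs neither Lemma~\ref{lem:P' and P''} nor any property of $\zeta_\delta$, and that also covers the degenerate case $\delta=0$ stated in the lemma. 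You instead split each matched unit into $(\bar x_{jk},\bar y_{jk})$ in the softplus ratio $\exp(\omega_{jk}/\delta)$, invoke Lemma~\ref{lem:P' and P''} to identify the per-edge contribution with $\zeta_\delta(\omega_{jk})$, and finish with $\zeta_\delta(x)\ge x$; this is valid (your padding and $0\log 0$ bookkeeping is fine, and for padded edges one just sets $\bar y_{jk}=1$ and computes directly rather than via the ratio, which is undefined there), and it even yields the marginally stronger bound $\opt(P'')\ge\sum_{(j,k)\in M^*}\zeta_\delta(\omega_{jk})$, but it requires $\delta>0$ for the split to make sense and imports machinery the statement does not need. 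In short: same skeleton, but the paper's choice of the integral feasible point makes the entropy term vanish and shortcuts everything you route through $\zeta_\delta$.
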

\begin{proof}
Let $x'$ be the maximum weight matching. It is not hard to see that we can construct a $0-1$ vector $y'$ so that $(x',y')$ is a feasible solution of $(P'')$. As both $x'$ and $y'$ only take values in $0$ or $1$, the entropy term $-\sum_{j,k}x'_{jk}\log(x'_{jk})-\sum_{j,k}y'_{jk}\log(y'_{jk})=0$. 
Hence, the optimum of $(P'')$ is at least as large as the weight of the maximum weight matching $\sum_{jk}x'_{jk}\omega_{jk}$.
\end{proof}

\begin{prevproof}{Theorem}{thm:arbitrary-weight-algorithm}
As the algorithm always produces a matching that respects the constraints of $(P'')$, the first  property clearly holds. As the set of available RHS-nodes $K$ and the dual variables $\Balpha^{(j)}$ only depend on the first $j-1$ LHS-nodes but not the LHS-node $j$, the maximal-in-range property follows from Observation~\ref{obs:exp weight with dummy-every replica}. The algorithm runs in $d\ell$ rounds, step 3 and 4 both take $O(\ell)$ time. Step 5 takes expected time $\poly(\gamma,\ell,1/\delta)$ and $\poly(\gamma,\ell,1/\delta)$-many samples from distributions $(\FF_{jk})_k$ to complete. Hence, the running time and sample complexity as stated in the second property.

If we execute Algorithm~\ref{alg:online matching} on a $d$-to-$1$ matching with weights $(\zeta_\delta(\omega_{jk}))_{jk}$ and Algorithm~\ref{alg:new online matching} over weights $(\omega_{jk})_{jk}$ with the same parameters $\delta,\eta', \gamma$, we can couple the two executions so that the dual variables $\Balpha^{(j)}$ and the remaining capacities $(d_k^{(j)})_k$ are the same for every $j$. We introduce the new notation $K^{(j)}$ which is exactly the set of available RHS-nodes $K$ in step 2 of both algorithm in round $j$. Note that $K^{(j)}$ is deterministically determined by $(d_k^{(j-1)})_k$. 
If $\Balpha^{(j)}$ and $K^{(j)}$ are the same in both algorithms for every $j$, then  $\hx_{jk}+\hy_{jk}=\hz_{jk}$ for every $j\in[d\ell]$ and $k\in K^{(j)}$. To verify this, simply observe that $$\hz_{jk}=\exp( \frac{\zeta_\delta(\omega_{jk})-\gamma\alpha_k^{(j)}}{\delta})=(\exp(\frac{\omega_{jk}}{\delta})+1)\cdot\exp(\frac{-\gamma\alpha_k^{(j)}}{\delta})=\exp(\frac{\omega_{jk}-\gamma\alpha_k^{(j)}}{\delta})+\exp(\frac{-\gamma\alpha_k^{(j)}}{\delta})=\hx_{jk}+\hy_{jk}.$$ 

How does the coupling work? We construct it by induction. In the base case where $j=1$, clearly everything is the same in both algorithms. Suppose the dual variables $\Balpha^{(1)},\ldots, \Balpha^{(j)}$ and the remaining capacities $(d_k^{(1)})_k,\ldots, (d_k^{(j)})_k$ are all the same for the first $j$ rounds, we argue that we can couple the two executions in round $j+1$ so that $\Balpha^{(j+1)}$ and $(d_k^{(j+1)})_k$ remain the same in both algorithms. First, the set $K^{(j+1)}$ is the same, which implies that the dual variables $\Balpha^{(j+1)}$ are also the same. Next, Algorithm~\ref{alg:online matching} samples a RHS-node $k$ according to distribution $\hz_{j+1}$ and Algorithm~\ref{alg:new online matching} samples a RHS-node according to distribution $(\hx_{j+1},\hy_{j+1})$. Note that $\hx_{(j+1)k}+\hy_{(j+1)k}=\hz_{(j+1)k}$, so wherever Algorithm~\ref{alg:online matching} matches the LHS-node $j+1$ to a RHS-node $k$ we match the LHS-node $j+1$ to the normal RHS-node $k$ with probability $\hx_{(j+1)k}\over \hz_{(j+1)k}$ and to the $0$-RHS-node with probability $\hy_{(j+1)k}\over \hz_{(j+1)k}$. Clearly, this coupling makes sure the new remaining capacities $(d_k^{(j+1)})_k$ also remain the same. Combining the coupling with Lemma~\ref{lem:P' and P''}, we conclude that 
$$G(\hz)=\sum_{j,k}\hz_{jk}\cdot\zeta_\delta(\omega_{jk})-\delta\cdot\sum_{j,k}\hz_{jk}\log(\hz_{jk})=\sum_{j,k} \hx_{jk}\omega_{jk}-\delta\cdot \sum_{j,k}(\hx_{jk}\log(\hx_{jk})+\hy_{jk}\log(\hy_{jk}))=F(\hx,\hy).$$

By Theorem~\ref{thm:dughmi result}, the expected value of $G(\hz)$ is a $(1-O(\eta'))$ multiplicative approximation to $\opt(P')$, if we choose the parameters according to the third property of the statement. Therefore, the expected value of $F(\hx,\hy)$ is a $(1-O(\eta'))$ multiplicative approximation to $\opt(P')$. Since the optimum of $(P'')$, $\opt(P'')$, is the same as $\opt(P')$ (Lemma~\ref{lem:P' and P''}),  the expected value of $F(\hx,\hy)$ is also a $(1-O(\eta'))$ multiplicative approximation to $\opt(P'')$. Now, invoke Lemma~\ref{lem:entropy error}, we know that the expected value of $F(\hx,\hy)$ is at least  a $(1-O(\eta'))$ multiplicative approximation to the weight of the maximum weight matching, which we denote as $\opt$. Note that the entropy term $-\delta\cdot \left(\sum_{j,k}\hx_{jk}\log(\hx_{jk})+\sum_{j,k}\hy_{jk}\log(\hy_{jk})\right)$ is non-negative and at most $\delta d\ell\log(2\ell)$, hence the expected weight of the matching produced by Algorithm~\ref{alg:new online matching}, the expected value of $\sum_{j,k} \hx_{jk}\omega_{jk}$, is at least $(1-O(\eta'))\cdot \opt-\delta d\ell\log(2\ell)$.

If we choose $\delta=\Theta(\frac{\psi}{\log\ell})$, $\eta'=\Theta(\psi)$, then  $\delta d\ell\log(2\ell)=\Theta(d\ell\psi)$ and $O(\eta')\cdot \opt = O(d\ell \psi)$ as $\opt\leq d\ell$. Thus, the expected weight of the matching produced by Algorithm~\ref{alg:new online matching} is within an additive error of $\Theta(d\ell\psi)$ from the weight of the maximum weight matching. This completes our proof for the third property.


{
Now we are going to prove the final bullet of the theorem.
We denote the entropy of $(x,y)$ as:
$$H(x,y) =- \left( \sum_{k}x_k \log(x_k) +\sum_k y_k \log(y_k)\right)$$
By Observation~\ref{obs:exp weight with dummy-every replica},
$$(\hat{x}_{j},\hat{y}_{j})=\argmax_{(x_{j},y_{j})}\sum_{k}x_{jk}\cdot \omega_{jk}+\delta\cdot H(x_{j},y_{j})-\sum_k\gamma\alpha_k^{(j)}\cdot (x_{j k}+y_{j k})$$

Therefore the objective of solution $(0,\hat{x}_{j} + \hat{y}_{j})$ is a lower bound on the objective of solution $(\hat{x}_{j},\hat{y}_{j})$:

\begin{align*}
	& \sum_{k}\hat{x}_{j k}\cdot \omega_{jk}+\delta\cdot H(\hat{x}_j,\hat{y}_j)-\sum_k\gamma\alpha_k^{(j)}\cdot (\hat{x}_{j k}+\hat{y}_{j k})\\
	&\geq \delta\cdot H(0,\hat{x}_j+\hat{y}_j) - \sum_k\gamma\alpha_k^{(j)}\cdot (\hat{x}_{j k}+\hat{y}_{j k})
\end{align*}

Since $H(0,\hat{x}_j+\hat{y}_j) \geq 0$,
we can conclude that:

\begin{align*}
	\sum_{k}\hat{x}_{j k}\cdot \omega_{jk} \geq -&\delta\cdot H(\hat{x}_j,\hat{y}_j)
	\geq - \delta \log(2 \ell)
\end{align*}

}
\end{prevproof}

\section{$\varepsilon$-BIC to BIC Transformation}\label{sec:mechanism}




In this section, we present our  $\varepsilon$-BIC to BIC transformation. In Theorem~\ref{thm:general reduction_2}, we prove a more general statement where the given mechanism $\MM$ is $\varepsilon$-BIC with respect to $\DD=\bigtimes_{i\in[n]}\DD_i$, while we construct an exactly BIC mechanism $\MM'$ with respect to a different distribution $\DD'=\bigtimes_{i\in[n]}\DD'_i$. If $\DD=\DD'$, the problem is the $\varepsilon$-BIC to BIC transformation problem.  We show that the revenue of $\MM'$ under $\DD'$ decreases gracefully with respect to the Wasserstein Distance of the two distributions. For every $i$, we denote $\wass(\mathcal{D}_i,\mathcal{D}_i')$ the $\ell_{\infty}$-Wasserstein Distance of distributions $\mathcal{D}_i$, $\mathcal{D}_i'$. We slightly abuse notation and let  $\wass(\mathcal{D},\mathcal{D}')=\sum_{i=1}^n \wass(\mathcal{D}_i,\mathcal{D}_i')$. 

Our mechanism works in the following way. After some agent reports her type, we sample $d\ell - 1$ i.i.d. replicas, $\ell$ i.i.d. ``real'' surrogates and add $\ell$ 0-surrogates, for some appropriately chosen parameters $d, \ell$. The true type is inserted into a random position in the replicas. We define the weight between a replica $r^{(j)}$ and a ``real'' surrogate $s^{(i)}$ to be (almost) the expected utility of type $r^{(j)}$ if she reported $s^{(i)}$ to the original mechanism $\MM$. The weights between replicas and 0-surrogates are 0. Then, we run Algorithm~\ref{alg:new online matching} in order to get a $d$-to-1 matching between the replicas and the surrogates. To ensure that this matching is truthful, we impose appropriately selected payments to the agent. This is the first phase of $\MM'$. Now, suppose that the true agent is matched to surrogate $s^{(k)}$. In the second phase of the mechanism, we let $s^{(k)}$ participate in $\MM$ along with the matched surrogates of the other agents. If $s^{(k)}$ is a 0-surrogate, then the agent gets the outcome $\perp$ and pays nothing. Otherwise, she gets the outcome that the surrogate gets and pays the same price discounted by a factor of $(1-\eta)$. We remark that a downward-closed outcome space $\mathcal{O}$ is necessary here to allow a $\perp$ outcome. 

A formal description of our mechanism is shown in Mechanism~\ref{alg:mout}. In step 3, the parameter $\gamma$ is estimated using an approach similar to Dughmi et al.~\cite{dughmi2017bernoulli}. See Lemma~\ref{lem:find gamma} in Appendix~\ref{appx:offline optimal approximation} for more details. 

\floatname{algorithm}{Mechanism}
\begin{algorithm}[t]
\begin{algorithmic}[1]
\REQUIRE Query access to an IR mechanism $\MM=(x,p)$ w.r.t. $\DD=\bigtimes_{i\in[n]} \DD_i$; sample access to the type distribution $\mathcal{D}_i$ and $\mathcal{D}_i'$ for every $i\in [n]$; Parameters $\eta, \eta',\delta,\ell,$ and $d\geq 32 \frac{\log(8 \eta'^{-1})}{\delta^2 \ell \log^2(\ell)}$.\\
\textbf{Phase 1: Surrogate Selection}
\FOR{$i \in [n]$}
    \STATE Sample $\ell$ surrogates i.i.d. from $\mathcal{D}_i$. We use $\Bs$ to denote all surrogates. 
    \STATE {Estimate $\gamma$ with parameters $\eta'$ and $\delta$ using the algorithm in Lemma~\ref{lem:find gamma}.}
    \STATE Agent $i$ reports her type $t_i$. Create $d\ell-1$ replicas sampled i.i.d. from $\mathcal{D}_i'$ and insert $t_i$ into the replicas at a uniformly random position. We use $\Br$ to denote all the $d\ell$ replicas.
    \STATE For each normal surrogate $k$, also create a $0$-surrogate with a special type $\diamond$ . Create a bipartite graph $G_i$ between the $d\ell$ replicas and $2\ell$ surrogates. Define the weight between the $j$-th replica $\rj$ ($t_i$ is also a replica) and the $k$-th normal surrogate $\sk$ using     $$W_i(\rj,\sk)= \E_{t_{-i}\sim \DD_{-i}}\left[v_i(\rj,x(\sk, t_{-i}))\right]-(1-\eta)\cdot \E_{t_{-i}\sim \DD_{-i}}\left[p_i(\sk,t_{-i})\right].$$ A $0$-surrogate has edge weight $0$ to every replica, that is $W_i(\rj,\diamond)=0$ for all $j$.
        \STATE Run Algorithm~\ref{alg:new online matching} on $G_i$ with parameters $\delta$, $\eta'$, and $\gamma$. For any edge between a replica $\rj$ and a surrogate $\sk$, we can sample the edge weight by first sampling $t_{-i}$ from $\DD_{-i}$, then query $\MM$ on input $(\sk,t_{-i})$, and compute $v_i(\rj,x(\sk,t_{-i}))-(1-\eta)\cdot p_i(\sk,t_{-i}).$
    \STATE Suppose the reported type $t_i$ of agent $i$ is matched to the $k$-th normal surrogate or the $k$-th $0$-surrogate. Let $s_i$ be the type of the $k$-th normal surrogate.
    \STATE Sample $\lambda$ from $U[0,1]$ and charge the agent $q_i(t_i,\lambda)$, which is her payment for Phase 1. $q_i(t_i,\lambda)$ is computed via a modified implicit payment (Defintion~\ref{def:implicit payment}).
\ENDFOR\\
 \textbf{Phase 2: Surrogate Competition}
\STATE Run mechanism $\MM$ on input $s=(s_1,...,s_n)$. Let $o=(o_1,\ldots,o_n)$ be a random outcome sampled from $x(s)$. If agent $i$ is matched to a normal surrogate in Phase 1, her outcome is $o_i$ and her payment for Phase 2 is 
$(1-\eta)\cdot p_i(s)$; otherwise the agent gets the outcome $\perp$ and pays $0$ for Phase 2.
\end{algorithmic}
\caption{{\sf $\varepsilon$-BIC to BIC Transformation (Mechanism $\mout$})}
\label{alg:mout}
\end{algorithm}

How do we compute the payment of Phase  1? Note that if any agent $i'\in[n]$ reports truthfully, then the surrogate $s_{i'}$ who participates for agent $i'$ in Phase 2~\footnote{Agent $i'$ may be matched to a $0$-surrogate, then $s_{i'}$ is the type of the corresponding normal surrogate.}  is exactly drawn from distribution $\DD_{i'}$. Therefore, if all the other agents report truthfully, agent $i$'s value for winning a normal surrogate $s$ is exactly $W_i(t_i, s)$ and $0$ otherwise. In other words, Mechanism~\ref{alg:mout} is equivalent to a competition among replicas to win surrogates, and the edge weight between a replicas and a surrogate is exactly the replica's value for the surrogate. To show that Mechanism~\ref{alg:mout} is BIC, it suffices to prove that the payment of Phase 1 incentivizes the replicas to submit their true edge weights. As Algorithm~\ref{alg:new online matching} is maximal-in-range, such payment rule indeed exists. 

{If the true type is the $j$-th replica, and the reported type $t_i$ induces edge weights $(\omega_{jk})_{k\in[\ell]}$, charge the agent \begin{equation}\label{eq:explicit payment}
 	\delta\sum_{k\in K}x_{jk}\log(x_{jk})+\delta\sum_{k\in K}y_{jk}\log(y_{jk})+\sum_{k\in K}\gamma\alpha_k^{(j)}\cdot (x_{jk}+y_{jk}), 
 \end{equation}
 where $x_{jk}=\frac{\exp\left((\omega_{jk}-\gamma\alpha_k^{(j)})/\delta\right)}{\sum_{k\in K}\left (\exp((\omega_{jk}-\gamma\alpha_k^{(j)})/\delta)+\exp(-\gamma\alpha_k^{(j)}/\delta)\right)}$, $
 y_{jk}=\frac{\exp\left((-\gamma\alpha_k^{(j)})/\delta\right)}{\sum_{k\in K}\left(\exp((\omega_{jk}-\gamma\alpha_k^{(j)})/\delta)+\exp(-\gamma\alpha_k^{(j)}/ \delta)\right)}$, and $\alpha^{(j)}$ is the set of dual variables in the $j$-th iteration of Algorithm~\ref{alg:new online matching}. Observation~\ref{obs:exp weight with dummy-every replica} implies that the payment rule is BIC.}
However, direct implementation of the payment requires knowing the edge weights which we only have sample access to. We use a procedure called the implicit payment computation~\cite{archer2004approximate, HartlineL10, babaioff2013multi,babaioff2015truthful, dughmi2017bernoulli} to circumvent this difficulty.

\begin{definition}[Implicit Payment Computation]\label{def:implicit payment}
	For any fixed  parameters $\delta$,$\eta$, $\eta'$ and $\gamma$, let $(\omega_{jk})_{jk}$ be the edge weights on a $[d\ell]\times[2\ell]$ size bipartite graph, we use $\AA_{j}(\omega)$ to denote $(\hx_{j1},\ldots,\hx_{j\ell},\hy_{j1},$\\$\ldots, \hy_{j\ell})$,  the allocation of the $j$-th LHS-node/replica to the surrogates computed by Algorithm~\ref{alg:new online matching} on the bipartite graph. 	
	Now, fix $\Br$ and $\Bs$, we use $u_i(t_i,(x,y))$ to denote $\sum_{k\in [\ell]} x_{k}\cdot W_i(t_i,\sk)$. Suppose agent $i$'s reported type $t_i$ is in position $\pi$, that is, $r^{(\pi)}=t_i$. To compute price  $q_i(t_i,\lambda)$, let surrogate $s'$ be the surrogate sampled from $\AA_{\pi}(W)$ by Algorithm~\ref{alg:new online matching} in step 6,  where $W$ is the collection of edge weights in graph $G_i$ as defined in step 5 of Mechanism~\ref{alg:mout}, and we sample a surrogate $s''$ from $\AA_{\pi}\left(\lambda W_{\pi},W_{-\pi}\right)$, where $W_{\pi}$ contains all weights of the edges incident to the $\pi$-th replica, and $\lambda W_{\pi}$ is simply multiplying each weight in $W_{\pi}$ by $\lambda$. Then we sample $t_{-i}$ from $\DD_{-i}$, the price $q_i(t_i,\lambda)$ is 
	$$weight_i(t_i, s',t_{-i}))-weight_i(t_i, s'',t_{-i})-\sqrt{\delta}(\log (2\ell) +1),$$ 
	where $weight_i(t_i,s,t_{-i})=v_i(t_i,x(s,t_{-i}))-(1-\eta)\cdot p_i(s,t_{-i})$ if $s\neq \diamond$, otherwise $weight_i(t_i, s,t_{-i})=0$.

	In expectation over $s'$, $s''$ and $t_{-i}$, 	$$\E\left[ q_i(t_i,\lambda)\right] = u_i\left(t_i,\AA_{\pi}(W)\right)- u_i\left(t_i,\AA_{\pi}\left(\lambda W_{\pi},W_{-\pi}\right)\right)-\sqrt{\delta}(\log (2\ell) +1),$$
	if we also take expectation over $\lambda$, 
	$$\E_{\lambda\sim U[0,1]}[q_i(t_i,\lambda)]=u_i\left(t_i,\AA_{\pi}(W)\right)-\int_0^1 u_i\left(t_i,\AA_{\pi}\left(\lambda W_{\pi},W_{-\pi}\right)\right)d \lambda-\sqrt{\delta}(\log(2 \ell) +1).~\footnote{The difference between $\E_{\lambda\sim U[0,1]}[q_i(t_i,\lambda)]$ and Equation~\eqref{eq:explicit payment} is indeed a fixed constant, hence our mechanism is BIC.}$$

\end{definition}


With Definition~\ref{def:implicit payment}, our mechanism is fully specified. We proceed to prove that the mechanism is BIC and IR. Our transformation is quite robust. Even if the original mechanism $\MM$ is not $\varepsilon$-BIC or the $\gamma$ estimated in step 3 is not a constant factor approximation of ${\opt({\omega}(\Br)) \over d}$, the mechanism is still BIC and IR. The proof for truthfulness is similar to the one in Dughmi et al.~\cite{dughmi2017bernoulli}. However, as our edge weights may be negative, it is  more challenging to establish the individually rationality compared to Dughmi et al.~\cite{dughmi2017bernoulli}. To make sure the mechanism is IR, we sometimes need to use negative payments to subsidize the agents, and at the same time guarantee that the total subsidy is negligible compared to the overall revenue. Note that this is also different from the previous $\varepsilon$-BIC to BIC transformations~\cite{DaskalakisW12, RubinsteinW15, CaiZ17}, as they essentially use the VCG mechanism to match surrogates to replicas, their mechanisms are clearly individually rational and use non-negative payments. 
The proof of Lemma~\ref{lem:M' BIC} is postponed to Appendix~\ref{sec:appx_sec_5}.

\begin{lemma}\label{lem:M' BIC}
For any choice of the parameters $\ell$, $d$, $\eta$, $\eta'$, $\delta$ and any IR mechanism $\MM$, $\MM'$ is a BIC and IR mechanism w.r.t. $\DD'$. In particular, we do not require $\MM$ to be $\varepsilon$-BIC. Moreover, each agent $i$'s expected Phase 1 payment $\E\left[ q_i(t_i,\lambda)\right]$ is at least  $ -\sqrt{\delta}(\log (2\ell) +1)$. Finally,  on any input bid $b=(b_1,\ldots,b_n)$, $\MM'$ computes the outcome in expected running time $\poly(d,\ell,1/\eta',1/\delta)$ and makes {in expectation} at most {$\poly(d,\ell,1/\eta',1/\delta)$}~queries~to~$\MM$.

\end{lemma}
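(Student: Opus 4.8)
The plan is to prove the four assertions — BIC, IR, the lower bound $\E[q_i(t_i,\lambda)]\ge-\sqrt\delta(\log(2\ell)+1)$, and the time/query bounds — separately, and to make a point of which hypotheses are \emph{not} used: none of the arguments below invokes that $\MM$ is $\varepsilon$‑BIC, nor that the estimate $\gamma$ from step~3 is a good approximation of $\opt(P')/d$; those will matter only for the revenue guarantee proved elsewhere. Fix an agent $i$ and condition on all others reporting truthfully. The first step is the standard ``representative surrogate'' claim: for each $i'\neq i$, the type $s_{i'}$ of the normal surrogate competing for $i'$ in Phase~2 is distributed exactly as $\DD_{i'}$, independently across $i'$. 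This is proved as in the proof of Lemma~\ref{lemma:IDEAL_BIC_IR}: by Theorem~\ref{thm:arbitrary-weight-algorithm}(1), Algorithm~\ref{alg:new online matching} produces a perfect $d$‑to‑$1$ matching onto the $\ell$ meta‑surrogates, so each meta‑surrogate absorbs exactly $d$ of the $d\ell$ replicas; since the true type occupies a uniformly random replica position, averaging $\tfrac1{d\ell}\sum_j\mathbf 1[\sigma_{M(j)}\in A]=\tfrac1\ell\sum_k\mathbf 1[\sigma_k\in A]$ over that position, together with the $\sigma_k$ being i.i.d.\ from $\DD_{i'}$, yields $s_{i'}\sim\DD_{i'}$. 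Consequently, if agent $i$ has true type $t_i$ and reports $b_i$, then whenever she is matched to a normal surrogate of type $s$ her realized Phase~2 value minus discounted Phase~2 payment equals $\E_{t_{-i}\sim\DD_{-i}}[v_i(t_i,x(s,t_{-i}))]-(1-\eta)\E_{t_{-i}\sim\DD_{-i}}[p_i(s,t_{-i})]=W_i(t_i,s)$, and equals $0=W_i(t_i,\diamond)$ for a $0$‑surrogate; moreover, writing $\pi$ for the (uniform) position of $t_i$ among the replicas and $\alpha^{(\pi)},K^{(\pi)}$ for the dual vector and available‑surrogate set of Algorithm~\ref{alg:new online matching} at round $\pi$, the joint law of $(\pi,\alpha^{(\pi)},K^{(\pi)},\Bs)$ does not depend on $b_i$ (it is a function of the replicas at positions $1,\dots,\pi-1$, i.i.d.\ from $\DD_i'$, and of the surrogates). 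In short, a report $b_i$ just hands agent $i$ the allocation $\AA_\pi(\omega_\pi(b_i))$ over surrogates, where $\omega_\pi(b)=(W_i(b,\sk))_k$, while her true per‑surrogate values are the entries of $\omega_\pi(t_i)$.

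For BIC I would use that, by Observation~\ref{obs:exp weight with dummy-every replica}, $\AA_\pi(\cdot)$ is maximal‑in‑range with cost $c_\pi(x,y)=\delta\sum_{k\in K^{(\pi)}}(x_k\log x_k+y_k\log y_k)+\gamma\sum_{k\in K^{(\pi)}}\alpha^{(\pi)}_k(x_k+y_k)$, which is independent of $\omega_\pi$. A one‑line envelope‑theorem computation then identifies the expected Phase~1 charge of Definition~\ref{def:implicit payment}, as a function of the report, with $\E_\pi\!\big[c_\pi(\AA_\pi(\omega_\pi(b_i)))+V_\pi(0)\big]-\sqrt\delta(\log(2\ell)+1)$, where $V_\pi(\mu):=\max_{(x,y)}\{\langle x,\mu\rangle-c_\pi(x,y)\}$ is convex with gradient equal to the $x$‑part of its unique maximizer, and $V_\pi(0)$ is $b_i$‑independent; concretely one uses $V_\pi(\mu)-V_\pi(0)=\int_0^1\langle\AA_\pi(\lambda\mu)_x,\mu\rangle\,d\lambda$ on the formula for $\E_{\lambda\sim U[0,1]}[q_i(t_i,\lambda)]$ in Definition~\ref{def:implicit payment}. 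This is exactly the step realizing the explicit payment~\eqref{eq:explicit payment} up to the $b_i$‑independent additive constant $V_\pi(0)-\sqrt\delta(\log(2\ell)+1)$ (cf.\ the footnote after Definition~\ref{def:implicit payment}). Hence agent $i$'s total expected utility from reporting $b_i$ equals $\E_\pi\big[\langle\AA_\pi(\omega_\pi(b_i))_x,\omega_\pi(t_i)\rangle-c_\pi(\AA_\pi(\omega_\pi(b_i)))\big]+\mathrm{const}$, which by maximality‑in‑range is, for every realization of $\pi$, at most $V_\pi(\omega_\pi(t_i))+\mathrm{const}$, with equality at $b_i=t_i$. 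This proves BIC, using neither $\varepsilon$‑BIC nor $\gamma$. The claimed payment lower bound is immediate from the same identity at $b_i=t_i$: since $V_\pi(0)=\max_a(-c_\pi(a))\ge-c_\pi(\AA_\pi(\omega_\pi(t_i)))$, the first two terms are nonnegative, so $\E[q_i(t_i,\lambda)]\ge-\sqrt\delta(\log(2\ell)+1)$.

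IR is the delicate part, and is the reason the $-\sqrt\delta(\log(2\ell)+1)$ subsidy is hard‑wired into $q_i$. A truthful agent's expected utility is $\E_\pi[V_\pi(\omega_\pi(t_i))-V_\pi(0)]+\sqrt\delta(\log(2\ell)+1)=\E_\pi\!\big[\int_0^1 h_\pi(\lambda)\,d\lambda\big]+\sqrt\delta(\log(2\ell)+1)$, where $h_\pi(\lambda):=\langle\AA_\pi(\lambda\,\omega_\pi(t_i))_x,\omega_\pi(t_i)\rangle$ and the equality is the envelope identity above. Convexity of $V_\pi$ makes $\lambda\mapsto h_\pi(\lambda)$ nondecreasing; $h_\pi\ge-1$ always (edge weights lie in $[-1,1]$ and the $x$‑mass is $\le1$); and Theorem~\ref{thm:arbitrary-weight-algorithm}(4), applied to the weight matrix in which the $\pi$‑th replica's weights are scaled by $\lambda$, gives $\lambda\,h_\pi(\lambda)\ge-\delta\log(2\ell)$. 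Splitting $\int_0^1$ at $\lambda_0=\sqrt{\delta\log(2\ell)}$ (which is $\le1$ for the relevant $\delta$) and using monotonicity on $[\lambda_0,1]$ yields $\int_0^1 h_\pi(\lambda)\,d\lambda\ge-\lambda_0-\delta\log(2\ell)/\lambda_0=-2\sqrt{\delta\log(2\ell)}$, and since $\log(2\ell)+1\ge2\sqrt{\log(2\ell)}$ by AM--GM the utility is $\ge0$; again this holds for any parameters and any IR $\MM$. I expect \emph{this} to be the main obstacle: unlike the VCG‑based transformations, which are trivially IR with nonnegative payments, here the surrogate an agent wins may carry negative weight, so the whole point is to certify that the total subsidy restoring individual rationality is negligible — and the quantitative match between the $-\delta\log(2\ell)$ guarantee of Theorem~\ref{thm:arbitrary-weight-algorithm}(4), the monotonicity of $h_\pi$, and the calibrated $\sqrt\delta(\log(2\ell)+1)$ subsidy (via the split‑integral/AM--GM estimate) is exactly what makes it work.

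Finally, the complexity bounds follow by walking through Mechanism~\ref{alg:mout} per agent. Estimating $\gamma$ costs $\poly(\ell,1/\eta',1/\delta)$ expected time by Lemma~\ref{lem:find gamma}; forming $G_i$ and running Algorithm~\ref{alg:new online matching} once costs, by Theorem~\ref{thm:arbitrary-weight-algorithm}(2), $\poly(d,\ell,1/\delta,\gamma)$ expected time and that many edge‑weight samples, each of which is one query to $\MM$ on an input $(\sk,t_{-i})$; the implicit‑payment computation of Definition~\ref{def:implicit payment} needs one additional run of Algorithm~\ref{alg:new online matching} (to draw $s''$) and $O(1)$ further queries to $\MM$ to evaluate $weight_i$; Phase~2 is a single query to $\MM$. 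Since $\zeta_\delta(w)\le 1+\delta\log 2$ for $w\le 1$, $\opt(P')=O(d\ell)$, hence the estimate $\gamma=O(\ell)$, and every bound above becomes $\poly(d,\ell,1/\eta',1/\delta)$ in expectation (absorbing the overall factor $n$). Putting the four pieces together gives the lemma.
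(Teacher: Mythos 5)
Your proposal is correct and follows essentially the same route as the paper's proof: the representative-surrogate distribution argument, the maximal-in-range/Rochet structure for BIC (you unpack it via an explicit envelope identity where the paper cites Rochet through Babaioff et al.), the same two lower bounds $h_\pi(\lambda)\ge -\delta\log(2\ell)/\lambda$ and $h_\pi(\lambda)\ge -1$ with a split integral for IR, the same convexity/MIR monotonicity for the payment lower bound, and the same walk through Mechanism~\ref{alg:mout} for the complexity claims. The only cosmetic differences are your split point $\sqrt{\delta\log(2\ell)}$ with AM--GM (versus the paper's split at $\sqrt{\delta}$), which requires the trivially handled side case $\delta\log(2\ell)>1$ that you flag, and your use of Theorem~\ref{thm:arbitrary-weight-algorithm}(4) on the $\lambda$-scaled weights in place of the paper's in-line rederivation of that bound.
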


We are now ready to present our main result for this section (Theorem~\ref{thm:general reduction_2}). 
To prove the theorem, it suffices to lower bound the revenue of $\MM'$ from the second phase due to  Lemma~\ref{lem:M' BIC}. In the previous transformations~\cite{DaskalakisW12, RubinsteinW15, CaiZ17}, the mechanism computes an exact maximum weight replica-surrogate matching, which allows them to bound the revenue from the second phase directly. Our mechanism only computes an approximately maximum weight replica-surrogate matching. As a result, we need to use a more delicate analysis to lower bound the revenue from Phase 2. We refer the readers to Appendix~\ref{sec:rev M'} for more details about bounding the revenue loss and the complete proof of Theorem~\ref{thm:general reduction_2}.

\begin{theorem}\label{thm:general reduction_2}
Let $\OO$ be a downward-closed outcome space. Given sample access to distributions $\mathcal{D}=\bigtimes_{i\in[n]}\DD_i$ and $\mathcal{D'}=\bigtimes_{i\in[n]}\DD'_i$, and query access to an $\varepsilon$-BIC and IR mechanism $\MM$ w.r.t. distribution $\mathcal{D}$. We can construct an exactly BIC and IR mechanism $\MM'$ w.r.t. distribution $\mathcal{D'}$, such that
\begin{equation}\label{eq:gap in revenue}
	  \rev(\MM',\DD')\geq \rev(\MM,\DD) -  O(n\sqrt{\varepsilon})-O\left(\sqrt{{n\cdot d_w(\dD,\dD' )}}\right).
\end{equation}
On any input bid $b=(b_1,\ldots,b_n)$, $\MM'$ computes the outcome and payments  in expected running time $\poly(n,T',1/\varepsilon)$ and makes {in expectation} at most {$\poly\left(n,T',1/\varepsilon\right)$}~queries~to~$\MM$, where $\TT_i'$ is the support of $\DD_i'$ and $T'=\max_{i\in [n]} |\TT_i'|$.

Furthermore, for any coupling $c_i (\cdot)$ between $\mathcal{D}_i$ and $\mathcal{D}_i'$ such that $v_i$ is non-increasing w.r.t. $c_i(\cdot)$~\footnote{Roughly speaking, $v_i$ is non-increasing w.r.t. a coupling $c_i$ if the coupling always couples a ``higher'' type to a ``lower'' type. Namely, for all $t_i$, outcome  $o\in\cO$, if the coupling produces type $t_i$ and $c_i(t_i)$, then $v_i(t_i,o)\geq v_i(c_i(t_i),o)$.} 
, the error bound can be improved as follows:


\begin{equation}\label{eq:improved non increasing coupling}
    \rev(\MM',\DD')\geq \rev(\MM,\DD)-n\sqrt{\varepsilon}-O\left(n\eta+{n\varepsilon\over \eta}\right)-{\sum_{i\in [n]}\E_{t\sim \DD'}\left[\E_{c_i(t_i)}\left[v_i\left(t_i,x'(t)\right)-v_i\left(c_i(t_i),x'(t)\right)\right]\right]\over \eta},
\end{equation}

where $x'(\cdot)$ is the allocation rule of $\MM'$ and $\eta$ can be chosen to be an arbitrary constant in $(0,1)$.
\end{theorem}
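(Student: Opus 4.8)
The plan is to analyze $\MM'$ in two phases, using Lemma~\ref{lem:M' BIC} for correctness (BIC, IR, running time) so that only the revenue bound remains. The total revenue of $\MM'$ is the sum of Phase~1 payments $\sum_i \E[q_i(t_i,\lambda)]$ and Phase~2 payments $\sum_i (1-\eta)\E[p_i(s)\cdot \ind(\text{agent }i\text{ matched to a normal surrogate})]$. For Phase~1, Lemma~\ref{lem:M' BIC} already gives $\E[q_i(t_i,\lambda)] \geq -\sqrt{\delta}(\log(2\ell)+1)$, so this phase costs at most $O(n\sqrt{\delta}\log\ell)$ in revenue. The heart of the argument is lower-bounding the Phase~2 revenue. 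Here I would use the key structural fact about the replica-surrogate construction: when all other agents report truthfully, the surrogate $s_i$ that competes for agent $i$ is distributed exactly according to $\DD_i$ (since one replica is chosen uniformly and the matching is among i.i.d. draws — the same argument as in Lemma~\ref{lemma:IDEAL_BIC_IR}, adapted to the non-perfect matching by treating $0$-surrogates as genuine match targets). Hence $\E[p_i(s)\cdot\ind(\cdot)] = \E_{s\sim\DD}[p_i(s)] - \E_{s\sim\DD}[p_i(s)\cdot\ind(\text{agent }i\text{ matched to a }0\text{-surrogate})]$, and the first term sums to $\rev(\MM,\DD)$.

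The main obstacle — and the place the more delicate analysis of Property~(3)/(4) of Theorem~\ref{thm:arbitrary-weight-algorithm} gets used — is controlling the probability that the true type is matched to a $0$-surrogate, weighted by $p_i(s)\leq 1$. I would argue: the expected weight of the matching produced by Algorithm~\ref{alg:new online matching} on $G_i$ is within $O(d\ell\psi)$ of the maximum-weight matching (Property~3), and the maximum-weight matching has per-replica weight at least the interim utility of the type under $\MM$ (using $\varepsilon$-BIC of $\MM$ plus the $(1-\eta)$-discount, in the style of Lemma~\ref{cor:revenue of basic M_0}); this forces the total probability mass on $0$-surrogates (which contribute weight $0$ and hence are ``wasteful'') to be small — on the order of $O(\psi + \varepsilon/\eta + \eta)$ per agent — since otherwise too much weight would be lost relative to the optimum. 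Combined with the interim-utility bound, charging the Phase~2 price $(1-\eta)p_i(s)$ and accounting for the revenue lost to the discount ($\eta\cdot\rev(\MM)$), the sampling error from finitely many surrogates ($O(n\sqrt{T'/\ell})$ as in Lemma~\ref{cor:revenue of basic M_0}), the Phase~1 subsidy, and the softplus/entropy approximation slack, one gets $\rev(\MM',\DD')\geq \rev(\MM,\DD) - O(n\sqrt{\varepsilon}) - O(n\eta + n\varepsilon/\eta) - \text{(coupling error)} - O(n\sqrt{\delta}\log\ell) - O(n\sqrt{T'/\ell})$; optimizing $\delta,\ell,\psi$ as functions of $\varepsilon$ and $\eta=\Theta(\sqrt\varepsilon)$ collapses all the $\varepsilon$-dependent terms into $O(n\sqrt\varepsilon)$.

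For the distribution-shift term, I would replace, for each agent, the ``true type'' $t_i\sim\DD'_i$ by its coupled partner $c_i(t_i)\sim\DD_i$ inside the replica pool, so that the edge weights $W_i$ (defined via $\MM$ on $\DD$) are consistent. In the generic case, the surrogate still ends up $\DD'_i$-distributed only up to the coupling, and the weight/value discrepancy between $t_i$ and $c_i(t_i)$ propagates as $\dist_i(t_i,c_i(t_i))$; taking the best coupling and using the concavity trick $a - b \geq a - \sqrt{ab'}$-style bound (square-rooting the Wasserstein term exactly as in the $O(\sqrt{n\, d_w(\DD,\DD')})$ term of Eq.~\eqref{eq:gap in revenue}) gives the first, general bound. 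For the refined bound~\eqref{eq:improved non increasing coupling}, monotonicity of $v_i$ w.r.t.\ $c_i$ means that replacing $t_i$ by the lower type $c_i(t_i)$ only \emph{decreases} values, so no square-root loss is needed: the error enters linearly as $\sum_i \E_{t\sim\DD'}\E_{c_i(t_i)}[v_i(t_i,x'(t)) - v_i(c_i(t_i),x'(t))]$, and dividing by $\eta$ is exactly the price of trading a value loss for a revenue loss through the discounted payments (again the Lemma~\ref{cor:revenue of basic M_0} mechanism). The cleanest way to organize all of this is to prove one master inequality bounding $\rev(\MM',\DD')$ below by $\rev(\MM,\DD)$ minus the sum of five explicit error terms, then specialize the parameters; the delicate step is quantifying the $0$-surrogate mass, for which Property~(3) of Theorem~\ref{thm:arbitrary-weight-algorithm} (near-optimality of the arbitrary-weight matching) is the essential new ingredient not available in~\cite{DaskalakisW12, RubinsteinW15, CaiZ17}.
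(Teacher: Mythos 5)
Your scaffolding matches the paper's: Lemma~\ref{lem:M' BIC} handles BIC, IR, the Phase-1 subsidy and the complexity claims; your Phase-2 decomposition (the competing surrogate is distributed as $\DD_i$, so the benchmark $\E_{s\sim\DD_i}[p_i(s)]$ sums to $\rev(\MM,\DD)$ and the shortfall is the payment mass on $0$-surrogate matches) is exactly Lemma~\ref{lem:mout lower bound}; and the final tuning $\eta=\Theta\bigl(\sqrt{\varepsilon+d_w(\DD,\DD')/n}\bigr)$, $\ell=\mathrm{poly}(T',1/\varepsilon)$ is the paper's. The gap is in your central step: you claim that because Algorithm~\ref{alg:new online matching} outputs a matching whose expected weight is within $O(d\ell\psi)$ of the maximum (Property~3 of Theorem~\ref{thm:arbitrary-weight-algorithm}), ``the total probability mass on $0$-surrogates must be small, since otherwise too much weight would be lost.'' This does not follow, and it is not the quantity you need. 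Matching a replica to a $0$-surrogate is not weight-wasteful in general: for a type $t$ with interim utility $0$ under $\MM$, the same-type edge has weight only $W_i(t,t)=\eta\, p_i(t)$ (and cross edges can be negative), so diverting such a replica to a $0$-surrogate loses only $\eta\, p_i(t)$ in weight while losing $p_i(t)$ in revenue. Hence a weight deficit of $O(d\ell\psi)$ is compatible with a payment deficit of order $d\ell\psi/\eta$, and even with an $\Omega(1)$ probability mass on $0$-surrogates when the associated payments are small; near-maximality of weight alone bounds neither the mass nor the payment shortfall.

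What actually converts a weight bound into a payment bound is the combination of IR on (coupled) same-type edges with the $\varepsilon$-BIC inequality $\E[v_i(r,x(s,\cdot))]\leq u_i(r)+p_i(s)+\varepsilon$ on mismatched edges, applied against a coupled same-type replica-surrogate matching whose expected size is controlled by a sampling argument (Lemma~\ref{lem:same type matching}, the source of the $\sqrt{T'/\ell}$ term). This is precisely the content of Lemma~\ref{lem:payment K to V}, which the paper proves by forming the hybrid matching $K'$ from the union of the algorithm's matching and the coupled same-type matching and telescoping the $\varepsilon$-BIC inequalities along alternating paths; that is where the $1/\eta$ factors, the $\varepsilon/\eta$ and $\Delta/(d\ell\eta)$ terms, and the Wasserstein (or monotone-coupling) error over $\eta$ enter. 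Your proposal never invokes $\varepsilon$-BIC on mismatched pairs in Phase~2; you instead appeal to ``the style of Lemma~\ref{cor:revenue of basic M_0},'' which is circular, since that lemma is itself obtained in the paper as the special case $\Delta=0$, $d=1$ of the very comparison you must prove, and prior proofs of it only cover the exact maximum-weight matching. Until you supply an analogue of Lemma~\ref{lem:payment K to V} — bounding the Phase-2 payment shortfall of \emph{any} matching whose expected weight is within $\Delta$ of optimal by $\eta+\sqrt{T'/\ell}+\varepsilon/\eta+\Delta/(d\ell\eta)$ plus the coupling error divided by $\eta$ — the revenue bounds \eqref{eq:gap in revenue} and \eqref{eq:improved non increasing coupling} remain unproved, even though your error-term bookkeeping and parameter choices would then go through essentially as in the paper.
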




Inequality~\eqref{eq:gap in revenue} is our main result, and provides a strong guarantee in very general settings. Even though the difference between Inequality~\eqref{eq:improved non increasing coupling} and~\eqref{eq:gap in revenue} seems small,  we like to point out that the difference can be substantial sometimes and there were indeed cases where one needed a sharper version similar to Inequality~\eqref{eq:improved non increasing coupling}. In particular, one common application of bounds similar to Inequality~\eqref{eq:improved non increasing coupling} is when the coupling simply rounds values down. For example, the main results in~\cite{CaiZ17,kothari2019approximation} heavily rely on  inequalities similar to Inequality~\eqref{eq:improved non increasing coupling}, and these results may not be possible if only an Inequality~\eqref{eq:gap in revenue} type bounds are used.

When $\mathcal{D}=\mathcal{D'}$, $\wass(\mathcal{D},\mathcal{D'})=0$, the following corollary states the $\varepsilon$-BIC to BIC transformation.

\begin{corollary}\label{col:small wass distance}
If $\mathcal{D}=\mathcal{D'}$, $\rev(\MM',\DD) \geq \rev(\MM,\DD) -{O}\left(n\sqrt{\varepsilon} \right)$.
\end{corollary}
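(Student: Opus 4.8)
The plan is simply to instantiate Theorem~\ref{thm:general reduction_2} with the choice $\DD'=\DD$. The key observation is that when the two distributions coincide, the $\ell_\infty$-Wasserstein distance $\wass(\DD_i,\DD_i')$ vanishes for every $i$ --- witnessed by the identity coupling $c_i$ mapping each type to itself --- so that $\wass(\DD,\DD')=\sum_{i\in[n]}\wass(\DD_i,\DD_i')=0$. Plugging $d_w(\DD,\DD')=0$ into Inequality~\eqref{eq:gap in revenue} makes the term $O\!\left(\sqrt{n\cdot d_w(\DD,\DD')}\right)$ disappear, leaving exactly
$$\rev(\MM',\DD)\geq \rev(\MM,\DD)-O(n\sqrt{\varepsilon}),$$
which is the claimed bound. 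The running-time and query-complexity guarantees transfer verbatim from the theorem, now with $T'=T$ since $\TT_i'=\TT_i$.

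Alternatively, one could invoke Inequality~\eqref{eq:improved non increasing coupling} with the identity coupling, which is trivially non-increasing (indeed $v_i(t_i,o)=v_i(c_i(t_i),o)$ for every outcome $o$), so its last term is identically zero; this gives $\rev(\MM',\DD)\geq \rev(\MM,\DD)-n\sqrt{\varepsilon}-O\!\left(n\eta+n\varepsilon/\eta\right)$, and choosing $\eta=\Theta(\sqrt{\varepsilon})$ again yields an $O(n\sqrt{\varepsilon})$ revenue loss.

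There is essentially no obstacle to overcome here: all the content resides in Theorem~\ref{thm:general reduction_2}, whose (appendix) proof does the heavy lifting --- analyzing the approximately-maximum $U$-replica-surrogate matching produced by Mechanism~\ref{alg:mout} via Theorem~\ref{thm:arbitrary-weight-algorithm}, controlling the Phase-1 subsidy, and bounding the Phase-2 revenue loss. The corollary is just the specialization $\DD=\DD'$ of that statement.
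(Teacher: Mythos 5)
Your proposal is correct and is exactly the paper's own derivation: the corollary is stated as an immediate specialization of Theorem~\ref{thm:general reduction_2} to $\DD'=\DD$, where $\wass(\DD,\DD')=0$ kills the $O\!\left(\sqrt{n\cdot d_w(\DD,\DD')}\right)$ term in Inequality~\eqref{eq:gap in revenue}. Your alternative route through Inequality~\eqref{eq:improved non increasing coupling} with the identity coupling and $\eta=\Theta(\sqrt{\varepsilon})$ is also valid but adds nothing beyond the direct instantiation.
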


Another useful corollary is when we choose $\MM$ to be the optimal BIC, IR mechanism for $\DD$, we conclude that the optimal revenue under $\DD'$ is not far away from the optimal revenue under $\DD$.
\begin{corollary}\label{cor:opt is close with small wass distance is }
If $\wass(\DD_i,\DD'_i)\leq \kappa$ for all $i\in[n]$, let $\opt(\DD)$ and $\opt(\DD')$ be the optimal revenue achievable by any BIC and IR mechanism w.r.t. $\DD$ and $\DD'$ respectively. Then $\left\lvert\opt (\DD)-\opt(\DD')\right\rvert\leq O(n\cdot\sqrt{\kappa})$.
\end{corollary}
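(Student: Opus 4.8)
The plan is to derive the corollary directly from Theorem~\ref{thm:general reduction_2} by applying it in both directions, using the trivial observation that an \emph{exactly} BIC mechanism is in particular $\varepsilon$-BIC for every $\varepsilon>0$. First I would fix an arbitrary $\varepsilon>0$ and let $\MM$ be a BIC and IR mechanism with respect to $\DD$ whose revenue is within an additive $\varepsilon$ of $\opt(\DD)$ (such a mechanism exists because valuations lie in $[0,1]$, so revenues are bounded and the supremum defining $\opt(\DD)$ is well-defined). Since $\MM$ is BIC it is in particular $\varepsilon$-BIC with respect to $\DD$, so Theorem~\ref{thm:general reduction_2} (with its ``$\DD$''/``$\DD'$'' instantiated by our $\DD$/$\DD'$) produces a mechanism $\MM'$ that is exactly BIC and IR with respect to $\DD'$ and satisfies
\[
\rev(\MM',\DD')\ \ge\ \rev(\MM,\DD)\ -\ O(n\sqrt{\varepsilon})\ -\ O\!\left(\sqrt{n\cdot \wass(\DD,\DD')}\right)\ \ge\ \opt(\DD)\ -\ O(n\sqrt{\varepsilon})\ -\ O\!\left(\sqrt{n\cdot \wass(\DD,\DD')}\right).
\]

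Next I would bound the Wasserstein term using the convention $\wass(\DD,\DD')=\sum_{i\in[n]}\wass(\DD_i,\DD_i')\le n\kappa$, which gives $\sqrt{n\cdot \wass(\DD,\DD')}\le \sqrt{n\cdot n\kappa}=n\sqrt{\kappa}$. Since $\MM'$ is a feasible BIC and IR mechanism with respect to $\DD'$, we have $\opt(\DD')\ge \rev(\MM',\DD')\ge \opt(\DD)-O(n\sqrt{\varepsilon})-O(n\sqrt{\kappa})$; as $\varepsilon>0$ was arbitrary, letting $\varepsilon\to 0$ yields $\opt(\DD')\ge \opt(\DD)-O(n\sqrt{\kappa})$. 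Swapping the roles of $\DD$ and $\DD'$ throughout (the hypothesis $\wass(\DD_i,\DD_i')\le\kappa$ is symmetric in $\DD_i,\DD_i'$) gives $\opt(\DD)\ge \opt(\DD')-O(n\sqrt{\kappa})$, and combining the two inequalities gives $\left\lvert\opt(\DD)-\opt(\DD')\right\rvert\le O(n\sqrt{\kappa})$.

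There is essentially no hard step here: the statement is a direct instantiation of Theorem~\ref{thm:general reduction_2}. The only minor points to watch are (i) supplying a near-optimal BIC, IR mechanism as the input $\MM$ (handled by the boundedness of revenue and an arbitrarily small additive slack absorbed into the $\varepsilon\to 0$ limit), and (ii) that the $O(n\sqrt{\varepsilon})$ term vanishes in that limit, so the final bound depends only on $n$ and $\kappa$. The downward-closedness of $\OO$ required by Theorem~\ref{thm:general reduction_2} is inherited from the standing assumption of this part of the paper.
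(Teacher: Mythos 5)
Your proof is correct and follows essentially the same route the paper intends: instantiate Theorem~\ref{thm:general reduction_2} with a (near-)optimal BIC, IR mechanism for $\DD$ as the input (which is trivially $\varepsilon$-BIC), bound $\wass(\DD,\DD')\le n\kappa$ so that $\sqrt{n\cdot\wass(\DD,\DD')}\le n\sqrt{\kappa}$, let $\varepsilon\to 0$, and use symmetry of the Wasserstein distance to get both directions of the inequality. The only cosmetic difference is your careful handling of the supremum via an additive-$\varepsilon$-optimal mechanism, which the paper glosses over by speaking of "the optimal" mechanism directly.
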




\section{Black-box Reduction for Multi-Dimensional Revenue Maximization}\label{sec:apps}

In this section, we apply Theorem~\ref{thm:general reduction_2} to the multi-dimensional revenue maximization problem. 
Cai et al. \cite{CaiDW13b} provide a reduction from MRM to VWO.
More formally:

\begin{theorem}[Rephrased from Theorem~2 of Cai et al.~\cite{CaiDW13b}]
\label{thm:rev-max}
Let $\OO$ be a general outcome space. Given the bidders' type distributions $\DD=\bigtimes_i \DD_i$. Let $b$ be an upper bound on the bit complexity of $v_i(t_i,o)$ and $\Pr(t_i)$ for any agent $i$, any type $t_i$, and any outcome $o$, and $\opt$~be the optimal revenue achievable by any BIC and IR mechanisms. We further assume that  types are normalized, that is, for each agent $i$, type $t_i$ and outcome $o$,
$v_i(t_i,o)\in [0,1]$.

 Given oracle access to an $\alpha$-approximation algorithm $G$ for VWO
with running time $rt_G(x)$, where $x$ is the bit complexity of the input,
there is an algorithm that terminates in $\poly\left(n, T,\frac{1}{\varepsilon},b,rt_G\left(\poly\left( n, T,\frac{1}{\varepsilon},b \right)\right)\right)$ time, and outputs a mechanism with expected revenue $\alpha OPT - \varepsilon$ that is $\varepsilon$-BIC with probability at least $1-\exp(-n/\varepsilon)$. Recall that $T=\max_{i\in [n]} |\TT_i|$. On any input bid, the mechanism computes the outcome and payments in expected running time $\poly\left(n, T,\frac{1}{\varepsilon},b,rt_G\left(\poly\left( n, T,\frac{1}{\varepsilon},b \right)\right)\right)$. 

\end{theorem}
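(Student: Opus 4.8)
This statement is a restatement of the black-box reduction of Cai, Daskalakis, and Weinberg~\cite{CaiDW12a, CaiDW12b, CaiDW13a, CaiDW13b}, so the plan is to recall the structure of their argument and verify that, translated into the $(n,\VV,\DD,v,\cO)$ notation of this paper, it yields exactly the claimed bounds. The starting point is the LP formulation of revenue-optimal mechanism design over \emph{interim} (reduced-form) quantities: a mechanism is summarized by, for each agent $i$ and type $t_i$, the interim allocation $\pi_i(t_i)=\E_{t_{-i}}[x(t_i,t_{-i})]\in\Delta(\cO)$ (or, more precisely, the induced interim value vector) together with the interim payment $\bar p_i(t_i)=\E_{t_{-i}}[p_i(t_i,t_{-i})]$. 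The BIC and IR constraints are linear in these variables and the revenue objective $\sum_i\E_{t_i}[\bar p_i(t_i)]$ is linear; the only non-trivial constraint is \emph{feasibility}, i.e.\ which profiles of interim allocation rules are induced by some jointly feasible allocation rule valued in $\Delta(\cO)$. These form a convex polytope, the reduced-form polytope.

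First I would invoke the separation-oracle characterization: by the Border-type theorem of Cai et al.\ (valid for arbitrary $\cO$ and valuations), a separation oracle for the reduced-form polytope reduces to VWO --- given a candidate point and candidate coefficients $\{C_i(t_i)\}$, one compares the candidate against $\E_{t\sim\DD}\bigl[\max_{o\in\cO}\sum_i\sum_{t_i}C_i(t_i)\,v_i(t_i,o)\bigr]$, which is exactly a family of VWO instances sharing the weight functions $C_i$. When only an $\alpha$-approximate oracle $G$ is available, one works over an $\alpha$-shrunk polytope, and Cai et al.\ show this loses only the multiplicative $\alpha$ factor and nothing more. Next, the revenue objective is rewritten as \emph{virtual} welfare: LP duality on the BIC constraints produces dual variables $\lambda_i(t_i,t_i')$ that define, for each realized type $t_i$, a weight function $C_i$ (hence the phrasing ``$C_i$ depends on the agent's real type'' in the VWO definition), and maximizing revenue over feasible mechanisms becomes maximizing $\E_t\bigl[\sum_i\sum_{t_i}C_i(t_i)v_i(t_i,x(t))\bigr]$ pointwise --- again a VWO problem at each type profile. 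Running the ellipsoid method on the resulting LP with $G$ implementing separation produces, in $\poly\!\bigl(n,T,\tfrac1\varepsilon,b,rt_G(\poly(n,T,\tfrac1\varepsilon,b))\bigr)$ time, an interim solution achieving expected revenue $\alpha\cdot\opt-\varepsilon$.

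Finally I would recall the reduced-form decomposition step: the LP solution is a point in the reduced-form polytope, and Cai et al.\ give an algorithm (again driven by calls to $G$) to write it as a convex combination of polynomially many vertices, each of which is an actual allocation rule that maximizes virtual welfare for some weights; the online mechanism samples a vertex from this combination, outputs the corresponding outcome, and charges the associated implicitly computed payments, so the per-bid running time is dominated by one draw from the decomposition and hence by a constant number of calls to $G$. The interim allocations and payments used throughout are only estimated from samples of $\DD$; this sampling error is precisely what turns exact BIC into $\varepsilon$-BIC and exact optimality into an additive $\varepsilon$ loss, and the failure probability $\exp(-n/\varepsilon)$ comes from a Chernoff/Hoeffding union bound over the $\poly(n,T)$ interim quantities being estimated to accuracy $\poly(\varepsilon)$.

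The main obstacle --- really the only place where anything beyond ``see \cite{CaiDW13b}'' is needed --- is matching the precise quantitative claims: checking that the approximation factor is preserved \emph{exactly} (not $\alpha-o(1)$) under the approximate separation oracle, that the bit-complexity parameter $b$ enters both the ellipsoid iteration count and the precision of the decomposition LP only polynomially, and that the $\varepsilon$-BIC slack can be driven to any target $\varepsilon$ with only $\poly(1/\varepsilon)$ sample and time overhead. All of these are established in the Cai--Daskalakis--Weinberg line of work, so the ``proof'' here amounts to a careful citation together with the bookkeeping needed to state their guarantees in the normalized setting ($v_i(t_i,o)\in[0,1]$, discrete types, general $\cO$) used in this paper.
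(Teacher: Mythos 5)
Your proposal is correct and matches the paper's treatment: the paper gives no independent proof of this statement, importing it verbatim (rephrased) from Theorem~2 of Cai et al.~\cite{CaiDW13b}, exactly as you conclude. Your sketch of the underlying machinery---reduced-form LP, separation via VWO, virtual-welfare duality, ellipsoid with the $\alpha$-approximate oracle, vertex decomposition, and the sampling error that yields $\varepsilon$-BIC with exponentially small failure probability---is a faithful account of how that cited result is established.
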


We can apply Theorem~\ref{thm:general reduction_2} to the final mechanism produced by Theorem~\ref{thm:rev-max} and obtain an exactly BIC mechanism with almost the same revenue.

\begin{theorem}
\label{thm:app2}
Let $\OO$ be a downward-closed outcome space.
Given the bidders' type distributions $\DD=\bigtimes_i \DD_i$. Let $b$ be an upper bound on the bit complexity of $v_i(t_i,o)$ and $\Pr(t_i)$ for any agent $i$, any type $t_i$, and any outcome $o$, and $\opt$~be the optimal revenue achievable by any BIC and IR mechanisms. We further assume that  types are normalized, that is, for each agent $i$, type $t_i$ and outcome $o$,
$v_i(t_i,o)\in [0,1]$.

 Given oracle access to an $\alpha$-approximation algorithm $G$ for VWO
with running time $rt_G(x)$, where $x$ is the bit complexity of the input,
there is an algorithm that terminates in $\poly\left(n, T,\frac{1}{\varepsilon},b,rt_G\left(\poly\left( n, T,\frac{1}{\varepsilon},b \right)\right)\right)$ time, and outputs an \textbf{exactly BIC} and IR mechanism with expected revenue 
$$\rev(\MM,\DD) \geq \alpha\cdot OPT - O\left(n  \sqrt{\varepsilon} \right),$$

where $T=\max_{i\in [n]} |\TT_i|$. On any input bid, $\MM$ computes the outcome and payments in expected running time $\poly\left(n, T,\frac{1}{\varepsilon},b,rt_G\left(\poly\left( n, T,\frac{1}{\varepsilon},b \right)\right)\right)$. 

\end{theorem}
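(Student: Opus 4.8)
The plan is a straightforward two-step composition: first run the Cai et al.\ reduction to get an $\varepsilon$-BIC mechanism, then apply Theorem~\ref{thm:general reduction_2} (in the special case $\DD'=\DD$, i.e.\ Corollary~\ref{col:small wass distance}) to upgrade it to exactly BIC. Concretely, first invoke Theorem~\ref{thm:rev-max} with accuracy parameter $\varepsilon$ to obtain, in time $\poly\!\left(n,T,\tfrac1\varepsilon,b,rt_G(\poly(n,T,\tfrac1\varepsilon,b))\right)$, a mechanism $\widehat\MM$ that with probability at least $1-\exp(-n/\varepsilon)$ is $\varepsilon$-BIC and IR and has expected revenue at least $\alpha\cdot\opt-\varepsilon$; moreover $\widehat\MM$ admits query access because it computes its outcome and payments within the same time bound. (If the mechanism of Theorem~\ref{thm:rev-max} is only $\varepsilon$-IR, first subtract $\varepsilon$ from every payment to make it exactly IR, losing at most $n\varepsilon$ in revenue, which is absorbed below.)

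Second, apply the transformation of Theorem~\ref{thm:general reduction_2} with $\DD'=\DD$, so that $\wass(\DD,\DD')=0$ and $T'=T$, using sample access to $\DD$ — which we have, since the distributions are part of the input — and query access to $\widehat\MM$. Call the output $\MM$. By Lemma~\ref{lem:M' BIC}, $\MM$ is exactly BIC and IR regardless of whether $\widehat\MM$ is $\varepsilon$-BIC, and on any bid profile $\MM$ computes its outcome and payments in expected time $\poly(n,T,\tfrac1\varepsilon)$ while making $\poly(n,T,\tfrac1\varepsilon)$ queries to $\widehat\MM$; since each such query costs $\poly\!\left(n,T,\tfrac1\varepsilon,b,rt_G(\poly(n,T,\tfrac1\varepsilon,b))\right)$, the per-bid running time of $\MM$, as well as the total construction time, is of the same order. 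Conditioning on the event $E$ that $\widehat\MM$ is $\varepsilon$-BIC and IR, Corollary~\ref{col:small wass distance} gives $\rev(\MM,\DD)\ge\rev(\widehat\MM,\DD)-O(n\sqrt\varepsilon)\ge\alpha\cdot\opt-\varepsilon-O(n\sqrt\varepsilon)=\alpha\cdot\opt-O(n\sqrt\varepsilon)$, using $\varepsilon\le n\sqrt\varepsilon$.

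It remains to handle the $\exp(-n/\varepsilon)$-probability complement of $E$. Here $\MM$ is still BIC and IR by Lemma~\ref{lem:M' BIC}, but Corollary~\ref{col:small wass distance} need not apply, so we must argue that the revenue cannot drop too far. The cleanest route: we may assume $\widehat\MM$ has no positive transfers (the mechanism of Theorem~\ref{thm:rev-max} can be taken with non-negative payments), so the Phase~2 payments of $\MM$ are non-negative and the Phase~1 payments are at least $-\sqrt\delta(\log(2\ell)+1)$ per agent by Lemma~\ref{lem:M' BIC}; with the parameter choices underlying Corollary~\ref{col:small wass distance} this gives $\rev(\MM,\DD)\ge -O(n\sqrt\varepsilon)$ \emph{unconditionally}. (Alternatively, one can use a few extra samples to detect when $\widehat\MM$ has a large interim incentive violation and fall back to the trivial all-$\perp$ mechanism, whose revenue is $0$.) Either way, since $\opt\le n$, $\E[\rev(\MM,\DD)]\ge(1-\exp(-n/\varepsilon))\bigl(\alpha\cdot\opt-O(n\sqrt\varepsilon)\bigr)-O(n\sqrt\varepsilon)\exp(-n/\varepsilon)\ge\alpha\cdot\opt-O(n\sqrt\varepsilon)$. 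I expect this last point — reconciling the high-probability $\varepsilon$-BIC guarantee of Theorem~\ref{thm:rev-max} with the transformation's requirement of a genuinely $\varepsilon$-BIC and IR input, and checking that the exponentially unlikely bad event contributes negligibly to the expected revenue — to be the only real subtlety; specializing Theorem~\ref{thm:general reduction_2} to $\DD'=\DD$ and composing the two running-time bounds through $rt_G$ is routine.
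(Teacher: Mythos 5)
Your proposal is correct and follows essentially the same route as the paper: invoke Theorem~\ref{thm:rev-max} to get a (with overwhelming probability) $\varepsilon$-BIC and IR mechanism, apply Theorem~\ref{thm:general reduction_2} with $\DD'=\DD$ (Corollary~\ref{col:small wass distance}) to make it exactly BIC at an $O(n\sqrt{\varepsilon})$ revenue cost, and absorb the $\exp(-n/\varepsilon)$-probability failure event into the error term using $\opt\le n$. The only difference is in handling that bad event: the paper simply treats the given mechanism as $1$-BIC and IR and applies the revenue guarantee of Theorem~\ref{thm:general reduction_2} with $\varepsilon=1$ (giving revenue at least $-O(n)$, which suffices), which sidesteps your unverified claim that the mechanism of Theorem~\ref{thm:rev-max} can be taken with non-negative payments.
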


\begin{prevproof}{Theorem}{thm:app2}
When the mechanism computed by Theorem~\ref{thm:rev-max} is $\varepsilon$-BIC, our transformation converts it into a BIC mechanism with at most $O(n\sqrt{\varepsilon})$ less revenue. The important property of our transformation as stated in Lemma~\ref{lem:M' BIC} is that even if the initial mechanism is not $\varepsilon$-BIC, our transformation still produces an exactly BIC mechanism. In this case, we can still treat the given mechanism as $1$-BIC and IR, and use the corresponding revenue guarantees provided by Theorem~\ref{thm:general reduction_2}. Since the probability that the mechanism computed by Theorem~\ref{thm:rev-max} is not $\varepsilon$-BIC is  exponentially small, we can absorb the loss from this exponentially small event in the error term $O(n\sqrt{\varepsilon})$. The time complexity follows from Theorem~\ref{thm:general reduction_2} and~\ref{thm:rev-max}.
\end{prevproof}


Since our Theorem~\ref{thm:general reduction_2} allows us to construct a close to optimal mechanism $\MM'$ w.r.t. the type distribution $\DD_i'$, if $\DD'$ is not too far away from the distribution $\DD$ that $\MM$ is designed , we can approximate the optimal revenue even when we only have sample access to the bidders' type distributions. A byproduct of this result is that the running time of our algorithm no longer depends on the bit complexity of the probability that a particular type shows up.

\notshow{
Before we state the theorem, we informally provide the basic terminology \cite{CaiDW13b}.
Let $\FF$ be the set of feasible outcomes,
$n$ be the number of buyers
and $\mathcal{V}$ be the set of allowable types that map each $f \in \FF$ to a value in $\R$.
\begin{itemize}
    \item \textbf{Multi-Dimensional Revenue Maximization (MRM):} Given as input $n$ distributions $\mathcal{D}_1,\ldots,\mathcal{D}_n$ over types in $\VV$ and a set of feasible outcomes $\mathcal{F}$. Let $T_i$ be the support of $\dD_i$, for every type $t_i\in T_i$, $v_i(t_i,\cdot)$ is a valuation function that maps every outcome to a nonnegative number. The output should be a BIC mechanism $M$ who chooses outcomes from $\mathcal{F}$ with probability $1$ and whose expected revenue is optimal relative to any other, possibly randomized, BIC mechanism when played by $n$ bidders whose types are sampled from $\mathcal{D} = \bigtimes_{i\in[n]} \mathcal{D}_i$.
    \item \textbf{Virtual Welfare Optimization (VWO):} Given as input $n$ functions $C_i(\cdot): T_i\mapsto \mathbb{R}$ and a a set of feasible outcomes $\mathcal{F}$, output should be an outcome $o \in \argmax_{x\in \FF} \sum_i\sum_{t_i\in T_i}C_i(t_i)\cdot v_i(t_i,x)$. We refer to the sum $\sum_i\sum_{t_i\in T_i}C_i(t_i)\cdot v_i(t_i,x)$ as the virtual welfare of outcome $x$.
\end{itemize}

}

\begin{theorem}
\label{thm:dist_rev}
Let $\OO$ be a downward-closed outcome space.
Given \textbf{sample access} to bidders' type distributions $\DD=\bigtimes_i \DD_i$. Let $b$ be an upper bound on the bit complexity of $v_i(t_i,o)$ and $\Pr(t_i)$ for any agent $i$, any type $t_i$, and any outcome $o$, and $\opt$~be the optimal revenue achievable by any BIC and IR mechanism. We further assume that  types are normalized, that is, for each agent $i$, type $t_i$ and outcome $o$,
$v_i(t_i,o)\in [0,1]$.

 Given oracle access to an $\alpha$-approximation algorithm $G$ for VWO
with running time $rt_G(x)$, where $x$ is the bit complexity of the input,
there is an algorithm that terminates in $\poly\left(n, T,\frac{1}{\varepsilon},b,rt_G\left(\poly\left( n, T,\frac{1}{\varepsilon},b \right)\right)\right)$ time, and outputs an \textbf{exactly BIC} and IR mechanism with expected revenue $$
\rev(\MM,\DD) \geq \alpha\cdot OPT - O\left(n  \sqrt{\varepsilon} \right),$$

where $T=\max_{i\in [n]} |\TT_i|$. On any input bid, $\MM$ computes the outcome and payments in expected running time $\poly\left(n, T,\frac{1}{\varepsilon},b,rt_G\left(\poly\left( n, T,\frac{1}{\varepsilon},b \right)\right)\right)$.

\notshow{
Given oracle access to Algorithm~G (Theorem~\ref{thm:rev-max}) and only sample access to the bidders' distributions $\dD_1,\ldots,\dD_n$,
we can compute a exactly BIC Mechanism $\MM$ in time $\poly\left(\sum_i{|T_i|},\frac{1}{\varepsilon},b,rt_G\left(\poly\left( \sum_i{|T_i|},\frac{1}{\varepsilon},b \right)\right)\right)$
with expected revenue

$$
\rev(\MM,\DD) \geq \alpha OPT -  O\left(n \sqrt{\varepsilon} \right). 
$$
}


\end{theorem}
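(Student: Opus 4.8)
The plan is to mimic the proof of Theorem~\ref{thm:app2}, but first replace the unknown distributions $\DD$ by empirical ones built from samples, and then pay for the discrepancy through the Wasserstein-distance error terms that are already present in Theorem~\ref{thm:general reduction_2} and Corollary~\ref{cor:opt is close with small wass distance is }. Concretely, first I would draw $N=\poly(n,T,1/\varepsilon)$ i.i.d.\ samples from each $\DD_i$, let $\hDD_i$ be the resulting empirical distribution, and set $\hDD=\bigtimes_{i\in[n]}\hDD_i$. Since every value of $\dist_i(\cdot,\cdot)$ lies in $[0,1]$, coupling $\hDD_i$ with $\DD_i$ through the identity on their common mass gives $\wass(\hDD_i,\DD_i)\le\|\hDD_i-\DD_i\|_{TV}$, and a standard concentration bound (for instance, McDiarmid applied to $\|\hDD_i-\DD_i\|_{TV}$, which has bounded differences $1/N$, together with the fact that the support has size at most $T$) shows that for a suitable polynomial choice of $N$ we get $\wass(\hDD_i,\DD_i)\le\varepsilon$ for all $i$ with probability at least $1-\exp(-n/\varepsilon)$. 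I would also record that every probability $\Pr_{\hDD_i}(t_i)$ is an integer multiple of $1/N$, hence has bit complexity $O(\log N)=\poly\log(n,T,1/\varepsilon)$; this is precisely what lets the final running time avoid any dependence on the (unavailable) bit complexity of the true type probabilities.

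Next I would run the Cai et al.\ reduction of Theorem~\ref{thm:rev-max} on the \emph{explicit} distribution $\hDD$ together with the given $\alpha$-approximation oracle $G$. Since the probabilities of $\hDD$ already have bit complexity $\poly\log(n,T,1/\varepsilon)$, the parameter $b$ now only needs to bound the bit complexity of the $v_i(t_i,o)$'s, and the reduction outputs, in time $\poly(n,T,1/\varepsilon,b,rt_G(\poly(n,T,1/\varepsilon,b)))$, a mechanism $\MM$ that with probability at least $1-\exp(-n/\varepsilon)$ is $\varepsilon$-BIC and IR with respect to $\hDD$ and satisfies $\rev(\MM,\hDD)\ge\alpha\cdot\opt(\hDD)-\varepsilon$, where $\opt(\hDD)$ denotes the optimal BIC and IR revenue for $\hDD$.

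Then I would invoke Theorem~\ref{thm:general reduction_2} with ``its $\DD$'' set to $\hDD$ and ``its $\DD'$'' set to the true $\DD$ (sample access to both suffices), obtaining a mechanism $\MM'$ that is \textbf{exactly} BIC and IR with respect to $\DD$. Because $\wass(\hDD,\DD)=\sum_{i\in[n]}\wass(\hDD_i,\DD_i)\le n\varepsilon$, Theorem~\ref{thm:general reduction_2} gives
\[
\rev(\MM',\DD)\ \ge\ \rev(\MM,\hDD)-O(n\sqrt{\varepsilon})-O\!\left(\sqrt{n\cdot\wass(\hDD,\DD)}\right)\ \ge\ \rev(\MM,\hDD)-O(n\sqrt{\varepsilon}).
\]
Combining this with the previous step and with Corollary~\ref{cor:opt is close with small wass distance is } applied with $\kappa=\varepsilon$ (which yields $\opt(\hDD)\ge\opt(\DD)-O(n\sqrt{\varepsilon})$),
\[
\rev(\MM',\DD)\ \ge\ \alpha\cdot\opt(\hDD)-\varepsilon-O(n\sqrt{\varepsilon})\ \ge\ \alpha\cdot\opt(\DD)-O(n\sqrt{\varepsilon}).
\]
The two failure events (each of probability at most $\exp(-n/\varepsilon)$) are handled exactly as in the proof of Theorem~\ref{thm:app2}: by Lemma~\ref{lem:M' BIC}, $\MM'$ is exactly BIC and IR with respect to $\DD$ regardless of whether $\MM$ is $\varepsilon$-BIC (treat it as $1$-BIC on the bad event), and since the revenue of any IR mechanism with values in $[0,1]$ is at most $n$, the revenue lost on these events is $O(n\exp(-n/\varepsilon))$, absorbed into the $O(n\sqrt{\varepsilon})$ term. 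The running time and query complexity follow by composing those of Theorems~\ref{thm:rev-max} and~\ref{thm:general reduction_2}, using $\max_i|\supp(\hDD_i)|\le N=\poly(n,T,1/\varepsilon)$ and $\max_i|\supp(\DD_i)|\le T$.

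The step I expect to be the main obstacle — indeed the only part not already supplied by earlier results — is the sampling analysis in the first step: pinning down a polynomial sample size that makes $\wass(\hDD_i,\DD_i)\le\varepsilon$ hold simultaneously for all $i$ with failure probability $\exp(-n/\varepsilon)$, while keeping the bit complexity of $\hDD$ small enough that the composed algorithm's running time depends only on $n$, $T$, $1/\varepsilon$, $b$ (with $b$ bounding only the $v_i$'s) and $rt_G$. Once the empirical distributions are in hand, the rest is a mechanical chaining of Theorem~\ref{thm:rev-max}, Theorem~\ref{thm:general reduction_2}, and Corollary~\ref{cor:opt is close with small wass distance is }, essentially identical to the proof of Theorem~\ref{thm:app2}.
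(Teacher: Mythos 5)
Your proposal matches the paper's own proof essentially step for step: form empirical distributions from polynomially many samples, run the reduction of Theorem~\ref{thm:rev-max} on the empirical distribution, apply Theorem~\ref{thm:general reduction_2} with surrogates drawn from the empirical distribution and replicas from the true one, relate the two optima via Corollary~\ref{cor:opt is close with small wass distance is}, and absorb the low-probability failure events using the fact (Lemma~\ref{lem:M' BIC}) that the transformation is exactly BIC and IR regardless. The only cosmetic differences are that the paper tolerates a sampling failure probability of $\varepsilon$ (handling it in expectation, since the revenue on the bad event is bounded below by $-O(n)$) rather than insisting on $\exp(-n/\varepsilon)$, so the concentration step you flag as the main obstacle is in fact routine.
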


\begin{prevproof}{Theorem}{thm:dist_rev}
We can create an empirical distribution $\widetilde{\dD}_i$ for each bidder $i$,
such that $d_{TV}(\dD_{i},\widetilde{\dD}_i) \leq \varepsilon', \forall i$, 
with probability at least $1-\theta$ using $O(\sum_{i}^n\frac{|T_i|^2}{\varepsilon'^2}\ln\frac{\sum_{j=1}^n|T_j|}{\theta})$ samples. 

We first consider the case where $d_{TV}(\dD_{i},\widetilde{\dD}_i) \leq \varepsilon', \forall i$.
Then, $d_{w}(\dD_i,\widetilde{\dD}_i) \leq  d_{TV}(\dD_{i},\widetilde{\dD}_i) \leq  \varepsilon'$, as the highest value for any outcome is at most $1$. Apply Theorem~\ref{thm:rev-max} on $\widetilde{\DD}=\bigtimes_{i=1}^n \widetilde{\DD}_i$ and let $\widetilde{\MM}$ be the produced mechanism, $\widetilde{\opt}$ be the optimal revenue achievable by any BIC and IR mechanism w.r.t. $\widetilde{\DD}$. Clearly, Theorem~\ref{thm:rev-max} guarantees that $\rev(\widetilde{\MM},\widetilde{\DD})\geq \alpha\cdot \widetilde{\opt}-\varepsilon$. According to Corollary~\ref{cor:opt is close with small wass distance is }, $$|\opt-\widetilde{\opt}|\leq O\left(n\sqrt{\varepsilon'}\right).$$

We set $\theta= \varepsilon$ and $\varepsilon' = \varepsilon$, 
 we apply Theorem~\ref{thm:general reduction_2}  to $\widetilde{\MM}$, that is, the replicas are sampled from $\DD$ and the surrogates are sampled from $\widetilde{\DD}$. Let $\MM$ be the constructed mechanism, and Theorem~\ref{thm:general reduction_2} guarantees that 
 $$\rev(\MM,\DD) \geq \rev(\widetilde{\MM},\widetilde{\DD})- O\left(n \sqrt{\varepsilon} \right)\geq  \alpha\cdot \widetilde{\opt} - O\left(n \sqrt{\varepsilon} \right)\geq \alpha\cdot {\opt} - O\left(n \sqrt{\varepsilon} \right),$$ if $\widetilde{\MM}$ is a $\varepsilon$-BIC and IR mechanism. Otherwise, we know that $\widetilde{\MM}$ is a $1$-BIC and IR mechanism, and this happens with exponentially small probability according to Theorem~\ref{thm:rev-max}, so we can absorb the loss from this case in $O\left(n \sqrt{\varepsilon} \right)$. To sum up, if $d_{TV}(\dD_{i},\widetilde{\dD}_i) \leq \varepsilon', \forall i$, then $\rev(\MM,\DD) \geq \alpha\cdot \opt -O\left(n \sqrt{\varepsilon} \right)$.
 
With probability $\varepsilon$ we may get unlucky and $d_{TV}(\dD_{i},\widetilde{\dD}_i)$ may be larger than  $\varepsilon$ for some $i$. In that case we still construct $\MM$ in the same way, and we can apply Theorem~\ref{thm:general reduction_2}  by upper bounding $\wass(\DD,\widetilde{\DD})$ by $n$ and treating $\widetilde{\MM}$ as a $1$-BIC and IR mechanism, which shows $\rev(\MM,\DD)\geq -O(n)$.

Therefore, in expectation of the randomness of the samples used to estimate $\widetilde{\DD}$, $$\rev(\MM,\DD)\geq (1-\varepsilon)\cdot \left(\alpha\cdot \opt -O\left(n \sqrt{\varepsilon} \right)\right)-O(n\varepsilon)=\alpha\cdot \opt - O(n\sqrt{\varepsilon}),$$ as $\opt\leq n$. Note that even though mechanism $\MM$ depends on $\widetilde{\DD}$ and $\widetilde{\MM}$, it is always BIC and IR w.r.t. $\DD$. The time complexity follows from Theorem~\ref{thm:general reduction_2} and~\ref{thm:rev-max}.
\end{prevproof}
\section{General Outcome Space: Regularized Replica-Surrogate Fractional Assignment}\label{sec:fractional}

In this section we consider general outcome space $\mathcal{O}$, removing the assumption that $\mathcal{O}$ is downward-closed. We first demonstrate in Example~\ref{example:small_prob} that only having sample access to the agents' type distributions is not sufficient to transform an $\epsilon$-BIC mechanism to exactly BIC, with a small revenue loss.

\begin{example}\label{example:small_prob}

Consider the following $\epsilon$-BIC to BIC transformation instance for a single buyer.
Let $$\TT =\{t_H, t_{(l,1)},\ldots,t_{(l,m)},t_{(s,1)},\ldots,t_{(s,m)}\} \in \mathbb{N},$$ be the set of the buyer's types,
$\DD = \{\DD_{(i,j)}\}_{i, j \in [m]}$, where $\DD_{(i,j)}$ is the distribution with support $\{t_H, t_{(l,i)},t_{(s,j)}\}$ and point-mass probability:
\begin{align*}
    \Pr_{t' \sim D_{(i,j)}}[t'=t]=
    \begin{cases}
    1-2\sigma \quad & \text{if $t=t_H$} \\
    \sigma & \text{if $t=t_{(l,i)}$} \\
    \sigma & \text{if $t=t_{(s,j)}$}
    \end{cases}
\end{align*}

The output space is $\OO =\{o_H,o_{(l,1)},\ldots,o_{(l,m)},o_{(s,1)},\ldots,o_{(s,m)}\}$ such that:
\begin{align*}
    v(t_H,o)=&
    \begin{cases}
        1 \quad & \text{if $o=o_H$}\\
        0 & \text{o.w.}
    \end{cases}
    \\
    v(t_{(l,i)},o)=&
    \begin{cases}
        \epsilon \quad &\text{if $o=o_{(s,j')}$, for any $j' \in [m]$}\\
        0 &\text{o.w.}
    \end{cases}
    \\
    v(t_{(s,j)},o)=&
    \begin{cases}
    \epsilon \quad &\text{if $o=o_{(l,i')}$, for any $i' \in [m]$}\\
    0 &\text{o.w.}
    \end{cases}
\end{align*}

For each distribution $\DD_{(i,j)}\in \DD$ we consider the mechanism $\MM_{(i,j)}$:
if the buyer reports $ t_H$, the mechanism outputs $o_H$ and charges the buyer $1$,
if the buyer reports $t_{(l,i)}$, the mechanism outputs $o_{(l,i)}$ and charges the buyer $0$,
if the buyer reports $t_{(s,j)}$, the mechanism outputs $o_{(s,j)}$ and charges the buyer $0$.
If the buyer reports anything else, the mechanism outputs $o_H$ and charges the buyer $1$. Mechanism $\MM_{(i,j)}$ is $\epsilon$-BIC and IR for distribution $D_{(i,j)}$. The revenue of $\MM$ is $1-2\sigma$.
\end{example}

The mechanism designer has oracle access to set $\TT$ and faces the following problem:
There is an arbitrary distribution from the whole space $\DD$, which is unknown to the designer. The buyer's valuation distribution is realized to be some $\DD_{(i,j)} \in \DD$,
the designer is given sample access to $\DD_{(i,j)}$, oracle access to $\MM_{(i,j)}$\footnote{In other words the mechanism designer does not know the outcome space $\mathcal{O}$. It can only put a input type into the mechanism and outputs the returned outcome.} and she needs to output a truly BIC and IR mechanism $\MM'$ w.r.t. the buyer's realized distribution that approximately preserves the revenue. 
Note that $|\TT| = 2 m +1$. Therefore the size of each $t \in \TT$ is at most $O(\log(m))$.

In Lemma~\ref{lem:general-example}, we prove that with proper choice of $\sigma$, any BIC and IR mechanism with $poly(\log(m))$ samples from the type distribution $\DD_{(i,j)}$ and queries from $\MM_{(i,j)}$ has revenue far less than $\MM_{(i,j)}$. 

\begin{lemma}\label{lem:general-example}
For any $\delta>0$ choose $\sigma<\frac{\delta}{2N}$ and $m\geq \frac{2N}{\delta}$.
For any $i,j$, let $\MM'$ be any BIC, IR mechanism w.r.t. $\DD_{(i,j)}$, which only uses $N=\poly(\log(m))$ samples from $\DD_{(i,j)}$ and queries from $\MM_{(i,j)}$ (notice that $\log(m)$ is the size of the input). Then $\rev(\MM')<2\delta+3\delta\cdot\epsilon$. As $\delta$ goes to 0, $\rev(\MM')$ goes to 0 while $\rev(\MM)=1-2\sigma\geq 1-\frac{\delta}{N}$ goes to 1.
\end{lemma}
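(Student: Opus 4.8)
The plan is to show that a designer armed with only $N=\poly(\log m)$ samples and $N$ queries learns essentially nothing useful: not the identities of the two low–probability types $t_{(l,i)},t_{(s,j)}$, and — the real point — not the two ``reward'' outcomes $o_{(l,i)},o_{(s,j)}$. Once $\MM'$ is forced to output only $o_H$, individual rationality and BIC together collapse its revenue to at most $0$. Throughout I read the ``for any $i,j$'' of the statement as the impossibility claim: no $N$-sample, $N$-query procedure can preserve revenue on \emph{every} instance $\DD_{(i,j)}$, and the proof exhibits a bad instance for any fixed procedure.

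First I would fix the designer's (possibly randomized) algorithm and introduce its \emph{default world}: the execution in which every sample returns $t_H$ and every query to $\MM_{(i,j)}$ returns $(o_H,1)$. In this world the algorithm's behavior is independent of $(i,j)$, so the random set $Q$ of at most $N$ types it queries has a distribution that does not depend on $(i,j)$. Since $m\ge 2N/\delta>N$, averaging over indices produces some $i^{*}$ with $\Pr[t_{(l,i^{*})}\in Q]\le N/m\le\delta/2$ and some $j^{*}$ with $\Pr[t_{(s,j^{*})}\in Q]\le N/m\le\delta/2$; I fix the instance $(i,j)=(i^{*},j^{*})$. Let $G$ be the event that, in the true execution against $\DD_{(i^{*},j^{*})}$ and $\MM_{(i^{*},j^{*})}$, all $N$ samples equal $t_H$ and the algorithm never queries $t_{(l,i^{*})}$ or $t_{(s,j^{*})}$. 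A standard coupling (the true and default executions agree until the first non-default sample or query) plus a union bound gives $\Pr[G^{c}]\le 2N\sigma+N/m+N/m<\delta+\delta=2\delta$, using $\sigma<\delta/(2N)$ and $m\ge 2N/\delta$. On $G$ the designer's whole view equals the default view, and since $\MM_{(i,j)}$ emits $o_{(l,i)}$ only on input $t_{(l,i)}$ and $o_{(s,j)}$ only on input $t_{(s,j)}$, the only outcome the designer has ever obtained is $o_H$; because $\MM'$ must select outcomes from $\OO$ and can only use outcomes it has seen, on $G$ its allocation rule is identically $o_H$.

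Next I would carry out the revenue collapse. When $x'(\cdot)\equiv o_H$: IR at $t_{(l,i^{*})}$ gives $v(t_{(l,i^{*})},o_H)-p'(t_{(l,i^{*})})=-p'(t_{(l,i^{*})})\ge 0$, so $p'(t_{(l,i^{*})})\le 0$, and identically $p'(t_{(s,j^{*})})\le 0$; the BIC constraint comparing the truthful report of $t_H$ with the report $t_{(l,i^{*})}$ reads $1-p'(t_H)\ge v(t_H,o_H)-p'(t_{(l,i^{*})})=1-p'(t_{(l,i^{*})})$, hence $p'(t_H)\le p'(t_{(l,i^{*})})\le 0$. Thus on $G$, $\rev(\MM',\DD_{(i^{*},j^{*})})=(1-2\sigma)p'(t_H)+\sigma p'(t_{(l,i^{*})})+\sigma p'(t_{(s,j^{*})})\le 0$. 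On $G^{c}$, $\MM'$ is still BIC and IR with respect to $\DD_{(i^{*},j^{*})}$, so its revenue is at most $\E_{t\sim\DD_{(i^{*},j^{*})}}[v(t,x'(t))]\le(1-2\sigma)+2\sigma\epsilon\le 1+\epsilon$. Combining, $\E[\rev(\MM')]\le\Pr[G]\cdot 0+\Pr[G^{c}]\cdot(1+\epsilon)<2\delta(1+\epsilon)<2\delta+3\delta\epsilon$, which is the claimed bound; and $\rev(\MM)=1-2\sigma\ge 1-\delta/N$, so as $\delta\to 0$ the revenue of $\MM'$ goes to $0$ while $\rev(\MM)\to 1$.

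The main obstacle is the second step: formalizing ``the designer cannot locate the special indices'' robustly against adaptive, randomized query strategies. The clean resolution is the coupling to the default world together with the choice of a hard instance outside the designer's (at most $N$) likely query targets — this is exactly where $m\ge 2N/\delta$ enters, and why the statement is naturally an impossibility over instances rather than a uniform guarantee. A secondary subtlety is that $\MM'$ is defined on all of $\TT$, so one must genuinely invoke IR at the hidden low types (legitimate, since $t_{(l,i^{*})},t_{(s,j^{*})}$ lie in the support of $\DD_{(i^{*},j^{*})}$) rather than reasoning only about $t_H$; the three-line argument above takes care of that.
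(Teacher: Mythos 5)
There is a genuine gap, and it sits exactly where the paper's proof does its real work. Your event $G$ is supposed to imply that ``on $G$ the allocation rule is identically $o_H$'' for \emph{every} report, and you then invoke IR at $t_{(l,i^{*})}$ and the BIC comparison of $t_H$ with $t_{(l,i^{*})}$ under that assumption. But the mechanism is allowed to query $\MM_{(i^{*},j^{*})}$ at run time, in particular on the type that was just reported (every transformation in this paper does exactly that). So on the reports $t_{(l,i^{*})}$ and $t_{(s,j^{*})}$ the mechanism can, with a single query, observe $o_{(l,i^{*})}$ or $o_{(s,j^{*})}$ and output it with probability close to $1$; your averaging/coupling argument only controls executions whose input does not itself hand the algorithm the special type (e.g.\ the report $t_H$), because the query distribution on report $t_{(l,i^{*})}$ is \emph{not} independent of $(i^{*},j^{*})$ in the default world. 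Consequently neither ``$p'(t_{(l,i^{*})})\le 0$'' nor ``$\Pr[x'(t_{(l,i^{*})})=o_H]$ is large'' follows, and the revenue collapse does not go through. This is precisely why the paper's proof never tries to bound, information-theoretically, the probability that a low-type report receives its \emph{own} reward outcome ($p$ and $p'$ in its notation): it only bounds the cross terms $q$ (outputting $o_{(s,j)}$ on report $t_{(l,i)}$) and $q'$ (outputting $o_{(l,i)}$ on report $t_{(s,j)}$), and then uses the BIC constraints \emph{between the two low types} to conclude that either $p\le q'$ or $p'\le q$, which is what lets the deviation of $t_H$ to the appropriate low type yield $p(t_H)\le 2\delta+\delta\epsilon$. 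Your proposal has no analogue of this case analysis, and without it the argument fails for perfectly natural mechanisms that query the reported type.

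A secondary, smaller issue: you apply IR and BIC \emph{conditioned on} $G$, whereas these constraints hold only in expectation over all of the mechanism's randomness; the correct move (as in the paper) is to turn ``bad outcomes occur with probability at most $O(\delta)$'' into unconditional bounds on allocation probabilities and payments, e.g.\ $u(t_H)\ge(1-p'-q')-p(t_{(s,j)})$ with $p',q'\le\delta$ and $p(t_{(s,j)})\le q'\epsilon$. Your reinterpretation of the quantifier (exhibiting a hard instance $(i^{*},j^{*})$ via a default-world averaging argument, rather than the paper's ``for any $i,j$'' with uniformly random queries) is a reasonable and arguably cleaner formalization for the impossibility claim, but it does not rescue the main step above.
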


\begin{proof}
Denote $\MM'(t)$ the output of mechanism $\MM'$ if the agent reports type $t$ and $p(t)$ the expected payment if the agent reports $t$. Notice that the outcome of $\MM_{(i,j)}$ can only be $o_H, o_{(l,i)}$, or $o_{(s,j)}$. Thus the output of mechanism $\MM'$ is also one of these three outcomes. Let $\MM'(t)$ for inputs $t_{(l,i)}, t_{(s,j)}$ be the following distributions
\begin{align*}
    \MM'(t_{(l,i)})=&
    \begin{cases}
    o_H \quad & \text{with probability $1-p-q$} \\
    o_{(l,i)} & \text{with probability  $p$} \\
    o_{(s,j)} & \text{with probability  $q$} 
    \end{cases} \\
    \MM'(t_{(s,j)})=&
    \begin{cases}
    o_H \quad & \text{with probability $1-p'-q'$} \\
    o_{(s,j)} & \text{with probability  $p'$} \\
    o_{(l,i)} & \text{with probability  $q'$} 
    \end{cases}
\end{align*}

Then by BIC constraint we have that:
\begin{align*}
    q \cdot \epsilon - p(t_{(l,i)}) \geq p' \cdot \epsilon - p(t_{(s,j)}) & \quad \text{By BIC constraint if agent's true type is $t_{(l,i)}$ but reports $t_{(s,j)}$}. \\
    q' \cdot \epsilon - p(t_{(s,j)}) \geq p \cdot \epsilon - p(t_{(l,i)}) & \quad \text{By BIC constraint if agent's true type is $t_{(s,j)}$ but reports $t_{(l,i)}$}.
\end{align*}

By the previous inequalities we can infer that:

\begin{align*}
    q-p' \geq \frac{p(t_{(l,i)}) - p(t_{(s,j)}) }{\epsilon} \geq p - q'
\end{align*}

Notice that either $p \leq q'$ or $p' \leq q$.
The claim holds because if $p(t_{(l,i)}) - p(t_{(s,j)})$ is non-negative, then $q \geq p'$ and if $p(t_{(l,i)}) - p(t_{(s,j)})$ is negative we have that $p\leq q'$. Without loss of generality assume $p'\leq q$.

Now we are going to bound $q$. 
Note that since the mechanism designer does not have access to the outcome space, the only way to find outcome $o_{(s,j)}$ is to use $t_{(s,j)}$ as input to the mechanism $\MM$. To do that, the designer must sample $t_{(s,j)}$ either directly from the type space $\TT$ or the realized distribution $\DD_{(i,j)}$\footnote{Note that if the designer does not sample uniformly at random from the set $\{t_{(s,j)}\}_{j \in [m]}$, since the distribution over $\DD$ is unknown, the adversary can make it more difficult for her to sample the right type by picking a different distribution}.

With at most $N$ samples from the buyer's distribution $\DD_{(i,j)}$, with probability at least $(1-\sigma)^N\geq 1-N\sigma$, none of the sampled types is $t_{(s,j)}$.
By assumption, $\sigma \leq \frac{\delta}{2N}$ which implies that with probability at least $1-\frac{\delta}{2}$ we are not be able to sample type $t_{(s,j)}$,
therefore we cannot find output $o_{(s,j)}$ by this method.

Moreover, notice that the output of $\MM$ is $o_H$ for all types $\{t_{s,j'}\}_{j \neq j'\in [m]}$. Thus by querying mechanism $\MM$ with at most $N$ types $t_{s,j'}$ chosen uniformly at random from $\{t_{(s,j)}\}_{j \in [m]}$, with probability at least $(1-1/m)^N\geq 1-\frac{N}{m}$ , none of the returned outcomes is $o_{(s,j)}$.
By assumption $m \geq \frac{2N}{\delta}$ implies that with probability at least $1-\frac{\delta}{2}$ none of the returned outcomes is $o_{(s,j)}$.

Hence, for any $\delta>0$,
by taking Union Bound over the two events described above,
we cannot identify output $o_{(s,j)}$ with probability at least $1- \delta$.
This implies $p'\leq \delta$. Similarly we can prove that $q'\leq \delta$.


By IR constraints for type $t_{(s,j)}$, we have $q' \cdot \epsilon - p(t_{(s,j)}) \geq 0$. Thus $-p(t_{(s,j)}) \geq -q' \cdot \epsilon$. By $u(t_H)$ we denote the utility of type $t_H$ in mechanism $\MM'$ when she reports $t_H$.
Consider the BIC constraint for type $t_H$ when she reports type $t_{(s,j)}$. We have
$$ u(t_H) \geq (1 - p' - q') - p(t_{(s,j)})
                \geq 1 - p' - q' - q' \cdot \epsilon
\geq 1-2\delta-\delta\cdot\epsilon$$

Thus $p(t_H)\leq 2\delta+\delta\cdot\epsilon$. Notice that both $p(t_{(s,j)})$ and $p(t_{(l,i)})$ are at most $\epsilon$ since the agent's value under these two types are at most $\epsilon$. Thus $\rev(\MM')\leq 2\delta+3\delta\cdot\epsilon$.
\end{proof}

Throughout this section, we will assume that we have full access to the agents' type distributions: Assume every agent's type distribution is discrete with finite support. We denote the support of the distribution of the $k$-th agent as $\{t_k^{(i)}\}_{i\in[m_k]}$ and the probability that the $k$-th agent has type $t_k^{(i)}$ as $F_i^k$.  When there is no confusion about the agent that we are referring to,
we may drop the superscript $k$ in $F_i^k$. The constructed mechanism knows $t_k^{(i)}$ and the exact value $F_i^k$ for every $i\in [m_k]$.

We first present the result in the ideal model, where the edge weights are known exactly. Then we show that if we only estimate the edge weights approximately, we can use an approach similar to \cite{hartline2015bayesian} to still guarantee the mechanism is BIC and IR.

\subsection{Regularized Replica-Surrogate Fractional Assignment}\label{sec:appx_rrsf}

We now present the \textbf{Regularized Replica–Surrogate Fractional Assignment (RRSF)} introduced by \cite{hartline2015bayesian}. There are two main differences between RRSF and the replica-surrogate matching: (i) the replicas and surrogates in RRSF are no longer samples from the agent's type distribution; for each type of the agent, there is exactly one replica and one surrogate of that type in RRSF; (ii) RRSF finds the optimal fractional assignment rather than a maximum weight matching.

\begin{definition}[Definition~4.6 of \cite{hartline2015bayesian}]\label{def:rrsf mechanism}
Let $\phi(\mathbf{x})=\frac{1}{2}\gamma||\mathbf{x}||_2^2$.
For agent $k\in [n]$ and $i,j\in [m_k]$, let $W_{ij}^k$ be any value in $[-1,1]$. We may drop the superscript when agent $k$ is fixed and clear from context. Consider the following convex program $(P^3)$ with coefficients $\{W_{i,j}\}_{i,j\in [m_k]}$, where the decision variables are $\{q_{i,j}\}_{i, j \in [m_k]}$.
\vspace{-.05in}
\begin{equation}\label{MP:regularized frac}
\begin{split}
\max &~~\textstyle\sum_{i,j\in [m_k]} F_i \left(W_{i,j} q_{i,j} - \phi(\bm{q}_i)\right) \\
\text{subject to} &~~\textstyle \sum_j q_{i,j} = 1,~~~~~\forall i\in [m_k]\\
                                         &~~\textstyle \sum_i F_i q_{i,j}=F_j,~~~~~\forall j\in [m_k]\\
                                          &~~q_{i,j} \geq 0, ~~~~\forall i,j\in [m_k]\end{split}
\end{equation}
Let $\bm{q^*_k}=\{q^*_{i,j}\}_{i,j\in[m_k]}$ be an optimal solution to the primal problem and $\bm{\lambda^*_k}=\{\lambda_{i,j}^*\}_{i,j\in [m_k]}, \bm{\mu^*_k}=\{\mu^*_i\}_{i\in [m_k]}, \bm{\pi_k^*}=\{\pi^*_j\}_{j\in [m_k]}$ be any Lagrange multipliers satisfying the KKT condition
\begin{enumerate}
    \item $F_i\left(W_{i,j} - \frac{\partial \phi(\bm{q}^*_i)}{\partial q^*_{i,j}}\right) = \lambda^*_{i,j} + \mu_i^* + F_i\pi_j^*, ~~ \forall i,j$
    \item $\lambda^*_{i,j} \leq 0, ~~ \forall i,j$
    \item $\lambda^*_{i,j} q^*_{i,j} = 0, ~~ \forall i,j$
\end{enumerate}
\end{definition}



\subsection{Ideal Model}\label{sec:ideal model RRSF}

\paragraph{RRSF Mechanism with parameters $\left(\bm{q^*}=\{\bm{q^*_k}\}_{k\in[n]}, \bm{\mu^*}=\{\bm{\mu^*_k}\}_{k\in[n]}, \bm{\pi^*}=\{\bm{\pi^*_k}\}_{k\in[n]}\right)$:} every agent $k$ reports her type $t_k=t_k^{(i)}$. Let $s_k$ be a random surrogate type such that $\Pr[s_k=t_k^{(j)}]=q^*_{i,j}$ for every $j\in [m_k]$. 

The RRSF mechanism $\MM'$ runs mechanism $\MM$ under the random input $s=(s_1,\ldots, s_n)$. The outcome for $\MM'$ is the random outcome $o$ generated from $x(s)$ and agent $k$'s expected payment is $p_k(s)$ plus some extra payment $\hat{p}_k$, where $\hat{p}_k(t_k) = \sum_{j} \pi_j^* q^*_{i,j} + \phi(\bm{q}^*_i) - \phi(\bm{0}) + \min_\ell \frac{\mu_\ell^*}{F_\ell}$. We call $\MM'$ the RRSF mechanism with respect to $\textbf{W}=\{W_{i,j}^k\}_{k\in [n],i,j\in [m_k]}$ if for every $k$, the parameters $(\bm{q^*_k},\bm{\mu^*_k},\bm{\pi^*_k})$ are obtained by the convex program $(P^3)$ with coefficients $\{W_{i,j}^k\}_{i,j\in [m_k]}$.



Similar to~\cite{hartline2015bayesian}, in our proof each $W_{i,j}^k$ will be chosen to be the expected utility of agent $k$ for the outcome of the mechanism $\MM=(x,p)$ when her type is $t_k^{(i)}$ and she is matched to the surrogate of type $t_k^{(j)}$. Formally,
$$W_{i,j}^k= \E_{t_{-k}\sim \DD_{-k}}\left[v_k(t_k^{(i)},x(t_k^{(j)}, t_{-k}))\right]-\E_{t_{-k}\sim \DD_{-k}}\left[p_k(t_k^{(j)},t_{-k})\right]$$
They proved that the RRSF mechanism with respect to the $\textbf{W}$ defined above is BIC and IR.

\begin{theorem}[\cite{hartline2015bayesian}]\label{thm:RRSF is BIC and IR}
Given any mechanism $\MM$. For every agent $k$ and $i,j\in [m_k]$, let
$$W_{i,j}^k= \E_{t_{-k}\sim \DD_{-k}}\left[v_k(t_k^{(i)},x(t_k^{(j)}, t_{-k}))\right]-\E_{t_{-k}\sim \DD_{-k}}\left[p_k(t_k^{(j)},t_{-k})\right]$$

Then RRSF mechanism $\MM'$ defined in Definition~\ref{def:rrsf mechanism} is BIC and IR.
\end{theorem}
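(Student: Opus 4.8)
The plan is to prove both properties agent-by-agent, viewing the convex program $(P^3)$ as a maximal-in-distributional-range rule for a ``surrogate-selection'' game and the extra charge $\hat p_k$ as the accompanying incentive payment. First I would record the key structural fact: if every agent $k'\ne k$ reports truthfully, the surrogate type $s_{k'}$ that participates in $\MM$ on behalf of $k'$ is distributed exactly as $\DD_{k'}$. Indeed $k'$'s true type is $t_{k'}^{(i)}$ with probability $F_i$ and is mapped to surrogate $t_{k'}^{(j)}$ with probability $q^*_{i,j}$, so the induced marginal over surrogates is $\big(\sum_i F_i q^*_{i,j}\big)_j = (F_j)_j$ by the second constraint of $(P^3)$. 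Consequently $s_{-k}\sim\DD_{-k}$ no matter what agent $k$ reports, so when agent $k$ has true type $t_k^{(i)}$, reports $t_k^{(i')}$, and is mapped to surrogate $t_k^{(j)}$, her interim value minus her interim $\MM$-payment is exactly $W^k_{i,j}$. Adding $-\hat p_k(t_k^{(i')})$, her interim utility is
\[
U_k(i'\mid i)=\sum_{j} q^*_{i',j}\,W^k_{i,j}-\hat p_k\big(t_k^{(i')}\big)=\sum_j q^*_{i',j}\big(W^k_{i,j}-\pi^*_j\big)-\phi(\bm q^*_{i'})+\phi(\bm 0)-\min_\ell\tfrac{\mu^*_\ell}{F_\ell}.
\]

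For BIC I would use the stated KKT conditions. Dividing condition~1 by $F_i>0$ gives $W^k_{i,j}-\pi^*_j-\partial\phi(\bm q^*_i)/\partial q^*_{i,j}=\lambda^*_{i,j}/F_i+\mu^*_i/F_i$ with $\lambda^*_{i,j}/F_i\le 0$ and $\lambda^*_{i,j}q^*_{i,j}=0$. Since $\bm q\mapsto\sum_j(W^k_{i,j}-\pi^*_j)q_j-\phi(\bm q)$ is concave and the constraints $\sum_j q_j=1,\ q_j\ge0$ are linear, these are exactly the sufficient optimality conditions certifying that $\bm q^*_i\in\argmax_{\bm q\in\Delta^{m_k}}\sum_j(W^k_{i,j}-\pi^*_j)q_j-\phi(\bm q)$ (with $\mu^*_i/F_i$ the simplex multiplier and $-\lambda^*_{i,j}/F_i\ge0$ the nonnegativity multipliers). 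As $\bm q^*_{i'}$ is feasible for that program, $\sum_j(W^k_{i,j}-\pi^*_j)q^*_{i',j}-\phi(\bm q^*_{i'})\le\sum_j(W^k_{i,j}-\pi^*_j)q^*_{i,j}-\phi(\bm q^*_{i})$, while the remaining terms of $U_k(\cdot\mid i)$ do not depend on the report; hence $U_k(i'\mid i)\le U_k(i\mid i)$ and $\MM'$ is BIC.

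For IR I would derive a closed form for the on-path utility. Multiplying condition~1 by $q^*_{i,j}$, summing over $j$, and using $\sum_j q^*_{i,j}=1$, $\lambda^*_{i,j}q^*_{i,j}=0$, and $\sum_j q^*_{i,j}\,\partial\phi(\bm q^*_i)/\partial q^*_{i,j}=\gamma\|\bm q^*_i\|_2^2=2\phi(\bm q^*_i)$ for the quadratic regularizer $\phi(\bm x)=\tfrac12\gamma\|\bm x\|_2^2$, one obtains $\sum_j q^*_{i,j}(W^k_{i,j}-\pi^*_j)=\mu^*_i/F_i+2\phi(\bm q^*_i)$. Substituting into $U_k(i\mid i)$ and using $\phi(\bm 0)=0$,
\[
U_k(i\mid i)=\Big(\tfrac{\mu^*_i}{F_i}-\min_\ell\tfrac{\mu^*_\ell}{F_\ell}\Big)+\phi(\bm q^*_i)\ \ge\ 0,
\]
since $\phi\ge0$ and the first bracket is nonnegative by definition of the minimum. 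This gives IR and completes the argument.

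The step that needs the most care is setting up the surrogate-selection view in the first paragraph together with the KKT bookkeeping in the second: one must verify that the object being ``chosen'' is the whole distribution $\bm q^*_i$ over surrogate types (not a single surrogate), that the interim $\MM$-payment is exactly neutralized by the $\pi^*_j$-term so the per-type objective is precisely $\sum_j(W^k_{i,j}-\pi^*_j)q_j-\phi(\bm q)$, and that dividing the KKT system by $F_i$ yields a legitimate KKT system for the rescaled per-type program. The two analytic facts doing the real work are the concavity that makes KKT sufficient and the Euler-type identity $\sum_j q^*_{i,j}\,\partial\phi/\partial q^*_{i,j}=2\phi(\bm q^*_i)$ for the $2$-homogeneous regularizer.
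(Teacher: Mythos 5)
Your argument is correct, and it is worth noting that the paper itself gives no proof of this theorem: it is imported verbatim from Hartline et al.~\cite{hartline2015bayesian}, so your write-up is a genuinely self-contained derivation rather than a reproduction of an in-paper argument. The two pillars you use are sound: (i) the marginal condition $\sum_i F_i q^*_{i,j}=F_j$ guarantees $s_{-k}\sim\DD_{-k}$ when the other agents are truthful, so the interim value-minus-$\MM$-payment of a type-$i$ agent routed to surrogate $j$ is exactly $W^k_{i,j}$; and (ii) after dividing KKT condition~1 of $(P^3)$ by $F_i>0$, the triple $(\mu^*_i/F_i,\,-\lambda^*_{i,j}/F_i)$ is a valid multiplier system for the per-type concave program $\max_{\bm q\in\Delta^{m_k}}\sum_j(W^k_{i,j}-\pi^*_j)q_j-\phi(\bm q)$, and sufficiency of KKT for concave objectives over affine constraints (no constraint qualification needed) gives maximality of $\bm q^*_i$, hence BIC, since the report-independent terms $\phi(\bm 0)-\min_\ell\mu^*_\ell/F_\ell$ drop out of the comparison. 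For IR, your route differs slightly from the closest in-paper material, Lemma~\ref{lem:RRSF-IR}: you exploit the exact Euler identity $\nabla\phi(\bm q^*_i)^T\bm q^*_i=2\phi(\bm q^*_i)$ available for the $2$-homogeneous regularizer $\phi(\bm x)=\tfrac12\gamma\|\bm x\|_2^2$, which yields the closed form $U_k(i\mid i)=\bigl(\mu^*_i/F_i-\min_\ell\mu^*_\ell/F_\ell\bigr)+\phi(\bm q^*_i)\ge 0$, whereas the paper's lemma only invokes convexity via $\nabla\phi(\bm q^*_i)^T\bm q^*_i+\phi(\bm 0)-\phi(\bm q^*_i)\ge 0$. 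Your version buys an exact expression for the on-path utility (and hence a sharper accounting of the surplus left to each type), while the paper's inequality-based step is marginally more general in that it would survive replacing the quadratic regularizer by an arbitrary convex $\phi$; both suffice for the theorem as stated.
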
 

As Hartline et al.~\cite{hartline2015bayesian} only care about the welfare of $\MM'$, Theorem~\ref{thm:RRSF is BIC and IR} suffices. We care about the revenue of $\MM'$, so we need to argue that running $\MM'$ does not cause the designer to lose a large fraction of the revenue. $\rev(\MM')$ contains two parts: (i) the expected revenue from payments $\{p_k(\cdot)\}_{k\in [n]}$, which is exactly the same as $\rev(\MM)$; (ii) the expected payments from $\{\hat{p}_k(\cdot)\}_{k\in [n]}$. To prove that $\rev(\MM')$ is not too much smaller than $\rev(\MM)$, we need to prove that the expected payments from $\{\hat{p}_k(\cdot)\}_{k\in [n]}$ are not too negative. We prove this claim with the following sequence of lemmas.

We will prove a more general result for any RRSF mechanism with respect to $\textbf{W}$ that satisfies: $W_{i,i}^k\geq \max\{W_{i,j}^k-\varepsilon',-\varepsilon''\},\forall k\in[n],i,j\in [m_k]$, for some $\varepsilon',\varepsilon''\geq 0$. Note that when $\textbf{W}$ is chosen as in Theorem~\ref{thm:RRSF is BIC and IR} and $\MM$ is $\varepsilon$-BIC and IR, we will have $\varepsilon'=\varepsilon$ and $\varepsilon''=0$. The more general result is useful in the proof of the non-ideal model in Section~\ref{sec:non-ideal model RRSF}.

We first prove a structural result about the optimal solution of the convex program $(P^3)$. 

\begin{lemma}\label{lem: cycles}
Fix any agent $k$. Suppose for all $i,j\in [m_k]$, $W_{i,i}^k\geq \max\{W_{i,j}^k-\varepsilon',-\varepsilon''\}$ holds for some $\varepsilon',\varepsilon''\geq 0$. Let $\{q^*_{i,j}\}_{i,j\in[m_k]}$ be an optimal solution of the convex program $(P^3)$. Then, it holds that  $$q^*_{i,j}>0 \implies W_{i,j} \geq W_{i,i} - m\varepsilon' -  \sqrt{2}m\gamma,~~~~~~\forall i, j \in[m_k].$$
\end{lemma}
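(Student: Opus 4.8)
The plan is a local exchange (``cycle‑canceling'') argument on the optimal fractional assignment $\bm q^*$. If $i=j$ the claimed inequality reads $W_{i,i}\ge W_{i,i}-m\varepsilon'-\sqrt2\,m\gamma$, which is trivially true since $\varepsilon',\gamma\ge0$, so I would assume $q^*_{a,b}>0$ with $a\neq b$ and derive a contradiction with optimality unless $W_{a,b}\ge W_{a,a}-m\varepsilon'-\sqrt2\,m\gamma$. Writing $P_{i,j}:=F_i q^*_{i,j}$ (recall $F_i>0$ since $t_k^{(i)}$ lies in the support), the constraints of $(P^3)$ say exactly that $P$ is a nonnegative matrix with all row sums and all column sums equal to $(F_i)_{i\in[m_k]}$. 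First I would produce a simple directed cycle in the ``support graph'': let $H$ be the digraph on $[m_k]$ with an arc $i\to j$ whenever $q^*_{i,j}>0$ and $i\neq j$, and let $S$ be the set of vertices reachable from $b$ in $H$. Since $S$ is closed under out‑arcs, the row/column‑sum conditions force $P_{i,j}=0$ whenever exactly one of $i,j$ lies in $S$ (a standard transportation‑polytope computation: $\sum_{i\in S}F_i=\sum_{i\in S}\sum_{j}P_{i,j}=\sum_{i\in S}\sum_{j\in S}P_{i,j}$, and comparing with $\sum_{j\in S}F_j$ gives no arcs into $S$ either). Since $P_{a,b}>0$ and $b\in S$, this forces $a\in S$, i.e.\ there is a directed path from $b$ back to $a$; taking it shortest and prepending the arc $a\to b$ yields a simple directed cycle $C:\ i_1\to i_2\to\cdots\to i_L\to i_1$ with $i_1=a$, $i_2=b$, all $q^*_{i_\ell,i_{\ell+1}}>0$, and $L\le m_k\le m$.

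Next I would perturb $\bm q^*$ along $C$ \emph{toward the diagonal}: for small $\tau>0$, decrease each cycle entry $q^*_{i_\ell,i_{\ell+1}}$ by $\tau/F_{i_\ell}$ and increase each diagonal entry $q^*_{i_\ell,i_\ell}$ by $\tau/F_{i_\ell}$. Each affected row $i_\ell$ keeps its row sum (one entry down by $\tau/F_{i_\ell}$, another up by the same amount), and each column $i_{\ell+1}$ keeps its weighted column sum because the decrease contributes $-F_{i_\ell}\cdot(\tau/F_{i_\ell})=-\tau$ while the diagonal increase in row $i_{\ell+1}$ contributes $+F_{i_{\ell+1}}\cdot(\tau/F_{i_{\ell+1}})=+\tau$; nonnegativity holds for $\tau$ small since each $q^*_{i_\ell,i_{\ell+1}}>0$. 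So this is a feasible perturbation. A short computation of the change in the objective $\sum_i F_i\big(\sum_j W_{i,j}q_{i,j}-\tfrac{\gamma}{2}\|\bm q_i\|_2^2\big)$ (only the rows $i_\ell\in C$ change) gives
\[
\Delta \;=\; \tau\sum_{\ell}\big(W_{i_\ell,i_\ell}-W_{i_\ell,i_{\ell+1}}\big)\;-\;\gamma\tau\sum_{\ell}\big(q^*_{i_\ell,i_\ell}-q^*_{i_\ell,i_{\ell+1}}\big)\;-\;\gamma\tau^2\sum_{\ell}\tfrac{1}{F_{i_\ell}},
\]
where $\sum_\ell$ runs over $\ell=1,\dots,L$ with cyclic index $i_{L+1}:=i_1$.

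Optimality of $\bm q^*$ forces $\Delta\le0$ for all small $\tau>0$; dividing by $\tau$ and letting $\tau\to0^+$ kills the quadratic term, leaving $\sum_\ell(W_{i_\ell,i_\ell}-W_{i_\ell,i_{\ell+1}})\le \gamma\sum_\ell(q^*_{i_\ell,i_\ell}-q^*_{i_\ell,i_{\ell+1}})$. For the right side I would use $\|\bm q^*_{i_\ell}\|_2\le\|\bm q^*_{i_\ell}\|_1=1$ together with $(u-v)^2\le 2(u^2+v^2)$ for $u,v\ge0$, so each summand is at most $\sqrt2$ and the right side is at most $\sqrt2\,L\gamma\le\sqrt2\,m\gamma$. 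For the left side I would isolate the $\ell=1$ term $W_{a,a}-W_{a,b}$; each of the remaining $L-1\le m-1$ terms is $\ge-\varepsilon'$ by the hypothesis $W_{i,i}\ge W_{i,j}-\varepsilon'$. Hence $W_{a,a}-W_{a,b}-(m-1)\varepsilon'\le\sqrt2\,m\gamma$, which rearranges to $W_{a,b}\ge W_{a,a}-m\varepsilon'-\sqrt2\,m\gamma$, as desired.

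The step requiring the most care is producing a \emph{feasible} perturbation: one must move mass around a cycle so as to simultaneously preserve the row‑sum constraints $\sum_j q_{i,j}=1$ and the weighted column‑sum constraints $\sum_i F_i q_{i,j}=F_j$, which is exactly why a single off‑diagonal entry does not suffice (one needs a genuine cycle) and why the per‑arc increment must be scaled as $\tau/F_{i_\ell}$ rather than a uniform $\tau$. The cycle‑existence claim is the standard flow/closure fact above, and the final inequality is just an $\ell_2$‑vs‑$\ell_1$ estimate plus the hypothesis applied to the non‑distinguished arcs, so those parts are routine.
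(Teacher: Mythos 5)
Your proof is correct and follows essentially the same route as the paper's: locate a simple directed cycle through the arc $(a,b)$ in the support of $\bm q^*$ via the reachability/closure argument, shift mass along the cycle onto the diagonal, invoke optimality, bound the regularizer's contribution by $\sqrt{2}\gamma$ per cycle vertex, and apply the hypothesis $W_{i,i}\ge W_{i,j}-\varepsilon'$ to the non-distinguished arcs. The only cosmetic difference is that you take an infinitesimal perturbation and argue via the first-order term at $\tau\to 0^+$, whereas the paper shifts a finite amount $f_*$ and controls the change in $\phi$ by convexity plus Cauchy--Schwarz; both yield the same bound.
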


\begin{proof}
For any type $t_k^{(i)}\in \supp(\DD_k)$ such that $q^*_{i,i}=1$ the statement holds.

Now assume there are some types $t_k^{(i^*)} \neq t_k^{(j^*)}$ such that $q_{i^*,j^*} >0$.
A sequence of types $S=\{ t_k^{(a_i)} \}_{i=1}^b$ of length $b$, where $t_k^{(a_i)} \in \supp(\DD_k)$,
is called a flow sequence if for each $1\leq i \leq b-1$ it holds that $q_{a_i,a_{i+1}}>0$.
Let $V_{j^*}$ be the set of types (including $t_k^{(j^*)}$) such that there exists some flow sequence that starts with type $t_k^{(j^*)}$ that reaches them.
We are going to prove that $t_k^{(i^*)} \in V_{j^*}$.

Note that for each $t_k^{(r)} \in V_{j^*}$, if $q_{r,\ell}>0$ then $t_k^{(\ell)} \in V_{j^*}$. Thus the set $U_{j^*}=\{t_k^{(\ell)}: \exists t_k^{(r)}\in V_{j^*}\wedge q_{r,\ell}>0\}\subseteq V_{j^*}$. Since $\mathbf{q}^*$ is feasible, we have
$$\sum_{t_k^{(r)}\in V_{j^*}}F_r=\sum_{t_k^{(r)}\in V_{j^*}}\sum_{\ell\in[m_k]} F_rq_{r,l}=\sum_{t_k^{(r)}\in V_{j^*}}\sum_{\ell: q_{r,\ell}>0} F_r q_{r,l}=\sum_{t_k^{(\ell)}\in U_{j^*}}\sum_{t_k^{(r)}\in V_{j^*}}F_r q_{r,\ell}.$$

On the other hand,
$$\sum_{t_k^{(\ell)}\in U_{j^*}}\sum_{t_k^{(r)}\in V_{j^*}}F_r q_{r,\ell}\leq \sum_{t_k^{(\ell)}\in V_{j^*}} \sum_{t_k^{(r)}\in V_{j^*}}F_r q_{r,\ell}\leq \sum_{t_k^{(\ell)}\in V_{j^*}}\sum_{r\in [m_k]}F_rq_{r,\ell}=\sum_{t_k^{(\ell)}\in V_{j^*}}F_\ell.$$

Therefore, we must have $U_{j^*}=V_{j^*}$ and for every $t_{k}^{(\ell)}\in V_{j^*}, t_k^{(r)}\not\in V_{j^*}$, $q_{r,l}=0$. Since $q_{i^*,j^*} >0$, we have $t_k^{(i^*)}\in V_{j^*}$. Let  $S_{i^*,j^*}=\{t_k^{a_1},\ldots,t_k^{a_{b}}\}$ such that $a_1=j^*$ and $a_b=i^*$ be the shortest flow sequence from $t_k^{(j^*)}$ to $t_k^{(i^*)}$. Then each type appears in $S_{i^*,j^*}$ at most once.
Let $f_* = \min_{k \in [b]} F_{a_k} q^*_{a_k,a_{k+1}}\geq 0$ (here $a_{b+1}=a_1$). We consider the following new solution $\mathbf{\hat{q}}$:

\begin{align*}
    \hat{q}_{i,j} =
    \begin{cases}
    q^*_{i,j} - f_*/F_i, \quad &\text{ when there exists $\ell$ s.t. $a_\ell=i$ and $a_{\ell+1}=j$}\\
    q^*_{i,i} + f_*/F_i &\text{ when $i=j$ and there exists $\ell$ s.t. $a_\ell=i$}\\
    q^*_{i,j} &\text{ o.w.}
    \end{cases}
\end{align*}

We will verify that $\mathbf{\hat{q}}$ is a feasible solution to RRSF. For the first set of constraints: for every $a_\ell$, notice that (i) $\hat{q}_{a_\ell,a_\ell}=q^*_{a_\ell,a_\ell}+f_*/F_{a_\ell}$ and (ii) there exists a unique $j_0$ such that $\hat{q}_{a_\ell,j_0}=q^*_{a_\ell,j_0} - f_*/F_{a_\ell}$. Thus $\sum_j\hat{q}_{i,j}=\sum_j q^*_{i,j}=1$. For the second set of constraints: for every $j=a_{\ell+1}$ for some $\ell\in [b]$, notice that there exists a unique $i_0=a_\ell$ such that $\hat{q}_{i_0,j}=q^*_{i_0,j} - f_*/F_{i_0}$, we have $$\sum_iF_i\hat{q}_{i,j}=\sum_iF_iq^*_{i,j}-F_{i_0}\cdot \frac{f_*}{F_{i_0}}+F_j\cdot \frac{f_*}{F_j}=\sum_iF_iq^*_{i,j}=F_j$$   
For the third set of constraints: for every pair $(i, j)$, either $\hat{q}_{i,j}\geq q^*_{i,j}\geq 0$ (the second and third case), or $\hat{q}_{i,j}=q^*_{i,j} - \frac{f_*}{F_{i}}\geq 0$ as $f_*\leq F_iq^*_{i,j}$.

Therefore $\mathbf{\hat{q}}$ is a feasible solution to RRSF.
Moreover since $\bm{q}^*$ is the optimal feasible solution,
its objective should be at least the objective value of $\mathbf{\hat{q}}$. Observe that for every $i\in [m_k]$ and every edge $(i,j)$ with $j\neq i$ the value $\hat{q}_{i,j}$ is not equal to $q^*_{i,j}$ only when there is some $\ell$ with $a_\ell = i, a_{\ell+1} = j$. The value of this edge drops by $f_*/F_i$, whereas the value of $q_{i,i}$ increases by $f_*/F_i$.
Denote $V(\bm{q}^*), V(\bm{\hat{q}})$ the value of the objectives for these two feasible solutions respectively. In the flow sequence $S=S_{i^*,j^*}$, for every $i=a_\ell$ for some $\ell\in [b]$, denote $i'=a_{\ell+1}$. We have

\begin{align*}
    0 \leq& V(\bm{q}^*) -  V(\bm{\hat{q}})\\ 
    = &\sum_{i} F_i (\sum_j W_{i,j} (q^*_{i,j} - \hat{q}_{i,j}) + (\phi(\bm{\hat{q}}_i) - \phi(\bm{q}^*_i)) \\
    = & \sum_{i: t_k^{(i)} \in S_{i^*,j^*}} F_i \left(W_{i,i'} \frac{f_*}{F_i} - W_{i,i}\frac{f_*}{F_i} +  (\phi(\hat{\bm{q}}_i) - \phi(\bm{q}^*_i)) \right)\qquad(\hat{\bm{q}}_i=\bm{q}^*_i,\text{if }t_k^{(i)} \notin S_{i^*,j^*}) \\
    = & f_*\sum_{i: t_k^{(i)} \in S_{i^*,j^*}}  \left(W_{i,i'} - W_{i,i} +  \frac{F_i}{f_*}(\phi(\hat{\bm{q}}_i) - \phi(\bm{q}^*_i)) \right) \\
    \leq & f_*\sum_{i: t_k^{(i)} \in S_{i^*,j^*}}  \left(W_{i,i'} - W_{i,i} +  \frac{F_i}{f_*}\nabla  \phi(\hat{\bm{q}}_i)^T(  \hat{\bm{q}}_i-{\bm{q}^*_i}) \right)\qquad(\text{$\phi(\cdot)$ is convex}) \\
    \leq & f_*\sum_{i: t_k^{(i)} \in S_{i^*,j^*}}  \left(W_{i,i'} - W_{i,i} +  \frac{F_i}{f_*} ||\nabla  \phi(\hat{\bm{q}}_i)||_2 ||{\bm{q}}^*_i - \hat{\bm{q}}_i||_2 \right)\qquad(\text{Cauchy-Schwarz Inequality}) \\
    = & f_*\sum_{i: t_k^{(i)} \in S_{i^*,j^*}}  \left(W_{i,i'} - W_{i,i} +  \sqrt{2} ||\nabla  \phi(\hat{\bm{q}}_i)||_2  \right)\qquad(||{\bm{q}}^*_i - \hat{\bm{q}}_i||_2^2=2\cdot \left(\frac{f_*}{F_i}\right)^2)\\
    = & f_*\sum_{i: t_k^{(i)} \in S_{i^*,j^*}}  \left(W_{i,i'} - W_{i,i} +  \sqrt{2} \gamma||\hat{\bm{q}}_i||_2  \right) \qquad(\text{the definition of $\phi(\cdot)$})\\
    \leq & f_*\sum_{i: t_k^{(i)} \in S_{i^*,j^*}}  \left(W_{i,i'} - W_{i,i} +  \sqrt{2} \gamma  \right)\qquad(\hat{\bm{q}}_i\text{ is a distribution}). \\
\end{align*}
Since $f_*\geq 0$ we have $$\sum_{i: t_k^{(i)} \in S_{i^*,j^*}}  \left(W_{i,i'} - W_{i,i} +  \sqrt{2} \gamma  \right) \geq 0.$$ Thus
\begin{align*}
W_{i^*,j^*} \geq & W_{i^*,i^*} - \sqrt{2}\gamma -\sum_{i: t_k^{(i)} \in S_{i^*,j^*}-\left\{t_k^{(i^*)}\right\}}  \left(W_{i,i'} - W_{i,i} +  \sqrt{2} \gamma  \right) \\
    \geq & W_{i^*,i^*} - \sqrt{2}\gamma -\sum_{i: t_k^{(i)} \in S_{i^*,j^*}-\left\{t_k^{(i^*)}\right\}} \left(\varepsilon' +  \sqrt{2} \gamma  \right) \qquad(\text{$W_{i,i} \geq W_{i,i'} - \varepsilon'$ by assumption})\\
    \geq & W_{i^*,i^*} - m\varepsilon' -  \sqrt{2}m\gamma \\
\end{align*}
\end{proof}

Equipped with Lemma~\ref{lem: cycles}, we proceed to show that the convex program $(P^3)$ has a set of optimal Lagrange multipliers that will make the total expected payments from $\{\hat{p}_k\}_{k\in[n]}$ not too negative. First, we construct a new convex program $(P^4)$ that ``removes'' all the very negative edge weights from $(P^3)$, that is, change every edge $(i,j)$'s weight to  $\max\{W_{i,j},-m(\varepsilon'+2\gamma)-\varepsilon''\}$ (see Definition~\ref{def:hat_W}). Then, we argue that convex program $(P^3)$ and $(P^4)$ are essentially equivalent. Namely, any optimal solution of $(P^4)$ is also an optimal solution for $(P^3)$, and there is a straightforward mapping that transforms any optimal Lagrange multipliers of $(P^4)$ to a set of optimal Lagrange multipliers for $(P^3)$ (see Lemma~\ref{lem:P vs P'}). Finally, we show that if we compute $\hat{p}_k(\cdot)$ using any set of the optimal Lagrange multipliers derived from convex program $(P^4)$, then $\hat{p}_k(\cdot)$ is not too negative (see Lemma~\ref{lem:lower bound hat-p}) The main reason is that using the KKT condition, we can relate $\hat{p}_k(\cdot)$ to the edge weights, and the edge weights in $(P^4)$ are not too small.

\begin{definition}\label{def:hat_W}
Fix an agent $k$ and $\varepsilon'\geq\varepsilon''\geq 0$. For every $i,j$, define

\begin{align*}
    \hat{W}_{i,j}=
    \begin{cases}
    W_{i,j} \quad & \text{if $W_{i,j} \geq -m(\varepsilon'+2\gamma)-\epsilon''$} \\
    -m(\varepsilon'+2\gamma)-\epsilon'' & \text{if $W_{i,j} < -m(\varepsilon'+2\gamma)-\epsilon''$} \\
    \end{cases}
\end{align*}

We solve the following convex program $(P^4)$ where the decision variables are $\{q_{i,j}\}_{i, j\in [m_k]}$.
\vspace{-.05in}
\begin{equation}\label{MP:regularized frac rel}
\begin{split}
\max &~~\textstyle\sum_{i,j} F_i \left(\hat{W}_{i,j} q_{i,j} - \phi(\bm{q}_i)\right) \\
\text{subject to} &~~\textstyle \sum_j q_{i,j} = 1,~~~~~~~~~~\forall i\\
                                         &~~\textstyle \sum_i F_i q_{i,j}=F_j,~~~~~~\forall j\\
                                          &~~q_{i,j} \geq 0, ~~~~~~\forall i,j\end{split}
\end{equation}
Let $\bm{\hat{q}^*}$ be an optimal solution of $(P^4)$ and $\bm{\hat{{\lambda}}^*}, \bm{\hat{\mu}^*}, \bm{\hat{\pi}^*}$ be any Lagrange multipliers satisfying the KKT conditions:

\begin{enumerate}
    \item $F_i\left(\hat{W}_{i,j} - \frac{\partial \phi(\hat{\bm{q}}^*_i)}{\partial q^*_{i,j}}\right) = \hat{\lambda}^*_{i,j} + \hat{\mu}_i^* + F_i\hat{\pi}_j^*, ~~ \forall i,j$
    \item $\hat{\lambda}^*_{i,j} \leq 0, ~~ \forall i,j$
    \item $\hat{\lambda}^*_{i,j} \hat{q}^*_{i,j} = 0, ~~ \forall i,j$
\end{enumerate}
\end{definition}

\begin{lemma}\label{lem:P vs P'}
Fix any agent $k$. Suppose for all $i,j\in [m_k]$, $W_{i,i}^k\geq \max\{W_{i,j}^k-\varepsilon',-\varepsilon''\}$ holds for some $\varepsilon'\geq\varepsilon''\geq 0$.
Let $\bm{\hat{q}^*}=\{\hat{q}^*_{i,j}\}_{i,j\in [m_k]}$ be any optimal solution of the convex program $(P^4)$. Then $\bm{\hat{q}^*}$ is also an optimal solution of the convex program $(P^3)$. Moreover let $\bm{\hat{\lambda}^*}$, $\bm{\hat{\mu}^*}$, $\bm{\hat{\pi}^*}$ be any Lagrange multipliers satisfying the KKT conditions w.r.t. the convex program $(P^4)$. Then there exists $\bm{\lambda^*}=\{\lambda^*_{i,j}\}_{i, j\in[m_k]}$ such that $(\bm{\lambda^*}$,$\bm{\mu^*}=\bm{\hat{\mu}^*}$, $\bm{\pi^*}=\bm{\hat{\pi}^*})$ satisfies the KKT conditions w.r.t. the convex program $(P^3)$.
\end{lemma}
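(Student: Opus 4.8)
The plan is to argue in two steps: first that any optimal solution $\bm{\hat q}^*$ of $(P^4)$ is feasible and optimal for $(P^3)$, and second that the KKT multipliers transfer with only a modification of the $\bm\lambda$ component. For the first step, the crucial input is Lemma~\ref{lem: cycles}: for an optimal solution of $(P^3)$, every edge $(i,j)$ with $q^*_{i,j}>0$ satisfies $W_{i,j}\ge W_{i,i}-m\varepsilon'-\sqrt 2 m\gamma \ge -\varepsilon''-m\varepsilon'-\sqrt 2 m\gamma$, using $W_{i,i}\ge -\varepsilon''$ and $\sqrt 2 \le 2$. Hence on the support of any optimizer of $(P^3)$ the truncation in Definition~\ref{def:hat_W} is never active, i.e.\ $\hat W_{i,j}=W_{i,j}$ there. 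I would run the same cycle-cancelling argument that proves Lemma~\ref{lem: cycles} on an optimal solution of $(P^4)$ — the argument only uses $W_{i,i}\ge \max\{W_{i,j}-\varepsilon',-\varepsilon''\}$, and after truncation $\hat W_{i,i}=W_{i,i}$ (since $W_{i,i}\ge -\varepsilon'' \ge -m(\varepsilon'+2\gamma)-\varepsilon''$) while $\hat W_{i,j}\le W_{i,j}$, so the hypotheses of Lemma~\ref{lem: cycles} still hold for the truncated weights. This shows $\hat q^*_{i,j}>0 \Rightarrow \hat W_{i,j}\ge -m(\varepsilon'+2\gamma)-\varepsilon''$, which by the definition of $\hat W$ forces $\hat W_{i,j}=W_{i,j}$ on $\supp(\bm{\hat q}^*)$. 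Consequently the two programs have \emph{identical objective values at $\bm{\hat q}^*$}, and since they have the same feasible region and $\hat W_{i,j}\le W_{i,j}$ everywhere, the objective of $(P^3)$ pointwise dominates that of $(P^4)$; combined with $\mathrm{opt}(P^3)=\mathrm{opt}(P^4)$ at $\bm{\hat q}^*$ this yields $\bm{\hat q}^*\in\argmax (P^3)$.

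For the second step, fix Lagrange multipliers $\bm{\hat\lambda}^*,\bm{\hat\mu}^*,\bm{\hat\pi}^*$ for $(P^4)$. Set $\mu^*=\hat\mu^*$, $\pi^*=\hat\pi^*$, and define $\lambda^*_{i,j}$ by the stationarity equation of $(P^3)$:
\[
\lambda^*_{i,j} \;=\; F_i\!\left(W_{i,j}-\frac{\partial\phi(\bm{\hat q}^*_i)}{\partial q_{i,j}}\right) - \mu_i^* - F_i\pi_j^*.
\]
By construction this satisfies KKT condition (1) for $(P^3)$ with the solution $\bm{\hat q}^*$. For condition (3), complementary slackness: if $\hat q^*_{i,j}>0$ then $\hat W_{i,j}=W_{i,j}$ as established above, so the right-hand side of the $(P^3)$ stationarity equation equals that of the $(P^4)$ stationarity equation, giving $\lambda^*_{i,j}=\hat\lambda^*_{i,j}=0$ (the last equality since $\hat\lambda^*_{i,j}\hat q^*_{i,j}=0$ and $\hat q^*_{i,j}>0$); hence $\lambda^*_{i,j}\hat q^*_{i,j}=0$ always. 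For condition (2), $\lambda^*_{i,j}\le 0$: when $\hat q^*_{i,j}>0$ we just showed $\lambda^*_{i,j}=0$; when $\hat q^*_{i,j}=0$ we have $W_{i,j}\le \hat W_{i,j}$ (truncation only lowers weights, and equals $W_{i,j}$ when no truncation), so
$\lambda^*_{i,j} = F_i(W_{i,j}-\hat W_{i,j}) + \hat\lambda^*_{i,j} \le \hat\lambda^*_{i,j}\le 0$, using $F_i\ge 0$ and condition (2) for $(P^4)$. This completes the verification.

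I expect the main obstacle to be the first step — cleanly justifying that Lemma~\ref{lem: cycles} applies verbatim to the truncated program $(P^4)$, in particular checking that $\hat W_{i,i}=W_{i,i}$ (so the ``diagonal'' weights are unchanged) and that the inequality $W_{i,i}\ge \max\{W_{i,j}-\varepsilon',-\varepsilon''\}$ is inherited in the form $\hat W_{i,i}\ge\max\{\hat W_{i,j}-\varepsilon',-\varepsilon''\}$ needed to rerun the proof. One has to be slightly careful about the constant in the truncation threshold ($m(\varepsilon'+2\gamma)+\varepsilon''$ versus the bound $m\varepsilon'+\sqrt 2 m\gamma+\varepsilon''$ from Lemma~\ref{lem: cycles}); the choice $2\ge\sqrt 2$ makes the threshold loose enough that the support argument goes through, which is presumably why the authors wrote $2\gamma$ rather than $\sqrt 2\gamma$. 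Everything after that is bookkeeping with the KKT equations and the sign of $W_{i,j}-\hat W_{i,j}$.
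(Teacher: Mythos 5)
Your overall route is the paper's: check that the hypothesis of Lemma~\ref{lem: cycles} transfers to the truncated weights, apply it to $(P^4)$ to conclude that the truncation is inactive on the support of $\bm{\hat q}^*$, deduce optimality for $(P^3)$, and then transfer the multipliers by shifting only $\bm{\lambda}$ (your definition via the $(P^3)$ stationarity equation gives exactly $\lambda^*_{i,j}=\hat\lambda^*_{i,j}+F_i(W_{i,j}-\hat W_{i,j})$, which is the paper's formula), with the same complementary-slackness and sign checks.

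However, Step 1 as written contains a genuine error: you assert that truncation \emph{lowers} weights, i.e.\ $\hat W_{i,j}\le W_{i,j}$, whereas by Definition~\ref{def:hat_W} the truncation replaces weights below $-m(\varepsilon'+2\gamma)-\varepsilon''$ by that (larger) value, so in fact $\hat W_{i,j}\ge W_{i,j}$ for all $i,j$. This slip breaks both places where you use it. First, the claim that the hypothesis $\hat W_{i,i}\ge\max\{\hat W_{i,j}-\varepsilon',-\varepsilon''\}$ is ``inherited'' does not follow from monotonicity in your stated direction; it needs the short case analysis the paper does (if $(i,j)$ is untruncated, use $W_{i,i}\ge W_{i,j}-\varepsilon'$; if truncated, $\hat W_{i,j}\le 0$ while $\hat W_{i,i}=W_{i,i}\ge-\varepsilon''\ge-\varepsilon'$). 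Second, your optimality inference — ``the objective of $(P^3)$ pointwise dominates that of $(P^4)$, combined with equality at $\bm{\hat q}^*$'' — goes the wrong way: with dominance in that direction another feasible point could still have a strictly larger $(P^3)$ value, so optimality does not follow. The correct chain uses $\hat W\ge W$: for any feasible $q'$, $(P^3)\text{-obj}(q')\le(P^4)\text{-obj}(q')\le(P^4)\text{-obj}(\bm{\hat q}^*)=(P^3)\text{-obj}(\bm{\hat q}^*)$. Notably, in Step 2 you use the correct inequality $W_{i,j}\le\hat W_{i,j}$ to get $\lambda^*_{i,j}\le\hat\lambda^*_{i,j}\le 0$, so the multiplier-transfer part is sound and matches the paper; only the direction bookkeeping in Step 1 needs to be fixed, after which your argument coincides with the paper's proof. (Your observation about $2\gamma$ versus $\sqrt2\gamma$ in the truncation threshold is correct and is the same slack the paper relies on.)
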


\begin{proof}
We first prove that $\bm{\hat{q}^*}$ is also an optimal solution of the convex program $(P^3)$. Clearly $\bm{\hat{q}^*}$ is a feasible solution of the convex program $(P^3)$. To argue $\bm{\hat{q}^*}$ is an optimal solution, we first prove that $\hat{W}_{i,i} \geq \hat{W}_{i,j} - \varepsilon', \forall i,j \in [m_k]$. For every $i$, Since $W_{i,i} \geq -\varepsilon''$, $\hat{W}_{i,i}=W_{i,i} \geq -\varepsilon''$. If $W_{i,j}\geq -m(\varepsilon'+2\gamma)-\epsilon''$, then $\hat{W}_{i,j}=W_{i,j}$ and we have $\hat{W}_{i,i} \geq \hat{W}_{i,j} - \varepsilon'$ since $W_{i,i} \geq W_{i,j} - \varepsilon'$. On the other hand, suppose $W_{i,j} < -m(\varepsilon'+2\gamma)-\epsilon''$. Then $\hat{W}_{i,j}=-m(\varepsilon'+2\gamma)-\epsilon''\leq 0$. $\hat{W}_{i,i}-\hat{W}_{i,j}\geq W_{i,i}\geq -\varepsilon''\geq -\varepsilon'$.

We apply Lemma~\ref{lem: cycles} to convex program $(P^4)$. For any $i,j$ such that $\hat{q}^*_{i,j}>0$, $\hat{W}_{i,j} \geq \hat{W}_{i,i} -m(\varepsilon'+\sqrt{2}\gamma)\geq -m(\varepsilon'+\sqrt{2}\gamma)-\varepsilon''$. Thus it must hold that $\hat{W}_{i,j}=W_{i,j}$ and the solution $\bm{\hat{q}^*}$ has the same objective value in $(P^3)$ and $(P^4)$. Moreover since $\hat{W}_{i,j}\geq W_{i,j}$ for every $i,j$, the optimal objective value of $(P^3)$ is no larger than the optimal objective value of $(P^4)$. Thus $\bm{\hat{q}^*}$ is an optimal solution of the convex program $(P^3)$.

For the second part of the statement, consider the following dual variables $\{\lambda^*_{i,j}\}_{i,j\in[m_k]}$:

\begin{align*}
    \lambda^*_{i,j} =
    \begin{cases}
    \hat{\lambda}^*_{i,j} & \text{if $W_{i,j} \geq -m(\varepsilon'+2\gamma)-\epsilon''$} \\
    \hat{\lambda}^*_{i,j} + F_i(W_{i,j} +m(\varepsilon'+2\gamma)+\epsilon'') \quad & \text{o.w.}
    \end{cases}
\end{align*}

We now verify that $(\bm{\lambda^*},\bm{\mu^*=\hat{\mu}^*},\bm{ \pi^*=\hat{\pi}^*})$ satisfies the KKT conditions w.r.t. the convex program $(P^3)$.
When $\hat{W}_{i,j} \geq -m(\varepsilon'+2\gamma)-\epsilon''$ we have that $\lambda^*_{i,j} = \hat{\lambda}^*_{i,j}$ and $\hat{W}_{i,i}=W_{i,i}$.
This implies that:

\begin{enumerate}
    \item $F_i\left(\hat{W}_{i,j} - \frac{\partial \phi(\hat{\bm{q}}^*_i)}{\partial \hat{q}^*_{i,j}}\right) = \hat{\lambda}^*_{i,j} + \hat{\mu}_i^* + F_i\hat{\pi}_j^* \Leftrightarrow F_i\left(W_{i,j} - \frac{\partial \phi(\hat{\bm{q}}^*_i)}{\partial \hat{q}^*_{i,j}}\right) = \lambda^*_{i,j} + \hat{\mu}_i^* + F_i\hat{\pi}_j^*$
    \item $\hat{\lambda}^*_{i,j} \leq 0$
    \item $\hat{\lambda}^*_{i,j} \hat{q}^*_{i,j} = 0$
\end{enumerate}

Now consider $i,j$ such that $W_{i,j} <-m(\varepsilon'+2\gamma)-\epsilon''$. In this case $\hat{W}_{i,j} = -m(\varepsilon'+2\gamma)-\epsilon''$. By Lemma~\ref{lem: cycles}, if $\hat{q}^*_{i,j} > 0$,
then $\hat{W}_{i,j} \geq \hat{W}_{i,i} -m(\varepsilon'+\sqrt{2}\gamma)\geq -m(\varepsilon'+\sqrt{2}\gamma)-\epsilon''$. This contradicts with the fact that $W_{i,j} <-m(\varepsilon'+2\gamma)-\epsilon''$, so $\hat{q}^*_{i,j}=0$. We have

\begin{enumerate}
    \item \begin{align*}& F_i\left(\hat{W}_{i,j} - \frac{\partial \phi(\hat{\bm{q}}^*_i)}{\partial \hat{q}^*_{i,j}}\right) = \hat{\lambda}^*_{i,j} + \hat{\mu}_i^* + F_i\hat{\pi}_j^* \\
    \Leftrightarrow &
    F_i\left(-m(\varepsilon'+2\gamma)-\epsilon''  -\frac{\partial \phi(\hat{\bm{q}}^*_i)}{\partial \hat{q}^*_{i,j}}\right) 
    = \lambda^*_{i,j} - F_i(W_{i,j}+m(\varepsilon'+2\gamma)+\epsilon'') + \hat{\mu}_i^* + F_i\hat{\pi}_j^* \\
    \Leftrightarrow &
    F_i\left( W_{i,j} -\frac{\partial \phi(\hat{\bm{q}}^*_i)}{\partial \hat{q}^*_{i,j}}\right) 
    = {\lambda}^*_{i,j}  + \hat{\mu}_i^* + F_i\hat{\pi}_j^*
    \end{align*}
    \item $\hat{\lambda}^*_{i,j} \leq 0$
    \item $\hat{\lambda}^*_{i,j} \hat{q}^*_{i,j} = 0$
\end{enumerate}


Thus $(\bm{\lambda^*},\bm{\mu^*=\hat{\mu}^*}, \bm{\pi^*=\hat{\pi}^*})$ satisfies the KKT conditions w.r.t. the convex program $(P^3)$.
\end{proof}

As illustrated in the following example an arbitrary set of optimal dual variables that satisfy the KKT conditions can cause a big revenue loss.

\begin{example}\label{example:negative. payments}
We consider the following instance.
There is one agent with two possible types $t_L$ and $t_H$ such that the agent has type $t_H$ with probability $F_{t_H}=1-p$ and $t_L$ with probability $F_{t_L}=p$ for some sufficiently small $p>0$. The outcome space $O=\{o_L,o_H\}$. $v(t_L,o_L)=v(t_H,o_H)=1$ and $v(t_L,o_H)=v(t_H,o_L)=0$. 
The mechanism $\MM$ is defined as follows: it returns outcome $o_H$ with input $t_H$ and returns outcome $o_L$ with input $t_L$. The payment of the mechanism is always $\frac{1}{2}$. It's a BIC mechanism and it holds that $W_{t_L,t_L}=W_{t_H,t_H}=1/2$ and $W_{t_L,t_H}=W_{t_H,t_L}=-1/2$.

We are going to prove through KKT conditions that for this instance, the following $q^*$ is the optimal solution for Convex Program $(P^3)$: $q^*_{t_L,t_L}=q^*_{t_H,t_H}=1$ and $q^*_{t_H,t_L}=q^*_{t_H,t_L}=0$. Consider the following set of dual variables: $\mu^*_{t_L} = p(-1+\gamma)$, $\mu^*_{t_H}=0$,
$\pi^*_{t_L}=3/2-2\gamma$, $\pi^*_{t_H}= 1/2- \gamma$ and $\lambda^*(t_H, t_L) = (1-p)(-2 + 2\gamma), \lambda^*(t_H,t_H)=\lambda^*(t_L, t_L) = \lambda^*(t_L, t_H) = 0$.

Note that $\gamma<1$, and $\frac{\partial \phi(\hat{\bm{q}}^*_{t_L})}{\partial q^*_{t_L,t_L}}=\frac{\partial\phi(\hat{\bm{q}}^*_{t_H})}{\partial q^*_{t_H,t_H}}=\gamma$, $\frac{\partial \phi(\hat{\bm{q}}^*_{t_L})}{\partial q^*_{t_L,t_H}}=\frac{\partial\phi(\hat{\bm{q}}^*_{t_H})}{\partial q^*_{t_H,t_L}}=0$. One can verify that the dual variables $(\mu^*,\pi^*,\lambda^*)$ satisfy the KKT conditions with respect to $q^*$, which implies that $q^*$ is an optimal solution for $(P^3)$.

Note that $\min\{\mu_{t_L}^*/F_{t_L},\mu_{t_H}^*/F_{t_H}\}=-1+\gamma$. According to the payment rule, the payment charged to the first agent if she has type $t_L$ is $3/2-2\gamma + \gamma - 0 +(-1 + \gamma) = 1/2$ and the payment charged if she has type $t_H$ is $1/2 - \gamma + \gamma - 0 + (-1 + \gamma) = -1/2 + \gamma$.

Therefore the expected revenue of the RRSF mechanism with parameter $(\mu^*,\pi^*,\lambda^*)$ is $p\cdot 1/2 + (1-p) \cdot (-1/2 + \gamma) = -1/2 + \gamma  + p ( 1- \gamma)$.
When both $p$ and $\gamma$ go to $0$, the expected revenue of the mechanism is close to $-1/2$, which is far from the revenue of $\MM$. 

\end{example}

\begin{lemma}\label{lem:lower bound hat-p}
Given any mechanism $\MM$. Suppose for every agent $k$ and every $i,j\in [m_k]$, $W_{i,i}^k\geq \max\{W_{i,j}^k-\varepsilon',-\varepsilon''\}$ holds for some $\varepsilon'\geq\varepsilon''\geq 0$. Let $\bm{\hat{q}}^*=\{\hat{q}^*_{i,j}\}_{i,j\in[m_k]}$ be any optimal solution of the convex program $(P^4)$ and $\bm{\hat{\lambda}^*}, \bm{\hat{\mu}^*}, \bm{\hat{\pi}^*}$ be any Lagrange multipliers satisfying the KKT conditions w.r.t. the convex program $(P^4)$. Let $(\bm{\lambda^*},\bm{\mu^*=\hat{\mu}^*}, \bm{\pi^*=\hat{\pi}^*})$ be the Lagrange multipliers stated in Lemma~\ref{lem:P vs P'}, which satisfy the KKT conditions w.r.t. the convex program $(P^3)$. Consider the RRSF mechanism $\MM'$ in Definition~\ref{def:rrsf mechanism} with optimal solution $\bm{q^*}=\bm{\hat{q}^*}$ and Lagrange multipliers $(\bm{\lambda^*},\bm{\mu^*=\hat{\mu}^*}, \bm{\pi^*=\hat{\pi}^*})$. Then for each agent $k$ and $i\in [m_k]$, $\hat{p}_k(t_k^{(i)}) = \sum_{j} \pi_j^* q^*_{i,j} + \phi(\bm{q}^*_i) - \phi(\bm{0}) + \min_\ell \frac{\mu_\ell^*}{F_\ell}\geq -m(\varepsilon'+2\gamma)-\epsilon''-\gamma$.
    ~This implies that $$\rev(\MM') \geq \rev(\MM) - n \left(m(\varepsilon'+2\gamma)+\epsilon''+\gamma\right)$$.
\end{lemma}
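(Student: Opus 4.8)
The plan is to first prove the pointwise lower bound on the extra payment $\hat p_k(t_k^{(i)})$ for each agent $k$ and each $i\in[m_k]$, and then obtain the revenue bound by linearity of expectation. By Lemma~\ref{lem:P vs P'}, the multipliers $\left(\bm{\lambda^*},\bm{\mu^*}=\bm{\hat\mu^*},\bm{\pi^*}=\bm{\hat\pi^*}\right)$ together with the primal optimum $\bm{q^*}=\bm{\hat q^*}$ of $(P^4)$ form a valid primal--dual optimal pair for $(P^3)$, so $\MM'$ is a legitimate RRSF mechanism, and plugging $\bm{\pi^*}=\bm{\hat\pi^*}$, $\bm{\mu^*}=\bm{\hat\mu^*}$, $\bm{q^*}=\bm{\hat q^*}$ and $\phi(\bm 0)=0$ into Definition~\ref{def:rrsf mechanism} gives $\hat p_k(t_k^{(i)})=\sum_{j}\hat\pi_j^*\,\hat q^*_{i,j}+\phi(\bm{\hat q}^*_i)+\min_\ell \frac{\hat\mu_\ell^*}{F_\ell}$.

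For the pointwise bound, I would discard the two manifestly nonnegative terms: $\phi(\bm{\hat q}^*_i)=\tfrac12\gamma\|\bm{\hat q}^*_i\|_2^2\ge 0$, and since $\bm{\hat q}^*_i$ is a probability distribution over $j$ (the constraint $\sum_j q_{i,j}=1$ of $(P^4)$), $\sum_j \hat\pi_j^*\,\hat q^*_{i,j}\ge \min_j \hat\pi_j^*$. Hence $\hat p_k(t_k^{(i)})\ge \min_j \hat\pi_j^*+\min_\ell \frac{\hat\mu_\ell^*}{F_\ell}$, a quantity independent of $i$. The key point is that the first KKT condition of $(P^4)$, which reads $F_i\bigl(\hat W_{i,j}-\gamma \hat q^*_{i,j}\bigr)=\hat\lambda_{i,j}^*+\hat\mu_i^*+F_i\hat\pi_j^*$ after using $\partial\phi(\bm q)/\partial q_{j}=\gamma q_{j}$, holds for \emph{every} pair $(i,j)$, not just those in the support of $\bm{\hat q}^*$; together with $\hat\lambda_{i,j}^*\le 0$ this gives $\frac{\hat\mu_i^*}{F_i}+\hat\pi_j^*\ge \hat W_{i,j}-\gamma \hat q^*_{i,j}$ for all $i,j$. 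Applying this to $\ell^*:=\argmin_\ell \frac{\hat\mu_\ell^*}{F_\ell}$ and $j^*:=\argmin_j \hat\pi_j^*$, and then using that $\hat W$ is clipped so that $\hat W_{\ell^*,j^*}\ge -m(\varepsilon'+2\gamma)-\varepsilon''$ (Definition~\ref{def:hat_W}) together with $\hat q^*_{\ell^*,j^*}\in[0,1]$, I get
\[
\hat p_k(t_k^{(i)})\ \ge\ \hat\pi_{j^*}^*+\frac{\hat\mu_{\ell^*}^*}{F_{\ell^*}}\ \ge\ \hat W_{\ell^*,j^*}-\gamma \hat q^*_{\ell^*,j^*}\ \ge\ -m(\varepsilon'+2\gamma)-\varepsilon''-\gamma,
\]
which is exactly the claimed bound.

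For the revenue statement, I would split $\rev(\MM')$ into the payments $p_k(\cdot)$ charged by running $\MM$ on the random surrogate profile and the extra payments $\hat p_k(\cdot)$. When all agents report truthfully, the surrogate type of agent $k$ equals $t_k^{(j)}$ with probability $\sum_i F_i \hat q^*_{i,j}=F_j$ by the marginal constraint of $(P^4)$, so the surrogate is distributed as $\DD_k$ and the surrogates are independent across agents; hence the expected sum of the $p_k$'s equals $\rev(\MM,\DD)$. The expected sum of the extra payments is $\sum_{k\in[n]}\E_{i}\bigl[\hat p_k(t_k^{(i)})\bigr]\ge -n\bigl(m(\varepsilon'+2\gamma)+\varepsilon''+\gamma\bigr)$ by the pointwise bound, and adding the two pieces yields $\rev(\MM')\ge \rev(\MM)-n\bigl(m(\varepsilon'+2\gamma)+\varepsilon''+\gamma\bigr)$.

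I expect the only genuinely delicate step to be the middle one: noticing that the KKT stationarity identity of $(P^4)$ may be invoked off the support of the optimal assignment (so that it simultaneously controls $\min_j\hat\pi_j^*$ and $\min_\ell \hat\mu_\ell^*/F_\ell$), and that one must run the argument with the clipped weights $\hat W$ of program $(P^4)$ — which carry the a priori lower bound $-m(\varepsilon'+2\gamma)-\varepsilon''$ — rather than the raw weights $W$, which may be as negative as $-1$ and would give only a vacuous bound. The remaining manipulations are routine bookkeeping with the KKT conditions and the definition of $\phi$.
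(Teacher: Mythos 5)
Your proposal is correct and follows essentially the same route as the paper's proof: both hinge on invoking the KKT stationarity of $(P^4)$ at the row $\ell^*=\argmin_\ell \hat\mu^*_\ell/F_\ell$ (valid off the support of $\bm{\hat q}^*$), the sign of $\hat\lambda^*$, the clipped lower bound on $\hat W$, and the bound $\partial\phi/\partial q_{i,j}=\gamma\hat q^*_{i,j}\le\gamma$, then observe that the first-phase payments reproduce $\rev(\MM)$ because the marginal constraint makes the surrogate distributed as $\DD_k$. The only cosmetic difference is that you drop to $\min_j\hat\pi^*_j$ and apply stationarity at the single pair $(\ell^*,j^*)$, whereas the paper keeps the weighted average $\sum_j\hat q^*_{i,j}\bigl(\hat\mu^*_{\ell^*}/F_{\ell^*}+\hat\pi^*_j\bigr)$ and bounds each term; both are equally valid.
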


\begin{proof}
Assume $\ell^*= \argmin_\ell \frac{\hat{\mu}^*_\ell}{F_\ell}$.
Note that $\sum_j \hat{q}^*_{i,j} = 1$.
If agent $k$ reports type $t_k^{(i)}$ then :

\begin{align*}
     \hat{p}_k(t_k^{(i)}) = &\sum_{j} \pi_j^* q^*_{i,j} + \phi(\bm{q}^*_i) - \phi(\bm{0}) + \min_\ell \frac{\mu_\ell^*}{F_\ell} \\
    = & \sum_{j} \hat{\pi}_j^* \hat{q}^*_{i,j} + \phi(\bm{\hat{q}^*}_i) - \phi(\bm{0}) + \sum_{j} \frac{\hat{\mu}^*_{\ell^*}}{F_{\ell^*}} \hat{q}^*_{i,j} \\
    = & \sum_{j}  \hat{q}^*_{i,j} \left(\frac{\hat{\mu}^*_{\ell^*}}{F_{\ell^*}}  +  \hat{\pi}_j^*\right) + \phi(\bm{\hat{q}^*}_i) - \phi(\bm{0})  \\
    \geq & \sum_{j}  \hat{q}^*_{i,j} \left(\frac{\hat{\mu}^*_{\ell^*}}{F_{\ell^*}}  +  \hat{\pi}_j^*\right)
\end{align*}

By the KKT condition w.r.t to the convex program $(P^4)$ we have that:

\begin{align*}
    \sum_{j} \hat{q}^*_{i,j}\left(\frac{\hat{\mu}_{\ell^*}}{F_{\ell^*}}  +  \hat{\pi}_j^*\right) =&\sum_{j} \hat{q}^*_{i,j}\left(-\frac{\hat{\lambda}^*_{\ell^*,j}}{F_{\ell^*}} +  \hat{W}_{\ell^*,j} -\frac{\partial \phi(\hat{\bm{q}}^*_{\ell^*})}{\partial \hat{q}^*_{\ell^*,j}} \right)\\
    \geq & \sum_{j} \hat{q}^*_{i,j}\left( \hat{W}_{\ell^*,j} -\frac{\partial \phi(\hat{\bm{q}}^*_{\ell^*})}{\partial \hat{q}^*_{\ell^*,j}}\right)\qquad(\hat{\lambda}^*_{\ell^*,j}\leq 0)  \\
    \geq & \sum_{j} \hat{q}^*_{i,j}\left( -m(\varepsilon'+2\gamma)-\epsilon''-\frac{\partial \phi(\hat{\bm{q}}^*_{\ell^*})}{\partial \hat{q}^*_{\ell^*,j}}\right)\qquad(\text{Definition~\ref{def:hat_W}}) \\
    \geq  &  -m(\varepsilon'+2\gamma)-\epsilon''-\gamma\qquad(\text{the definition of $\phi(\cdot)$ and $\hat{q}^*_{i,j}\in[0,1]$
    })
\end{align*}

Finally, note that in $\MM'$, the expected revenue from payments $\{p_k(\cdot)\}_{k\in [n]}$ is exactly the same as $\rev(\MM)$. Thus we have
$$\rev(\MM') \geq \rev(\MM) - n \left(m(\varepsilon'+2\gamma)+\epsilon''+\gamma\right).$$

\end{proof}

We summarize the result for the ideal model in the following theorem, by choosing $\textbf{W}$ as in Theorem~\ref{thm:RRSF is BIC and IR}.

\begin{theorem}\label{thm:RRSF eps-BIC to BIC}
Let $\MM$ be an $\varepsilon$-BIC and IR mechanism. Fix any $\gamma>0$. Fix any agent $k$. For every $i,j\in [m_k]$, define $W_{i,j}^k$ as in Theorem~\ref{thm:RRSF is BIC and IR}. Let $\bm{q^*_k}=\{q^*_{i,j}\}_{i,j\in[m_k]}$ be the optimal solution of the convex program $(P^4)$ (described in Definition~\ref{def:hat_W}) and $\bm{\mu^*_k}=\{\hat{\mu}^*_{i}\}_{i \in [m_k]}$, $\bm{\pi^*_k}= \{\hat{\pi}^*_{j}\}_{j \in [m_k]}$ be the corresponding optimal Lagrange multipliers. Let mechanism $\MM'$ be the RRSF mechanism with parameters $\bm{q^*}=\{\bm{q^*_k}\}_{k\in[n]}, \bm{\mu^*}=\{\bm{\mu^*_k}\}_{k\in[n]}, \bm{\pi^*}=\{\bm{\pi^*_k}\}_{k\in[n]}$ as described in Definition~\ref{def:rrsf mechanism}. Then $\MM'$ is BIC, IR, and $$\rev(\MM') \geq \rev(\MM) - n \left(m(\varepsilon+2\gamma)+\gamma\right).$$
\end{theorem}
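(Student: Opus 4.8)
The plan is to obtain the statement as a direct corollary of the three lemmas just established, so the only real work is to check that their hypotheses hold with $\varepsilon'=\varepsilon$ and $\varepsilon''=0$ and then to chain the conclusions together.

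First I would unpack the definition of $W_{i,j}^k$. By construction, $W_{i,j}^k$ is exactly the interim utility of agent $k$ when her true type is $t_k^{(i)}$ but she reports $t_k^{(j)}$ to $\MM$. Since $\MM$ is $\varepsilon$-BIC, reporting truthfully is within $\varepsilon$ of the best response, i.e. $W_{i,i}^k \ge W_{i,j}^k - \varepsilon$ for all $i,j \in [m_k]$; since $\MM$ is IR, the truthful interim utility is nonnegative, i.e. $W_{i,i}^k \ge 0$. Hence $W_{i,i}^k \ge \max\{W_{i,j}^k - \varepsilon,\, 0\}$, so the hypotheses of Lemma~\ref{lem: cycles}, Lemma~\ref{lem:P vs P'}, and Lemma~\ref{lem:lower bound hat-p} are all satisfied for every agent $k$ with $\varepsilon' = \varepsilon$ and $\varepsilon'' = 0$ (in particular $\varepsilon' \ge \varepsilon'' \ge 0$, as those lemmas require).

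Next I would establish that $\MM'$ is BIC and IR. Fix an agent $k$. By Lemma~\ref{lem:P vs P'}, the optimal solution $\bm{q^*_k}$ of the convex program $(P^4)$ is also an optimal solution of $(P^3)$, and there is a vector $\bm{\lambda^*_k}$ such that $(\bm{\lambda^*_k}, \bm{\mu^*_k} = \bm{\hat{\mu}^*_k}, \bm{\pi^*_k} = \bm{\hat{\pi}^*_k})$ satisfies the KKT conditions of $(P^3)$. Therefore $\MM'$, which is built from $(\bm{q^*}, \bm{\mu^*}, \bm{\pi^*})$, is exactly the RRSF mechanism of Definition~\ref{def:rrsf mechanism} instantiated with the $(P^3)$-optimal primal solution $\bm{q^*}$ and the $(P^3)$-KKT multipliers $(\bm{\lambda^*}, \bm{\mu^*}, \bm{\pi^*})$; note that the extra payment $\hat p_k$ depends only on $\bm{q^*}, \bm{\mu^*}, \bm{\pi^*}$ and not on $\bm{\lambda^*}$, so this instantiation is well-defined. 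Applying Theorem~\ref{thm:RRSF is BIC and IR} to this mechanism, with $\textbf{W} = \{W_{i,j}^k\}$ as defined, then gives that $\MM'$ is BIC and IR.

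Finally, for the revenue bound, $\rev(\MM')$ decomposes into the base payments $\{p_k(\cdot)\}_{k \in [n]}$ of $\MM$ evaluated on the surrogate profile $s$, plus the extra payments $\{\hat p_k(\cdot)\}_{k \in [n]}$. Under truthful play the marginal constraints $\sum_i F_i q^*_{i,j} = F_j$ force each surrogate $s_k$ to be distributed as $\DD_k$ and independent across $k$, so $s \sim \DD$ and the expected base revenue equals $\rev(\MM,\DD) = \rev(\MM)$. By Lemma~\ref{lem:lower bound hat-p} with $\varepsilon' = \varepsilon$ and $\varepsilon'' = 0$, each $\hat p_k(t_k^{(i)}) \ge -m(\varepsilon + 2\gamma) - \gamma$, so the total expected extra payment is at least $-n\big(m(\varepsilon + 2\gamma) + \gamma\big)$, and the claimed bound $\rev(\MM') \ge \rev(\MM) - n\big(m(\varepsilon + 2\gamma) + \gamma\big)$ follows. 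I expect no genuinely difficult step in this theorem itself: all the substantive content sits in Lemma~\ref{lem: cycles} (the flow-sequence / cycle-cancellation argument) and Lemma~\ref{lem:P vs P'} (the reduction from $(P^3)$ to $(P^4)$, which is what lets us pick KKT multipliers that do not drive the prices too negative). The main thing to be careful about is the bookkeeping — confirming that $\varepsilon$-BIC and IR translate to exactly $\varepsilon' = \varepsilon$, $\varepsilon'' = 0$ with $\varepsilon' \ge \varepsilon''$, and that the RRSF mechanism assembled from the $(P^4)$ data is a legitimate instance of the Definition~\ref{def:rrsf mechanism} mechanism so that Theorem~\ref{thm:RRSF is BIC and IR} applies verbatim.
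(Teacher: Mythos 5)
Your proposal is correct and follows essentially the same route as the paper: invoke Lemma~\ref{lem:P vs P'} to transport the $(P^4)$ solution and multipliers to $(P^3)$ so that Theorem~\ref{thm:RRSF is BIC and IR} yields BIC and IR, then apply Lemma~\ref{lem:lower bound hat-p} with $\varepsilon'=\varepsilon$, $\varepsilon''=0$ (justified by $\varepsilon$-BIC and IR of $\MM$) to bound the revenue loss by $n\left(m(\varepsilon+2\gamma)+\gamma\right)$. Your extra bookkeeping (the surrogate distribution matching $\DD_k$ and the irrelevance of $\bm{\lambda^*}$ to $\hat p_k$) just makes explicit what the cited lemmas already contain.
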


\begin{proof}
According to Lemma~\ref{lem:P vs P'}, $\bm{q^*_k}$ is also an optimal solution for the convex program $(P^3)$ and $\bm{\mu^*_k}$ and $\bm{\pi^*_k}$ are optimal Lagrange multipliers for $(P^3)$ as well. Hence, $\MM'$ is BIC and IR by Theorem~\ref{thm:RRSF is BIC and IR}. 

Since $\MM$ is $\epsilon$-BIC and IR, we have $W_{i,i}^k\geq \max\{W_{i,j}^k-\varepsilon, 0\}$, for every $k\in [n]$ and $i,j\in [m_k]$. By applying Lemma~\ref{lem:lower bound hat-p} with $\epsilon'=\epsilon$ and $\epsilon''=0$, we have that the revenue of $\MM'$ can be at most $n \left(m(\varepsilon+2\gamma)+\gamma\right)$ smaller than the revenue of $\MM$.

\end{proof}

\notshow{
\begin{corollary}\label{cor: not neg payments}
By Lemma~\ref{lem:P vs P'}, if we use the Lagrange multipliers $\{\hat{\mu}^*_{i}\}_{i \in [m_k]}, \{\hat{\pi}^*_{j}\}_{j \in [m_k]}$ from the convex program $(P^4)$ to define the payment function $\hat{p}_k(\cdot)$ for each agent $k$, the resulting mechanism is BIC, IR and the payment for each participating agent is at least $-m(\varepsilon+2\gamma)-\gamma$. 
\end{corollary}

\begin{remark}\cite{HartlineKM11}\label{rem: rev pres}
Since the first and second constraints of the program $(P^3)$ are satisfied, the distribution of the types that the agent participates in the $\varepsilon$-BIC mechanism $\MM$ with, is the same as her original distribution. Hence, the second phase revenue is $\rev(\MM)$.
\end{remark}

\begin{theorem}\cite{HartlineKM11}\label{thm: rev frac ideal}
Let $\MM$ be an $\varepsilon$-BIC mechanism and $\MM'$ be the mechanism described in \ref{def:rrsf mechanism}. By combining Corollary~\ref{cor: not neg payments}, Remark~\ref{rem: rev pres} we have that $\rev{\MM'} \geq \rev{\MM} - n \left(m(\varepsilon+2\gamma)+\gamma\right)$, where $n$ is the number of agents. 
Moreover $\MM'$ is BIC and IR.
\end{theorem}
}

At the end of Section~\ref{sec:ideal model RRSF}, we prove a lemma about RRSF mechanism using the KKT condition of the convex program $(P^3)$. It's useful in the proof of Lemma~\ref{lem:rev_RERSF}.

\begin{lemma}\label{lem:RRSF-IR}
Let $\MM$ be any RRSF mechanism with respect to $\textbf{W}=\{W_{i,j}^k\}_{k\in [n],i,j\in [m_k]}$. It has parameter $(\bm{q^*}, \bm{\mu^*}, \bm{\pi^*})$. Then for each agent $k$ and $i\in [m_k]$, $\sum_{j\in [m_k]}q_{i,j}^*W_{i,j}^k-\hat{p}_k(t_k^{(i)})\geq 0$.
\end{lemma}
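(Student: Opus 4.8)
The plan is a short, direct computation from the KKT conditions of $(P^3)$. Fix an agent $k$ and a type $t_k^{(i)}$, and drop the superscript $k$ everywhere. By definition the RRSF mechanism with respect to $\mathbf{W}$ comes with an optimal primal solution $\bm{q}^*$ together with Lagrange multipliers $\bm{\lambda}^*,\bm{\mu}^*,\bm{\pi}^*$ satisfying the three KKT conditions of Definition~\ref{def:rrsf mechanism}. Dividing the first KKT condition by $F_i$ gives, for every $j\in[m_k]$,
$$W_{i,j} = \frac{\partial \phi(\bm{q}^*_i)}{\partial q^*_{i,j}} + \frac{\lambda^*_{i,j}}{F_i} + \frac{\mu_i^*}{F_i} + \pi_j^*.$$

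First I would multiply this identity by $q^*_{i,j}$ and sum over $j$, then simplify the right-hand side term by term. Since $\phi(\bm{x})=\tfrac12\gamma\|\bm{x}\|_2^2$, we have $\partial\phi(\bm{q}^*_i)/\partial q^*_{i,j}=\gamma q^*_{i,j}$, so $\sum_j q^*_{i,j}\,\partial\phi(\bm{q}^*_i)/\partial q^*_{i,j}=\gamma\|\bm{q}^*_i\|_2^2=2\phi(\bm{q}^*_i)$ (this is just Euler's identity for the degree-$2$ homogeneous function $\phi$). The $\lambda^*$-term vanishes by complementary slackness ($\lambda^*_{i,j}q^*_{i,j}=0$ for all $j$), and $\sum_j q^*_{i,j}=1$ by primal feasibility, so the $\mu^*$-term contributes exactly $\mu_i^*/F_i$. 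This yields
$$\sum_{j\in[m_k]} q^*_{i,j} W_{i,j} = 2\phi(\bm{q}^*_i) + \frac{\mu_i^*}{F_i} + \sum_{j\in[m_k]}\pi_j^* q^*_{i,j}.$$

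Finally I would subtract $\hat{p}_k(t_k^{(i)})=\sum_j \pi_j^* q^*_{i,j}+\phi(\bm{q}^*_i)-\phi(\bm{0})+\min_\ell \mu_\ell^*/F_\ell$, using $\phi(\bm{0})=0$; the $\pi^*$-terms cancel and one obtains
$$\sum_{j\in[m_k]} q^*_{i,j} W_{i,j} - \hat{p}_k(t_k^{(i)}) = \phi(\bm{q}^*_i) + \Big(\frac{\mu_i^*}{F_i} - \min_{\ell}\frac{\mu_\ell^*}{F_\ell}\Big)\ \ge\ 0,$$
since $\phi(\bm{q}^*_i)=\tfrac12\gamma\|\bm{q}^*_i\|_2^2\ge 0$ and $\mu_i^*/F_i\ge\min_\ell \mu_\ell^*/F_\ell$ by definition of the minimum. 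There is no genuine obstacle here; the only points requiring care are invoking complementary slackness to kill the $\lambda^*$ contribution and observing $\phi(\bm{0})=0$. Conceptually, the left-hand side is precisely the interim utility of surrogate-type $t_k^{(i)}$ in the RRSF mechanism, so the lemma is just an explicit restatement of its individual rationality, which is what makes it convenient in the proof of Lemma~\ref{lem:rev_RERSF}.
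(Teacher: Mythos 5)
Your proof is correct and follows essentially the same route as the paper's: both multiply the first KKT condition by $q^*_{i,j}$, sum over $j$, kill the $\lambda^*$ term by complementary slackness, use $\sum_j q^*_{i,j}=1$ together with $\mu_i^*/F_i\geq\min_\ell \mu_\ell^*/F_\ell$, and conclude from nonnegativity of the $\phi$-contribution. The only (cosmetic) difference is the last step: you evaluate $\sum_j q^*_{i,j}\,\partial\phi(\bm{q}^*_i)/\partial q^*_{i,j}=2\phi(\bm{q}^*_i)$ using the explicit quadratic form of $\phi$, whereas the paper keeps $\phi$ generic and invokes the convexity inequality $\nabla\phi(\bm{q}^*_i)^T\bm{q}^*_i+\phi(\bm{0})-\phi(\bm{q}^*_i)\geq 0$, which would remain valid for any convex regularizer.
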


\begin{proof}
Fix any agent $k$. Note that $(\bm{q_k^*}, \bm{\mu_k^*}, \bm{\pi_k^*})$ satisfies the KKT conditions of $(P^3)$. 
By the definition of $\hat{p}_k(t_k^{(i)})$, for every $i\in [m_k]$, we have
\begin{align*}
     \sum_{j\in [m_k]}q_{i,j}^*W_{i,j}^k-\hat{p}_k(t_k^{(i)}) = &\sum_{j} q^*_{i,j}(W_{i,j}^k-\pi_j^*)+ \phi(\bm{0}) - \phi(\bm{q}^*_i) - \min_\ell \frac{\mu_\ell^*}{F_\ell}\\ \geq& \sum_{j} q^*_{i,j}(W_{i,j}^k-\pi_j^*-\frac{\mu_i^*}{F_i})+ \phi(\bm{0}) - \phi(\bm{q}^*_i)~~~~~(\sum_jq_{i,j}^*=1)\\
    =&\sum_{j} q^*_{i,j}(\frac{\lambda_{i,j}^*}{F_i}+\frac{\partial \phi(q_i^*)}{\partial q_{i,j}^*})+ \phi(\bm{0}) - \phi(\bm{q}^*_i)~~~~~(\text{KKT condition 1})\\
    =&\sum_{j} q^*_{i,j}\frac{\partial \phi(q_i^*)}{\partial q_{i,j}^*}+ \phi(\bm{0}) - \phi(\bm{q}^*_i)~~~~~(\text{KKT condition 3})\\
    =& \nabla \phi(\bm{q}_i^*)^T\cdot\bm{q}_i^*+\phi(\bm{0}) - \phi(\bm{q}^*_i)\geq 0~~~~~(\text{$\phi(\cdot)$ is convex})
\end{align*}
\end{proof}

\subsection{Non-Ideal Model}\label{sec:non-ideal model RRSF}
In the previous section we showed that, in the ideal model, the RRSF mechanism $\MM'$ has only a small revenue loss, if the original mechanism $\MM$ is $\varepsilon$-BIC and IR. However, with only sample access to the distributions of the edge weights, these weights can not be computed exactly. In this section, we show that a BIC and IR mechanism with small revenue loss can be constructed, using estimates for the edge weights. The approach is similar to~\cite{hartline2015bayesian}. For any fixed agent $k$, let $\WW_{i,j}^k$ be the utility of agent $k$ on an execution of the mechanism $\MM=(x,p)$ when she reports $t_k^{(j)}$ and her true type is $t_k^{(i)}$.
More formally,
\begin{equation}\label{equ:rv-WW}
\WW_{i,j}= v_k(t_k^{(i)},x(t_k^{(j)}, t_{-k})) - p_k(t_k^{(j)},t_{-k})
\end{equation}
Note that $\WW_{i,j}$ is a random variable over $t_{-k}\sim \DD_{-k}$ and the randomness of $\MM$, whose expectation is $W_{i,j}$ as defined in Definition~\ref{def:rrsf mechanism}.

\begin{definition}[Definition~4.9 from \cite{hartline2015bayesian}]\label{def:rersf}
The Regularized Estimated Replica–Surrogate Fractional Assignment (RERSF) with parameter $L$,
is defined as follows:
\begin{enumerate}
    \item Fix any agent $k$. For every pair of types $t_k^{(i)}, t_k^{(j)} \in \supp(\DD_k)$,
    we define $\widetilde{W}_{i,j}^k$ as the empirical mean with $L$ samples of $\WW_{i,j}$.  A sample of $\WW_{i,j}$ is obtained by drawing a sample $t_{-k}\sim \DD_{-k}$, running mechanism $\MM$ with input $(t_k^{(j)}, t_{-k})$, and computing the utility in Equation~\eqref{equ:rv-WW} based on the output outcome and payment. Let $\OO_k$ be the set of output outcomes from $\MM$ among all samples and pairs $(i,j)$. In total there are $m_k\cdot L$ number of outcomes.
    \item Run the RRSF mechanism $\MM'$ in Definition~\ref{def:rrsf mechanism} with respect to  $\widetilde{\textbf{W}}=\{\widetilde{W}_{i,j}^k\}_{k\in [n], i,j\in [m_k]}$.
\end{enumerate}
\end{definition}

The following lemma shows that if all agents report truthfully, the RERSF mechanism has a small revenue loss compared to $\MM$. 

\begin{lemma}
\label{lem:rev_RERSF}
For any $\eta\in (0,1)$, suppose $L \geq \frac{2}{\eta^2}\ln\left( \frac{2nm^2}{\eta} \right)$. Then the revenue of RERSF, when all agents report truthfully, is at least $\rev(\MM) - nm(\epsilon+6\gamma+5\eta)$. Moreover, the expected utility for each agent when she reports truthfully is at least $-3\eta$.
\end{lemma}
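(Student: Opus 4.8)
The plan is to compare RERSF with the ideal RRSF mechanism run on the estimated weights $\widetilde{\textbf{W}}$, and then to control both the concentration of $\widetilde{W}_{i,j}^k$ around $W_{i,j}^k$ and the second-phase revenue. First I would fix a clean high-probability event. By Hoeffding's inequality, since each $\WW_{i,j}^k \in [-1,1]$ (values in $[0,1]$, payments bounded since $\MM$ is IR and valuations are in $[0,1]$, so utilities lie in a bounded range which we normalize) and $\widetilde{W}_{i,j}^k$ is an empirical mean of $L$ independent samples, we have $\Pr[|\widetilde{W}_{i,j}^k - W_{i,j}^k| > \eta] \le 2\exp(-L\eta^2/2)$. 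With $L \ge \frac{2}{\eta^2}\ln(2nm^2/\eta)$ and a union bound over all $k\in[n]$ and $i,j\in[m_k]$ (at most $nm^2$ pairs), the event $\mathcal{G} = \{\,|\widetilde{W}_{i,j}^k - W_{i,j}^k| \le \eta \text{ for all } k,i,j\,\}$ holds with probability at least $1-\eta$.

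Next I would establish the structural consequence of $\mathcal{G}$. Since $\MM$ is $\varepsilon$-BIC and IR, we have $W_{i,i}^k \ge \max\{W_{i,j}^k - \varepsilon, 0\}$ for all $k,i,j$. Under $\mathcal{G}$, this gives $\widetilde{W}_{i,i}^k \ge W_{i,i}^k - \eta \ge \max\{W_{i,j}^k - \varepsilon, 0\} - \eta \ge \max\{\widetilde{W}_{i,j}^k - \varepsilon - 2\eta,\, -\eta\}$. So the estimated weights satisfy the hypothesis of Theorem~\ref{thm:RRSF eps-BIC to BIC} / Lemma~\ref{lem:lower bound hat-p} with $\varepsilon' = \varepsilon + 2\eta$ and $\varepsilon'' = \eta$. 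Now consider the RRSF mechanism $\MM'$ with respect to $\widetilde{\textbf{W}}$, built from an optimal solution of the corresponding program $(P^4)$ and its Lagrange multipliers. Lemma~\ref{lem:lower bound hat-p} (applied with these $\varepsilon', \varepsilon''$) lower-bounds each extra payment by $\hat p_k(t_k^{(i)}) \ge -m(\varepsilon' + 2\gamma) - \varepsilon'' - \gamma = -m(\varepsilon + 2\eta + 2\gamma) - \eta - \gamma$. The key subtlety here is that RERSF charges $p_k(s)$ where $s$ is the surrogate drawn according to $q^*$, and the expected value of this part over truthful play equals $\rev(\MM)$ only if the surrogate distribution marginals match $\DD_k$ — which they do, because the constraints $\sum_i F_i q_{i,j}^* = F_j$ in $(P^3)/(P^4)$ guarantee exactly that. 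Summing the $\hat p_k$ bound over $k\in[n]$ and adding $\rev(\MM)$ gives, on event $\mathcal{G}$, revenue at least $\rev(\MM) - nm(\varepsilon + 2\eta + 2\gamma) - n\eta - n\gamma \ge \rev(\MM) - nm(\varepsilon + 4\eta + 3\gamma)$ or so; I would then absorb the off-event contribution (which costs at most $O(nm)$ with probability $\le \eta$, hence $O(nm\eta)$ in expectation — but actually since we are asked about revenue "when all agents report truthfully", I suspect the intended statement is the bound on $\mathcal{G}$ itself, and the slack in the constants $6\gamma + 5\eta$ is exactly there to absorb these union-bound and rounding losses) to land at $\rev(\MM) - nm(\varepsilon + 6\gamma + 5\eta)$.

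Finally, for the per-agent utility bound, I would use Lemma~\ref{lem:RRSF-IR}: for the RRSF mechanism with respect to $\widetilde{\textbf{W}}$, each agent $k$ of true type $t_k^{(i)}$ has $\sum_j q_{i,j}^* \widetilde{W}_{i,j}^k - \hat p_k(t_k^{(i)}) \ge 0$, i.e. the mechanism is IR with respect to the \emph{estimated} weights. The true expected utility is $\sum_j q_{i,j}^* W_{i,j}^k - \hat p_k(t_k^{(i)})$ (here using again that the surrogate marginals are correct so that the second-phase utility of type $t_k^{(i)}$ equals $\sum_j q_{i,j}^* W_{i,j}^k$), which on $\mathcal{G}$ differs from the estimated-weight utility by at most $\sum_j q_{i,j}^* |\widetilde{W}_{i,j}^k - W_{i,j}^k| \le \eta$ (since $\sum_j q_{i,j}^* = 1$). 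Hence the true expected utility is at least $-\eta$, and after accounting for the off-event and the same kind of slack one gets the stated $-3\eta$. The main obstacle I anticipate is bookkeeping the two distinct approximation scales — the estimation error $\eta$ versus the regularization/threshold scale $m(\varepsilon+2\gamma)$ from Definition~\ref{def:hat_W} — and making sure the $\widetilde{W}$-version of program $(P^4)$ (which thresholds at $-m(\varepsilon' + 2\gamma) - \varepsilon''$ with the \emph{inflated} $\varepsilon', \varepsilon''$) still yields, via Lemma~\ref{lem:P vs P'} and Lemma~\ref{lem: cycles}, an optimal solution whose support avoids the very-negative estimated edges; this is exactly why the hypothesis $\varepsilon' \ge \varepsilon'' \ge 0$ in those lemmas matters and must be checked for $\varepsilon' = \varepsilon + 2\eta$, $\varepsilon'' = \eta$.
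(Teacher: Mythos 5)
Your proposal is correct and follows essentially the same route as the paper: Hoeffding plus a union bound to get the good event with probability $1-\eta$, verifying $\widetilde{W}_{i,i}^k\geq\max\{\widetilde{W}_{i,j}^k-\varepsilon-2\eta,\,-\eta\}$ so that Lemma~\ref{lem:lower bound hat-p} applies with $\varepsilon'=\varepsilon+2\eta$, $\varepsilon''=\eta$, and Lemma~\ref{lem:RRSF-IR} plus the $\eta$-perturbation for the $-3\eta$ utility bound. The only point you leave slightly loose — the statement is indeed in expectation over the samples, and the bad event is handled exactly as you guess, by re-applying Lemma~\ref{lem:lower bound hat-p} with the trivial parameters (giving an $O(nm)$ loss that, multiplied by the probability $\eta$, is absorbed into the $6\gamma+5\eta$ slack) — matches the paper's bookkeeping.
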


\begin{proof}


Note that $\WW_{i,j}^k\in [-1,1]$. By the Hoeffding bound, when $L\geq \frac{2}{\eta^2}\ln\left( \frac{2nm^2}{\eta}\right)$, $\Pr\left[|\widetilde{W}^k_{i,j}-W_{i,j}^k|>\eta\right]\leq\frac{\eta}{nm^2}$. By taking the union bound over all agents, replicas, and surrogates, we have that with probability at least $1-\eta$, $|\widetilde{W}^k_{i,j}-W_{i,j}^k| \leq \eta$ holds for all agent $k$ and types $t_k^{(i)},t_k^{(j)}\in \supp(\DD_k)$. We refer to the event that the above inequalities hold as a ``good'' event.

For the first statement, consider the case when a ``good'' event happens. Since $\MM$ is $\varepsilon$-BIC and IR, we have $W_{i,i}^k\geq W_{i,j}^k-\varepsilon$ and $W_{i,i}^k\geq 0$, for every $k\in [n]$ and $i,j\in [m_k]$. Thus $$\widetilde{W}_{i,i}^k\geq \widetilde{W}_{i,j}^k-\varepsilon-|\widetilde{W}_{i,i}^k-W_{i,i}^k|-|\widetilde{W}_{i,j}^k-W_{i,j}^k|\geq \widetilde{W}_{i,j}^k-\varepsilon-2\eta$$ 

Moreover, $\widetilde{W}_{i,i}^k\geq W_{i,i}^k-|\widetilde{W}_{i,i}^k-W_{i,i}^k|\geq W_{i,i}^k-\eta$. By applying Lemma~\ref{lem:lower bound hat-p} with $\varepsilon'=\epsilon+2\eta$ and $\varepsilon''=\eta$, we have that the expected revenue of RERSF is at most $n \left(m(\varepsilon+2\gamma+2\eta)+\gamma+\eta\right)$ smaller than the revenue of $\MM$. When a ``good'' event does not happen, we can apply Lemma~\ref{lem:lower bound hat-p} with the trivial guarantee $\varepsilon'=\varepsilon''=1$. The revenue loss is at most $n \left(m(1+2\gamma)+1+\gamma\right)\leq nm(2+3\gamma)$. Thus the expected revenue loss is at most
$$n \left(m(\varepsilon+2\gamma+2\eta)+\gamma+\eta\right)+nm(2+3\gamma)\cdot \eta\leq nm(\epsilon+6\gamma+5\eta)$$

We now prove the second statement. For every agent $k$, let $t_k^{(i)}$ be her true type. For every $j\in [m_k]$, let $q_{i,j}^*$ be the probability that she is represented by type $t_k^{(j)}$ in the RRSF mechanism (step 2 of the RERSF mechanism). Then her expected utility by reporting truthfully is $\sum_jW_{i,j}^k\cdot q_{i,j}^*-\hat{p}_k(t_k^{(i)})$, where $\hat{p}_k$ is the extra payment in the RRSF mechanism with respect to $\widetilde{\textbf{W}}$. By Lemma~\ref{lem:RRSF-IR}, we have
$\sum_j\widetilde{W}_{i,j}^k\cdot q_{i,j}^*-\hat{p}_k(t_k^{(i)})\geq 0$. We have
$$\sum_jW_{i,j}^k\cdot q_{i,j}^*-\hat{p}_k(t_k^{(i)})\geq \sum_j\widetilde{W}_{i,j}^k\cdot q_{i,j}^*-\hat{p}_k(t_k^{(i)})-\sum_{j}q_{i,j^*}|\widetilde{W}_{i,j}^k-W_{i,j}^k|\geq -\sum_{j}q_{i,j^*}|\widetilde{W}_{i,j}^k-W_{i,j}^k|$$

If a ``good'' event happens, $-\sum_{j}q_{i,j^*}|\widetilde{W}_{i,j}^k-W_{i,j}^k|\geq -\eta$. If a ``good'' event does not happen, we can apply the trivial guarantee $|\widetilde{W}_{i,j}^k-W_{i,j}^k|\leq 2$ (since $\WW_{i,j}^k\in [-1,1]$) to get $-\sum_{j}q_{i,j^*}|\widetilde{W}_{i,j}^k-W_{i,j}^k|\geq -2$. Hence the expected utility for agent $k$ when she reports truthfully is at least $(-\eta)\cdot 1+(-2)\cdot \eta=-3\eta$.


\end{proof}

Note that for every agent $k$ with type $t_k^{(i)}$, her expected utility when represented by a surrogate with type $t_k^{(i)}$, is $W_{ij}^k$ instead of $\widetilde{W}_{i,j}^k$. Thus the RERSF mechanism is not BIC. However, with the number of samples $L$ sufficiently large, one can show that the mechanism is $\epsilon'$-BIC for some small $\epsilon'>0$.
To formally prove this claim, we first introduce the \emph{distinguishability} of two types.

\begin{definition}[Definition~3.3 from \cite{hartline2015bayesian}]\label{dfn: dist outcome pair}
Let $v_k(\cdot, \cdot)$ be the valuation function of agent $k$. We define the \emph{swap-disutility} for the types $t_k^{(i)}, t_k^{(j)} \in \supp(\DD_k)$ and outcomes $o, o' \in \OO$ to be $d(t_k^{(i)}, t_k^{(j)},o, o') = v(t_k^{(i)}, o) - v(t_k^{(i)}, o') + v(t_k^{(j)}, o') - v(t_k^{(j)}, o')$. The \emph{distinguishability} of types $t_k^{(i)}, t_k^{(j)}$ is defined as
$$d(t_k^{(i)}, t_k^{(j)}) = \max_{o,o' \in \OO} d(t_k^{(i)}, t_k^{(j)}, o, o')$$
The pair $o, o' \in \OO$ that achieves the maximum is called the \emph{distinguishing outcome pair} for $t_k^{(i)}, t_k^{(j)}$. We call $t_k^{(i)}, t_k^{(j)}$ distinguishable w.r.t. an outcome set $\OO$ if they have non-zero distinguishability for some outcome in that set. For every pair of types $t_k^{(i)},t_k^{(j)}\in \supp(\DD_k)$, define the \emph{sampled outcome distinguishability} $\widetilde{d}(t_k^{(i)}, t_k^{(j)})$ to be the distinguishability of types $t_k^{(i)}, t_k^{(j)}$ with respect to the output outcome set $\OO_k$.
\end{definition}



\begin{lemma}[Lemma~4.7 from \cite{hartline2015bayesian}]
\label{lem:deviate_RERSF}
For any $\gamma>0$, $\zeta \in (0, 1/2)$, consider the RERSF mechanism with parameter $L > \frac{1}{2}\zeta^{-2} \ln(2m/\zeta)$.
Then any agent $k$ with type $t_k^{(i)}\in \supp(\DD_k)$ that participates in the mechanism can gain at most $\frac{2\zeta m}{\gamma}\E[\widetilde{d}(t_k^{(i)},t_k^{(j)})]$ utility by reporting type $t_k^{(j)}\in \supp(\DD_k)$,
where $\widetilde{d}(\cdot,\cdot)$ is the sampled outcome distinguishability of $t_k^{(i)},t_k^{(j)}$. The expectation is taken over the randomness of $\OO_k$.

\end{lemma}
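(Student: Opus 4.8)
\noindent\emph{Proof plan.} The idea is to compare RERSF with the \emph{ideal} RRSF mechanism run on the estimated weights, and to charge the agent's incentive to deviate to two small quantities: the weight‑estimation error, and the closeness of two assignment rows. First I would fix an agent $k$ with true type $t_k^{(i)}\in\supp(\DD_k)$ and a candidate misreport $t_k^{(j)}\in\supp(\DD_k)$, and recall from Definition~\ref{def:rersf} that RERSF represents agent $k$ by the optimal solution $\bm{q}^*$ of $(P^3)$ for the \emph{estimated} weights $\widetilde{\textbf{W}}^k$ together with the extra payment $\hat p_k(\cdot)$ of Definition~\ref{def:rrsf mechanism}: reporting $t_k^{(r)}$ makes her be represented by the surrogate of type $t_k^{(\ell)}$ with probability $q^*_{r,\ell}$, so her \emph{true} interim utility equals $\sum_\ell q^*_{r,\ell}W^k_{i,\ell}-\hat p_k(t_k^{(r)})$ (the realized outcome comes from the true mechanism $\MM$, hence the relevant values are the true weights $W^k_{i,\ell}$, not the estimates). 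Since the RRSF mechanism run with weights $\widetilde{\textbf{W}}^k$ is exactly BIC by Theorem~\ref{thm:RRSF is BIC and IR} applied to those weights, I would use $\sum_\ell q^*_{i,\ell}\widetilde W^k_{i,\ell}-\hat p_k(t_k^{(i)})\ge\sum_\ell q^*_{j,\ell}\widetilde W^k_{i,\ell}-\hat p_k(t_k^{(j)})$; subtracting this from the true utilities shows the gain from reporting $t_k^{(j)}$ is at most $\sum_\ell(q^*_{j,\ell}-q^*_{i,\ell})(W^k_{i,\ell}-\widetilde W^k_{i,\ell})\le\lVert q^*_{j,\cdot}-q^*_{i,\cdot}\rVert_1\cdot\max_\ell\lvert W^k_{i,\ell}-\widetilde W^k_{i,\ell}\rvert$.

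Next I would bound the two factors. For the estimation error, each $\widetilde W^k_{i,\ell}$ is the empirical mean of $L$ i.i.d.\ bounded samples of $\WW_{i,\ell}$, so Hoeffding plus a union bound over the at most $m$ surrogate types shows that $L>\tfrac12\zeta^{-2}\ln(2m/\zeta)$ makes $\max_\ell\lvert W^k_{i,\ell}-\widetilde W^k_{i,\ell}\rvert\le\zeta$ hold on an event of probability $1-o(1)$, whose complement I would absorb when taking expectations at the end. For $\lVert q^*_{j,\cdot}-q^*_{i,\cdot}\rVert_1$ I would exploit that, for a fixed surrogate type $t_k^{(\ell)}$, the $L$ samples $t_{-k}$ defining the column $\widetilde W^k_{\cdot,\ell}$ may be taken common across all replica indices, so the realized outcome $x(t_k^{(\ell)},t_{-k})$ does not depend on the replica index; hence $\widetilde W^k_{i,\ell}-\widetilde W^k_{j,\ell}$ is an average of $v_k(t_k^{(i)},o)-v_k(t_k^{(j)},o)$ over outcomes $o\in\OO_k$, and therefore, as $\ell$ varies, $\widetilde W^k_{i,\ell}-\widetilde W^k_{j,\ell}$ stays inside an interval of length at most the sampled‑outcome distinguishability $\widetilde d(t_k^{(i)},t_k^{(j)})$ of Definition~\ref{dfn: dist outcome pair}. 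So the two weight rows $\widetilde W^k_{i,\cdot}$ and $\widetilde W^k_{j,\cdot}$ agree up to an additive constant, which the simplex constraint $\sum_\ell q_{i,\ell}=1$ ignores, plus an $\ell_\infty$‑perturbation of size at most $\widetilde d(t_k^{(i)},t_k^{(j)})$.

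The last step, which I expect to be the \textbf{main obstacle}, is to convert this small change in the row weights into a small change $\lVert q^*_{i,\cdot}-q^*_{j,\cdot}\rVert_1=O\bigl(m\,\widetilde d(t_k^{(i)},t_k^{(j)})/\gamma\bigr)$ in the corresponding optimal rows. Strong concavity of the regularizer $\phi(\bm x)=\tfrac12\gamma\lVert\bm x\rVert_2^2$ in each row makes the optimal row stable under perturbations of its weights, but rows $i$ and $j$ are tied together by the constraints $\sum_{i'}F_{i'}q_{i',\ell}=F_\ell$ with heterogeneous $F_{i'}$, so a naive swap of the two rows is infeasible. Instead I would build a feasible competitor by rerouting $\ell_1$‑mass between rows $i$ and $j$ along augmenting cycles — the same kind of perturbation used in the proof of Lemma~\ref{lem: cycles} — and combine optimality of $\bm q^*$ with the $\gamma$‑strong concavity of $\phi$ to extract $\lVert q^*_{i,\cdot}-q^*_{j,\cdot}\rVert_1\le\tfrac{2m}{\gamma}\widetilde d(t_k^{(i)},t_k^{(j)})$; this cycle‑and‑strong‑convexity argument is what produces both the $1/\gamma$ and the $m$ in the final bound. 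Plugging the two estimates into the first step gives a gain of at most $\zeta\cdot\tfrac{2m}{\gamma}\widetilde d(t_k^{(i)},t_k^{(j)})$ on the good event, and taking expectation over the randomness of $\OO_k$ (bounding the negligible bad event crudely) yields the claimed $\tfrac{2\zeta m}{\gamma}\,\E[\widetilde d(t_k^{(i)},t_k^{(j)})]$. Since this is Lemma~4.7 of \cite{hartline2015bayesian}, a shortcut would be to invoke their argument directly.
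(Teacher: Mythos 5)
The paper does not prove this statement at all: it is imported verbatim as Lemma~4.7 of \cite{hartline2015bayesian}, so there is no internal proof to compare against and your proposal has to stand on its own. Your skeleton is the natural one and presumably close to the cited source: exact incentive compatibility of the RRSF assignment-plus-dual-payments with respect to the \emph{estimated} weights reduces the deviation gain to $\sum_\ell (q^*_{j,\ell}-q^*_{i,\ell})(W^k_{i,\ell}-\widetilde W^k_{i,\ell})$; sharing the $L$ samples of each surrogate column across replica rows makes the payment terms cancel, so rows $i$ and $j$ of $\widetilde{\textbf W}^k$ differ by a constant plus an $\ell_\infty$ perturbation of size $\widetilde d(t_k^{(i)},t_k^{(j)})$; and strong concavity of $\phi$ gives row stability of order $\widetilde d/\gamma$ per coordinate. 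Two remarks on this part: Theorem~\ref{thm:RRSF is BIC and IR} as stated here is only for the true weights, so you need the (true, but separate) fact that the RRSF allocation/payment pair is IC with respect to whatever weight matrix is fed into $(P^3)$; and for the stability step you do not need a cycle-rerouting argument at all --- rows $i$ and $j$ belong to the same optimal solution, so the KKT conditions give both rows the water-filling form $q^*_{r,\ell}=\max\{0,(\widetilde W_{r,\ell}-\pi^*_\ell-c_r)/\gamma\}$ with the \emph{same} dual prices $\pi^*_\ell$, from which $|q^*_{i,\ell}-q^*_{j,\ell}|\le 2\widetilde d/\gamma$ per coordinate follows directly.

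The genuine gap is your last step, ``absorb the bad event crudely when taking expectations.'' The point of the lemma --- and what Theorem~\ref{thm:non-ideal} needs --- is that the deviation gain is bounded \emph{multiplicatively} by $\E[\widetilde d(t_k^{(i)},t_k^{(j)})]$, because the strict-IC bonus from $SIC^k$ is itself proportional to $\E[\widetilde d]$ and must dominate it even when $\E[\widetilde d]$ is minuscule. Your accounting gives, on the Hoeffding good event, a term $\frac{2\zeta m}{\gamma}\E[\widetilde d\,\ind_{\text{good}}]$, plus a bad-event term of order $\frac{m}{\gamma}\E[\widetilde d\,\ind_{\text{bad}}]$; since $\widetilde d$, the estimated weights, and hence the bad event are all functions of the \emph{same} samples, you cannot bound this second term by anything proportional to $\E[\widetilde d]$ via $\Pr[\text{bad}]$ alone (Cauchy--Schwarz only yields $\sqrt{\E[\widetilde d]}\cdot\sqrt{\Pr[\text{bad}]}$, and $\E[\widetilde d]$ can be made arbitrarily small while $\Pr[\text{bad}]$ is fixed by $L,\zeta,m$). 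So the crude absorption produces an additive error that is not of the required form, and the stated constant $\frac{2\zeta m}{\gamma}$ is out of reach by this route; you would need either a pathwise bound that is surely proportional to $\widetilde d$, or a genuine decoupling of the estimation error from $\widetilde d$. (Your Hoeffding constants are also unverified: with $\WW_{i,j}\in[-1,1]$ the stated $L>\tfrac12\zeta^{-2}\ln(2m/\zeta)$ does not give per-entry failure probability $\zeta/m$.) Absent a fix for this step, the pragmatic course is the one the paper takes: cite Lemma~4.7 of \cite{hartline2015bayesian} directly.
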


In order to make our mechanism BIC, we will combine the above RERSF mechanism with a strictly IC mechanism defined below. In a strictly IC mechanism, for every agent, the utility by reporting her true type is strictly larger the utility by reporting any other types, by some positive value.

\begin{definition}[\cite{hartline2015bayesian}]\label{def:strict-ic-mech}
\emph{(Strict IC mechanism)} Fix any agent $k$. 
The strictly IC mechanism for agent $k$, denoted by $SIC^k$, consists of the following steps:


\begin{enumerate}
    \item Chooses uniformly at random a pair of distinguishable types $t_k^{(i)},t_k^{(j)}\in \supp(\DD_k)$. Let $(o,o')$ be the distinguishing outcome pair for $(t_k^{(i)},t_k^{(j)})$, where $o, o' \in \OO_k$.
    \item Let $v = v_k(t_k^{(i)},o) - v_k(t_k^{(i)},o')$ and $v' = v_k(t_k^{(j)},o) - v_k(t_k^{(j)},o')$ and define the price $p =\frac{v+v'}{2}$. Without loss of generality, assume $p \geq 0$, otherwise we can swap $o$ and $o'$.
    \item The mechanism lets agent $k$ select between the following two options: 1) The mechanism outputs an outcome $o$ with payment $p$ for agent $k$; 2) The mechanism outputs an outcome $o'$ with payment $0$ for agent $k$.
\end{enumerate}

\end{definition}

\begin{lemma}[Lemma~4.4 from \cite{hartline2015bayesian}]
\label{lem:deviate_SIC}
The $SIC^k$ mechanism for agent $k$ is strictly $IC$: The utility of agent $k$ by reporting her true type $t_k^{(i)}$ is at least $\frac{\E[\widetilde{d}(t_k^{(i)},t_k^{(j)})]}{2M}$ larger than the utility by reporting another type $t_k^{(j)}$. Here $M$ is the number of distinguishable pairs of types in $\supp(\DD_k)$, $\widetilde{d}(\cdot,\cdot)$ is the sampled outcome distinguishability of $(t_k^{(i)},t_k^{(j)})$ w.r.t. $\OO_k$, and the expectation is taken over the randomness of $\OO_k$.
\end{lemma}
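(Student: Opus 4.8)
The plan is to use the fact that, conditioned on its internal random draw of a distinguishable pair, $SIC^k$ is nothing more than a take-it-or-leave-it choice between two fixed outcomes; reporting truthfully is then always a (weak) best response, and the claimed \emph{strict} gain arises exactly when the draw happens to be the pair $\{t_k^{(i)},t_k^{(j)}\}$.

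First I would fix a realization of $\OO_k$ and condition on $SIC^k$ drawing a particular distinguishable pair, with distinguishing outcomes $(o,o')$ and price $p=\tfrac{v+v'}{2}$, where $v$ and $v'$ are the two types' value differences between $o$ and $o'$ and $o,o'$ are ordered so that $p\ge 0$. Conditioned on this, an agent of true type $t$ who reports $\hat t$ is handed whichever of the two options (outcome $o$ at price $p$, or outcome $o'$ at price $0$) the report $\hat t$ prefers, so her realized utility is $v_k(t,o)-p$ or $v_k(t,o')$ accordingly. Reporting $\hat t=t$ therefore delivers $\max\{v_k(t,o)-p,\ v_k(t,o')\}$, which dominates the utility from any other report — for \emph{every} pair the mechanism might draw. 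This establishes weak incentive compatibility and reduces the lemma to bounding the strict loss from misreporting $t_k^{(j)}$ on the single pair $\{t_k^{(i)},t_k^{(j)}\}$, which is drawn with probability $1/M$.

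Next I would carry out that computation. Condition on the draw being the pair $\{t_k^{(i)},t_k^{(j)}\}$ (the price is unchanged under swapping the two types, as $v+v'$ is symmetric). Writing $v=v_k(t_k^{(i)},o)-v_k(t_k^{(i)},o')$ and $v'=v_k(t_k^{(j)},o)-v_k(t_k^{(j)},o')$, Definition~\ref{dfn: dist outcome pair} gives $|v-v'|=\widetilde d(t_k^{(i)},t_k^{(j)})>0$ since $(o,o')$ is the distinguishing outcome pair in $\OO_k$. Then $v_k(t_k^{(i)},o)-p-v_k(t_k^{(i)},o')=\tfrac{v-v'}{2}$ and $v_k(t_k^{(j)},o)-p-v_k(t_k^{(j)},o')=\tfrac{v'-v}{2}$ have strictly opposite signs, so the two types disagree about which option is better; hence an agent of true type $t_k^{(i)}$ who reports $t_k^{(j)}$ is steered onto the option that is worse for her, losing exactly $\bigl|\tfrac{v-v'}{2}\bigr|=\tfrac12\widetilde d(t_k^{(i)},t_k^{(j)})$ relative to truthtelling. (The $p\ge 0$ normalization may swap $o,o'$, but the magnitude of this gap is invariant under that swap.)

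Finally I would assemble: for each fixed $\OO_k$, averaging over the uniformly random distinguishable pair, truthtelling beats reporting $t_k^{(j)}$ by at least $\tfrac1M\cdot\tfrac12\widetilde d(t_k^{(i)},t_k^{(j)})$ — the pair $\{t_k^{(i)},t_k^{(j)}\}$ contributes exactly this, every other pair contributes something nonnegative, and if $t_k^{(i)},t_k^{(j)}$ fail to be distinguishable for this $\OO_k$ then $\widetilde d=0$ and the bound is trivial. Taking expectation over the draw of $\OO_k$ yields the stated bound $\tfrac{\E[\widetilde d(t_k^{(i)},t_k^{(j)})]}{2M}$. I expect the only delicate point to be the sign bookkeeping in the third step — making sure that under the $p\ge0$ normalization the misreporting agent genuinely lands on the dominated option — while everything else is routine.
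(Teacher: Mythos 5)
Your proof is correct. Note that the paper itself contains no proof of this statement---it is imported as Lemma~4.4 of \cite{hartline2015bayesian}---and your argument is exactly the standard one behind that result: the mechanism executes the reported type's preferred choice from the two-option menu (the one reading of Definition~\ref{def:strict-ic-mech} under which strictness can hold, and which you correctly adopt), so truthtelling weakly dominates for every drawn pair, and on the probability-$1/M$ event that the drawn pair is $\{t_k^{(i)},t_k^{(j)}\}$ the two types' preference gaps are $\pm\tfrac{v-v'}{2}$ with $|v-v'|=\widetilde d(t_k^{(i)},t_k^{(j)})$, so misreporting loses exactly $\widetilde d/2$; averaging over the pair and then over $\OO_k$ gives the stated bound, with your handling of the $p\ge 0$ swap and of pairs that are not distinguishable w.r.t.\ $\OO_k$ being the right bookkeeping.
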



Now we are ready to construct our mechanism for the non-ideal model.

\begin{definition}\label{def:mech-non-ideal}
For any $\delta>0$, define mechanism $\MM'$ as follows. The mechanism has parameter $\gamma,\eta,L$ (from the RERSF mechanism) and $\delta,C>0$. With probability $1-\delta$, it runs the RERSF mechanism (with parameter $\gamma,\eta,L$); With probability $\delta$, it picks an agent $k$ uniformly at random and runs $SIC^k$ for agent $k$. The payment for agent $k$ is defined in Definition~\ref{def:strict-ic-mech}. Other agents pays 0. At last, the mechanism subsidizes every agent $C$.
\end{definition}

We wrap everything up in Theorem~\ref{thm:non-ideal}. We prove that with a proper choice of each parameter, mechanism $\MM'$ in Definition~\ref{def:mech-non-ideal} is BIC and IR. Also the revenue loss of $\MM'$ is small.


\begin{theorem}\label{thm:non-ideal}
There exist choices for the parameters $\gamma = \varepsilon, \delta = \varepsilon, \eta = \varepsilon, C=4\varepsilon$ and $L=\poly(n,m,\frac{1}{\varepsilon})$
such that the mechanism $\MM'$ in Definition~\ref{def:mech-non-ideal} is BIC, IR and $\rev(\MM')\geq \rev(\MM) - O(nm\varepsilon)$. Mechanism $\MM'$ has expected running time $\poly \left(n, m, 1/\varepsilon,b\right)$ and makes {in expectation} at most {$\poly\left(n,m,1/\varepsilon\right)$}~queries~to~$\MM$. Here $b$ is an upper bound on the bit complexity of $v_k(t_k^{(i)}, o), F_i^k, o$ for every $k \in [n], i \in [m_k], o \in \OO$.
\end{theorem}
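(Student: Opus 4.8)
The plan is to instantiate the parameters of Definition~\ref{def:mech-non-ideal} as $\gamma=\delta=\eta=\varepsilon$, $C=4\varepsilon$, and $L=\poly(n,m,1/\varepsilon)$ chosen large enough to meet all the sample-size requirements invoked below, and then to verify the three claims (BIC, IR, revenue) plus the complexity bound in turn. Throughout, the internal sampling of $\MM'$ (the outcome set $\OO_k$ and the empirical weights $\widetilde{W}^k_{i,j}$) is treated as part of the mechanism's randomization, so all guarantees are ex ante over it; as in the rest of Section~\ref{sec:fractional}, $\MM$ is assumed $\varepsilon$-BIC and IR.

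\textbf{BIC.} Fix an agent $k$ and two of her types $t_k^{(i)},t_k^{(j)}$. Since $C$ is paid regardless of the report, the difference between the expected utility of reporting $t_k^{(i)}$ truthfully and of reporting $t_k^{(j)}$ splits over the two branches of $\MM'$. With probability $1-\delta$ the RERSF branch runs, and Lemma~\ref{lem:deviate_RERSF} (applied with a parameter $\zeta$ to be fixed) bounds the gain from misreporting by $\frac{2\zeta m}{\gamma}\widetilde{d}(t_k^{(i)},t_k^{(j)})$. With probability $\delta/n$ the branch $SIC^k$ runs, and Lemma~\ref{lem:deviate_SIC} guarantees truthtelling is better there by at least $\frac{1}{2M}\widetilde{d}(t_k^{(i)},t_k^{(j)})$, where $M<m^2$ is the number of distinguishable pairs of types in $\supp(\DD_k)$; with the remaining probability of the $SIC$ branch a different agent is chosen and agent $k$'s report is irrelevant. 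Both bounds are proportional to the \emph{same} sampled-outcome distinguishability $\widetilde{d}(t_k^{(i)},t_k^{(j)})\ge 0$, so the net difference is at least $\widetilde{d}(t_k^{(i)},t_k^{(j)})\bigl(\frac{\delta}{2nM}-(1-\delta)\frac{2\zeta m}{\gamma}\bigr)$, and it suffices to take $\zeta\le\frac{\delta\gamma}{4nMm}$. With $\delta=\gamma=\varepsilon$ and $M<m^2$ this is $\zeta=\Theta(\varepsilon^2/(nm^3))$, for which the requirement $L>\frac12\zeta^{-2}\ln(2m/\zeta)$ of Lemma~\ref{lem:deviate_RERSF} remains $L=\poly(n,m,1/\varepsilon)$.

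\textbf{IR, revenue, and complexity.} For IR, Lemma~\ref{lem:rev_RERSF} gives each truthful agent expected utility at least $-3\eta$ in the RERSF branch; in the $SIC$ branch every agent receives a nonnegative-value outcome and can always avoid paying (the selected agent by choosing the zero-price option of $SIC^k$, the others trivially), so her utility there is nonnegative; adding the subsidy $C$, the total is at least $-3\eta(1-\delta)+C\ge 4\varepsilon-3\varepsilon>0$. For revenue, since $\MM'$ is BIC we evaluate under truthful play: $\rev(\MM')=(1-\delta)\rev(\mathrm{RERSF})+\delta\cdot(\text{revenue of }SIC)-nC$. By Lemma~\ref{lem:rev_RERSF}, $\rev(\mathrm{RERSF})\ge\rev(\MM)-nm(\varepsilon+6\gamma+5\eta)$; the $SIC$ branch has nonnegative payments; and $\rev(\MM)\le n$ because $\MM$ is IR with valuations in $[0,1]$. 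Combining, $\rev(\MM')\ge\rev(\MM)-\delta\rev(\MM)-nm(\varepsilon+6\gamma+5\eta)-nC\ge\rev(\MM)-O(nm\varepsilon)$. For the complexity: building the $\OO_k$'s and the $\widetilde{W}^k_{i,j}$'s uses $nm^2L=\poly(n,m,1/\varepsilon)$ queries to $\MM$ and as many utility evaluations; the program $(P^4)$ is a polynomial-size concave quadratic program (linear constraints, objective $\sum_i F_i(\hat{W}_{i,j}q_{i,j}-\frac12\gamma\|\bm q_i\|_2^2)$) solvable in time $\poly(n,m,1/\varepsilon,b)$; and assembling $SIC^k$ only scans $O(|\OO_k|^2)$ outcome pairs — so $\MM'$ runs in expected time $\poly(n,m,1/\varepsilon,b)$ and makes $\poly(n,m,1/\varepsilon)$ queries to $\MM$.

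\textbf{Main obstacle.} The crux is the BIC step, and what makes it go through is that the deviation gain of the non-truthful RERSF mechanism is bounded by a quantity \emph{proportional to the sampled-outcome distinguishability} — precisely the quantity rewarded by the strictly-IC gadget $SIC^k$. This lets the two effects be compared type-pair by type-pair and the distinguishability factored out, reducing matters to the single inequality $\zeta\lesssim\delta\gamma/(nm^3)$; the remaining work is bookkeeping — choosing one $(\gamma,\delta,\eta,C,L)$ that simultaneously makes this inequality hold, keeps the RERSF revenue loss $O(nm\varepsilon)$, makes $C$ dominate $3\eta$ for IR, and keeps $L$ polynomial — which the stated choice does.
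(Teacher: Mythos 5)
Your proposal is correct and follows essentially the same route as the paper's proof: it combines Lemma~\ref{lem:deviate_RERSF}, Lemma~\ref{lem:deviate_SIC}, and Lemma~\ref{lem:rev_RERSF}, chooses $\zeta=\Theta(\varepsilon^2/(nm^3))$ with $L=\poly(n,m,1/\varepsilon)$, and verifies BIC by comparing the two branches' deviation terms (both proportional to the sampled distinguishability), IR via the $-3\eta$ bound plus the subsidy $C$, revenue via the RERSF loss bound with nonnegative $SIC$ payments, and the same complexity accounting. Your only deviations are cosmetic and valid: you use nonnegativity of the truthful $SIC$ utility where the paper conservatively uses $-1+C$, and you explicitly subtract the subsidy $nC$ in the revenue bound, which the paper's displayed inequality leaves implicit inside the $O(nm\varepsilon)$ term.
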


\begin{proof}

Let $\zeta= \frac{\varepsilon^2}{4nm^3}$ and $L = \bigg\lceil\max \bigg(\frac{2}{\varepsilon^2}\ln\left( \frac{2nm^2}{\varepsilon} \right), {\frac{8m^6n^2}{\varepsilon^4}\ln\left(\frac{8m^4n}{\varepsilon^2}\right)}\bigg)\bigg\rceil$. We first show that the mechanism is BIC. Fix any agent $k \in [n]$ and her type $t_k^{(i)} \in \supp(\DD_k)$. We are going to bound the difference between the utility of reporting truthfully and the utility of reporting type $t_k^{(j)}\in \supp(\DD_k)$, for any $j\not=i$.
Note that $L > \frac{1}{2}\zeta^{-2} \ln(2m/\zeta)$. By Lemma~\ref{lem:deviate_RERSF}, when $\MM'$ executes the RERSF mechanism (with probability $1-\delta$), the difference is at least $-\frac{2\zeta m}{\gamma}\E[\widetilde{d}(t_k^{(i)},t_k^{(j)})]$. By Lemma~\ref{lem:deviate_SIC}, when $\MM'$ executes $SIC^k$ (with probability $\delta/n$), the difference is at least $\frac{\E[\widetilde{d}(t_k^{(i)},t_k^{(j)})]}{2m^2}$, since the number of distinguishable pairs of types w.r.t. $\OO_k$ is at most $m^2$. When $\MM'$ executes $SIC^{\ell}$ for some $\ell\not=k$, the difference is 0 as agent $k$'s reported type won't affect the output outcome and her payment. Thus the difference between the expected utility when truthfully reporting and misreporting is at least
$$-\frac{2\zeta m}{\gamma}\E[\widetilde{d}(t_k^{(i)},t_k^{(j)})]\cdot (1-\delta)+\frac{\E[\widetilde{d}(t_k^{(i)},t_k^{(j)})]}{2m^2}\cdot \frac{\delta}{n}$$

The above value is non-negative when $\delta=\gamma=\varepsilon$ and $\zeta= \frac{\varepsilon^2}{4nm^3}$. Thus the mechanism is BIC.



Furthermore, since $L\geq \frac{2}{\eta^2}\ln\left( \frac{2nm^2}{\eta} \right)$, by Lemma~\ref{lem:rev_RERSF}, the expected utility of agent $k$, given the fact that $\MM'$ executes the RERSF mechanism, is at least $C-3\varepsilon$. When $\MM'$ executes $SIC^{k}$, the utility for agent $k$ is at least $-1+C$. This is because by Definition~\ref{def:strict-ic-mech}, the payemnt of the strict IC mechanism is at least -1 and the agent's value for the outcome is non-negative.
When $\MM'$ executes $SIC^{\ell}$ for some $\ell\not=k$, agent $k$'s utility is at least $C$, since her value for the outcome is non-negative. Thus agent $k$'s overall expected utility by reporting her true type $t_{k}^{(i)}$ is at least

$$(C-3\varepsilon)\cdot (1-\delta)+(-1+C)\cdot\frac{\delta}{n}+C\cdot\frac{\delta(n-1)}{n}\geq 0$$ 

The above inequality holds when $\delta=\varepsilon$ and $C=4\varepsilon$. Thus $\MM'$ is IR. To bound the revenue loss, note that the payment in mechanism $SIC^k$ is always non-negative. By Lemma~\ref{lem:rev_RERSF}, we have

$$    \rev(\MM')\geq (1-\delta)\cdot\left(\rev(\MM)-nm(\varepsilon+6\gamma+5\eta)\right) \geq \rev(\MM) -n(12m+1)\varepsilon
$$

The last inequality follows from the fact that $\rev(\MM)\leq n$.

Finally we discuss the running time of $\MM'$ and number of queries to $\MM$. For a given input, in step 1 of the RERSF mechanism (see Definition~\ref{def:rersf}), the mechanism makes $L$ queries to $\MM$ for every $k\in [n]$ and $i,j\in [m_k]$. In step 2 of RERSF, it makes one query to obtain the output outcome for a given input. The strictly IC mechanism does not make queries to $\MM$. Overall, $\MM'$ makes at most $O(mL)=\poly(n,m,1/\varepsilon)$ queries to $\MM$.  

For the running time of $\MM'$, first consider the RERSF mechanism. By~\cite{KOZLOV1980223}, the convex program $(P^3)$ and its dual problem can be solved in $poly(n,m,L,b)$ time. Thus all the parameters in the RRSF is determind in $poly(n,m,L,b)$ time. One can easily verify that other procedures of RERSF runs in time $poly(n,m,L,b)$. Thus RERSF runs in $poly(n,m,1/\varepsilon,b)$ time.

Now consider the running time of the strictly IC mechanism $SIC^k$. Note that finding all the distinguishable types with respect to $\OO_k$ takes time at most $O(m^4L^2)$, by going over all pairs of types $(t_k^{(i)},t_k^{(j)})$ and pairs of outcomes $(o,o')$. Thus $SIC^k$ runs in time $poly(n,m,1/\varepsilon)$.
Altogether, $\MM'$ runs in time $poly(n,m,1/\varepsilon,b)$.
\end{proof}



\vspace{.2in}

Similar to Section~\ref{sec:apps}, we can apply Theorem~\ref{thm:non-ideal} to the multi-dimensional revenue maximization problem and derive the following theorem. The proof of Theorem~\ref{thm:vwo_reduction_general} is analogous to the proof of Theorem~\ref{thm:app2}, following from Theorem~\ref{thm:non-ideal} and Theorem~\ref{thm:rev-max}.

\begin{theorem}
\label{thm:vwo_reduction_general}
Let $\OO$ be a general outcome space.
Given \textbf{full access} to agents' type distributions $\DD=\bigtimes_i \DD_i$. Let $b$ be an upper bound on the bit complexity of $v_i(t_i,o)$ and $\Pr(t_i)$ for any agent $i$, any type $t_i$, and any outcome $o$, and $\opt$~be the optimal revenue achievable by any BIC and IR mechanism. We further assume that  types are normalized, that is, for each agent $i$, type $t_i$ and outcome $o$,
$v_i(t_i,o)\in [0,1]$.

 Given oracle access to an $\alpha$-approximation algorithm $G$ for VWO
with running time $rt_G(x)$, where $x$ is the bit complexity of the input,
there is an algorithm that terminates in $\poly\left(n, m,\frac{1}{\varepsilon},b,rt_G\left(\poly\left( n, m,\frac{1}{\varepsilon},b \right)\right)\right)$ time, and outputs an \textbf{exactly BIC} and IR mechanism with expected revenue $$
\rev(\MM,\DD) \geq \alpha\cdot OPT - O\left(n  m \varepsilon \right),$$

where $m = \max_{i\in [n]} |\supp(\DD_i)|$. On any input bid, $\MM$ computes the outcome and payments in expected running time $\poly\left(n, m,\frac{1}{\varepsilon},b,rt_G\left(\poly\left( n, m,\frac{1}{\varepsilon},b \right)\right)\right)$.

\notshow{
Given oracle access to Algorithm~G (Theorem~\ref{thm:rev-max}) and only sample access to the bidders' distributions $\dD_1,\ldots,\dD_n$,
we can compute a exactly BIC Mechanism $\MM$ in time $\poly\left(\sum_i{|T_i|},\frac{1}{\varepsilon},b,rt_G\left(\poly\left( \sum_i{|T_i|},\frac{1}{\varepsilon},b \right)\right)\right)$
with expected revenue

$$
\rev(\MM,\DD) \geq \alpha OPT -  O\left(n \sqrt{\varepsilon} \right). 
$$
}


\end{theorem}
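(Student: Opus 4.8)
The plan is to prove Theorem~\ref{thm:vwo_reduction_general} by composing the black-box reduction of Cai et al.\ (Theorem~\ref{thm:rev-max}) with the $\varepsilon$-BIC to BIC transformation for general outcome spaces (Theorem~\ref{thm:non-ideal}), exactly mirroring the proof of Theorem~\ref{thm:app2}. First I would run Theorem~\ref{thm:rev-max} with accuracy parameter $\varepsilon$ to obtain a mechanism, which with probability at least $1-\exp(-n/\varepsilon)$ is $\varepsilon$-BIC and IR w.r.t.\ $\DD$ with expected revenue at least $\alpha\cdot\opt-\varepsilon$. Since we are given \emph{full access} to $\DD$, I would then apply the transformation of Theorem~\ref{thm:non-ideal} to this mechanism (call it $\MM$) to obtain $\MM'$. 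The first thing to observe is that the BIC and IR properties of $\MM'$ hold \emph{unconditionally}: truthfulness follows from Lemmas~\ref{lem:deviate_RERSF} and~\ref{lem:deviate_SIC} combined via the parameter choice in Definition~\ref{def:mech-non-ideal}, and individual rationality follows from the unconditional utility lower bound in Lemma~\ref{lem:rev_RERSF} together with the subsidy $C$; neither argument uses the hypothesis that $\MM$ is $\varepsilon$-BIC. On the good event, the revenue guarantee of Theorem~\ref{thm:non-ideal} gives $\rev(\MM',\DD)\geq \rev(\MM,\DD)-O(nm\varepsilon)\geq \alpha\cdot\opt-\varepsilon-O(nm\varepsilon)$.

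Next I would handle the (exponentially unlikely) bad event where $\MM$ fails to be $\varepsilon$-BIC. Because valuations lie in $[0,1]$ and $\MM$ is IR, and because the only negative payments in $\MM'$ are the per-agent subsidy $C=O(\varepsilon)$ and the corrections $\hat p_k$, which by Lemma~\ref{lem:lower bound hat-p} applied with the trivial bound $\varepsilon'=\varepsilon''=1$ are at least $-O(m)$ per agent, we get $\rev(\MM',\DD)\geq \rev(\MM,\DD)-O(nm)\geq -O(nm)$ no matter how badly $\MM$ behaves. Multiplying this worst-case bound by the failure probability $\exp(-n/\varepsilon)$ contributes only a negligible $O(nm\varepsilon)$ term, so taking the expectation over the internal randomness of the construction yields $\rev(\MM,\DD)\geq \alpha\cdot\opt-O(nm\varepsilon)$, where from here on $\MM$ denotes the final composed mechanism, as in the statement. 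I would also note that $m=\max_i|\supp(\DD_i)|=\max_i|\TT_i|=T$, so the revenue and time bounds are consistent with those in Theorem~\ref{thm:rev-max}.

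Finally I would account for running time and query complexity. Theorem~\ref{thm:rev-max} constructs its mechanism in time $\poly\left(n,m,\tfrac1\varepsilon,b,rt_G\left(\poly(n,m,\tfrac1\varepsilon,b)\right)\right)$, and each query to that mechanism on a bid profile costs the same; Theorem~\ref{thm:non-ideal}'s transformation adds only $\poly(n,m,\tfrac1\varepsilon,b)$ overhead both at construction time and per bid, and makes at most $\poly(n,m,\tfrac1\varepsilon)$ queries to $\MM$. Composing these bounds gives the claimed construction time and the claimed per-bid running time of the output mechanism.

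I expect the only genuinely delicate point to be the bad-event accounting: one must verify that the BIC/IR guarantees of $\MM'$ really do not rely on $\MM$ being $\varepsilon$-BIC (which is the case, by inspection of the proof of Theorem~\ref{thm:non-ideal}), and that the worst-case revenue of $\MM'$ cannot be worse than $-\poly(n,m)$, so that the exponentially small failure probability is absorbed into the $O(nm\varepsilon)$ error. Everything else is a direct, essentially mechanical, invocation of the two cited theorems, just as in the proof of Theorem~\ref{thm:app2}.
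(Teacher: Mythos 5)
Your proposal is correct and follows exactly the route the paper takes: the paper gives no separate argument for Theorem~\ref{thm:vwo_reduction_general}, stating only that the proof is analogous to that of Theorem~\ref{thm:app2} by composing Theorem~\ref{thm:rev-max} with Theorem~\ref{thm:non-ideal}, which is precisely your construction, including the absorption of the exponentially unlikely non-$\varepsilon$-BIC event into the $O(nm\varepsilon)$ error term. Your extra care in checking that BIC/IR of the transformed mechanism do not rely on $\MM$ being $\varepsilon$-BIC and in bounding the worst-case revenue by $-O(nm)$ is exactly the (implicit) content of the "analogous" step.
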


\bibliographystyle{plain}
\bibliography{Yang}
\newpage
\appendix

\section{Estimating $\gamma$: Approximating the Offline Optimum of the Regularized Matching}\label{appx:offline optimal approximation}

In this section, we show how to estimate the parameter $\gamma$ so that it is a constant factor approximation to optimum of program $(P'')$ (see Definition~\ref{def:augmented entropy regularized matching}) on the replica-surrogate matching in Mechanism~\ref{alg:mout} with high probability. Importantly, the estimate is completely independent from the agent's reported type. Here is the basic idea. 
We sample the edge weights between $\Br'$ and $\Bs$, and use the empirical mean to compute the optimal solution of program $(P'')$. We show that with polynomially many samples, the optimum of $(P'')$ computed based on the empirical means is a constant approximation to the optimum of $(P'')$ on the true edge weights 
 with probability almost $1$.


\notshow{
We use $\omega_{jk}(\Br)$ to denote $W_{i}(\rj,\sk)$ and $\hw_{jk}(\Br)$ to denote its empirical 

we consider the surrogate profile fixed.
Let $\bm{r'}$ be a new independent draw of $d \ell$ replicas.
We consider $\omega_{jk}(\bm{r})$ to be the expected weight from replica $i$  to surrogate $j$ under replica profile $\bm{r}$ and $\hat{\omega}_{jk}(\bm{r})$ its empirical estimation with $N$ samples. Let $\bm{\omega}(\bm{r})$ and $\bm{\hat{\omega}}(\bm{r})$ be the vectors of the expected weights and their empirical estimates respectively, under replica profile $\bm{r}$.
We prove that the optimal solution $OPT(\bm{\omega}\bm{(r)})$ under surrogate profile $\bm{r}$ and the true weights is close to the optimal solution $OPT(\bm{\hat{\omega}}\bm{(r')})$ under surrogate profile $\bm{r'}$ and the estimated weights. We do that in two steps, similarly to \cite{dughmi2017bernoulli}.
}

\begin{lemma}\label{lem:find gamma}
For any agent $i$, given parameters $\ell$, $\delta$, $\eta'$, and $d \geq 32 \frac{\log(8 \eta'^{-1})}{\delta^2 \ell \log^2(\ell)}$, fix $\Bs$ to be the $\ell$ surrogates, first draw $d\ell$ fresh samples from $\DD'_i$, which we denote using   $\Br'$. We use $\opt ({\omega})$ to denote the optimum of $(P'')$ when the edge weight between the $j$-th replica/LHS-node and the $k$-th normal surrogate/normal RHS-node is $\omega_{jk}$.
 There exists a randomized algorithm based  only on $\Br'$ and $\Bs$ that computes a $\gamma$ that lies in $\left[ \frac{2\cdot \opt ({\omega}(\Br'))}{d}, \frac{24\cdot \opt ({\omega}(\Br'))}{d}\right]$ with probability at least $1-\eta'/2$, where  
$\omega_{jk}(\Br')=W_i(r'^{(j)},\sk)$ as defined in Mechanism~\ref{alg:mout}. Moreover, $\gamma$ is at most $O\left(\max\{\ell,\delta\ell\log (\ell)\}\right)$ and the algorithm has $\poly(d, \ell, 1/\eta', 1/\delta)$ running time and makes $\poly(d, \ell, 1/\eta', 1/\delta)$ queries to mechanism $\MM$.

Furthermore, if $\Br$ are $d\ell$ i.i.d. samples from $\DD'_i$, then $\opt(\omega(\Br'))$ lies in $\left[ {\opt({\omega}(\Br))\over 2 },  {3\cdot \opt({\omega}(\Br))\over 2 }\right]$ with probability at least $1-\eta'/2$ over the randomness of $\Br$ and $\Br'$, 
where $\omega_{jk}(\Br)=W_i(\rj,\sk)$. In this case, $\gamma$ also lies in $\left[ {\opt({\omega}(\Br)) \over d},  {36\cdot \opt({\omega}(\Br))\over d} \right]$ with probability at least $1-\eta'$.
\end{lemma}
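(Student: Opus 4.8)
The plan is to follow the template of Dughmi et al.~\cite{dughmi2017bernoulli} for estimating the offline optimum of an entropy-regularized matching, adapting it to program $(P'')$ whose edge weights may be negative. Since $\opt(P'')=\opt(P')$ by Lemma~\ref{lem:P' and P''}, and $(P')$ is a nonnegative-weight entropy-regularized matching on weights $\zeta_\delta(\omega_{jk})\in[\delta\log 2,\,\delta\log(2e^{1/\delta})]=[\delta\log 2,\,1+\delta\log 2]$, I will actually work with $(P')$ throughout: the reduction in Section~\ref{subsec:reduction-arbi-weights} means a sample of $\omega_{jk}$ (obtained by drawing $t_{-i}\sim\DD_{-i}$ and querying $\MM$ on $(s^{(k)},t_{-i})$, as in step~6 of Mechanism~\ref{alg:mout}) suffices to run every primitive we need. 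First I would record the crude a priori bounds $0\le \opt(P')\le d\ell\cdot(1+\delta\log 2)$ and $\opt(P')\ge d\ell\cdot\delta\log 2$ (the all-self matching, or any perfect matching, has nonnegative weight and entropy at least... actually the feasible solution $z_{jk}=1/\ell$ has objective $\ge \delta\log\ell\cdot$ nothing negative, giving $\opt(P')\ge \delta d\ell\log\ell$ from the entropy term alone once weights are nonnegative). These give the ``$\gamma$ is at most $O(\max\{\ell,\delta\ell\log\ell\})$'' claim after dividing by $d$, modulo constants.

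\textbf{Step 1 (empirical optimum is a constant-factor estimate of the true optimum under the same replica profile).} Fix $\Bs$ and $\Br'$. For each edge draw $N=\poly(\ell,1/\delta,1/\eta')$ i.i.d.\ samples of $\omega_{jk}(\Br')$, form empirical means $\hw_{jk}$, and solve $(P'')$ (equivalently $(P')$) with these plugged-in weights; call the optimum $\widehat{\opt}$. By a Hoeffding bound plus a union bound over the $d\ell^2$ edges, with probability $\ge 1-\eta'/4$ all $|\hw_{jk}-\omega_{jk}(\Br')|\le \tau$ for $\tau$ as small as we like. A Lipschitz argument for the objective of $(P')$ in the weights — the objective changes by at most $\sum_{j,k}z_{jk}|\Delta\zeta_\delta(\omega_{jk})|\le \sum_{j,k}z_{jk}|\Delta\omega_{jk}|\le d\ell\tau$ since $\zeta_\delta$ is $1$-Lipschitz and $\sum_k z_{jk}=1$ — shows $|\widehat{\opt}-\opt(P'(\Br'))|\le d\ell\tau$. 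Combining with the lower bound $\opt(P'(\Br'))\ge \delta d\ell\log\ell\ge \Omega(d\ell\log\ell/\ (\text{stuff}))$... here I must be careful: I want a \emph{multiplicative} two-sided bound, so I need $\tau$ small relative to $\opt(P'(\Br'))/(d\ell)$, i.e.\ $\tau=\Theta(\delta\log\ell)$ suffices to get, say, $\widehat{\opt}\in[\tfrac12\opt(P'(\Br')),\,\tfrac32\opt(P'(\Br'))]$, which needs only $N=\poly(\log\ell,1/\delta,\log(1/\eta'))$ samples per edge. Then set $\gamma = \Theta(\widehat{\opt}/d)$ with the right constants so that $\gamma\in[2\opt(P''(\Br'))/d,\,24\opt(P''(\Br'))/d]$; the slack between $2$ and $24$ is exactly what absorbs the $\tfrac12$ vs.\ $\tfrac32$ multiplicative error and any extra cushion needed below.

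\textbf{Step 2 (two fresh replica profiles give comparable optima).} Now I argue $\opt(P''(\Br))$ and $\opt(P''(\Br'))$ are within a factor of $\tfrac32$ of each other (in either direction) with probability $\ge 1-\eta'/2$ over the joint draw of the two i.i.d.\ replica sets. This is a concentration-of-the-optimum argument: the function $\Br\mapsto \opt(P'(\Br))$ is, after normalizing by $d\ell$, a bounded function of the $d\ell$ i.i.d.\ replicas with bounded differences — changing one replica changes the optimal objective by $O(1+\delta\log\ell)$ (its single incident constraint-row of weights moves in $[0,1+\delta\log 2]$ and can shift the matched mass by $1$, plus an entropy term bounded by $\delta\log\ell$ per row). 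McDiarmid's inequality then gives deviation $O(\sqrt{d\ell}\cdot(1+\delta\log\ell))$ from its mean with probability $1-\exp(-\Omega(\eta'^2))$ under the hypothesis $d\ge 32\log(8/\eta')/(\delta^2\ell\log^2\ell)$; dividing by the scale $\opt\asymp \delta d\ell\log\ell$ turns this additive deviation into a multiplicative $O(1)$ deviation, and the precise constant in the hypothesis on $d$ is chosen so that this is at most a $\tfrac14$ relative error, whence both $\opt(P''(\Br))$ and $\opt(P''(\Br'))$ lie within $\tfrac14$ of their common mean and hence within $[\tfrac12,\tfrac32]$ of each other. Chaining Steps~1 and~2 by a union bound ($\eta'/4+\eta'/4+\eta'/2$... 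I'll re-budget as $\eta'/2$ for Step~1 and $\eta'/2$ for Step~2 to land exactly at $1-\eta'$) and tracking the multiplicative constants gives $\gamma\in[\opt(P''(\Br))/d,\,36\,\opt(P''(\Br))/d]$.

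\textbf{Main obstacle.} The delicate point is Step~2: getting a genuinely \emph{multiplicative} two-sided comparison between the two random optima, which forces the bounded-differences deviation $O(\sqrt{d\ell}\cdot(1+\delta\log\ell))$ to be dominated by $\opt\asymp \delta d\ell\log\ell$ — this is precisely where the hypothesis $d\ge 32\log(8/\eta')/(\delta^2\ell\log^2\ell)$ is used, and bookkeeping the constants ($2$–$24$ in the first claim, $1$–$36$ in the last) so that every rounding of a multiplicative factor is absorbed by the stated slack is the fiddly part. I should also double-check the a priori lower bound $\opt(P')=\Omega(\delta d\ell\log\ell)$: in the regime $\delta\log\ell\gtrsim 1$ this is the dominant term, while in the regime $\delta\log\ell\lesssim 1$ the weight term $\Omega(d\ell)$ dominates (a perfect matching exists with all weights $\ge\delta\log 2$... actually $\ge$ a constant only if original weights can be made nonnegative, which they are after the $\zeta_\delta$ transform), so in all cases $\opt(P')=\Omega(d\ell\max\{1,\delta\log\ell\}/\text{const})$, matching both the upper bound $O(d\ell\max\{1,\delta\log\ell\})$ used for the ``$\gamma\le O(\max\{\ell,\delta\ell\log\ell\})$'' claim and the denominator needed in Step~2. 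Everything else — Hoeffding, McDiarmid, the $1$-Lipschitzness of $\zeta_\delta$, the $\poly$ running time of solving the convex program $(P'')$ via~\cite{KOZLOV1980223}, and the query count being $N\cdot d\ell^2=\poly(d,\ell,1/\eta',1/\delta)$ — is routine.
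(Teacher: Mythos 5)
Your plan follows the paper's proof essentially step for step: Step 1 is the paper's Lemma~\ref{lemma:h1} (estimate every edge weight to additive accuracy $\Theta(\delta\log\ell)$ via a Chernoff/Hoeffding bound and a union bound over the $d\ell^2$ edges, then use the a priori lower bound $\opt\geq \delta d\ell\log\ell$ coming from the uniform fractional assignment to convert the additive error $O(d\ell\cdot\delta\log\ell)$ into a constant multiplicative guarantee), and Step 2 is the paper's Lemma~\ref{lemma:h2} (McDiarmid over the $d\ell$ i.i.d.\ replicas, again normalized by $\delta d\ell\log\ell$). Two points in your sketch need repair before the constants in the statement are actually met. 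First, your bounded-difference constant $O(1+\delta\log\ell)$ is both loose and, whenever $\delta\log\ell=\Omega(1)$, too weak to close Step 2 under the stated hypothesis $d\geq 32\log(8\eta'^{-1})/(\delta^2\ell\log^2\ell)$: that hypothesis is calibrated exactly for bounded difference $2$. The paper gets $2$ by plugging the optimal solution for one replica profile into the objective for the other profile, so the entropy terms cancel identically and only the weight terms move (each weight lies in $[-1,1]$ and the row mass is at most $1$); your separate ``entropy term $\delta\log\ell$ per row'' is what inflates the constant and is unnecessary. Second, your fallback claim that $\opt(P')=\Omega(d\ell)$ in the regime $\delta\log\ell\lesssim 1$ is false: $\zeta_\delta(\omega)=\delta\log(\exp(\omega/\delta)+1)$ can be exponentially small when $\omega$ is close to $-1$ and $\delta$ is small, so the only a priori lower bound available — and the only one the argument needs anywhere — is $\delta d\ell\log\ell$.

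A smaller implementation remark: the paper never solves the entropy-regularized program on the empirical weights. It computes the maximum-weight matching value $A$ on the empirical weights (exactly, in polynomial time), notes $A\in[\opt(\hw(\Br'))-\delta d\ell\log(2\ell),\opt(\hw(\Br'))]$, and uses $\gamma=12\max\{A,\delta d\ell\log\ell\}/d$, which lands in $[4\opt(\hw(\Br'))/d,12\opt(\hw(\Br'))/d]$ and hence, after Lemma~\ref{lemma:h1}, in $[2\opt(\omega(\Br'))/d,24\opt(\omega(\Br'))/d]$. Your route of (approximately) solving $(P'')$ on the empirical weights also works, since only constant-factor accuracy is needed, but the appeal to \cite{KOZLOV1980223} for an exact solution is off target (that result is for convex quadratic programs, not the entropic objective); either switch to the matching-plus-$\max$ estimate or explicitly invoke an approximate solver to constant accuracy. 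With the bounded difference tightened to $2$ and the $\delta d\ell\log\ell$ lower bound used throughout, your bookkeeping matches the paper's and yields the stated intervals $[2\opt(\omega(\Br'))/d,24\opt(\omega(\Br'))/d]$ and $[\opt(\omega(\Br))/d,36\opt(\omega(\Br))/d]$.
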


\begin{prevproof}{Lemma}{lem:find gamma}
We prove our statement in two steps. In the first step, we show that if we take polynomially many samples, we can obtain a sufficiently accurate estimate of $\omega(\Br')_{jk}$ for each edge $(j,k)$. We prove that the optimum of $(P'')$ on the estimated weights is close to $\opt({\omega}(\Br'))$. We use $\FF_{jk}$ to denote the distribution of the random variable $v_i(r'^{(j)},x(\sk, t_{-i}))-(1-\eta)\cdot p_i(\sk,t_{-i})$, where $t_{-i}$ is distributed according to $\DD_{-i}$.

\begin{lemma}[adapted from \cite{dughmi2017bernoulli}]
\label{lemma:h1}
For each edge $(j,k)$ between the $j$-th replica $r'^{(j)}$ and the $k$-th normal surrogate $\sk$, if we take $N\geq \frac{2 \log(4  \ell^2 d \eta'^{-1})}{\delta^2 \log^2(\ell)}$ samples from $\FF_{jk}$, and use $\hw_{jk}(\Br')$ to denote the empirical mean of these $N$ samples, then with probability at least $1-\frac{\eta'}{2}$,

\begin{equation*}
    \frac{\opt({\omega}(\bm{r'}))}{2} \leq \opt({\hat{\omega}}(\bm{r'})) \leq 2 \opt({\omega}(\bm{r'})).
\end{equation*}

\end{lemma}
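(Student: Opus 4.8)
The plan is to prove Lemma~\ref{lemma:h1} in three steps, following the blueprint of Dughmi et al.~\cite{dughmi2017bernoulli} for the nonnegative-weight program $(P)$ and transporting it to $(P'')$. First I would establish a uniform concentration bound: every empirical weight $\hw_{jk}(\Br')$ is within a small $\tau$ of the true weight $\omega_{jk}(\Br')$, simultaneously over all edges, with probability at least $1-\eta'/2$. Second, I would observe that the optimum of $(P'')$ is stable under a uniform perturbation of the weights, so that $|\opt(\hw(\Br'))-\opt(\omega(\Br'))|$ is small in an additive sense. Third, I would lower bound $\opt(\omega(\Br'))$ by exhibiting a cheap feasible point, which turns the additive guarantee into the multiplicative factor-$2$ bound in the statement. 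The one structural point beyond \cite{dughmi2017bernoulli} is that in $(P'')$ each replica may put all its mass on the $0$-RHS-nodes, so very negative weights simply never get used; this is what keeps both the lower bound and the perturbation bound quantitatively clean even though weights can be negative, and it is consistent with the identity $\opt(P'')=\opt(P')$ of Lemma~\ref{lem:P' and P''}.

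For the first step, recall that each of the $N$ samples averaged to form $\hw_{jk}(\Br')$ is a draw of the bounded random variable $v_i({r'}^{(j)},x(\sk,t_{-i}))-(1-\eta)p_i(\sk,t_{-i})$ with mean $\omega_{jk}(\Br')$. By Hoeffding's inequality, $\Pr\!\left[\,|\hw_{jk}(\Br')-\omega_{jk}(\Br')|>\tau\,\right]\le 2\exp(-\Omega(N\tau^2))$, and a union bound over the $d\ell^2$ edges between replicas and normal surrogates shows that, with probability at least $1-\eta'/2$, $|\hw_{jk}(\Br')-\omega_{jk}(\Br')|\le\tau$ holds for all $(j,k)$ simultaneously, provided $N$ is at least the quantity in the statement with the choice $\tau=\tfrac{1}{2}\delta\log\ell$. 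Condition on this good event henceforth.

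For the last two steps, note that the feasible region of $(P'')$ depends only on $d$ and $\ell$, not on the weights, and that only the linear part $\sum_{j,k}x_{jk}\omega_{jk}$ of $F$ involves the weights. Hence for every feasible $(x,y)$ the change in objective when passing from $\omega(\Br')$ to $\hw(\Br')$ is $\big|\sum_{j,k}x_{jk}(\hw_{jk}(\Br')-\omega_{jk}(\Br'))\big|\le\tau\sum_{j,k}x_{jk}\le\tau\,d\ell$, where the last inequality uses $\sum_k x_{jk}\le\sum_k(x_{jk}+y_{jk})=1$ for each of the $d\ell$ replicas; maximizing over the common feasible region gives $|\opt(\hw(\Br'))-\opt(\omega(\Br'))|\le\tau d\ell=\tfrac{1}{2}\delta d\ell\log\ell$. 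On the other hand, the feasible point $x_{jk}=0,\ y_{jk}=1/\ell$ has zero linear weight and entropy exactly $\delta d\ell\log\ell$, so $\opt(\omega(\Br'))\ge\delta d\ell\log\ell=2\tau d\ell$. Combining these two facts,
\[
\tfrac{1}{2}\opt(\omega(\Br'))\;\le\;\opt(\omega(\Br'))-\tau d\ell\;\le\;\opt(\hw(\Br'))\;\le\;\opt(\omega(\Br'))+\tau d\ell\;\le\;\tfrac{3}{2}\opt(\omega(\Br'))\;\le\;2\opt(\omega(\Br')),
\]
which is the claim. The only real obstacle is the interlocking of constants: $\tau$ must be small enough ($\le\tfrac{1}{2}\delta\log\ell$) for the additive slack $\tau d\ell$ to be absorbed by the entropy lower bound, while the resulting $N=\Theta(\tau^{-2}\log(d\ell^2/\eta'))$ must still match the bound in the statement; there is also the harmless caveat $\ell\ge 2$ so that $\log\ell>0$, which always holds in the application.
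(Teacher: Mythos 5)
Your proposal is correct and follows essentially the same route as the paper's proof: Hoeffding plus a union bound over the $d\ell^2$ edges with tolerance $\tfrac{1}{2}\delta\log\ell$, additive stability of the optimum of $(P'')$ under this uniform weight perturbation, and the lower bound $\opt\geq\delta d\ell\log\ell$ from the all-$0$-RHS feasible point $x=0,\ y_{jk}=1/\ell$. The only cosmetic difference is that you apply the entropy lower bound symmetrically to $\opt(\omega(\Br'))$, while the paper plugs the optimizer of one weight vector into the other and invokes the same bound; the constants work out the same way.
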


\begin{proof}
By the Chernoff bound, we know that $\Pr\left[ |\omega_{jk}(\bm{r}) - \hw_{jk}(\bm{r})| \geq \frac{\delta\log(\ell)}{2}\right]\leq \frac{\eta'}{2d \ell^2}$ for each edge $(j,k)$. Since there are $d\ell^2$ many edges, by the union bound, we have that with probability at least $1-\frac{\eta'}{2}$ for each edge $(j,k)$:

\begin{equation*}
    |\omega_{jk}(\bm{r}) - \hw_{jk}(\bm{r})| \leq  \frac{\delta\log(\ell)}{2}
\end{equation*}

\notshow{
\begin{equation*}
    2e^{-\frac{N \delta^2 \log^2(\ell)}{2}} \leq 2e^{-\log\left( 4d \ell^2\eta'^{-1}) \right)} = \frac{\eta'}{2d \ell^2}
\end{equation*}

We have that for a specific $\omega_{jk}(\bm{r})$ when $s_k$ is not a $0$-surrogate :

\begin{equation*}
    |\omega_{jk}(\bm{r}) - \hat{\omega_{jk}}(\bm{r})| \geq \frac{\delta\log(\ell)}{2}
\end{equation*}
}

Let $({x}^*,{y}^*)$ be the optimal solution of the $(P'')$ with edge weights $(\omega_{jk}(\Br'))_{jk}$ and $({x}^{**},{y}^{**})$ be the optimal solution of the $(P'')$ with edge weights $(\hw_{jk}(\Br'))_{jk}$. Then

\begin{align*}
\opt({\omega}(\bm{r}')) & = \sum_{j, k} \left( x_{jk}^* {\omega_{jk}}(\bm{r}') - \delta\cdot \left( x_{jk}^*\log (x_{jk}^*) + y_{jk}^*\log (y_{jk}^*) \right) \right)\\
&\geq \sum_{j, k} \left( x_{jk}^{**} {\omega_{jk}}(\bm{r}') - \delta\cdot \left( x_{jk}^{**}\log (x_{jk}^{**}) + y_{jk}^{**}\log (y_{jk}^{**}) \right) \right)\\
& \geq \opt(\bm{\hat{\omega}}(\bm{r'})) - \frac{d \ell\delta\log(\ell)}{2}\\
& \geq \frac{\opt(\bm{\hat{\omega}}(\bm{r'}))}{2}
\end{align*}

The last inequality holds since a valid assignment is to set $y_{jk}=1/\ell$ and $x_{jk}=0$ for each $j$, $k$, which has objective value $\delta d \ell  \log(\ell)$.

The other direction can be proved similarly.
\end{proof}

Let $A$ be the total weight of the maximum weight matching with edge weights $(\hw_{jk}(\Br'))_{jk}$. It is clear that $A$ lies in $\left[ \opt(\hw(\Br'))-\delta d\ell\log(2\ell),\opt(\hw(\Br'))\right]$. Note that if we set $y_{jk}=1/\ell$ and $x_{jk}=0$ for each $j$, $k$, the objective of $(P'')$ has value $\delta d \ell  \log(\ell)$. Hence, $\max\{A,\delta d \ell  \log(\ell)\}$ is guaranteed to lie in  $\left[{\opt(\hw(\Br'))\over 3}, \opt(\hw(\Br'))\right]$. If we choose $ \gamma$ to be $\frac{12\cdot\max\{A,\delta d \ell  \log(\ell)\}}{d}$, $\gamma$ is guaranteed to lie in  $\left[{4\cdot\opt(\hw(\Br'))\over d}, {12\cdot \opt(\hw(\Br'))\over d}\right]$. Due to Lemma~\ref{lemma:h1}, $\gamma$ lies in $\left[{2\cdot\opt(\omega(\Br'))\over d}, {24\cdot\opt(\omega(\Br'))\over d}\right]$ with probability at least $1-\eta'/2$. As $A$ can be computed in time $\poly(d,\ell,1/\eta',1/\delta)$, $\gamma$ can also be computed in time $\poly(d,\ell,1/\eta',1/\delta)$.

In the second step of the proof, we show that $\opt(\omega(\Br))$ and $\opt(\omega(\Br'))$ are close with high probability. We first need the following Lemma to prove $\opt(\omega(r))$ has bounded difference of $2$.

\begin{lemma}For any $j\in[d\ell]$, any type $r'^{(j)}$ and replica profile $\bm{r}$,
\begin{equation*}
   \left \lvert \opt\left({\omega}(\Br)\right) - \opt\left({\omega}\left(r'^{(j)},\bm{r^{(-j)}}\right)\right)\right \rvert \leq 2,
\end{equation*}
where $\omega_{jk}\left(r'^{(j)},\bm{r^{(-j)}}\right)=W_i(r'^{(j)},\sk)$ and $\omega_{j'k}\left(r'^{(j)},\bm{r^{(-j)}}\right)=\omega_{j'k}(\Br)$ for any $j'\neq j$.
\end{lemma}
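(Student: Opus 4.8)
The plan is to run a standard bounded-differences argument directly on the convex program $(P'')$. Recall that $(P'')$ maximizes $F(x,y)=\sum_{j,k} x_{jk}\omega_{jk}-\delta\sum_{j,k}(x_{jk}\log(x_{jk})+y_{jk}\log(y_{jk}))$ subject to constraints that do \emph{not} depend on the edge weights at all, and that replacing the $j$-th replica type $r^{(j)}$ by $r'^{(j)}$ only alters the weights $\omega_{jk}(\cdot)$ for that fixed index $j$ (over $k\in[\ell]$); the weights incident to the $0$-surrogates remain $0$ and the weights incident to every other replica $j'\neq j$ are unchanged.

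First I would fix an optimal solution $(x^*,y^*)$ of $(P'')$ with weights $\omega(\Br)$, so $\opt(\omega(\Br))=F_{\omega(\Br)}(x^*,y^*)$. Since the feasible region of $(P'')$ is the same for both weight vectors, $(x^*,y^*)$ is also feasible for the program with weights $\omega(r'^{(j)},\Br^{(-j)})$, and its objective there differs from $F_{\omega(\Br)}(x^*,y^*)$ only in the LHS-node-$j$ terms:
\[
F_{\omega(r'^{(j)},\Br^{(-j)})}(x^*,y^*)-F_{\omega(\Br)}(x^*,y^*)=\sum_{k\in[\ell]}x^*_{jk}\bigl(\omega_{jk}(r'^{(j)},\Br^{(-j)})-\omega_{jk}(\Br)\bigr),
\]
because the entropy regularizer does not depend on the weights. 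Using $\omega_{jk}(\cdot)\in[-1,1]$ we get $|\omega_{jk}(r'^{(j)},\Br^{(-j)})-\omega_{jk}(\Br)|\le 2$, and the feasibility constraint $\sum_k(x^*_{jk}+y^*_{jk})=1$ with nonnegativity gives $\sum_k x^*_{jk}\le 1$, so the displayed difference is at most $2$ in absolute value. Hence $\opt(\omega(r'^{(j)},\Br^{(-j)}))\ge F_{\omega(r'^{(j)},\Br^{(-j)})}(x^*,y^*)\ge \opt(\omega(\Br))-2$.

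Then I would repeat the identical argument starting from an optimal solution of the program with weights $\omega(r'^{(j)},\Br^{(-j)})$, which is feasible for the program with weights $\omega(\Br)$, to obtain $\opt(\omega(\Br))\ge \opt(\omega(r'^{(j)},\Br^{(-j)}))-2$. Combining the two inequalities yields $\bigl|\opt(\omega(\Br))-\opt(\omega(r'^{(j)},\Br^{(-j)}))\bigr|\le 2$, as claimed. There is no real obstacle here: the only thing to be careful about is that the regularizer term is weight-independent and that the $0$-surrogate edges are untouched, so only the single block of $\ell$ weights attached to LHS-node $j$ enters the comparison; everything else is immediate from feasibility and the $[-1,1]$ bound on weights. (This lemma then feeds into a McDiarmid-type concentration bound to show $\opt(\omega(\Br'))$ and $\opt(\omega(\Br))$ are within a constant factor with high probability.)
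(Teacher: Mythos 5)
Your proposal is correct and follows essentially the same argument as the paper: both evaluate the optimal solution of one instance of $(P'')$ as a feasible solution of the other (the feasible region and entropy regularizer being weight-independent), note that only the $\ell$ weights in row $j$ change and each lies in $[-1,1]$, and use $\sum_k x^*_{jk}\le 1$ to bound the objective shift by $2$, then symmetrize. No gap to report.
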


\begin{proof}
Let $({x}^*,{y}^*),({x}^{**},{y}^{**})$  be the optimal solutions under replica profile $\Br$ and $\left(r'^{(j)},\bm{r^{(-j)}}\right)$ for $(P'')$ respectively. Then
\begin{align*}
\opt(\bm{\omega}(\Br)) 
&= \sum_{j'\in[d\ell], k\in[\ell]} \left( x_{j'k}^* {\omega_{j'k}}(\bm{r}) - \delta\cdot \left( x_{j'k}^*\log (x_{j'k}^*) + y_{j'k}^*\log (y_{j'k}^*) \right) \right)\\ 
& \geq \sum_{j'\in[d\ell], k\in[\ell]} \left( x_{j'k}^{**} {\omega_{j'k}}(\bm{r}) - \delta\cdot \left( x_{j'k}^{**}\log (x_{j'k}^{**}) + y_{j'k}^{**}\log (y_{j'k}^{**}) \right) \right)\\
& \geq \sum_{j'\neq j, k\in[\ell]} \left( x_{j'k}^{**} {\omega_{j'k}}(\bm{r}) - \delta\cdot \left( x_{j'k}^{**}\log (x_{j'k}^{**}) + y_{j'k}^{**}\log (y_{j'k}^{**}) \right) \right)\\
&~~~~~~~~~~~~~~~~~~~~+\sum_{k\in[\ell]} \left( x_{jk}^{**} \left({\omega_{jk}}(r'^{(j)},r^{(-j)})-2\right) - \delta\cdot \left( x_{jk}^{**}\log (x_{jk}^{**}) + y_{jk}^{**}\log (y_{jk}^{**})\right)\right)\\
 & = \opt({\omega}(r'^{(j)},r^{(-j)})) - 2\\
\end{align*}

The last inequality is because both $\omega_{jk}(\Br)$ and ${\omega}(r'^{(j)},r^{(-j)})$ lie in $[-1,1]$. 
The other direction follows similarly.
\end{proof}

Next, we apply McDiarmid's inequality to the function $\opt(\omega(r))$.

\begin{lemma}
\label{lemma:h2}
When $d \geq 32 \frac{\log(8 \eta'^{-1})}{\delta^2 \ell \log^2(\ell)}$, if both $\Br$ and $\Br'$ are collections of $d\ell$ i.i.d. samples from $\DD'_i$, then with probability at least $1-\frac{\eta'}{2}$,

\begin{equation*}
\frac{1}{2}\opt({\omega}(\bm{r}))\leq \opt({\omega}(\bm{r}'))\leq \frac{3}{2} \opt({\omega}(\bm{r})).
\end{equation*}

The probability is over the randomness of both $\Br$ and $\Br'$.
\end{lemma}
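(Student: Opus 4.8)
The plan is to apply McDiarmid's bounded-differences inequality to the random quantity $\opt(\omega(\Br))$, viewed as a function of the $d\ell$ independent replicas $r^{(1)},\dots,r^{(d\ell)}$, with the surrogate profile $\Bs$ and the mechanism $\MM$ held fixed. Note that for fixed $\Bs$ and $\MM$ each edge weight $\omega_{jk}(\Br)=W_i(r^{(j)},s^{(k)})$ is itself an expectation over $t_{-i}\sim\DD_{-i}$, hence a deterministic function of $\Br$ (and $\Bs$), so $\opt(\omega(\Br))$ is a genuine function of the $d\ell$ i.i.d. draws. First I would record two deterministic facts. (i) By the bounded-difference lemma just proven, replacing a single replica $r^{(j)}$ changes $\opt(\omega(\Br))$ by at most $2$; these are the McDiarmid increments, $c_j=2$ for every $j\in[d\ell]$. (ii) The feasible point $x_{jk}=0,\ y_{jk}=1/\ell$ for all $j,k$ satisfies every constraint of $(P'')$ and has objective value $\delta d\ell\log\ell$, so $\opt(\omega(\Br))\ge \delta d\ell\log\ell$ with probability $1$; the same bound holds for $\opt(\omega(\Br'))$ and for $\mu:=\E[\opt(\omega(\Br))]$.

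Then McDiarmid gives $\Pr\big[|\opt(\omega(\Br))-\mu|\ge t\big]\le 2\exp\!\big(-t^2/(2d\ell)\big)$, and the identical bound for $\Br'$. I would take $t_0=\tfrac14\delta d\ell\log\ell$, so that the exponent becomes $-\delta^2 d\ell\log^2\ell/32$, which is at most $-\log(8/\eta')$ exactly under the hypothesis $d\ge 32\log(8\eta'^{-1})/(\delta^2\ell\log^2\ell)$; hence each of the two deviation events has probability at most $\eta'/4$. A union bound over $\Br$ and $\Br'$ then shows that, with probability at least $1-\eta'/2$, both $|\opt(\omega(\Br))-\mu|\le t_0$ and $|\opt(\omega(\Br'))-\mu|\le t_0$ hold, so $|\opt(\omega(\Br'))-\opt(\omega(\Br))|\le 2t_0=\tfrac12\delta d\ell\log\ell\le\tfrac12\opt(\omega(\Br))$ by fact (ii). Rearranging gives $\tfrac12\opt(\omega(\Br))\le \opt(\omega(\Br'))\le\tfrac32\opt(\omega(\Br))$, which is the claim.

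The only genuinely delicate point is converting the \emph{additive} McDiarmid slack $2t_0$ into a \emph{multiplicative} $[1/2,3/2]$ sandwich: this is precisely where the deterministic lower bound $\opt(\omega(\Br))\ge\delta d\ell\log\ell$ enters, and it is also what pins down the exact threshold on $d$ (the $\delta^2\ell\log^2\ell$ in the denominator comes from squaring $t_0\propto \delta d\ell\log\ell$ and dividing by the McDiarmid variance proxy $2d\ell$). Everything else is a routine invocation of McDiarmid plus a union bound, combined with the bounded-difference estimate already established.
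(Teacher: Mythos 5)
Your proposal is correct and follows essentially the same route as the paper: McDiarmid's inequality with bounded differences $2$ applied separately to $\opt(\omega(\Br))$ and $\opt(\omega(\Br'))$ with deviation threshold $\tfrac{1}{4}\delta d\ell\log\ell$, a union bound, and then the deterministic lower bound $\opt \geq \delta d\ell\log\ell$ (from the feasible point $x_{jk}=0$, $y_{jk}=1/\ell$) to turn the additive gap $\tfrac{1}{2}\delta d\ell\log\ell$ into the multiplicative $[1/2,3/2]$ sandwich. You even supply the explicit constant calculation showing how the hypothesis on $d$ yields failure probability $\eta'/4$ per event, which the paper leaves implicit.
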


\begin{proof}
Due to Lemma~\ref{lemma:h2}, we can apply McDiarmid's  inequality on the function $\opt({\omega}(\bm{r}))$, and we have

\begin{equation*}
   \Pr_{\Br}\left[ \left\lvert \opt({\omega}(\bm{r}))  - \E_{\Br}[\opt({\omega}(\bm{r})] \right\rvert \geq \frac{\delta d \ell \log(\ell)}{4}\right]\leq \frac{\eta'}{4}
\end{equation*}

Similarly, we have

\begin{equation*}
     \Pr_{\Br'}\left[ \left\lvert \opt({\omega}(\bm{r'}))  - \E_{\Br}[\opt({\omega}(\bm{r})] \right\rvert \geq \frac{\delta d \ell \log(\ell)}{4}\right]\leq \frac{\eta'}{4}\end{equation*}

Hence, with probability at least $1-\frac{\eta'}{2}$,

\begin{equation*}
\left \lvert \opt({\omega}(\bm{r})) - \opt ({\omega}(\bm{r}'))\right \rvert \leq \frac{\delta d \ell \log(\ell)}{2}
\end{equation*}

Since $\delta d \ell \log(\ell)$ is a lower bound on both $OPT(\bm{\omega}(\bm{r}'))$ and $OPT(\bm{\omega}(\bm{r}))$ (by setting $y_{jk}=1/\ell$ and $x_{jk}=0$ for each $j$, $k$) we have that with probability at least $1-\frac{\eta'}{2}$,

\begin{equation*}
\frac{1}{2}\opt({\omega}(\bm{r}))\leq \opt({\omega}(\bm{r}'))\leq \frac{3}{2} \opt({\omega}(\bm{r})).
\end{equation*}

\end{proof}

Lemma~\ref{lem:find gamma} follows from Lemma~\ref{lemma:h1} and \ref{lemma:h2}.
\end{prevproof}

\section{Missing Details from Section~\ref{sec:mechanism}}\label{sec:appx_sec_5}

\begin{prevproof}{Lemma}{lem:M' BIC}
\vspace{-.1in}
\paragraph{$\MM'$ is BIC:}
We prove the Bayesian Incentive Compatibility in two parts.
The first part is similar to the proof of Lemma~\ref{lemma:IDEAL_BIC_IR}.
We argue that the distribution of the normal surrogate $s_i$ that represents agent $i$ in Phase 2, when the agent $i$ reports truthfully, is $\dD_i$. Note that for any matching Algorithm~\ref{alg:new online matching} produces, the $k$-th normal surrogate and the $k$-th $0$-surrogate together are matched to exactly $d$ replicas for every $k\in[\ell]$. As the $d\ell-1$ replicas and the agent's type are all drawn from the same distribution $\DD'_i$, we can simply treat all of them as replicas and uniformly choose one to be the agent reported type \emph{after} Algorithm~\ref{alg:new online matching} terminates. Therefore, the surrogate $s_i$ that represents the agent, will also be chosen uniformly at random between all the normal surrogates. Thus, the distribution of $s_i$ is $\dD_i$.

If all the other agents report truthfully, agent $i$'s value for winning a surrogate $s$ is exactly $W_i(t_i, s)$ if her true type is $t_i$. In other words, under the assumption that all other agents report truthfully, Mechanism~\ref{alg:mout} for agent $i$ is equivalent to a competition among replicas to win surrogates, and the edge weight between a replica and a surrogate is exactly the replica's value for winning the surrogate. To show that Mechanism~\ref{alg:mout} is BIC, it suffices to prove that at any position $\pi$,  
$$u_i(t_i,\AA_\pi(W(t'_i)))-\E_{\lambda}[q_i(t'_i,\lambda)]$$
 is maximized when the reported type $t'_i$ equals to the true type $t_i$. Here $W(t'_i)$ is simply the collection of the edge weights when $r^{(\pi)}=t'_i$, and the function $u_i(\cdot)$ is defined in Definition~\ref{def:implicit payment}. A result by Rochet~\cite{rochet1987necessary} implies that this is indeed the case. Interested readers can find a modern restatement of the result in Theorem 2.1 of~\cite{babaioff2013multi}~\footnote{To apply Theorem 2.1 of~\cite{babaioff2013multi} to our setting, one should think of each surrogate as an outcome, and the corresponding edge weight as the value for the outcome. In other words, a replica's type is the weights on the incident edges. As the matching is computed by a maximal-in-range algorithm, we can  allow the edge weights to be arbitrary numbers, and the induced allocation rule will still be implementable in an incentive compatible way. As a result, we can apply Theorem 2.1 of~\cite{babaioff2013multi} to our setting. Note that the  incentive compatible payment rule it gives is  off by an absolute constant compared to our payment rule in Definition~\ref{def:implicit payment}.}. 

\paragraph{$\MM'$ is IR:} The expected utility for agent $i$ with type $t_i$ at position $\pi$ is 
\begin{equation}\label{eq:utility}
	u_i(t_i,\AA_\pi(W))-\E_{\lambda}[q_i(t_i,\lambda)]=\int_0^1 u_i\left(t_i,\AA_{\pi}\left(\lambda W_{\pi},W_{-\pi}\right)\right)d 
	\lambda+\sqrt{\delta}(\log 2\ell +1),
\end{equation}

where $W$ is the collection of weights in $G_i$ when agent $i$ reports truthfully. We will first prove that for any $\lambda\in[0,1]$, $u_i\left(t_i,\AA_{\pi}\left(\lambda W_{\pi},W_{-\pi}\right)\right)$ is at least $-{\delta \log(2\ell)\over \lambda}$.  Denote $H(x,y)=-\sum_k(x_{k}\log(x_{k})+y_{k}\log(y_{k}))$ as the entropy for distribution $(x,y)$.

Let $(x_{\pi}'',y_\pi'')$ be $\mathcal{A}_\pi(\lambda W_{\pi},W_{-\pi})$. By Observation~\ref{obs:exp weight with dummy-every replica},
$$(x_\pi'',y_\pi'')=\argmax_{(x_\pi,y_\pi)}\sum_{k}x_{\pi k}\cdot \lambda W_{\pi k}+\delta\cdot H(x_\pi,y_\pi)-\sum_k\gamma\alpha_k^{(\pi)}\cdot (x_{\pi k}+y_{\pi k})$$

By considering an alternative solution $(0,x_\pi''+y_\pi'')$, we have
\begin{align*}
	\sum_{k}x''_{\pi k}\cdot \lambda W_{\pi k}+&\delta\cdot H(x_\pi'',y_\pi'')-\sum_k\gamma\alpha_k^{(\pi)}\cdot (x''_{\pi k}+y''_{\pi k})\\&\geq 0-\delta\cdot \sum_{k}(x_{\pi k}''+y_{\pi k}'')\log(x_{\pi k}''+y_{\pi k}'')-\sum_k\gamma\alpha_k^{(\pi)}\cdot (x_{\pi k}''+y_{\pi k}'')
\end{align*}

Since $-\sum_{k}(x_{\pi k}''+y_{\pi k}'')\log(x_{\pi k}''+y_{\pi k}'')\geq 0 $,

$$u_i\left(t_i,\AA_{\pi}\left(\lambda W_{\pi},W_{-\pi}\right)\right)=\sum_{k}x''_{\pi k}\cdot  W_{\pi k}\geq -\frac{\delta}{\lambda}
H(x_\pi'',y_\pi'')\geq {-\delta\log (2\ell)\over\lambda}.$$  
Another lower bound for $u_i\left(t_i,\AA_{\pi}\left(\lambda W_{\pi},W_{-\pi}\right)\right)$ is $-1$, as the mechanism $\MM$ is IR, by definition of the edge weight $W_{\pi k}\geq -1$ for all $k\in[\ell]$.
Now, we are ready to lower bound the utility
\begin{align*}
	\text{RHS of Equation~\eqref{eq:utility}}&\geq \int_{\sqrt{\delta}}^1 {-\delta\log (2\ell)\over\lambda} d\lambda
+ \int^{\sqrt{\delta}}_0 -1 d\lambda + \sqrt{\delta}(\log 2\ell +1)\\
&=-\delta \log (1/\sqrt{\delta})\log (2\ell)-\sqrt{\delta}+\sqrt{\delta}(\log (2\ell) +1)\\
&\geq -\sqrt{\delta}\log (2\ell)+\sqrt{\delta}\log (2\ell)\\
&=0
\end{align*}

Next, we prove that $\MM'$ does not lose too much revenue by subsidizing the agents in Phase 1.

\paragraph{$\E_{\lambda}[q_i(t'_i,\lambda)]$ is at least $-\sqrt{\delta}(\log (2\ell)+1)$:} It suffices to show that 
$$u_i\left(t_i,\AA_{\pi}\left(W \right)\right) \geq u_i\left(t_i,\AA_{\pi}\left(\lambda W_{\pi},W_{-\pi}\right)\right)$$
 for any $\lambda\in[0,1)$. We still use $(x_{\pi}'',y_\pi'')$ to denote $\mathcal{A}_\pi(\lambda W_{\pi},W_{-\pi})$ and  $(\hx_\pi,\hy_\pi)$  to denote $\AA_{\pi}\left(W \right)$.

By Observation~\ref{obs:exp weight with dummy-every replica}, both allocations are maximal-in-range for the same dual variables $\alpha^{(\pi)}$. Hence, the following two inequalities are true.

$$\sum_k \hx_{\pi k}W_{\pi k}+\delta\cdot H(\hx_\pi,\hy_\pi)-\sum_k\gamma\alpha_k^{(\pi)}\cdot (x_{\pi k}'+y_{\pi k}')\geq \sum_kx_{\pi k}''W_{\pi k}+\delta\cdot H(x_\pi'',y_\pi'')-\sum_k\gamma\alpha_k^{(\pi)}\cdot (x_{\pi k}''+y_{\pi k}'')$$

$$\sum_kx_{\pi k}''\lambda W_{\pi k}+\delta\cdot H(x_\pi'',y_\pi'')-\sum_k\gamma\alpha_k^{(\pi)}\cdot (x_{\pi k}''+y_{\pi k}'')\geq \sum_k \hx_{\pi k}\lambda W_{\pi k}+\delta\cdot H(\hx_\pi,\hy_\pi)-\sum_k\gamma\alpha_k^{(\pi)}\cdot (\hx_{\pi k}+\hy_{\pi k})$$

Summing up the two inequalities together, we have
$$\sum_k(\hx_{\pi k}-x_{\pi k}'')W_{\pi k}(1-\lambda)\geq 0.$$

Since $\lambda\in [0,1)$,
$$u_i\left(t_i,\AA_{\pi}\left(W \right)\right) - u_i\left(t_i,\AA_{\pi}\left(\lambda W_{\pi},W_{-\pi}\right)\right)=\sum_k(\hx_{\pi k}-x_{\pi k}'')W_{\pi k}\geq 0.$$
 
Finally, we analyze the time and query complexity of the mechanism.

\paragraph{Time and Query Complexity:} All steps except Step 3,6, and 8 clearly has $\poly(d,\ell)$ time and query complexity. According to Lemma~\ref{lem:find gamma}, Step 3 has $\poly(d,\ell,1/\eta',1/\delta)$ time and query complexity. Since $\gamma$ is guaranteed to be at most $\max\{\ell, \delta\ell \log \ell\}$, Algorithm~\ref{alg:new online matching} in Step 6 has time and query complexity $\poly(d,\ell,1/\delta)$ according to Theorem~\ref{thm:arbitrary-weight-algorithm}. From Definition~\ref{def:implicit payment}, it is clear that Step 8 also has time and query complexity at most $\poly(d,\ell,1/\delta)$. Hence, the mechanism $\MM'$ has time and query complexity $\poly(d,\ell,1/\eta',1/\delta)$.
\end{prevproof}

\subsection{Bounding the Revenue Loss}\label{sec:rev M'}

Now we give the proof of Theorem~\ref{thm:general reduction_2}. It suffices to  lower bound the revenue of $\MM'$ from the second phase due to  Lemma~\ref{lem:M' BIC}. 
To bound the revenue loss, we define some new notations.  

\vspace{-.1in}

\paragraph{New Notations:}
For every agent $i$, and the corresponding bipartite graph $G_i$, we define a new bipartite graph $\hG_i$ whose left hand side nodes are the replicas/LHS-nodes of $G_i$. For each normal surrogate/RHS-node of $G_i$, we duplicate it $d$ times to form the set of right hand side nodes of $\hG_i$. For the $k$-th surrogate in $G_i$, the $(a\ell+k)$-th surrogate in $\hG_i$ is one of its copies for all $0\leq a\leq d-1$. We do not copy the $0$-surrogates to $\hG_i$. The edge weights in $\hG_i$  are still defined using $W_i(\cdot,\cdot)$. Clearly, every $d$-to-1 matching in $G_i$ corresponds to a $1$-to-$1$ matching in $\hG_i$. If replica $r$ is matched to a surrogate $s$ in $G_i$, simply match $r$ to the first available copy of $s$ in $\hG_i$. We use $\ell'$ to denote $d\ell$, and $\hG_i$ has $2\ell'$ nodes. When we say the matching in $\hG_i$ produced by Algorithm~\ref{alg:new online matching}, we mean the matching in $\hG_i$ that corresponds to the matching produced by Algorithm~\ref{alg:new online matching} in $G_i$. We follow the convention to use $\rj$ to denote the type of the $j$-th replica and $\sk$ to denote the type of the $k$-th surrogate in $\hG_i$. We further simplify the notation and use $p_i(t_i)$ to denote $\E_{t_{-i}\sim \DD_{-i}}[p_i(t_i,t_{-i})]$ for any type $t_i\in \TT_i$.

Given the replica profile $\Br$ and surrogate profile $\Bs$, for any matching $L(\Br,\Bs)$ in $\hG_i$, we slightly abuse notation to use  $W_i(L(\Br,\Bs))$ to denote $\sum_{(r,s)\in L(\Br,\Bs)}W_i(r,s)$. When the replica profile $\Br$ and surrogate profile $\Bs$ are clear from context, we simply use $W_i(L)$ to denote the total weight of the matching $L$. Since the analysis mainly concerns the set of surrogates that are matched in a matching, we use $s\in L(\Br,\Bs)$ to denote that the surrogate $s$ is matched in $L(\Br,\Bs)$. Let $O(\Br,\Bs)$ be the (randomized) matching obtained by Algorithm~\ref{alg:new online matching} on $\hG_i$, $V(\Br,\Bs)$ be the maximum weight matching  in  $\hG_i$. 

We first provide a Lemma that relates the expected revenue of $\MM'$ to the size of the matchings.
\begin{lemma}\label{lem:mout lower bound}  Let  $\revsecond_i(\MM',\DD')$ be the expected revenue of $\MM'$ from agent $i$ in Phase 2, $\rev_i(\MM,\DD)$ be the expected revenue of $\MM$ from agent $i$,\\
$$\textstyle \revsecond_i(\mout,\DD')\geq (1-\eta)\cdot\E_{\Br,\Bs}\left[\sum_{\sk\in O(\Br,\Bs)}p_i(\sk)/\ell'\right],$$ and
$$\textstyle \rev_i(\MM,\DD)=\E_{\Bs}\left[\sum_{k\in [\ell']}p_i(\sk)/\ell'\right]$$
\end{lemma}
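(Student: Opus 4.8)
The plan is to prove the two displayed statements separately, in each case by unwinding the definitions and invoking the distributional facts already established in the proofs of Lemma~\ref{lemma:IDEAL_BIC_IR} and Lemma~\ref{lem:M' BIC}.

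For the second identity, I would observe that in $\hG_i$ the surrogate profile $\Bs$ consists of $d$ identical copies of each of $\ell$ types drawn i.i.d.\ from $\DD_i$, so for every $k\in[\ell']$ the type $\sk$ is marginally distributed as $\DD_i$. Hence, writing $p_i(t_i)=\E_{t_{-i}\sim\DD_{-i}}[p_i(t_i,t_{-i})]$ for the shorthand introduced before the lemma, $\E_{\Bs}[p_i(\sk)]=\E_{t_i\sim\DD_i}[p_i(t_i)]=\E_{t\sim\DD}[p_i(t)]=\rev_i(\MM,\DD)$. Summing this over the $\ell'$ surrogates of $\hG_i$ and dividing by $\ell'$ gives $\E_{\Bs}\big[\sum_{k\in[\ell']}p_i(\sk)/\ell'\big]=\rev_i(\MM,\DD)$.

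For the first statement, recall that in Phase~2 agent $i$ is charged $(1-\eta)\cdot p_i(s)$ exactly when her reported type is matched to a normal surrogate in Phase~1, and $0$ otherwise, where $s=(s_1,\dots,s_n)$ is the profile of normal-surrogate types fed into $\MM$. As shown in the proof of Lemma~\ref{lem:M' BIC}, when every agent reports truthfully the surrogate type $s_{i'}$ of each agent $i'$ is distributed as $\DD_{i'}$, the surrogate-selection phases of distinct agents are independent, and the pair consisting of $s_i$ and the indicator $\ind[\text{agent }i\text{ matched to a normal surrogate}]$ depends only on agent $i$'s own randomness and is therefore independent of $s_{-i}$. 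Conditioning on this pair and taking the expectation over $s_{-i}\sim\DD_{-i}$ replaces $p_i(s_1,\dots,s_n)$ by $p_i(s_i)$, so $\revsecond_i(\mout,\DD')=(1-\eta)\cdot\E\big[\,p_i(s_i)\cdot\ind[\text{agent }i\text{ matched to a normal surrogate}]\,\big]$. Next I would pass to $\hG_i$: by construction, the $0$-surrogates of $G_i$ are deleted and each normal surrogate is replaced by $d$ copies, so agent $i$ is matched to a normal surrogate in Phase~1 precisely when the replica at her (uniformly random) position is matched in $O(\Br,\Bs)$, the matching of $\hG_i$ induced by Algorithm~\ref{alg:new online matching}, and in that case $s_i$ is the type of the surrogate copy she is matched to. Since agent $i$ reports truthfully, all $\ell'$ replicas of $G_i$ are i.i.d.\ from $\DD'_i$ with the agent's type at a uniformly random position, so — exactly as in the proof of Lemma~\ref{lemma:IDEAL_BIC_IR} — the joint law of $(\Br,\Bs,O(\Br,\Bs))$ is unaffected by which replica we designate as the agent, and we may take the expectation over the agent's position after the matching has been drawn. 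Because $O(\Br,\Bs)$ is a $1$-to-$1$ matching, each matched surrogate copy corresponds to a unique matched replica, so averaging $p_i(s_i)\cdot\ind[\,\cdot\,]$ over the $\ell'$ positions equals $\tfrac1{\ell'}\sum_{\sk\in O(\Br,\Bs)}p_i(\sk)$; taking the expectation over $\Br,\Bs$ and the algorithm's internal randomness yields $\revsecond_i(\mout,\DD')=(1-\eta)\cdot\E_{\Br,\Bs}\big[\sum_{\sk\in O(\Br,\Bs)}p_i(\sk)/\ell'\big]$, which in particular gives the stated inequality.

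The two ``linearity'' computations are routine; the only real content is the distributional bookkeeping in the third paragraph, namely that the surrogates fed into $\MM$ form an independent profile with the correct marginals and that the agent's position may be randomized after the matching is fixed. I would cite the exchangeability argument from the BIC proofs rather than reprove it, taking care that here one also needs independence of the pair $(s_i,\,\ind[\text{matched to normal}])$ from $s_{-i}$, not merely the marginal law of $s_i$, and that Property~1 of Theorem~\ref{thm:arbitrary-weight-algorithm} is what guarantees the induced object $O(\Br,\Bs)$ is a legitimate (not necessarily perfect) $1$-to-$1$ matching in $\hG_i$.
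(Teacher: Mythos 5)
Your proposal is correct and follows essentially the same route as the paper: charge accounting for Phase 2, passing to interim payments, and the exchangeability trick of drawing $\Br,\Bs$ and the matching $O(\Br,\Bs)$ first and only then placing the agent at a uniformly random replica position, so each matched surrogate is selected with probability $1/\ell'$. The only difference is that you spell out the independence of $(s_i,\ind[\text{matched to a normal surrogate}])$ from $s_{-i}$ needed to replace $p_i(s)$ by the interim payment $p_i(s_i)$, a step the paper leaves implicit.
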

\begin{proof}
For every agent $i$, 
 only when the agent $i$ is matched to a surrogate in $ O(\Br,\Bs)$, she pays the surrogate price. We can again first sample  $\Br$ and $\Bs$, and run Algorithm ~\ref{alg:new online matching} on the corresponding graph $\hG_i$ to find the matching $O(\Br,\Bs)$, then choose a replica uniformly at random to be agent $i$. Since each replica has exactly probability $1/\ell'$ to be agent $i$, each surrogate in $O(\Br,\Bs)$ is selected with probability $1/\ell'$, the expected revenue paid by agent $i$ is exactly $(1-\eta)\cdot\E_{\Br,\Bs}\left[\sum_{\sk\in O(\Br,\Bs)} \frac{ p_i(\sk)}{\ell'}\right]$.  
The expected payment from agent $i$ in $\MM$ is $\E_{t_i\sim \DD_i}\left[p_i(t_i)\right]$. Since each $\sk$ is drawn from $\DD_i$, this is exactly the same as $\E_{\Bs}\left[\sum_{k\in [\ell']}p_i(\sk)/\ell'\right]$.
\end{proof}

In Lemma~\ref{lem:payment K to V simplified}, we bound the gap between  $\revsecond_i(\MM',\DD')$ and $\rev_i(\MM,\DD)$. Indeed, we prove a stronger result that holds for any matching $K(\Br,\Bs)$ that has close to maximum total weight.

\begin{lemma}\label{lem:payment K to V simplified}
Recall that $V(\Br,\Bs)$ is the maximum weight matching  in  $\hG_i$. Let  $$\E_{\Br,\Bs}\left[W_i\left(K\left( \Br,\Bs\right)\right)\right]= \E_{\Br,\Bs}\left[W_i\left(V\left( \Br,\Bs\right)\right)\right]-\Delta.$$ We have
$$(1-\eta)\E_{\Br,\Bs}\left[\sum_{\sk\in K(\Br,\Bs)}p_i(\sk)/\ell'\right]\geq \E_{\Bs}\left[\sum_{k\in [\ell']}p_i(\sk)/\ell'\right]-n\left[\eta+\sqrt{\frac{|\TT_i'|}{\ell}}+\frac{\varepsilon}{\eta}+\frac{\Delta}{d\ell\eta}\right]-\frac{2}{\eta}\wass(\mathcal{D}_i,\mathcal{D}_i').$$
\end{lemma}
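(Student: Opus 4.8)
The plan is to work inside the bipartite graph $\hG_i$, where $V(\Br,\Bs)$ is a \emph{perfect} matching (since $\hG_i$ is complete bipartite with $\ell'=d\ell$ nodes on each side). Hence $V$ matches every surrogate-copy, so $\sum_{s^{(k)}\in V(\Br,\Bs)}p_i(s^{(k)})=\sum_{k\in[\ell']}p_i(s^{(k)})$, and by Lemma~\ref{lem:mout lower bound} it suffices to upper bound
$$\E_{\Br,\Bs}\Big[\textstyle\sum_{k\in[\ell']}p_i(s^{(k)})\Big]-(1-\eta)\E_{\Br,\Bs}\Big[\textstyle\sum_{s^{(k)}\in K}p_i(s^{(k)})\Big]=\E\Big[\textstyle\sum_{s^{(k)}\notin K}p_i(s^{(k)})\Big]+\eta\,\E\Big[\textstyle\sum_{s^{(k)}\in K}p_i(s^{(k)})\Big].$$
Since every $p_i(\cdot)\le1$ by IR of $\MM$, the second term is at most $\eta\ell'$, which yields the $n\eta$ summand after dividing by $\ell'$. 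The real work is the first term: the total payment of the surrogate-copies that $K$ leaves unmatched because it wastes replicas on $0$-surrogates.

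To control this, fix the optimal $\ell_\infty$-Wasserstein coupling $c_i$ of $\DD_i,\DD'_i$ and, for each replica $r^{(j)}$, draw $\hat t^{(j)}\sim c_i(\cdot\mid r^{(j)})$, so $\hat t^{(j)}\sim\DD_i$ marginally with $\E[\dist_i(r^{(j)},\hat t^{(j)})]=\wass(\DD_i,\DD_i')$. A balls-into-bins concentration argument (the $\ell$ surrogate types appear with multiplicity $d$, and in expectation the number of replicas with $\hat t^{(j)}=\tau$ equals the number of surrogate-copies of type $\tau$) shows that all but an $O(\sqrt{|\TT_i'|/\ell})$ fraction of the replicas can be matched to a distinct surrogate-copy of type exactly $\hat t^{(j)}$; extend this to a perfect reference matching $M^*$ arbitrarily. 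For a ``diagonal'' pair $(r^{(j)},s^{(k)})$ of $M^*$ (so $\hat t^{(j)}$ is the type of $s^{(k)}$), IR of $\MM$ gives $\E_{t_{-i}}[v_i(\hat t^{(j)},x(s^{(k)},t_{-i}))]\ge p_i(s^{(k)})$, and since $|v_i(r^{(j)},o)-v_i(\hat t^{(j)},o)|\le\dist_i(r^{(j)},\hat t^{(j)})$ for every $o$, we get $W_i(r^{(j)},s^{(k)})\ge\eta\,p_i(s^{(k)})-\dist_i(r^{(j)},\hat t^{(j)})$; symmetrically, $\varepsilon$-BIC of $\MM$ gives an upper bound of the form $W_i(r^{(j)},s')\le\eta\,p_i(s')+\bar v_i(\hat t^{(j)},\hat t^{(j)})-p_i(s^{(k)})+\varepsilon+\dist_i(r^{(j)},\hat t^{(j)})$ on any other $K$-edge incident to a diagonal replica. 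Because $M^*$ is perfect, $\sum_{k\in[\ell']}p_i(s^{(k)})=\sum_{s\in M^*}p_i(s)$; decomposing the symmetric difference $K\triangle M^*$ into alternating paths and cycles, the cycles cancel from $\sum_{s\in M^*}p_i(s)-\sum_{s\in K}p_i(s)$, so this difference equals $\sum_{m}p_i(s_m)$, where $s_1,\dots,s_D$ are the surrogate-endpoints of the alternating paths, i.e.\ exactly the surrogate-copies matched by $M^*$ but not $K$.

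Finally, for each such $s_m$ one reroutes $K$ along a minimal augmenting sub-path ending at $s_m$; the resulting valid matching has weight $W_i(K)+(\text{alternating gain})\le W_i(V(\Br,\Bs))=W_i(K(\Br,\Bs))+\Delta(\Br,\Bs)$, where $\Delta(\Br,\Bs):=W_i(V(\Br,\Bs))-W_i(K(\Br,\Bs))\ge0$ is kept pointwise and only averaged at the very end ($\E[\Delta(\Br,\Bs)]=\Delta$). Feeding in the diagonal-edge lower bound and the $\varepsilon$-BIC upper bounds on the sub-path's $K$-edges, one solves for $p_i(s_m)$ to get $\eta\,p_i(s_m)\le\Delta(\Br,\Bs)+\varepsilon+(\text{coupling distances of the replicas on the sub-path})+\eta\,p_i(s'_m)$ whenever $s_m$ has a diagonal partner, while the $O(\sqrt{|\TT_i'|/\ell})$ fraction of surrogates without one contribute at most $1$ each. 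Summing over $m$, using that each matched $s'_m$ is charged at most once so that $\sum_m p_i(s'_m)\le\sum_{s\in K}p_i(s)\le\ell'$, taking expectations with $\E[\sum_j\dist_i(r^{(j)},\hat t^{(j)})]=\ell'\,\wass(\DD_i,\DD_i')$, and dividing by $\ell'=d\ell$ assembles the stated inequality. The step I expect to be the main obstacle is precisely this last one: making the rerouting argument lossless, i.e.\ choosing the augmenting sub-paths so the weights of the intermediate non-diagonal edges do not blow up the estimate, which forces a careful separation of high-payment surrogates (whose un-matching is penalized by the near-maximality of $K$) from low-payment surrogates (which only help the revenue), together with a clean accounting of the deficiency created by replicas sent to $0$-surrogates.
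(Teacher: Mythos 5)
Your overall strategy is the paper's: couple each replica to a $\DD_i$-type, build a same-type reference matching, decompose it against $K$ into alternating paths, chain $\varepsilon$-BIC/IR inequalities along the paths, and charge the total weight slack to $\Delta$ via the maximality of $V$. However, two steps in your accounting fail as written. First, by extending the coupled same-type matching to a \emph{perfect} $M^*$ ``arbitrarily,'' you place non-diagonal edges in the interior of the alternating components of $K\triangle M^*$. The chaining step at a replica $r$ needs its $M^*$-partner to be exactly $c_i(r)$: otherwise neither IR nor $\varepsilon$-BIC relates $W_i(r,s)$ to $\eta\, p_i(s)$ up to a coupling distance (indeed $W_i(r,s)$ can be as low as $-(1-\eta)p_i(s)$), and your remedy of charging ``surrogates without a diagonal partner at most $1$ each'' only addresses endpoints, not breaks in the middle of a path. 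The paper sidesteps this by keeping the coupled matching $L$ partial and comparing $K$ to the hybrid matching $K'$ that agrees with $K$ outside the relevant paths; the only places where the chain is cut are then far endpoints lying in $K\setminus L$, and their number is at most the deficiency $\ell'-|L(c_i(\Br),\Bs)|$, whose expectation is at most $\ell'\sqrt{|\TT_i'|/\ell}$ --- this is exactly where the $\sqrt{|\TT_i'|/\ell}$ term must come from. (Minor point: with negative weights the maximum-weight matching $V$ need not be perfect, though you do not actually use that claim.)

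Second, and more importantly, the final summation does not assemble the stated bound. Your per-path inequality carries the released surrogate's payment $+\eta\, p_i(s'_m)$ on the right-hand side, and you bound $\sum_m p_i(s'_m)$ by $\sum_{\sk\in K}p_i(\sk)\le \ell'$. Plugging this in gives $\sum_m p_i(s_m)\le \ell'+\frac{1}{\eta}(\ell'\varepsilon+\Delta+\mathrm{dist})$, which is vacuous (payments are at most $1$ anyway); keeping it as $\eta\sum_{\sk\in K}p_i(\sk)$ instead yields only $\sum_{\sk\in K}p_i(\sk)\ge \frac12\sum_{k}p_i(\sk)-\frac{1}{2\eta}(\cdots)$, a multiplicative loss that the lemma does not allow. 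The correct accounting either bounds the \emph{number} of released/cut surrogates by $\ell'-|L|$, or, as in the paper, never isolates $\sum_{\sk\notin K}p_i(\sk)$: one proves $\E\bigl[\sum_{\sk\in K}p_i(\sk)-\sum_{\sk\in K'}p_i(\sk)\bigr]\ge -\frac{1}{\eta}\bigl(\ell'\varepsilon+\Delta+2\ell'\wass(\DD_i,\DD_i')\bigr)$ and separately lower bounds $\sum_{\sk\in K'}p_i(\sk)$ by $\sum_k p_i(\sk)$ minus the deficiency. Relatedly, your rerouting ``for each $s_m$'' charges $\Delta(\Br,\Bs)$ once per unmatched surrogate; to pay $\Delta$ only once you must perform all the swaps simultaneously (the paths are disjoint, and the resulting matching is precisely the paper's $K'$) and invoke $W_i(K')\le W_i(V)$ a single time. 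With these repairs your argument essentially collapses back onto the paper's proof of the stronger Lemma in the appendix.
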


To prove Theorem~\ref{thm:general reduction_2}, one only needs to choose $K(\Br,\Bs)$ to be the matching $O(\Br,\Bs)$ produced Algorithm~\ref{alg:new online matching}, and combine the guarantees in Lemma~\ref{lem:M' BIC} and ~\ref{lem:mout lower bound}.

Instead of proving Lemma~\ref{lem:payment K to V simplified}, we prove the following strengthened version of the statement.

\begin{lemma}\label{lem:payment K to V}
\notshow{Let $K$ be any randomized matching between replicas and surrogates. Suppose for all $\Br,\Bs$, $\E[W_i(K)]\geq W_i(V)-\Delta$, then we have
$$(1-\eta)\sum_i\E_{\Br,\Bs}\left[\sum_{\sk\in K(\Br,\Bs)}p_i(\sk)/\ell'\right]\geq \sum_i\E_{\Bs}\left[\sum_{k\in [\ell']}p_i(\sk)/\ell'\right]-n\left[\eta+\sqrt{\frac{T}{\ell}}+\frac{\varepsilon}{\eta}+\frac{\Delta}{d\ell\eta}\right]-\frac{2}{\eta}\wass(\mathcal{D},\mathcal{D}')$$
where $\wass(\mathcal{D},\mathcal{D}')$ is the sum of the Wasserstein Distance between every $\mathcal{D}_i$ and $\mathcal{D}_i'$. }

Let  $\E_{\Br,\Bs}\left[W_i\left(K\left( \Br,\Bs\right)\right)\right]= \E_{\Br,\Bs}\left[W_i\left(V\left( \Br,\Bs\right)\right)\right]-\Delta$. We have
$$(1-\eta)\E_{\Br,\Bs}\left[\sum_{\sk\in K(\Br,\Bs)}p_i(\sk)/\ell'\right]\geq \E_{\Bs}\left[\sum_{k\in [\ell']}p_i(\sk)/\ell'\right]-\left(\eta+\sqrt{\frac{|\TT_i'|}{\ell}}+\frac{\varepsilon}{\eta}+\frac{\Delta}{d\ell\eta}\right)-\frac{2}{\eta}\wass(\mathcal{D}_i,\mathcal{D}_i').$$

Moreover, for any coupling $c_i(\cdot)$ that $v_i$ is non-increasing w.r.t. $c_i(\cdot)$, the last term can be improved to 
$$-\frac{1}{\eta \ell'}\E_{\Br,\Bs}\left[\sum_{(\rj,\sk)\in K(\Br,\Bs)} \E_{c_i(\Br),~t_{-i}\sim \DD_{-i}}\left[v_i(\rj,x(\sk,t_{-i}))-v_i(c_i(\rj),x(\sk,t_{-i}))\right]\right].\footnote{For the rest of the proof, when we use the notation $\E_{c_i(\Br)}[\cdot]$, we are taking the expectation over the randomness of the coupling. The $c_i(\Br)=\left (c_i(\rj)\right )_{j\in[\ell']}$ inside the expectation is the realized type after coupling.}$$ 
\end{lemma}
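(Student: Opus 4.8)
\textbf{Proof proposal for Lemma~\ref{lem:payment K to V}.}

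The plan is to follow the standard ``surrogate competition preserves revenue'' argument from \cite{DaskalakisW12, RubinsteinW15, CaiZ17}, but carry the slack term $\Delta$ through the inequalities and handle the mismatch between $\DD_i$ (replicas) and $\DD_i'$ (surrogates) via the Wasserstein coupling. First I would set up the comparison matching: on the graph $\hG_i$, for each surrogate $\sk$ consider matching it to a replica of its own type; since there are $\ell'=d\ell$ replicas drawn i.i.d.\ from $\DD_i'$ and $\ell'$ surrogates drawn i.i.d.\ from $\DD_i$ (well, $\ell$ surrogates each duplicated $d$ times), a concentration/greedy argument shows there is a matching $M(\Br,\Bs)$ in which all but $O(\ell'\sqrt{|\TT_i'|/\ell})$ surrogates are matched to a replica whose value function is close to theirs (this is where the $\sqrt{|\TT_i'|/\ell}$ error term and the Wasserstein term enter). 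The key inequality is then $W_i(K)\ge W_i(V)-\Delta\ge W_i(M)-\Delta$, which after expanding $W_i(r,s)=\E_{t_{-i}}[v_i(r,x(s,t_{-i}))]-(1-\eta)\E_{t_{-i}}[p_i(s,t_{-i})]$ and rearranging lets us lower bound $\sum_{\sk\in K}p_i(\sk)$ in terms of $\sum_{\sk\in M}p_i(\sk)$ plus value differences.

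Next I would bound each piece of the value difference. On the matching $M$, replica $\rj$ matched to surrogate $\sk$ of (nearly) the same type contributes $v_i(\rj,x(\sk,t_{-i}))$ which is within $\dist_i(\rj,\sk)$ of $v_i(\sk,x(\sk,t_{-i}))$; summing and using the optimal coupling between $\DD_i$ and $\DD_i'$ gives the $\frac{2}{\eta}\wass(\DD_i,\DD_i')$ term (the factor $1/\eta$ comes from dividing through by $(1-\eta)$ at the very end and absorbing lower-order terms). For the matching $K$, replica $\rj$ matched to $\sk$ contributes $v_i(\rj,x(\sk,t_{-i}))$, and since $\MM$ is $\varepsilon$-BIC we can relate the interim utility of type $\sk$ reporting $\sk$ to the interim utility of reporting other types, yielding the $\varepsilon/\eta$ loss; the $\Delta/(d\ell\eta)$ term is just the per-replica version of the suboptimality of $K$ after dividing by $\ell'=d\ell$ and by $(1-\eta)$. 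Finally the bare $\eta$ term comes from the $(1-\eta)$ discount applied to $\rev(\MM)$ itself, bounded using $p_i(\sk)\le v_i(\sk,\cdot)\le 1$ (type values normalized to $[0,1]$ so payments are at most $1$ by IR). Assembling these and dividing by $\ell'$ gives the stated bound.

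For the improved last term under a non-increasing coupling $c_i$, the point is that we no longer bound $v_i(\rj,x(\sk,t_{-i}))-v_i(\sk,x(\sk,t_{-i}))$ crudely by $\dist_i$; instead, since $v_i$ is non-increasing w.r.t.\ $c_i$, we have $v_i(\rj,o)\ge v_i(c_i(\rj),o)$ for every outcome $o$, so the value loss on edge $(\rj,\sk)\in K$ is exactly $\E[v_i(\rj,x(\sk,t_{-i}))-v_i(c_i(\rj),x(\sk,t_{-i}))]$ and this is what appears (divided by $\eta\ell'$). One has to be careful that in the comparison matching $M$ we can pair replica $\rj$ with a surrogate of type $c_i(\rj)$, so that the coupling is used consistently on both sides; this is exactly the argument in \cite{CaiZ17,kothari2019approximation}.

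The main obstacle I expect is the bookkeeping around the two graphs $G_i$ and $\hG_i$ and the $d$-fold duplication: one must check that the matching $K$ produced by Algorithm~\ref{alg:new online matching} (which may drop edges to $0$-surrogates) still corresponds to a $1$-to-$1$ matching in $\hG_i$ with the claimed weight deficit $\Delta$ relative to $V$, and that the ``diagonal'' comparison matching $M$ can be realized as a genuine matching in $\hG_i$ with high probability over $\Br,\Bs$ even though $\DD_i\ne\DD_i'$. Concentration of the number of badly-matched surrogates requires a McDiarmid-type argument (as in the proof of Lemma~\ref{lem:find gamma}), and threading the $\varepsilon$-BIC slack through without losing an extra factor of $\ell$ is the delicate quantitative step.
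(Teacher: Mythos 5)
Your overall strategy is the classical direct comparison from \cite{DaskalakisW12,RubinsteinW15,CaiZ17} with the slack $\Delta$ carried along, but the step you wave through (``rearranging lets us lower bound $\sum_{\sk\in K}p_i(\sk)$ in terms of $\sum_{\sk\in M}p_i(\sk)$ plus value differences'') is exactly where the difficulty sits, and as sketched it does not give the stated bound. Write each edge weight as $W_i(r,s)=u_i(c_i(r))+\varepsilon+\eta p_i(s)+\Delta_{i,c_i}(r,s)$ (upper bound on $K$-edges, via the coupling and $\varepsilon$-BIC applied to the coupled type $c_i(r)$ — note it must be applied to $c_i(\rj)\in\TT_i$, not to the surrogate type $\sk$ as you wrote) and $W_i(r,c_i(r))=u_i(c_i(r))+\eta p_i(c_i(r))+\Delta_{i,c_i}(r,c_i(r))$ on $M$-edges. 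Plugging these into $\E[W_i(K)]\ge \E[W_i(M)]-\Delta$ and rearranging, the payment terms come out as you want, but the interim utilities do \emph{not} cancel: you are left with $\sum_{r\text{ matched in }M}u_i(c_i(r))-\sum_{r\text{ matched in }K}u_i(c_i(r))$, and these two sums range over \emph{different} sets of replicas, because $K$ is produced by Algorithm~\ref{alg:new online matching} and is generally not perfect (replicas matched to $0$-surrogates are dropped), while $M$ is the coupled same-type matching. Controlling this residue crudely costs $(\ell'-|M|)\cdot\sup_t u_i(t)$, which (i) needs an upper bound on interim utilities that does not follow from IR alone (the paper never assumes nonnegative payments for $\MM$, only $p_i(s)\le 1$), and (ii) lands inside the $1/\eta$ factor, so even granting $u_i\le 1$ you prove the lemma with $\sqrt{|\TT_i'|/\ell}$ replaced by roughly $\sqrt{|\TT_i'|/\ell}\,(1+1/\eta)$ — a strictly weaker statement than claimed (salvageable for Theorem~\ref{thm:general reduction_2} only by re-tuning $\ell$).

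The paper's proof is built precisely to avoid this residue. It takes the union of $K(\Br,\Bs)$ and the maximal coupled same-type matching $L(c_i(\Br),\Bs)$, decomposes it into alternating paths and cycles, and defines an auxiliary matching $K'$ by swapping $K$-edges for $L$-edges along the paths in $\PP$ that start at a surrogate in $L\setminus K$. Along each such path the $\varepsilon$-BIC inequality is applied with the \emph{same} coupled replica on both sides, so the $u_i(c_i(\cdot))$ terms cancel edge by edge, the payments telescope to the path endpoints (with the $\eta$ discount converting weight differences into payment differences), and the only extra tool needed at a form-(b) endpoint is IR, i.e.\ $u_i\ge 0$. The comparison $W_i(K)-W_i(K')\ge W_i(K)-W_i(V)\ge-\Delta$ (in expectation) then closes the argument, and Corollary~\ref{cor:bound revenue by L} relates $K'$ to the full surrogate revenue with only the $\sqrt{|\TT_i'|/\ell}$ loss and no $1/\eta$. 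Note also the paper's warning that a \emph{per-path} comparison against $L$ itself fails ($L$ may beat $K$ on the paths in $\PP$ while losing elsewhere), which is why the auxiliary $K'$, rather than $L$, is the right comparator; your proposal, being a global comparison, dodges that particular trap but runs into the uncancelled-utility problem above. So the missing idea is some device (the paper's $K'$, or an equivalent) that keeps the matched replica sets aligned before invoking $\varepsilon$-BIC.
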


\begin{prevproof}{Lemma}{lem:payment K to V}
To prove the statement, we consider an arbitrary coupling $c_i(\cdot,\cdot)$ of distribution $\mathcal{D}'_i$ and $\mathcal{D}_i$ 
. For every replica $r\in \TT_i'$, $c_i(r)$ is a random type from  $\mathcal{T}_i$. For every realization of the types $c_i(\Br)=\left(c_i(\rj)\right)_{j\in [\ell']}$, we consider the maximal matching that  matches a replica $\rj$ with a surrogate $\sk$ only if $c_i(\rj)=\sk$. We denote the matching as $L(c_i(\Br),\Bs)$ and refer to it as the \emph{maximal coupled same-type matching}. In the next Lemma, we argue that in expectation of $\Br$, $\Bs$ and the realization of $c_i(\Br)$, the expected size of $L(c_i(\Br),\Bs)$ is close to $\ell'$.

\begin{lemma}\label{lem:same type matching}
	For any $\Br$, $\Bs$, and realization of $c_i(\Br)=\left (c_i(\rj)\right )_{j\in [\ell']}$, let $L(c_i(\Br),\Bs)$ be a maximal coupled same-type matching, then $$\E_{\Br,\Bs,c_i(\Br)}\left[\left\lvert L(c_i(\Br),\Bs)\right\rvert\right]\geq \ell'-\sqrt{d\left\lvert \TT_i'\right\rvert\cdot \ell'}.$$
\end{lemma}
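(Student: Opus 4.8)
\textbf{Proof proposal for Lemma~\ref{lem:same type matching}.}
The plan is to reduce the claim to a one-line combinatorial identity for $\bigl|L(c_i(\Br),\Bs)\bigr|$ followed by a routine second-moment estimate. Recall that $\hG_i$ is obtained from $G_i$ by duplicating each of the $\ell$ (original, non-$0$) surrogates $d$ times, so its surrogate side has $\ell'=d\ell$ copies while its replica side has $\ell'$ replicas $\rj$. For a type $v$, let $m_v$ be the number of the $\ell$ original surrogates of type $v$ (so exactly $d\,m_v$ of the $\ell'$ surrogate copies have type $v$), and let $b_v=|\{j\in[\ell']:\,c_i(\rj)=v\}|$. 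Since a maximal coupled same-type matching pairs replicas to surrogate copies only within a common type, it matches exactly $\min(b_v,d\,m_v)$ replicas of type $v$; using $\sum_v b_v=\ell'$ we get
\[
\ell'-\bigl|L(c_i(\Br),\Bs)\bigr|=\sum_{v}\bigl(b_v-d\,m_v\bigr)^{+},
\]
and it remains to bound the expectation of the right-hand side by $\sqrt{d\,|\TT_i'|\,\ell'}$.

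I would then exploit that $b_v$ and $d\,m_v$ have equal means. Since the replicas are i.i.d.\ from $\DD_i'$ and the coupling $c_i$ preserves marginals, the coupled types $c_i(\rj)$ are i.i.d.\ from $\DD_i$; writing $q_v=\Pr_{\DD_i}[v]$, we have $b_v\sim\mathrm{Bin}(\ell',q_v)$ and, independently, $m_v\sim\mathrm{Bin}(\ell,q_v)$, so $\E[b_v]=\ell'q_v=d\ell q_v=\E[d\,m_v]$. Hence $X_v:=b_v-d\,m_v$ is mean-zero, giving $\E[(X_v)^{+}]=\tfrac12\E|X_v|\le\tfrac12\sqrt{\Var(X_v)}$. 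By independence, $\Var(X_v)=\ell'q_v(1-q_v)+d^{2}\ell\,q_v(1-q_v)=d\ell(1+d)\,q_v(1-q_v)\le 2d\ell'\,q_v$ (using $d\ge1$ and $\ell'=d\ell$). Summing over types and applying Cauchy--Schwarz, $\sum_v\sqrt{q_v}\le\sqrt{\#\{\text{types}\}}$, where the number of types that occur is at most $|\TT_i|$ — and since we may harmlessly enlarge $\supp(\DD_i')$ to contain $\supp(\DD_i)$ (this changes neither $\DD_i'$ as a distribution nor the mechanism), this is at most $|\TT_i'|$ (in the symmetric case $\DD=\DD'$ it is immediate). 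Thus
\[
\E\Bigl[\,\ell'-\bigl|L(c_i(\Br),\Bs)\bigr|\,\Bigr]=\sum_v\E[(X_v)^{+}]\le\frac{\sqrt{2d\ell'}}{2}\sum_v\sqrt{q_v}\le\sqrt{d\,|\TT_i'|\,\ell'},
\]
which is exactly the claimed bound (with a factor $\tfrac1{\sqrt2}$ to spare); taking the outer expectation over $\Bs$ changes nothing since everything above already holds in full expectation.

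I do not anticipate a genuine obstacle: the probabilistic core is the textbook fact that a mean-zero random variable has positive part equal to half its $L_1$-norm, combined with a variance bound and Cauchy--Schwarz. The one place to be careful is the combinatorial setup — tracking the $d$-fold duplication so that type $v$ has capacity $d\,m_v$ and both sides of $\hG_i$ have $\ell'$ nodes — and, when $\DD\ne\DD'$, confirming that the relevant type count is controlled by $|\TT_i'|$, which holds in the regimes where the lemma is used (e.g., in the application of Theorem~\ref{thm:general reduction_2} the surrogate distribution is a subsample of $\DD_i'$, so its support lies in $\TT_i'$).
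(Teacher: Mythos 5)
Your proposal is correct, but it reaches the bound by a genuinely different route than the paper. The paper does not do any direct computation: it partitions $\hG_i$ into $d$ subgraphs, the $\alpha$-th containing the replicas and surrogate copies with indices in $[\alpha\ell+1,(\alpha+1)\ell]$, so that each subgraph consists of $\ell$ i.i.d.\ replicas from $\DD_i'$ facing the $\ell$ i.i.d.\ surrogates from $\DD_i$; it then invokes the adapted Hartline--Kleinberg--Malekian matching lemma (Lemma~\ref{lem:equal type matching}) in each block to get expected matched size at least $\ell-\sqrt{|\TT_i'|\ell}$, and sums over the $d$ blocks, using $d\sqrt{|\TT_i'|\ell}=\sqrt{d|\TT_i'|\ell'}$. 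You instead treat the $d$-fold duplication head-on as a per-type capacity $d\,m_v$, write the deficit as $\sum_v (b_v-d\,m_v)^+$, and redo the probabilistic core from scratch (mean-zero positive part equals half the $L_1$ norm, a variance bound, Cauchy--Schwarz). This buys a self-contained argument that avoids the black-box citation, gains a factor $1/\sqrt{2}$, and makes transparent that the natural support parameter is $|\supp(\DD_i)|$, the common law of the coupled replica types and of the surrogates, rather than $|\TT_i'|=|\supp(\DD_i')|$. That last point is also the one soft spot: the phrase ``enlarge $\supp(\DD_i')$'' is not literally meaningful, since the support is determined by $\DD_i'$; what actually closes the gap is the observation you make at the end, namely that in every use of the lemma either $\DD_i=\DD_i'$ or the surrogate distribution is an empirical distribution whose support lies inside $\TT_i'$, so $|\supp(\DD_i)|\le|\TT_i'|$ and your bound implies the stated one. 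The paper's own route carries the same implicit containment inside the statement of Lemma~\ref{lem:equal type matching}, so this is a shared subtlety rather than a defect of your argument relative to the paper's.
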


\begin{proof}
To prove the result, we first invoke the following Lemma.
	\begin{lemma}[Adapted from \cite{HartlineKM11}]\label{lem:equal type matching}
Let $\Br'$ be $N$ replicas drawn i.i.d. from distribution $\DD'_i$, and $\Bs'$ be $N$ surrogates  drawn i.i.d. from distribution $\DD_i$. For any coupling $c_i(\cdot)$ between $\DD_i'$ and $\DD_i$, the expected cardinality of a maximal matching that only matches a replica $r$ and a surrogate $s$ when $c_i(r)=s$  is at least $N-\sqrt{|\mathcal{T}_i'|\cdot N}$. The expectation is over the randomness of $\Br'$, $\Bs'$, and the coupling $c_i(\Br')$. \end{lemma}
Although we have $\ell'$ replicas and $\ell'$ surrogates, we cannot directly apply Lemma~\ref{lem:equal type matching}, as the surrogates are not i.i.d. samples from $\DD_i$. Instead, we partition $\hG_i$ into $d$ subgraphs. The $\alpha$-th subgraph contains all replicas $\rj$ and surrogates $\sk$ with $j$ and $k$ lie in $[\alpha \ell+1,(\alpha+1)\ell]$. If we only consider the $\alpha$-th subgraph, due to our construction of $\hG_i$, the replicas are all sampled i.i.d. from $\DD_i'$ and the surrogates are also sampled i.i.d. from $\DD_i$. Therefore, Lemma~\ref{lem:equal type matching} implies that a maximal coupled same-type matching in the $\alpha$-th subgraph has expected size at least $\ell-\sqrt{|\TT_i'|\ell}$. Since there are $d$ subgraphs, so the expected size of a maximal coupled same-type matching is at least $\ell'-\sqrt{d\left\lvert \TT_i'\right\rvert\cdot \ell'}$.
\end{proof}

Now, it suffices to argue that the total payment from surrogates that are in $L(c_i(\Br),\Bs)$ but not in $K(\Br,\Bs)$ is small. Indeed, when $K(\Br,\Bs)$ is the maximum weight matching, one can directly prove the claim. However, $K(\Br,\Bs)$ only has approximately maximum weight, and it appears to be difficult to directly compare $K(\Br,\Bs)$ with $L(c_i(\Br),\Bs)$. Instead, we construct an auxiliary matching based on both $K(\Br,\Bs)$ and $L(c_i(\Br),\Bs)$. For any $\Br,\Bs$ and realization of types $\left (c_i(\rj)\right)_{j\in [\ell']}$, we decompose the union of these two matchings into a set of disjoint alternating paths and cycles. Every surrogate that appears in $L(c_i(\Br),\Bs)$ but not in $K(\Br,\Bs)$ must be an endpoint of some alternating path. These alternating paths have the following two forms:

\begin{enumerate}[label=(\alph*).]
    \item It starts with a surrogate in $L(c_i(\Br),\Bs)\backslash K(\Br,\Bs)$ and ends with a surrogate in $K(\Br,\Bs)\backslash L(c_i(\Br),\Bs)$ with the form $\left(s_{(1)},r_{(1)},s_{(2)},r_{(2)},...,r_{(a)},s_{(a+1)}\right)$.
    \item It starts with a surrogate in $L(c_i(\Br),\Bs)\backslash K(\Br,\Bs)$ and ends with a replica with the form\\ $\left(s_{(1)},r_{(1)},s_{(2)},r_{(2)},...,s_{(a)},r_{(a)}\right)$.
\end{enumerate}

We use \textbf{$\PP$ to denote the set of all alternating paths of form (a) and (b)}. We construct a new matching $K'(c_i(\Br),\Bs)$ as follows: start with the matching $K(\Br,\Bs)$, for any alternating path $P$ of form (a) and (b), swap the edges in $K(\Br,\Bs)$ with the ones in $L(c_i(\Br),\Bs)$, that is, replace all edges in $P\cap K(\Br,\Bs)$ with edges in $P\cap L(c_i(\Br),\Bs)$. Since all the alternating paths are disjoint,  $K'(c_i(\Br),\Bs)$ is indeed a matching.

\begin{corollary}\label{cor:bound revenue by L}
$$\E_{\Br,\Bs,c_i(\Br)}\left[\sum_{\sk\in K'(c_i(\Br),\Bs)} p_i(\sk)/\ell'\right]\geq \E_{\Bs}\left[\sum_{k\in[\ell']}  p_i(\sk)/\ell'\right]- \sqrt{\frac{|\mathcal{T}_i'|}{\ell}}.$$
\end{corollary}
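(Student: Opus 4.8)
The plan is to combine a structural property of the auxiliary matching $K'(c_i(\Br),\Bs)$ with Lemma~\ref{lem:same type matching}. The key claim to establish is that \emph{every surrogate matched by the maximal coupled same-type matching $L(c_i(\Br),\Bs)$ is also matched by $K'(c_i(\Br),\Bs)$}. Granting this claim, the corollary follows quickly. By the individual rationality of $\MM$, for every surrogate type $\sk\in\TT_i$ we have $p_i(\sk)=\E_{t_{-i}\sim\DD_{-i}}[p_i(\sk,t_{-i})]\le \E_{t_{-i}\sim\DD_{-i}}[v_i(\sk,x(\sk,t_{-i}))]\le 1$. Hence, for any fixed $\Br$, $\Bs$, and realization of $c_i(\Br)$, discarding the nonpositive summands and bounding the positive ones by $1$ yields
\begin{equation*}
\sum_{k\in[\ell']:\,\sk\notin K'(c_i(\Br),\Bs)} p_i(\sk)\ \le\ \bigl|\{k\in[\ell']:\sk\notin K'(c_i(\Br),\Bs)\}\bigr|\ \le\ \ell'-\bigl|L(c_i(\Br),\Bs)\bigr|,
\end{equation*}
where the last step uses the claim: since $K'$ matches a superset of the surrogates matched by $L$, it leaves at most $\ell'-|L(c_i(\Br),\Bs)|$ surrogates unmatched. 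Writing $\sum_{\sk\in K'}p_i(\sk)=\sum_{k\in[\ell']}p_i(\sk)-\sum_{\sk\notin K'}p_i(\sk)$, taking expectations over $\Br,\Bs,c_i(\Br)$, and invoking Lemma~\ref{lem:same type matching}, which gives $\E_{\Br,\Bs,c_i(\Br)}[|L(c_i(\Br),\Bs)|]\ge \ell'-\sqrt{d|\TT_i'|\ell'}$, we obtain
\begin{equation*}
\E_{\Br,\Bs,c_i(\Br)}\!\left[\sum_{\sk\in K'(c_i(\Br),\Bs)}p_i(\sk)/\ell'\right]\ \ge\ \E_{\Bs}\!\left[\sum_{k\in[\ell']}p_i(\sk)/\ell'\right]-\frac{\sqrt{d|\TT_i'|\ell'}}{\ell'}\ =\ \E_{\Bs}\!\left[\sum_{k\in[\ell']}p_i(\sk)/\ell'\right]-\sqrt{\frac{|\TT_i'|}{\ell}},
\end{equation*}
using $\ell'=d\ell$, which is exactly the desired inequality. (Here we also used that $\sum_{k\in[\ell']}p_i(\sk)$ depends only on $\Bs$.)

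The substantive part is the structural claim. Fix $\Br$, $\Bs$, and a realization of $c_i(\Br)$, and decompose $K(\Br,\Bs)\cup L(c_i(\Br),\Bs)$---a graph in which every vertex has degree at most two, since each of $K$ and $L$ is a matching---into vertex-disjoint alternating paths and cycles. Let $\sk$ be a surrogate matched by $L$. If $\sk$ has degree two (it lies on a cycle or in the interior of a path), then $\sk$ is also matched by $K$; moreover $K'$ agrees with $K$ on cycles, and on every path $K'$ either agrees with $K$ or, when the path is swapped, matches $\sk$ via its incident $L$-edge, so $\sk$ is matched by $K'$ in all these cases. If $\sk$ has degree one, its unique incident edge is an $L$-edge (as $\sk$ is $L$-matched), hence $\sk\in L\setminus K$ and $\sk$ is an endpoint of a path $P$; walking along $P$ from $\sk$ the edges alternate $L,K,L,K,\dots$, so if the other endpoint of $P$ is a surrogate then $P$ has an even number of edges, its last edge is a $K$-edge, that endpoint lies in $K\setminus L$, and $P$ is of form (a), while if the other endpoint is a replica then $P$ is of form (b). In either case $P\in\PP$ is swapped, so after the swap $\sk$ is matched in $K'$ through its $L$-edge. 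This establishes the claim.

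The main obstacle is exactly this alternating-path bookkeeping: the parity argument identifying the paths with a surrogate endpoint in $L\setminus K$ as precisely those of forms (a) and (b), and the check that after swapping only the paths in $\PP$ every $L$-matched surrogate is matched in $K'$ while $K'$ remains a legal matching (the latter already guaranteed because the alternating paths are vertex-disjoint). For this corollary we only need the ``$K'$ matches all surrogates that $L$ matches'' property together with $p_i(\sk)\le 1$; the other property of $K'$---that its total weight is close to that of $K$---is used elsewhere in the proof of Lemma~\ref{lem:payment K to V}.
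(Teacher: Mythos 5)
Your proof is correct and follows essentially the same route as the paper: both arguments observe that on paths in $\PP$ the matching $K'$ agrees with $L$ while elsewhere it agrees with $K$, so $K'$ matches every surrogate that $L$ matches, and then combine $p_i(\sk)\le 1$ (from IR) with Lemma~\ref{lem:same type matching} and $\ell'=d\ell$ to get the $\sqrt{|\TT_i'|/\ell}$ loss. The only difference is that you spell out the degree/parity bookkeeping identifying the form (a)/(b) paths, which the paper leaves implicit in the construction of $K'$.
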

\begin{proof}
Fix  $\Br,\Bs$ and types $\{c_i(\rj)\}_{j\in [\ell']}$. For any alternating path $P$ of form either (a) or (b), $P\cap L(c_i(\Br),\Bs)$ is the same as $P\cap K'(c_i(\Br),\Bs)$. For other alternating paths, the matched surrogate in $P\cap L(c_i(\Br),\Bs)$ is a subset of $P\cap K(\Br,\Bs)$. Thus the number of the matched surrogates in $K'(c_i(\Br),\Bs)$ is at least $\lvert L(c_i(\Br),\Bs)\rvert$. By Lemma~\ref{lem:same type matching}, $\E_{\Br,\Bs,c_i(\Br)}[|\{k:\sk\not\in K'(c_i(\Br),\Bs)\}|]\leq \sqrt{d|\mathcal{T}_i'|\cdot \ell'}$. As $\MM$ is IR, $p_i(s)\leq 1$ for any surrogate $s\in \TT_i$. Therefore, $$\E_{\Br,\Bs,c_i(\Br)}\left[\sum_{\sk\in K'(c_i(\Br),\Bs)} p_i(\sk)/\ell'\right]\geq \E_{\Bs}\left[\sum_{k\in[\ell']}  p_i(\sk)/\ell'\right]- \sqrt{\frac{|\mathcal{T}_i'|}{\ell}}.$$\end{proof}

Equipped with Corollary~\ref{cor:bound revenue by L}, we only need to compare $K(\Br,\Bs)$ with $K'(c_i(\Br),\Bs)$.
\begin{lemma}\label{lem:payment K to L}
$$\E_{\Br,\Bs}\left[\sum_{\sk\in K(\Br,\Bs)} p_i(\sk)/\ell'\right]\geq \E_{\Br,\Bs,c_i(\Br)}\left[\sum_{\sk\in K'(c_i(\Br),\Bs)} p_i(\sk)/\ell'\right]-\frac{1}{\eta}(\varepsilon+\frac{\Delta}{\ell'})-\frac{2}{\eta}\wass(\mathcal{D}_i,\mathcal{D}_i')$$
Moreover, for any coupling $c_i$ such that $v_i$ is non-increasing w.r.t. $c_i$, the last term can be improved to\\
$-\frac{1}{\eta}\E_{t\sim \mathcal{D}'}[v_i(t_i,x'(t))-v_i(c_i(t_i),x'(t))]$.
\end{lemma}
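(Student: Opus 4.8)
The plan is to compare the payments collected from the matching $K(\Br,\Bs)$ with those collected from the auxiliary matching $K'(c_i(\Br),\Bs)$ by tracking, edge by edge, how the two matchings differ along the alternating paths in $\PP$. Recall that $K'$ is obtained from $K$ by swapping, along every path $P\in\PP$, the $K$-edges for the $L(c_i(\Br),\Bs)$-edges. So the only surrogates whose payments change are those lying on paths in $\PP$, and for each such path of form (a) we lose the payment of the $K$-endpoint surrogate $s_{(a+1)}$ and gain the payment of the $L$-endpoint surrogate $s_{(1)}$ (for form (b) we only gain $p_i(s_{(1)})$ and lose nothing). The first step is therefore to write $\E_{\Br,\Bs}[\sum_{\sk\in K}p_i(\sk)/\ell']-\E_{\Br,\Bs,c_i(\Br)}[\sum_{\sk\in K'}p_i(\sk)/\ell']$ as an expectation over $\Br,\Bs,c_i(\Br)$ of $\frac{1}{\ell'}\sum_{P\in\PP}\bigl(p_i(\text{$K$-endpoint of }P)-p_i(\text{$L$-endpoint of }P)\bigr)$, with the convention that the contribution from a form-(b) path is $-p_i(\text{$L$-endpoint})\le 0$, which only helps.

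Next, I would bound this path-sum. For a single alternating path $P=(s_{(1)},r_{(1)},\ldots,r_{(a)},s_{(a+1)})$ of form (a), the edges alternate between $L(c_i(\Br),\Bs)$-edges and $K(\Br,\Bs)$-edges. Along the $L$-edges we have $c_i(r_{(b)})=s_{(b)}$ (coupled same-type), so the interim value term $v_i(r_{(b)},x(s_{(b)},t_{-i}))$ is controlled by $v_i(c_i(r_{(b)}),\cdot)$ up to the coupling error, and along the $K$-edges we will use the $\varepsilon$-BIC property of $\MM$ to relate the payment $p_i(s_{(b+1)})$ to $p_i(s_{(b)})$ plus a value difference plus $\varepsilon$. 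Concretely, for each replica $r_{(b)}$, $\varepsilon$-BIC (applied to the type $c_i(r_{(b)})=s_{(b)}$ reporting $s_{(b+1)}$ versus $s_{(b)}$) gives $v_i(s_{(b)},x(s_{(b)}))-p_i(s_{(b)})\ge v_i(s_{(b)},x(s_{(b+1)}))-p_i(s_{(b+1)})-\varepsilon$, i.e. $p_i(s_{(b+1)})-p_i(s_{(b)})\le v_i(s_{(b)},x(s_{(b+1)}))-v_i(s_{(b)},x(s_{(b)}))+\varepsilon$. Telescoping along the whole path, the $p_i(s_{(a+1)})-p_i(s_{(1)})$ difference is bounded by a telescoping sum of value differences plus $a\varepsilon$; the telescoping of the value terms collapses (using $v_i\in[0,1]$ and the coupling to swap $r_{(b)}$ for $s_{(b)}=c_i(r_{(b)})$ at the cost of $\dist_i(r_{(b)},c_i(r_{(b)}))$) down to an $O(1)$ boundary term plus the coupling errors $\sum_b \dist_i(r_{(b)},c_i(r_{(b)}))$. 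To convert the per-path count of $\varepsilon$'s and coupling errors into the claimed bound, I need to know that the total weight lost along the paths in $\PP$ is at most $\Delta$: this is where $\E[W_i(K)]\ge\E[W_i(V)]-\Delta$ enters. Since $V$ is the maximum weight matching and $L(c_i(\Br),\Bs)\cup K(\Br,\Bs)$ decomposes into these alternating paths, the sum over $P\in\PP$ of (weight of $L$-edges in $P$ minus weight of $K$-edges in $P$) can be charged against $W_i(V)-W_i(K)$ in expectation, giving $\E[\sum_{P}(\text{this gap})]\le\Delta$; combined with the $\varepsilon$-BIC inequalities this controls $\sum_{P\in\PP}a_P\varepsilon$ by roughly $\ell'\varepsilon/\eta+\Delta/\eta$ after dividing by $\ell'$, and the coupling errors sum to $\ell'\cdot\wass(\DD_i,\DD_i')$ in expectation, giving the $\frac{2}{\eta}\wass(\DD_i,\DD_i')$ term. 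The factor $1/\eta$ appears because the edge weights $W_i$ use the discounted price $(1-\eta)p_i$, so a bound on value/weight differences translates to a bound on payment differences only after dividing by $\eta$ (this is exactly the mechanism of Lemma~\ref{cor:revenue of basic M_0}, and I would follow that computation closely).

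For the improved bound under a non-increasing coupling, I would redo the telescoping but now never discard the sign of $v_i(r_{(b)},x(s_{(b)}))-v_i(c_i(r_{(b)}),x(s_{(b)}))\ge 0$: instead of bounding each coupling error by $\dist_i$, I keep the term $v_i(r_{(b)},x(\cdot))-v_i(c_i(r_{(b)}),x(\cdot))$ and, after telescoping and taking expectations, recognize the resulting sum as $\E_{t\sim\DD'}[v_i(t_i,x'(t))-v_i(c_i(t_i),x'(t))]/\eta$, using that the surrogate chosen for a uniformly random replica (the true type) is distributed as the allocation $x'$ of $\MM'$. The main obstacle I anticipate is the bookkeeping of the telescoping sums along paths of unbounded length $a$ while simultaneously keeping the count of $\varepsilon$-terms tied to the weight deficit $\Delta$ rather than to $a$ itself — one has to be careful that the "extra" $\varepsilon$'s incurred per path are paid for by a genuine loss in matching weight, so that summing over all paths and all agents does not blow up; getting the constants and the division by $\eta$ right is the delicate part, and it is essentially the generalization of the exact-maximum-weight argument of~\cite{DaskalakisW12, RubinsteinW15, CaiZ17} to approximately maximum matchings promised in the introduction.
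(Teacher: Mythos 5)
Your overall route is the paper's: decompose $K(\Br,\Bs)\cup K'(c_i(\Br),\Bs)$ into the alternating paths in $\PP$, apply the $\varepsilon$-BIC inequality of $\MM$ at each coupled same-type edge $c_i(r_{(k)})=s_{(k)}$, convert value/weight differences into payment differences at a cost of $1/\eta$ (since $W_i$ uses the $(1-\eta)$-discounted price), charge the resulting weight gap to $\Delta$ via the near-optimality of $K$, and bound the coupling terms by $\wass(\DD_i,\DD_i')$. However, two steps are wrong as written. First, you dismiss the form-(b) paths as contributing $-p_i(s_{(1)})\le 0$, ``which only helps''; the sign goes the other way. You are lower-bounding $\E[\sum_{s\in K}p_i(s)]-\E[\sum_{s\in K'}p_i(s)]$, and a form-(b) path contributes exactly $-p_i(s_{(1)})$ to this quantity, which can be as negative as $-1$ per path and there can be many such paths; these paths must be controlled, not dropped. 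The paper handles them by telescoping the $\varepsilon$-BIC inequalities down the path and then invoking the \emph{IR} property of $\MM$ for the coupled type at the dangling $L$-edge, which absorbs the leftover payment $\eta\, p_i(s_{(a)})$ into the weight $W_i(r_{(a)},s_{(a)})$ up to a coupling term; your sketch never uses IR and offers no substitute for this step.

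Second, the claim that the telescoped value differences ``collapse to an $O(1)$ boundary term plus the coupling errors'' is false: after swapping $s_{(b)}=c_i(r_{(b)})$ for $r_{(b)}$, the sum $\sum_b\bigl(v_i(r_{(b)},x(s_{(b+1)},\cdot))-v_i(r_{(b)},x(s_{(b)},\cdot))\bigr)$ does not telescope (the first argument changes with $b$); it is precisely the weight gap along the path, can be of order the path length, and is controlled only globally through $W_i(K)-W_i(K')\ge W_i(K)-W_i(V)$, whose expectation is at least $-\Delta$. If a genuine $O(1)$ per-path term were present, your final bound would pick up an extra term of order (number of paths)$/(\eta\ell')$, which need not be small and does not appear in the statement. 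Relatedly, your justification for charging the $\PP$-restricted gap to $\Delta$ (``$L\cup K$ decomposes into these alternating paths'') is not the right reason: outside $\PP$ the matching $L$ may be much lighter than $K$, so comparing to $L$ gives nothing; the charge works only because swapping along the $\PP$-paths yields the feasible matching $K'$, hence $W_i(K')\le W_i(V)$ --- exactly the subtlety the paper flags as the reason $K'$ rather than $L$ appears in the lemma. Your treatment of the non-increasing-coupling refinement (keep the nonnegative coupling terms only on $K$-edges and identify the expectation with the allocation $x'$ of $\MM'$) is consistent with the paper's.
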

\begin{proof}

Fix any $\Br,\Bs$ and realization of $c_i(\Br)$. Observe that if we decompose the union of $K(\Br,\Bs)$ and $K'(c_i(\Br),\Bs)$ into alternating path and cycles, we will end up with many length $2$ cycles and all the alternating paths in $\PP$. Hence, we only need to consider the paths in $\PP$.

Consider any $k\in [a]$ if the path has form (a) (or $k\in [a-1]$ if the path has form (b)), note that $c_i(r_{(k)})=s_{(k)}$, as this is also an edge in the matching $L(c_i(\Br),\Bs)$. Since $\MM$ is $\varepsilon$-BIC, we have

$$\E_{t_{-i}\sim \DD_{-i}}[v_i(c_i(r_{(k)}),x(s_{(k)},t_{-i}))]-p_i(s_{(k)})\geq \E_{t_{-i}\sim \DD_{-i}}[v_i(c_i(r_{(k)}),x(s_{(k+1)},t_{-i}))]-p_i(s_{(k+1)})-\varepsilon,$$

which is equivalent to
\begin{equation}\label{equ:epsilon-bic}
\begin{aligned}
&W_i(r_{(k)},s_{(k)})-W_i(r_{(k)},s_{(k+1)})\\
&\geq-\varepsilon-\eta\cdot (p_i(s_{(k+1)})-p_i(s_{(k)}))+\Delta_{i,c_i}(r_{(k)},s_{(k)})-\Delta_{i,c_i}(r_{(k)},s_{(k+1)})
\end{aligned}
\end{equation}

where $\Delta_{i,c_i}(r,s)=\E_{t_{-i}\sim \DD_{-i}}[v_i(r,x(s,t_{-i}))-v_i(c_i(r),x(s,t_{-i}))]$.

By summing up Inequality~\eqref{equ:epsilon-bic} for each $k$, we are able to relate the difference of the total weight between $ K(\Br,\Bs)$ and $K'(c_i(\Br),\Bs)$ with the total payment from surrogates in $ K(\Br,\Bs)$ and $K'(c_i(\Br),\Bs)$.
\begin{itemize}
    \item For any form (a) path,
    \begin{align*}
       & \sum_{k=1}^a \left(W_i(r_{(k)},s_{(k+1)})-W_i(r_{(k)},s_{(k)})\right )\\
        \leq & a\cdot\varepsilon+\eta\cdot(p_i(s_{(a+1)})-p_i(s_{(1)}))-\sum_{k=1}^a\left (\Delta_{i,c_i}(r_{(k)},s_{(k)})-\Delta_{i,c_i}(r_{(k)},s_{(k+1)})\right)
    \end{align*}
    \item For any form (b) path
    \begin{align*}
        &\sum_{k=1}^{a-1} \left(W_i(r_{(k)},s_{(k+1)})-W_i(r_{(k)},s_{(k)})\right)-W_i(r_{(a)},s_{(a)})\\
        \leq& (a-1)\cdot\varepsilon+\eta\cdot\sum_{k=1}^{a-1}\left(p_i(s_{(k+1)})-p_i(s_{(k)})\right)-W_i(r_{(a)},s_{(a)})-\sum_{k=1}^{a-1}\left(\Delta_{i,c_i}(r_{(k)},s_{(k)})-\Delta_{i,c_i}(r_{(k)},s_{(k+1)})\right)\\
        \leq&  
        (a-1)\cdot\varepsilon-\eta\cdot p_i(s_{(1)})-\sum_{k=1}^{a}\Delta_{i,c_i}(r_{(k)},s_{(k)})+\sum_{k=1}^{a-1}\Delta_{i,c_i}(r_{(k)},s_{(k+1)})
    \end{align*}
   The last inequality is because 
$\eta\cdot p_i(s_{(a)})-W_i(r_{(a)},s_{(a)})\leq -\Delta_{i,c_i}(r_{(a)},s_{(a)})$, which is implied by the fact that $\MM$ is IR.
\end{itemize}

To sum up, for any alternating path $P\in \PP$,\begin{equation} \label{equ:diff}\begin{aligned}
&\sum_{(\rj,\sk)\in P \cap K(\Br,\Bs)}W_i(\rj,\sk)-\sum_{(\rj,\sk)\in P\cap K'(c_i(\Br),\Bs)}W_i(\rj,\sk)\\
&\leq \lvert P\cap K(\Br,\Bs)\rvert\cdot\varepsilon+\eta\cdot \left[\sum_{\sk\in P\cap K(\Br,\Bs)}p_i(\sk)-\sum_{\sk\in P\cap K'(c_i(\Br),\Bs)}p_i(\sk)\right]+\diff(P),
\end{aligned}
\end{equation}

where $\diff(P)=\sum_{(\rj,\sk)\in P\cap K(\Br,\Bs)}\Delta_{i,c_i}(\rj,\sk)-\sum_{(\rj,\sk)\in P\cap K'(c_i(\Br),\Bs)}\Delta_{i,c_i}(\rj,\sk)$.

Since $V(\Br,\Bs)$ is the maximum weight matching, we have

\begin{equation}\label{equ:weight K vs V}
\begin{aligned}
&\sum_{P \in \PP}\left[\sum_{(\rj,\sk)\in P \cap K(\Br,\Bs)}W_i(\rj,\sk)-\sum_{(\rj,\sk)\in P\cap K'(c_i(\Br),\Bs)}W_i(\rj,\sk) \right]\\
=&W_i(K(\Br,\Bs))-W_i(K'(c_i(\Br),\Bs))\\
 \geq & W_i(K(\Br,\Bs))-W_i(V(\Br,\Bs))
\end{aligned}
\end{equation}

Note that if we are using the matching $L(c_i(\Br),\Bs)$ instead of $K'(c_i(\Br),\Bs)$, we can no longer prove Inequality~\eqref{equ:weight K vs V}. The reason is quite subtle. It is possible that $L(c_i(\Br),\Bs)$ has much higher weight than $K(\Br,\Bs)$ on paths in $\PP$,  but much smaller weight on the rest alternating path and cycles. In that case, the first equal sign will be replaced by a less equal sign, which makes the inequality meaningless. By comparing to $K'(c_i(\Br),\Bs)$, we can avoid this issue.

Combining Inequality~\eqref{equ:diff} and~\eqref{equ:weight K vs V} , we have
\begin{align*}
    &W_i(K(\Br,\Bs))-W_i(V(\Br,\Bs))\\
    \leq& 
    \sum_{P \in \PP}\left[\sum_{(\rj,\sk)\in P \cap K(\Br,\Bs)}W_i(\rj,\sk)-\sum_{(\rj,\sk)\in P\cap K'(c_i(\Br),\Bs)}W_i(\rj,\sk) \right]\\
    \leq&
     \sum_{P \in \PP}\left[\lvert P\cap K(\Br,\Bs)\rvert\cdot\varepsilon+\eta\cdot \left[\sum_{\sk\in P\cap K(\Br,\Bs)}p_i(\sk)-\sum_{\sk\in P\cap K'(c_i(\Br),\Bs)}p_i(\sk)\right]+\diff(P)\right]\\
    \leq& \ell'\cdot \varepsilon+\eta\cdot \left[\sum_{\sk\in K(\Br,\Bs)}p_i(\sk)-\sum_{\sk\in K'(c_i(\Br),\Bs)}p_i(\sk)]\right]+\sum_{P \in \PP}\diff(P)
\end{align*}


Finally, we take expectation over $\Br$, $\Bs$, and  $c_i(\Br)$. \begin{align*}
&\E_{\Br,\Bs}\left[\sum_{\sk\in K(\Br,\Bs)} p_i(\sk)\right]-\E_{\Br,\Bs,c_i(\Br)}\left[\sum_{\sk\in K'(c_i(\Br),\Bs)} p_i(\sk)\right]\\
\geq& \frac{1}{\eta}\left(-\ell'\cdot\varepsilon+\E_{\Br,\Bs}\left[W_i\left(K\left( \Br,\Bs\right)\right)\right]- \E_{\Br,\Bs}\left[W_i\left(V\left( \Br,\Bs\right)\right)\right]-\E_{\Br,\Bs,c_i(\Br)}\left[ \sum_{P \in \PP}\diff(P)\right]\right)\\
\geq& -\frac{1}{\eta}(\ell'\cdot\varepsilon+\Delta)-\frac{1}{\eta}\E_{\Br,\Bs,c_i(\Br)}\left[ \sum_{P \in \PP}\diff(P)\right]\end{align*}

For every type $r$, $s$, and realized type $c_i(r)$, $\Delta_{i,c_i}(r,s)=\E_{t_{-i}\sim \DD_{-i}}[v_i(r,x(s,t_{-i}))-v_i(c_i(r),x(s,t_{-i}))]\in [-\dist_i(r,c_i(r)),\dist_i(r,c_i(r))]$ (recall that $\dist_i(r,c_i(r))=\max_{o\in \cO}|v_i(r,o)-v_i(c_i(r),o)|$). Thus $$\sum_{P \in \PP}\diff(P)
\leq 2\sum_{j=1}^{\ell'}\dist_i(\rj,c_i(\rj)),$$ and $$\E_{\Br,\Bs,c_i(\Br)}\left[ \sum_{P \in \PP}\diff(P)\right]\leq 2\ell' \wass(\DD_i,\DD_i') .$$
Therefore, $$\E_{\Br,\Bs}\left[\sum_{\sk\in K(\Br,\Bs)} p_i(\sk)/\ell'\right]\geq \E_{\Br,\Bs,c_i(\Br)}\left[\sum_{\sk\in K'(c_i(\Br),\Bs)} p_i(\sk)/\ell'\right]-\frac{1}{\eta}(\varepsilon+\frac{\Delta}{\ell'})-\frac{2}{\eta}\wass(\mathcal{D}_i,\mathcal{D}_i').$$

If $v_i$ is non-increasing w.r.t. $c_i$, then $\Delta_{i,c_i}(\cdot,\cdot)$ is a non-negative function. Then

$$ \sum_{P \in \PP}\diff(P)\leq \sum_{P \in \PP} \sum_{(\rj,\sk)\in P\cap K(\Br,\Bs)}\Delta_{i,c_i}(\rj,\sk)\leq \sum_{(\rj,\sk)\in K(\Br,\Bs)}\Delta_{i,c_i}(\rj,\sk),$$
and

$$\E_{\Br,\Bs,c_i(\Br)}\left[ \sum_{P \in \PP}\diff(P)\right]\leq \E_{\Br,\Bs}\left[\sum_{(\rj,\sk)\in K(\Br,\Bs)} \E_{c_i(\Br),~t_{-i}\sim \DD_{-i}}\left[v_i(\rj,x(\sk,t_{-i}))-v_i(c_i(\rj),x(\sk,t_{-i}))\right]\right].$$

\end{proof}

\vspace{.2in}

Finally, we are ready to prove Lemma~\ref{lem:payment K to V}. Note that for every $\sk$, $p_i(\sk)\leq 1$ since $\MM$ is -IR. We have
$$(1-\eta)\cdot \E_{\Br,\Bs}\left[\sum_{\sk\in K(\Br,\Bs)}p_i(\sk)/\ell'\right]\geq \E_{\Br,\Bs}\left[\sum_{\sk\in K(\Br,\Bs)}p_i(\sk)/\ell'\right]-\eta.$$
The lemma follows from Lemma~\ref{lem:payment K to L} and Corollary~\ref{cor:bound revenue by L}.
\end{prevproof}


\begin{prevproof}{Theorem}{thm:general reduction_2}
First, by Lemma~\ref{lem:mout lower bound}, we can lower bound the revenue of $\MM'$ under $\DD'$ from agent $i$ in Phase 2 $\revsecond_i(\MM',\DD')$ by $(1-\eta)\E_{\Br,\Bs}\left[\sum_{\sk\in O(\Br,\Bs)}p_i(\sk)/\ell'\right]$, where $O(\Br,\Bs)$ is the matching produced by Algorithm~\ref{alg:new online matching} on $\hG_i$. Lemma~\ref{lem:mout lower bound} also provides an equivalent expression for the revenue of $\MM$ under $\DD$ from agent $i$: $\rev_i(\MM,\DD)=\E_{\Bs}\left[\sum_{k\in [\ell']}p_i(\sk)/\ell'\right]$. 

We choose the parameters according to Theorem~\ref{thm:arbitrary-weight-algorithm}, that is, for any $\psi\in(0,1)$, we set $\delta=\Theta({\psi\over \log \ell})$, $\eta'=\Theta(\psi)$ and $d\geq {\ell \log \ell\over \eta'^2}$. Theorem~\ref{thm:arbitrary-weight-algorithm} implies that $\E_{\Br,\Bs}\left[W_i\left(O\left( \Br,\Bs\right)\right)\right]= \E_{\Br,\Bs}\left[W_i\left(V\left( \Br,\Bs\right)\right)\right]-O(d\ell\psi)$, that is in expectation $O\left( \Br,\Bs\right)$ has close to maximum weight. We will specify the choice of the other parameters $\ell$, $\eta$, and $\psi$ later. By Lemma~\ref{lem:payment K to V}, we know that 

\begin{equation}\label{eq:revenue gap between each agent}
    \revsecond_i (\MM',\DD') \geq \rev_i(\MM,\DD) -  \left( \eta + \sqrt{\frac{|\TT_i'|}{\ell}} + \frac{\varepsilon}{\eta} + \frac{O(\psi)}{\eta} \right) - \frac{2 d_w(\dD_i,\dD_i')}{\eta}
\end{equation}

Combining Inequality~\eqref{eq:revenue gap between each agent} with Lemma~\ref{lem:M' BIC}, we can obatain the following lower bound on $\rev(\MM',\DD')$.
\begin{equation}
    \rev(\MM',\DD') \geq \rev(\MM,\DD) - \sum_{i\in[n]} \left( \eta + \sqrt{\frac{T'}{\ell}} + \frac{\varepsilon}{\eta} + \frac{O(\psi)}{\eta} \right) - \frac{2 d_w(\dD,\dD')}{\eta}-n\sqrt{\delta}(\log (2\ell)+1)
\end{equation}


Now we set $\ell = \frac{T'}{\varepsilon}$, $\psi={\varepsilon^2\over \log \ell}$, and we can choose $\eta$ to be $O\left(\sqrt{\varepsilon+{d_w(\dD,\dD' )\over n}}\right)$ so that 

\begin{align*}
    \rev(\MM',\DD') &\geq  \rev(\MM,\DD) - O\left(n \left(\sqrt{\varepsilon}+\sqrt{\varepsilon+{d_w(\dD,\dD' )\over n}} \right)\right)-\Theta(n\varepsilon)\\
    &\geq \rev(\MM,\DD) -  O(n\sqrt{\varepsilon})-O\left(\sqrt{{n\cdot d_w(\dD,\dD' )}}\right) \end{align*}

Plugging in our choice of the parameters to Lemma~\ref{lem:M' BIC}, we can conclude that both the computational and query complexity of $\MM'$ is $\poly(n,T',1/\varepsilon)$.

If $c_i(\cdot)$ that $v_i$ is non-increasing w.r.t. $c_i(\cdot)$, we can replace the last term $- \frac{2d_w(\dD_i,\dD_i')}{\eta}$ in Inequality~\eqref{eq:revenue gap between each agent} by $$-\frac{1}{\eta \ell'}\E_{\Br,\Bs}\left[\sum_{(\rj,\sk)\in O(\Br,\Bs)} \E_{c_i(\Br),~t_{-i}\sim \DD_{-i}}\left[v_i(\rj,x(\sk,t_{-i}))-v_i(c_i(\rj),x(\sk,t_{-i}))\right]\right].$$ Note that this quantity is the same as $$-\frac{1}{\eta}\E_{t\sim \DD'}\left[\E_{c_i(t_i)}\left[v_i\left(t_i,x'(t)\right)-v_i\left(c_i(t_i),x'(t)\right)\right]\right].$$ Hence, for any $\eta\in (0,1)$, we can improve the result to 
$$ \rev(\MM',\DD')\geq \rev(\MM,\DD)-n\sqrt{\varepsilon}-O\left(n\eta+{n\varepsilon\over \eta}\right)-{\sum_{i\in [n]}\E_{t\sim \DD'}\left[\E_{c_i(t_i)}\left[v_i\left(t_i,x'(t)\right)-v_i\left(c_i(t_i),x'(t)\right)\right]\right]\over \eta}.$$

\end{prevproof}

\notshow{

\subsection{BIC, IR, and Implicit Payment Computation}\label{appx:implicit payment computation}\label{sec:BIC and IR}

}

\notshow{

In this section we will describe how to compute implicit first-phase payment for every agent. The procedure for the single-parameter setting has been used by \cite{archer2004approximate} and \cite{HartlineL10}, which makes $n+1$ calls to the mechanism. A different implicit payment computation procedure by Babioff et al. \cite{babaioff2013multi,babaioff2015truthful} only requires one call in expectation, but incurs a $1-\epsilon$ loss in expected welfare. 

Now let's fix agent $i$, and treat every replica (including agent $i$) as a ``bidder'' in an auction and every surrogate as an ``item''. According to Algorithm~\ref{alg:mout}, each replica $j$ has type $r_j\in\mathcal{T}_i'$ and has value $W_i(r_j,s)$ for matching to a surrogate with type $s$. Let $\mathcal{A}$ be the allocation of matching surrogates to replicas in $\mout$ (Algorithm~\ref{alg:mout}). In order to introduce the procedure, we need the following definition.

\begin{definition}
The type space $\mathcal{T}_i'$ has rescalable types (or called star-convex at 0) if: for every $r\in \mathcal{V}_i$, and any scalar $\lambda\in [0,1]$, there exists a type $\lambda r\in \mathcal{T}_i'$ such that for any surrogate with type $s\in \mathcal{T}_i'$, 
$W_i(\lambda r,s)=\lambda W_i(r,s)$.
\end{definition}

}

\notshow{

\begin{definition}
The type space $\mathcal{T}_i'$ has rescalable types (or called star-convex at 0) if: for every $r\in \mathcal{V}_i$, and any scalar $\lambda\in [0,1]$, there exists a type $\lambda r\in \mathcal{T}_i'$ such that for any surrogate with type $s\in \mathcal{T}_i'$, 
$W_i(\lambda r,s)=\lambda W_i(r,s)$.
\end{definition}

The following theorem shows how the truthful payment rule is computed for any implementable allocation rule.

The implicit payment computation procedure assumes that $\mathcal{T}_i'$ has rescalable types. In order to satisfy this assumption, we need to first generalize $\mathcal{A}$ to allow input with scaled types. For every $r\in \mathcal{T}_i'$ and $\lambda\in [0,1)$, define $W_i(\lambda r,s)=\lambda W_i(r,s)$ for every $s\in \mathcal{T}_i'$. Now $\mathcal{A}$ is also well-defined for scaled types: it returns the online matching with respect to the weight $W_i$. Notice that we are not enlarging the type space $\mathcal{T}_i'$ to include all the scaled types. Thus the mechanism stays unchanged. The scaled type is only taken as an input of the oracle $\mathcal{A}$ when computing the payment.  

As we discussed in Section~\ref{sec:mechanism}, agent $i$ is incentive-compatible by defining the first-phase payment as VCG payments for the maximal-in-range allocation. $\mathcal{A}$ is thus implementable. With Theorem~\ref{thm:rochet}, the implicit payment for agent $i$ can be computed as follows: For every input $\textbf{r},\textbf{s}$, assume the true agent is at the $j$-th position with type $r_j$. We sample $\lambda\in [0,1]$ uniformly at random, and charge the agent $W_i(r_j,\mathcal{A}_j(\textbf{r},\textbf{s}))-W_i(r_j,\mathcal{A}_j(\lambda r_j,r_{-j},\textbf{s}))-(\delta\log\ell/\epsilon+\epsilon)$. Then in expectation the payment will be away from the one in Equation~\ref{equ:thm-rochet} by a fixed number. The agent will still be incentive-compatible. 

We remains to prove that with the first-phase payment defined above, the agent is individual rational. And in expectation over all agents' types, the total first-phase payment is close to 0. Lemma~\ref{lem:M' IR/NPT} is implied by the following lemma.
}

\notshow{
\begin{lemma}\label{lem:IR/NPT}
Suppose the agent $i$ is at the $j$-th position with type $t_j$ and let $\mathcal{A}_j$ be the allocation of matching surrogates to replicas $j$ in $\mout$ (Algorithm~\ref{alg:mout}). For every $\Br,\Bs$, define the payment for agent $i$ as

$$\mathcal{P}_j(\textbf{r},\textbf{s})=W_i(r_j,\mathcal{A}_j(\textbf{r},\textbf{s}))-\int_0^1 W_i(r_j,\mathcal{A}_j(\lambda r_j,r_{-j},\textbf{s}))d\lambda-(\delta\log\ell/\epsilon+\epsilon)$$

Moreover, for any true agent's type $r_j\in\mathcal{T}_i'$ and any other replicas' and surrogates' types $(r_{-j},\textbf{s})$, $\mathcal{P}_j(\textbf{r},\textbf{s})\geq -(\delta\log\ell/\epsilon+\epsilon)$. Furthermore,
$W_i(r_j,\mathcal{A}_j(\textbf{r},\textbf{s}))-\mathcal{P}_j(\textbf{r},\textbf{s})\geq 0$. And hence the agent $i$ is individual rational in $\mout$.
\end{lemma}

\begin{proof}

For the first part, it's sufficient to prove that $W_i(r_j,\mathcal{A}_j(\textbf{r},\textbf{s}))\geq W_i(r_j,\mathcal{A}_j(\lambda r_j,r_{-j},\textbf{s}))$ for every $\lambda\in [0,1]$. In fact consider the two allocations for $j$: $\mathcal{A}_j(r_j,r_{-j},\textbf{s})$ and $\mathcal{A}_j(\lambda r_j,r_{-j},\textbf{s})$. Let $x_j'$ be the vector of probabilities that $j$ is matched to each normal RHS-node in $\mathcal{A}_j(r_j,r_{-j},\textbf{s})$ and $y_j'$ for 0-RHS-nodes. Similarly define $(x_j'',y_j'')$ for $\mathcal{A}_j(\lambda r_j,r_{-j},\textbf{s})$. Notice that $\alpha^{(j)}$ only depends on the weights incident to the LHS-nodes $1$ to $j-1$. Thus by Observation~\ref{obs:exp weight with dummy-every replica}, both allocations are maximal-in-range for the same dual variables $\alpha^{(j)}$. Denote $H(x_j,y_j)=\sum_k(x_{jk}\log(x_{jk})+\sum_{k}y_{jk}\log(y_{jk}))$. We have
$$\sum_kx_{jk}'\omega_{jk}-\delta\cdot H(x_j',y_j')-\sum_k\gamma\alpha_k^{(j)}\cdot (x_{jk}'+y_{jk}')\geq \sum_kx_{jk}''\omega_{jk}-\delta\cdot H(x_j'',y_j'')-\sum_k\gamma\alpha_k^{(j)}\cdot (x_{jk}''+y_{jk}'')$$
$$\sum_kx_{jk}''\lambda\omega_{jk}-\delta\cdot H(x_j'',y_j'')-\sum_k\gamma\alpha_k^{(j)}\cdot (x_{jk}''+y_{jk}'')\geq \sum_kx_{jk}'\lambda\omega_{jk}-\delta\cdot H(x_j',y_j')-\sum_k\gamma\alpha_k^{(j)}\cdot (x_{jk}'+y_{jk}')$$

Summing up the two inequality together, we have
$$\sum_k(x_{jk}'-x_{jk}'')\omega_{jk}(1-\lambda)\geq 0$$

Since $\lambda\in [0,1]$ and the fact that $W_i(r_j,\mathcal{A}_j(r_j,r_{-j},\textbf{s}))=\sum_k x_{jk}'\omega_{jk}$, $W_i(r_j,\mathcal{A}_j(\lambda r_j,r_{-j},\textbf{s}))=\sum_k x_{jk}''\omega_{jk}$, we have $W_i(r_j,\mathcal{A}_j(\textbf{r},\textbf{s}))\geq W_i(r_j,\mathcal{A}_j(\lambda r_j,r_{-j},\textbf{s}))$.

For the second part, we will prove that for every $\lambda\in [0,1]$, $W_i(r_j,\mathcal{A}_j(\lambda r_j,r_{-j},\textbf{s}))\geq -\delta\log\ell/\lambda$. With this inequality, we have
\begin{align*}
&W_i(r_j,\mathcal{A}_j(\lambda r_j,r_{-j},\textbf{s}))-\mathcal{P}_j(\textbf{r},\textbf{s})\\
=&\int_0^1 W_i(r_j,\mathcal{A}_j(\lambda r_j,r_{-j},\textbf{s}))d\lambda+\delta\log\ell/\epsilon+\epsilon\\\
=&\int_0^\epsilon W_i(r_j,\mathcal{A}_j(\lambda r_j,r_{-j},\textbf{s}))d\lambda+\int_{\epsilon}^1 W_i(r_j,\mathcal{A}_j(\lambda r_j,r_{-j},\textbf{s}))d\lambda+\delta\log\ell/\epsilon+\epsilon\\
\geq& -\epsilon-\int_{\epsilon}^1\delta\log\ell/\lambda d\lambda+\delta\log\ell/\epsilon+\epsilon\\
=& -\epsilon-\delta\log\ell\cdot\log(\frac{1}{\epsilon})+\delta\log\ell/\epsilon+\epsilon \geq 0
\end{align*}

Using the same notation $(x_j'',y_j'')$ for the first part, we have $W_i(r_j,\mathcal{A}_j(\lambda r_j,r_{-j},\textbf{s}))=\sum_k x_{jk}''\omega_{jk}$. Again by Observation~\ref{obs:exp weight with dummy-every replica},
$$(x_j'',y_j'')=\argmax_{(x,y)}\sum_{k}x_{jk}\omega_{jk}-\delta\sum_{k}x_{jk}\log(x_{jk})-\delta\sum_{k}y_{jk}\log(y_{jk})-\sum_k\gamma\alpha_k^{(j)}\cdot (x_{jk}+y_{jk})$$

By considering an alternating solution $(0,x_j''+y_j'')$, we have
$$\sum_{k}x_{jk}''\lambda\omega_{jk}-\delta\cdot H(x_j'',y_j'')-\sum_k\gamma\alpha_k^{(j)}\cdot (x_{jk}''+y_{jk}'')\geq 0-\delta\cdot \sum_{k}(x_{jk}''+y_{jk}'')\log(x_{jk}''+y_{jk}'')-\sum_k\gamma\alpha_k^{(j)}\cdot (x_{jk}''+y_{jk}'')$$

Since $H(x_j'',y_j'')\geq 0$,

$$\sum_{k}x_{jk}''\omega_{jk}\geq -\frac{\delta}{\lambda}\sum_k(x_{jk}''+y_{jk}'')\log(x_{jk}''+y_{jk}'')\geq \delta\log\ell/\lambda$$

\end{proof}
}
\notshow{
\subsection{Missing Details of Section~\ref{sec:rev M'}}\label{sec:proof}
}
\notshow{
In this section we will bound the revenue loss from $M$ to $M'$. The proof is inspired by~\cite{HartlineKM11,CaiZ17}. For simplicity, we will slightly abuse the notation and denote $p_i(t_i)=\E_{t_{-i}\sim \DD_{-i}}[p_i(t_i,t_{-i})]$ the interim payment of agent $i$ throughout this section.






Fix agent $i$. For any realization of $\Br$ and $\Bs$, we first duplicate each surrogate $d$ times. Now the graph becomes complete bipartite graph between $d\ell$ replicas and $d\ell$ surrogates. Every $d$-to-1 matching in the old graph corresponds to a 1-to-1 matching in the new bipartite graph. Let $\ell'=d\ell$. \yangnote{But it is not a perfect matching...}

Recall $$W_i(\rj,\sk)=\E_{t_{-i}\sim \DD_{-i}}[v_i(\rj,x(\sk,t_{-i}))]-(1-\eta)\cdot p_i(\sk).$$ For any matching $L(\Br,\Bs)$, to simplify the notation, we use $W_i(L(\Br,\Bs))=\E_{\Br,\Bs}[\sum_{(r,s)\in L(\Br,\Bs)}W_i(r,s)]$. and $W_i(L)$ for the same value when $\Br,\Bs$ are fixed.

Let $O(\Br,\Bs)$ be the (randomized) matching obtained by $M'$. It's a distribution of perfect matchings between replicas and surrogates. Denote $V(\Br,\Bs)$ the maximum weight matching with respect to $W_i$. From Section~\ref{sec:arbitrary_weight_matching}, in expectation $O(\Br,\Bs)$ is nearly-optimal. We first give a lower bound of $\rev(M')$ by only considering the revenue from the second phase.

\begin{lemma}\label{lem:mout lower bound}
$$\rev(\mout)\geq (1-\eta)\cdot \sum_i\E_{\Br,\Bs}\left[\sum_{\sk\in O(\Br,\Bs)}p_i(\sk)/\ell'\right]-n(\delta\log\ell/\varepsilon+\varepsilon)$$
$$\rev(M)=\sum_i\E_{\Bs}\left[\sum_{k\in [\ell']}p_i(\sk)/\ell'\right]$$
\end{lemma}
\begin{proof}
The revenue of $\mout$ contains two parts -- the payment for the maximal-in-range matching (first phase payment) and the prices paid by the chosen surrogates (second phase payment). By Lemma~\ref{lem:M' IR/NPT}, the expected first phase payment is at least $-n(\delta\log\ell/\varepsilon+\varepsilon)$. Let's compute the second phase payment. For every agent $i$ and each realization of $\Br$ and $\Bs$, only when the agent $i$ is matched to a (normal) surrogate in $ O(\Br,\Bs)$, she pays the surrogate price. Since each surrogate is selected with probability $1/\ell'$, the expected surrogate price paid by agent $i$ is exactly $(1-\eta)\cdot\E_{\Br,\Bs}\left[\sum_{\sk\in O(\Br,\Bs)} \frac{ p_i(\sk)}{\ell'}\right]$. We have proved our first statement.
 
The expected payment from agent $i$ in $M$ is $\E_{t_i\sim \DD_i}\left[p_i(t_i)\right]$. Since all $\sk$ is drawn from $\DD_i$, this is exactly the same as $\E_{\Bs}\left[\sum_{k\in[\ell']} \frac{ p_i(\sk)}{\ell'}\right]$.
\end{proof}
}

\end{document}